\long\def\drop#1{} 
\setlist{nolistsep}
\DeclareSymbolFont{bboldx}{U}{bboldx}{l}{n}
\DeclareMathSymbol{\bbSigma}{\mathord}{bboldx}{6}
\def\mafo{\mathrm}
\newtheorem{ass}[theorem]{Assumption}
\theoremstyle{definition}
\newtheorem{RmRemark}[theorem]{Remark}
\newtheorem{RmExample}[theorem]{Example}
\renewcommand{\ti}{{\times}}
\newcommand{\oti}{{\otimes}}
\newcommand\generic{{\smaller GENERIC}\xspace} 
\newcommand\Generic{G{\smaller ENERIC}\xspace}  
\newcommand{\sde}{\textsc{SDE}\xspace}  
\renewcommand\generic{{\smaller GENERIC}\xspace}
\renewcommand\Generic{{\smaller GENERIC}\xspace}
\tikzset{>=latex}
\numberwithin{equation}{section}
\numberwithin{figure}{section}
\renewcommand*\env@cases[1][1.2]{%
  \let\@ifnextchar\new@ifnextchar
  \left\lbrace  \def\arraystretch{#1} \array{@{}c@{\ \ }l@{}}}
\def\One{\mathds{1}}
\DeclareMathOperator{\cov}{cov}
\newcommand{\sfZ}{\mathsf Z}
\newcommand{\Dom}{\mathrm{Dom}}
\DeclareMathOperator{\supp}{supp}
\newcommand{\Range}{\mathrm{Range}}
\newcommand{\Nullspace}{\mathrm{Ker}}
\newcommand{\GEN}{{\mathrm{GEN}}}
\DeclareMathOperator{\trace}{tr}
\renewcommand{\div}{\mathop{\mathrm{div}}\nolimits}
\newcommand{\Lebesgue}{{\mathscr L}}
\newcommand{\Expectation}{\bbE}
\DeclareMathOperator{\Prob}{Prob}
\def\longrightharpoonup{\relbar\joinrel\rightharpoonup}
\newcommand{\ProbMeas}{\calP}  
\DeclareMathOperator{\Span}{span}
\newcommand{\Hmac}{\calH_{\mathcal{A}}} 
\newcommand{\Jmac}{\bbJ_{\mathcal{A}}} 
\newcommand{\HB}{\calH_{\mathcal{B}}} 
\newcommand{\HC}{\calH_{\mathcal{C}}} 
\newcommand{\Hzw}{\calH_{zw}}  
\newcommand{\JHB}{\bbJ_{\mathcal{B}}} 
\newcommand{\ip}[2]{\left( #1\middle | #2\right)} 
\newcommand{\dual}[2]{\left\langle #1, #2\right\rangle} 
\newcommand{\Dual}[2]{\big\langle #1, #2\big\rangle} 
\newcommand{\pdelta}{\bm{\updelta}}  
\newcommand{\DDsym}{{\bbD_\mafo{sym}}}
\newcommand{\DDanti}{{\bbD_\mafo{skw}}}
\newcommand{\DDantistar}{{\bbD_\mafo{skw}^*}}
\newcommand{\sfcc}{\mathsf{c}}
\newcommand{\extG}{\wh S}
\newcommand{\AAA}{}
\newcommand{\EEE}{}
\DeclarePairedDelimiter{\abs}{\lvert}{\rvert}
\DeclarePairedDelimiter{\norm}{\lVert}{\rVert}
\DeclarePairedDelimiter{\bra}{(}{)}
\DeclarePairedDelimiter{\pra}{[}{]}
\DeclarePairedDelimiter{\skp}{\langle}{\rangle}
\newlength{\leftstackrelawd}
\newlength{\leftstackrelbwd}
\def\leftstackrel#1#2{\settowidth{\leftstackrelawd}%
{${{}^{#1}}$}\settowidth{\leftstackrelbwd}{$#2$}%
\addtolength{\leftstackrelawd}{-\leftstackrelbwd}%
\leavevmode\ifthenelse{\lengthtest{\leftstackrelawd>0pt}}%
{\kern-.5\leftstackrelawd}{}\mathrel{\mathop{#2}\limits^{#1}}}
\def\@font@info#1{%
\IfBeginWith{#1}{No math alphabet change to frozen version}{\relax}{%
\GenericInfo{(Font)\@spaces\@spaces\@spaces\space\space}%
{LaTeX Font Info: \space\space\space#1}%
}%
}%
\def\raisearound#1#2#3{\mathpalette\raisearoundA{{#1}{#2}{#3}}}
\def\raisearoundA#1#2{\raisearoundB#1#2}
\def\raisearoundB#1#2#3#4{{%
    \setbox0=\hbox{#2}%
    \setbox1=\hbox{#3}%
    \setbox2=\hbox{$\m@th#1#4$}%
    \raise\dimexpr\ht2-\ht0\relax\box0%
    \copy2%
    \raise\dimexpr\ht2-\ht1\relax\box1%
}}
\DeclareFontFamily{U}{mathx}{}
\DeclareFontShape{U}{mathx}{m}{n}{<-> mathx10}{}
\DeclareSymbolFont{mathx}{U}{mathx}{m}{n}
\DeclareMathAccent{\widecheck}{0}{mathx}{"71}
  \def\sfC{{\mathsf C}}
\def\sfD{{\mathsf D}}  
\def\sfJ{{\mathsf J}} \def\sfK{{\mathsf K}} 
 \def\sfQ{{\mathsf Q}}
 \def\sfZ{{\mathsf Z}}
\begin{document}

\title{Deriving a GENERIC system from \\
a Hamiltonian system}

\author{Alexander Mielke\thanks{WIAS and Humboldt Universit\"at zu Berlin, Germany}
  \and Mark A. Peletier\thanks{Department of Mathematics and Computer Science, Eindhoven University of Technology, The Netherlands, \texttt{m.a.peletier@tue.nl}}
  \and Johannes Zimmer\thanks{School of Computation, Information, and Technology, Technische Universit\"at M\"unchen, Germany}}

\date{2 June 2025} 
\maketitle

\begin{abstract}
We reconsider the fundamental problem of coarse-graining infinite-dimensional
Hamiltonian dynamics to obtain a macroscopic system which includes dissipative
mechanisms. In particular, we study the thermodynamical implications concerning
Hamiltonians, energy, and entropy and the induced geometric structures such as Poisson
and Onsager brackets (symplectic and dissipative brackets).

We start from a general finite-dimensional Hamiltonian system that is coupled
linearly to an infinite-dimensional heat bath with linear dynamics.  The latter
is assumed to admit a compression to a finite-dimensional dissipative semigroup
(i.e., the heat bath is a dilation of the semigroup) describing the dissipative
evolution of new macroscopic variables.

Already in the finite-energy case (zero-temperature heat bath) we obtain the so-called GENERIC structure
(General Equation for Non-Equilibrium Reversible Irreversible Coupling), with conserved energy, nondecreasing
entropy, a new Poisson structure, and an Onsager operator describing the dissipation. However, their origin is
not obvious at this stage.  After extending the system in a natural way to the case of positive temperature,
giving a heat bath with infinite energy, the compression property leads to an exact multivariate
Ornstein-Uhlenbeck process that drives the rest of the system. Thus, we are able to identify a conserved
energy, an entropy, and an Onsager operator (involving the Green-Kubo formalism) which indeed provide a
GENERIC structure for the macroscopic system.

\providecommand{\keywords}[1]{\par\medskip\noindent\textbf{Keywords and phrases: } #1}
\providecommand{\MSC}[1]{\par\medskip\noindent\textbf{AMS Mathematics Subject Classification: } #1}
\keywords{Hamiltonian systems, gradient systems, GENERIC systems, coarse-graining, heat bath, temperature, Gaussian measures, multivariate Ornstein-Uhlenbeck process, dilations, compressions, energy, entropy, Poisson operator, Onsager operator, Caldeira-Leggett}
\MSC{37K06, 37L05, 37D35, 80A05, 82C35}

\end{abstract}

\tableofcontents

\section{Introduction}
\label{sec:Introduction}

\subsection{Coarse-graining, \generic, and compressions}
\label{sec:coarse-grain-dilat}

We revisit the classical topic of coarse-graining of Hamiltonian systems, and derive a thermodynamically consistent reduction for Hamiltonian systems which linearly couple a finite-dimensional Hamiltonian system to an infinite-dimensional Hamiltonian system acting as heat bath. A crucial element in  the reduction procedure is the theory of compressions. We now describe the main concepts in more detail.

\emph{Coarse-graining} is the process by which less detailed (`coarse-grained') models of physical systems are
derived from more detailed ones (`fine-grained'; see, e.g.,~\cite{WeinanE11,Kuehn15}).  One methodology for
coarse-graining is the Mori-Zwanzig formalism, which allows to give macroscopic descriptions of microscopic
systems such as those arising in molecular dynamics. In particular, this allows one to study the emergence of
macroscopic dissipation from microscopically reversible Hamiltonian systems.  In this paper we study a
classical example of this type; {see~\cite{JaksicPillet98,Zwanzig1973a} and also the
  monograph~\cite{Zwanzig01} {for related treatments}}.

\AAA \textsl{\Generic} describes a class of thermodynamic evolution equations
that is formulated in terms on an energy and an entropy functional and two
geometric structures for the Hamiltonian and the dissipative dynamics.  The
origins of such systems goes back to \cite{Grme84PBFK, Grme85BFDT} and the
metriplectic theory developed in \cite{Morr84BFIC, Morr86PJHD, Morr09TBDO},
whereas the {name \emph{General Equation for Non-Equilibrium Reversible Irreversible
Coupling} and the acronym \Generic} were introduced in~\cite{GrmOtt97DTCF1, OttGrm97DTCF2}.  The \generic
framework provides a \emph{general} way to write down thermodynamically
consistent \emph{equations} for the coupling of two ideal systems, namely
a \emph{reversible} (or Hamiltonian) system and an \emph{irreversible} (or
dissipative) system. \Generic systems form a subclass of all thermodynamically
consistent systems that allow for a better understanding of the interaction
between the two different phenomena of reversible and irreversible dynamics.
In particular, it will help us to see how dissipative systems can be obtained
via coarse graining of a purely Hamiltonian systems. 
\EEE 

If an evolution equation can be put in \generic form, then it is automatically
thermodynamically consistent. In particular, it satisfies conservation of an
energy and monotonicity of an entropy. In \generic, this thermodynamic
consistency is implemented by combining a symplectic operator acting on the
energy and a semidefinite operator acting on the entropy. In addition, the
book~\cite{Otti05BET} describes a Mori-Zwanzig-type derivation of equations of
\generic type from microscopic Hamiltonian systems with a clear separation of
time scales. In this paper we derive a \generic form for the coarse-grained
evolution, and this allows us to give an interpretation for the various
components of the \generic structure.

A key ingredient in this mathematically rigorous derivation is the third component, \emph{compressions}, which can be thought of the reverse of {dilations}. A \emph{dilation} of a semigroup~$\bbS_t$ on a space~$\bfY$
is a unitary group~$\bbU_t$ on a larger Hilbert space $\bfH$ whose projection onto~$\bfY$ coincides with
$\bbS_t$~\cite{NagyFoiasBercoviciKerchy10}. 
Conversely, compressions yield a projection of a unitary group on a large space
  onto a dissipative semigroup on a subspace. It is the existence of a suitable compression subspace that allows us to write the Mori-Zwanzig reduction in Markovian form and relate the dissipative part to the fluctuations.

\bigskip

\subsection{Sketch of the microscopic Hamiltonian system}
\label{s:setting-this-paper}

We first describe the starting point of this paper, which we call the `microscopic' system. Later, from  Section~\ref{s:CG-deterministic} onwards, we will describe how we apply a coarse-graining map to the state space of this system, which will lead to a system that we call `macroscopic'. The terms `microscopic' and `macroscopic' do not refer to actual space or time scales, but are relative to the operation of coarse-graining. The coarse-graining map is highly non-injective, and therefore the macroscopic system has fewer degrees of freedom than the microscopic one. 

The starting point of this paper, the `microscopic system', resembles well-known models in the
literature (see,
e.g.,~\cite{JaksicPillet98,KSTT02LTBL,Reyb06OCS}). It couples
two separate Hamiltonian systems: System~A is finite-dimensional, and System~B
is infinite-dimensional and is interpreted as a heat bath. System~A may be
nonlinear, while System~B and the coupling are linear. This is illustrated in
Figure~\ref{fig:Step1}. The whole system is Hamiltonian, and in the interesting cases, the system also is reversible in time. The evolution is
deterministic, while the initial data for the heat bath may be random, as we discuss below.

\begin{figure}[ht]
  \centering
\begin{tikzpicture}[scale=0.65]
\begin{scope}[xshift=-5.0cm] 
\draw[fill= gray!20] (-2,0) rectangle (2,3);
\node[color=blue] at (0,2){System~A};
\node[color=black] at (0,1){$(\bfZ,\calH_\calA, \bbJ_\calA)$};
\end{scope}
\draw[very thick,<->] (-2.4,1.2)--(2.4,1.2);
\node[color=blue] at (0,1.7){coupling}; 
\node[color=black] at (0,0.5){$\calH_\calC= \ip {\sfC z}{\bbP\eta}_\bfH$};

\begin{scope}[xshift=5.0cm] 
\draw[fill= gray!20] (-2.1,0.2) rectangle (2.4,3.2);
\node[color=blue] at (0,2.5){System~B};
\node[color=blue] at (0,1.9){$=$ heat bath};
\node[color=black] at (0,0.9){$(\bfH,\calH_\calB, \JHB)$};
\end{scope}
\end{tikzpicture}
\caption{The microscopic system is a coupling of a fixed Hamiltonian system
  (`System~A') and a heat bath (`System~B'). See
  Section~\ref{s:setting-this-paper} for a discussion. }
\label{fig:Step1}
\end{figure}

\medskip

System~A is described by a triple $(\bfZ,\calH_\calA, \bbJ_\calA)$, where $\bfZ$ is a state space,
$\calH_\calA$ is a Hamiltonian, and $\bbJ_\calA(z)\colon\bfZ \to \bfZ$ denotes the Poisson (also called
co-symplectic) operators. We make the assumption that $\div_z \bbJ_\calA = 0$, corresponding to the Lebesgue measure being invariant (see Remark~\ref{rem:role-of-stationary-measure-in-HamSys}). The evolution of System~A alone is given by the Hamiltonian evolution equation
$\dot z = \bbJ_\calA(z)\rmD\calH_\calA(z)$. 

System~B, the heat bath, is described by a similar triple,
$(\bfH,\calH_\calB, \JHB)$, where $\bfH$ is a Hilbert space, $\calH_\calB$ the Hamiltonian of the heat
bath and $\JHB$ a symplectic evolution operator. The evolution of the heat bath is linear. Specifically,
we take
\begin{equation*}
  \calH_\calB(\eta) = \frac12\|\eta\|_\bfH^2 \quad \text{for all }\eta \in
  \bfH. 
\end{equation*}

The coupling between system~A and system B is described by a Hamiltonian~$\calH_\calC$, which
couples the two systems linearly by
\begin{equation}
  \label{eqdef:HC-pre}
  \calH_\calC(z,\eta)= \ip {\bbA z}{\bbP\eta}_\bfH,
\end{equation}
where $\ip{\cdot}{\cdot}_{\bfH}$ is the inner product in $\bfH$. Here
$\bbA \colon \bfZ \to \bfH$ is a (linear) embedding operator; $\bbP$ is an
orthogonal projection operator that is specified later.  Since $\bbP$ is
self-adjoint, we can alternatively define $\sfC := \bbP \bbA\colon \bfZ\to\bfH$
and write~\eqref{eqdef:HC-pre} as
\begin{equation}
  \label{eqdef:HC}
  \calH_\calC(z,\eta)= \ip {\sfC z}{\eta}_\bfH.
\end{equation}
The operator $\sfC$ satisfies $\sfC = \bbP\sfC$ and $\sfC^* = \sfC^* \bbP$.

{The bilinearity of~\eqref{eqdef:HC} is assumed for simplicity; we expect that similar results can be
  proved also for functionals that are nonlinear in $z$ but linear in $\eta$, e.g., the model studied
  in~\cite{Zwanzig1973a}.}

The
full (coupled) microscopic evolution for $(z,\eta)\in \bfZ\ti\bfH$ is governed
by the Hamiltonian
\begin{equation}
  \label{eqdef:Htotal-intro}
  \calH_{\mathrm{total}}(z, \eta) = \calH_\calA(z) +  \calH_\calB(\eta) +   \calH_\calC(z, \eta),
\end{equation}
together with the joint symplectic operator $\bbJ_\calA \oplus \JHB$.
The corresponding evolution equations are 
\begin{subequations}
  \label{eqdef:Evol-full-intro}
  \begin{align}
     \label{eqdef:Evol-full-1-intro}
    \dot z_t &= \Jmac(z) \big( \rmD \Hmac (z_t) +
             \sfC^* \eta_t\big), \\
    \dot \eta_t &= \JHB\,\big(\eta_t + \sfC z_t\big),
      \label{eqdef:Evol-full-2-intro}	
  \end{align}
\end{subequations}  
and the full Hamiltonian system is denoted by the triple
$(\bfZ\ti \bfH, \calH_{\mathrm{total}}, \bbJ_\calA {\oplus} \JHB)$. {The
aim of this paper is to coarse-grain the coupled system
\eqref{eqdef:Evol-full-intro} to a system where the macroscopic observable
$z\in \bfZ$ is maintained, but the infinite-dimensional degree of freedom
$\eta \in \bfH$ is reduced to a finite-dimensional coupling variable $w$. The
so-called compression property (see Assumption \ref{ass:dilation} below) will
imply that despite the loss of information the resulting system in
$(z,w)\in \bfZ\ti \bfW$ still is Markovian.}

\begin{RmExample}[Running example, part 1]
\label{ex:RunningExa1}
To facilitate the understanding we provide a more explicit example for the
whole theory that explains all the constructions in a concrete case.
  
For System A we consider a canonical Hamiltonian system with
\[
  z=(q,p)\in \R^n\ti \R^n =\bfZ, \quad \calH_\calA(q,p)= \frac12|p|^2 + V(q),
  \quad \text{and} \quad \bbJ_\calA=\left(\begin{smallmatrix}0&\bbI_{n\ti
        n}\\{-}\bbI_{n\ti n}&0
  \end{smallmatrix}\right)  .
\]
For System B we choose
\[
  \bfH= \rmL^2(\R; \R^3), \quad \calH_\calB(\eta)=\frac12\|\eta\|^2_{\rmL^2},
  \quad \text{and} \quad \JHB  = \left(\begin{smallmatrix} -\pl_x
      &0&0\\ 0&-\pl_x&0\\ 0&0&-\pl_x
  \end{smallmatrix}\right) .
\]
In particular, the unitary group $(\ee^{t\JHB})_{t\in \R}$ is the group of
shifts $\bbS_t f := f(\,\cdot \,{-}t)$.  \AAA For simplicity {we assume in this
example} that coupling to the heat bath occurs only via \EEE the
state variable $q$ (but not the momentum $p$), i.e.\
$\calH_\calC(q,p,\eta)= \ip{\sfK q}{\eta}$ for a linear bounded operator
$\sfK\colon\R^n \to \bfH$.

In summary, the example system has state $(q,p,\eta) \in \R^n\ti
\R^n \ti \rmL^2(\R;\R^3)$ and is a coupled system of a nonlinear ODE
and a linear transport equation for the three components of $\eta$:
\[
\frac{\rmd}{\rmd t} \bma{c} q\\ p \\ \eta\ema = \bma{ccc} 0&\bbI_{n\times n}\!\!
 &0\\ \!\!{-}\bbI_{n\times n}\!\!&0&0\\
0&0& \!\!{-}\pl_x\! \ema \bma{c} \rmD V(q) + \rmD_q\calH_\calC(q,\eta) \\ p \\ 
\eta + \rmD_\eta \calH_\calC(q,\eta)   \ema 
= \bma{c} p \\ \!\! -\rmD V(q) - \sfK^* \eta \!\! 
 \\[0.3em]
 - \pl_x( \eta + \sfK q) \ema .
\]
The first two equations are ODEs in $\R^n$, whereas the third equation is a 
transport equation in $\rmL^2(\R;\R^3)$ that has to be interpreted in a
suitable weak or mild form, 
if we want to treat general initial conditions (see the abstract construction
in Theorem~\ref{t:ex-un-full}).
\end{RmExample}

\subsection{The compression property}
\label{ss:compression-assumption}
We next choose the projection operator $\bbP$ and its range $\bfW$. The space $\bfW$ needs to be large enough to contain the range of $\sfC$, but may also be larger. The central requirement that determines the choice of $\bfW$ and the projection $\bbP$ onto $\bfW$ is the following `compression' property.

\begin{restatable}[Compression property of the heat bath]{ass}
{restatableCompressionAssumption} 
  \label{ass:dilation}
  There exists a linear operator~$\bbD$ on $\bfW$ such that 
\begin{equation}
\label{eqass:dilation}
\bbP \ee^{t\JHB} \big|_\bfW   = \begin{cases}
  \ee^{-t\bbD}  & t \geq 0\\
  \ee^{t\bbD^*} &   t \leq 0 
\end{cases}
\qquad\text{on }\bfW.
\end{equation}
The eigenvalues of {$\bbD + \bbD^*$} are strictly positive.
\end{restatable}

\noindent
The left-hand side in the identity~\eqref{eqass:dilation} should be interpreted
as follows.  If we evolve the heat bath starting from $w\in\bfW$, the result
$\ee^{t\JHB}w$ typically is not an element of~$\bfW$. Then~$\bbP$ on the
left-hand side projects $\ee^{t\JHB}w$ back onto $\bfW$, as one does in
the Mori-Zwanzig framework. In particular, with the compression property the
flow map $w\mapsto \bbP\ee^{t\JHB}w$ is equivalent to the memory-less
evolution $\dot w = -\bbD w$ for $t\geq0$.

Assumption~\ref{ass:dilation} in fact implies that $\bbP \ee^{t\JHB}$ is a
strictly dissipative semigroup, since
\[
\norm[\big]{\ee^{-t\bbD}w}_\bfH \leq \ee^{-\alpha t} \|w\|_\bfH 
\qquad \text{for all $t\geq0$ and $w\in \bfW$},
\]
where $\alpha>0$ is the smallest eigenvalue of $(\bbD+\bbD^*)/2$.

\begin{RmRemark}\label{re:Dilation*}
  In Assumption~\ref{ass:dilation}, it is equivalent to assume only the
  behavior for $t\geq0$: the unitary nature of $\ee^{t\JHB}$ then implies that
  $ \bbP\ee^{t\JHB}\bbP = \bbP \bra*{\ee^{-t\JHB}}^* \bbP = \bra*{\bbP
    \ee^{-t\JHB}\bbP}^* = \ee^{t\bbD^*}\bbP$ for $t\leq0 $.
\end{RmRemark}

The context of the compression property is the theory of (dilations and)
compressions. The relevant background is summarised in
Appendix~\ref{sec:Appendix-Dilations}. In essence, the theory of dilations
states that for any contraction semigroup $(C_t)_{t\geq0}$ one can find a
unitary group $(S_t)_{t\in \R}$ on a sufficiently larger Hilbert space such
that $C_t$ is a `coarse-grained version' of $S_t$.  Conversely, one can in this
setting regard the dissipative semigroup as a compression of a given unitary
group.  In our context, the unitary group is $\ee^{t\JHB}$ and represents the
linear Hamiltonian system and $\ee^{-t\bbD}$ is the contraction semigroup. We
interpret a compression as a coarse-graining of a unitary group.

The crucial implication of the Compression Property~\ref{ass:dilation} is that
the evolution $\ee^{-t\bbD}$ on~$\bfW$ is a semigroup. It is thus in particular
Markovian in the sense that the evolution equation does not involve the
past. In~\eqref{eqdef:Evol-full-2}, we give a representation of the heat bath
variable $\eta$ which involves a memory term. The compression property allows
us to rewrite that memory term as a local term.

For the coupled system, the projection onto $\bfW$, i.e., the part of the heat
bath which interacts with System A, yields a compression semigroup. The
existence of a compression, and with it the associated compression subspace
$\bfW$ is related to the theory of Markovian subspaces in linear systems
theory, see~\cite[Ch.~10]{Lindquist2015a}. Necessary and sufficient conditions
for the existence of compression subspaces are given
in~\cite[Th.~5.4]{Picci1986a}. We discuss the Compression
Property~\ref{ass:dilation} in more detail in Section~\ref{s:dilations}.

{
\begin{remark}[The compression property replaces ergodicity]
  In finite-dimensional Hamiltonian systems, dissipative behavior is often associated with ergodic
  properties. In the context of the infinite-dimensional Hamiltonian heat bath of this paper, the Compression
  Property~\ref{ass:dilation} mimics the property of ergodic systems. In the rest of this paper, we indeed
  show that the resulting system has thermodynamically consistent dissipative behavior.
\end{remark}}

\begin{RmExample}[Running example, part 2]
\label{ex:RunningExa2}
Example \ref{ex:RunningExa1} was chosen such that the unitary shift group
$\bbS_t= \ee^{t\JHB}$ admits an explicit and simple compression. We
choose a linear operator $\bbD$ on $\R^3$ with eigenvalues $\vartheta_1>0$ and
$\vartheta_2\pm \rmi \varsigma$ with $\vartheta_2,\varsigma >0$, and define the
functions $f_j\colon\R\to \R^3$ by
\begin{align*}
\text{$\ds  f_1(y) =  \frac{\chi_{_{\smaller\leq}}\!(y)\ee^{\vartheta_1
      y}}{\ts\sqrt{2\vartheta_1}} \! \bma{@{}c@{}} 1\\ 0 \\ 0\ema\!, \
  f_2(y) =   \frac{\chi_{_{\smaller\leq}}\!(y)\ee^{\vartheta_2
      y}}{\sqrt{2\vartheta_2}} \!
  \bma{@{}c@{}} 0\\ \cos(\varsigma y) \\ \sin(\varsigma y)\ema \!,
  \
  f_3(y) =  \frac{\chi_{_{\smaller\leq}}\!(y) \ee^{\vartheta_2 y}}{\sqrt{2\vartheta_2}} 
  \! \bma{@{}c@{}} 0\\ \!{-} \sin(\varsigma y)\! \\ \cos(\varsigma y)\ema \! ,
$}
\end{align*}
where $\chi_{_{\smaller\leq}}= \bm1_{{]-\infty,0]}}$ is the indicator function
for the negative part of the real line.  We then define $\bfW$ to be the
three-dimensional subspace $\bfW=\mathrm{span}\big\{ f_1, f_2, f_3\big\}$, and
the operator $\bbP$ to be the orthogonal projection onto $\bfW$.  Note that
$\{f_1,f_2,f_3\}$ forms an orthonormal basis of $\bfW\subset \bfH$.
  
The validity of the compression property can roughly be seen as follows: 
For $t>0$ the functions $\bbS_t f_j=f_j(\,\cdot\,{-}t)$ immediately leave
$\bfW$, but they return to $\bfW$, if we cut them off at $y=0$ by multiplying 
with $\chi_{_{\smaller\leq}}$, i.e.\ $\chi_{_{\smaller\leq}} \bbS_t f_j \in
\bfW$. Moreover, we can express
this action by the contraction semigroup $\ee^{-t\bbD}$, see the calculations in
the proof of Proposition~\ref{pr:ExplicitDilation}.
  
Using the orthonormal basis $\{f_1,f_2,f_3\}$ we can identify $\bfW$ with $\R^3$
and using the orthogonal projections $\bbP$ from $\bfH$ to $\bfW$ we find
\begin{align*}
  &\bbD= \begin{pmatrix} \vartheta_1& 0&0\\ 0&\vartheta_2&-\varsigma \\
    0&\varsigma&\vartheta_2 \end{pmatrix} \ \text{ and } 
\\
& \bbP \ee^{t
    \JHB} \big|_\bfW = \ee^{-t \bbD} = \begin{psmallmatrix}
    \ee^{-\vartheta_1 t}&0&0 \\ 0& \ee^{-\vartheta_2t}\cos(\varsigma t)&
    \ee^{-\vartheta_2t}\sin(\varsigma t)\\ 0&-\ee^{-\vartheta_2t}\sin(\varsigma
    t)& \ee^{-\vartheta_2t}\cos(\varsigma t)
  \end{psmallmatrix} \text{ for } t \geq 0.
\end{align*}

Having the explicit representation of $\bfW$, the coupling
$\sfK\colon\R^n \to \bfW\subset \bfH$ can be specified as
$\sfK q = (\sfcc^1{\cdot}q)f_1+ (\sfcc^2{\cdot}q)f_2+ (\sfcc^3{\cdot}q)f_3 $,
where $\sfcc^1, \sfcc^2, \sfcc^3\in \R^n$ are given coupling vectors.  Hence,
the coupling energy reads
\[
  \calH_\calC(q,p,\eta)= \ip{\sfK q}{\bbP\eta}_\bfH = \sum_{j=1}^3
  (\sfcc^j{\cdot} q ) \int_{-\infty}^0 \!\! f_j(x){\cdot} \eta(x) \dd x,
\]
and the coupled ODE-PDE system takes the explicit form 
\[
\frac{\rmd}{\rmd t} \bma{c} q\\ p \\[0.3em] \eta\ema 
= \bma{c} p \\ \!\! -\rmD V(q) - \sum_1^3 \int_{-\infty}^0f_j(y){\cdot} \eta(y)
\dd y \; \sfcc^j 
 \\[0.3em]
 - \pl_x\big( \eta + \sum_1^3 (\sfcc^j{\cdot} q) f_j(\cdot)\big) \ema.
\]
We see that the coupling between the ODEs (first two equations) and the linear
transport equation (third equation) occurs through the observable 
functions $f_j \in \bfW$, which are exponentially decaying. 

As mentioned above, the transport equation has to be interpreted in a suitable
weak or mild form if we want to treat general initial conditions. Note also
that the functions $f_j$ do not lie in the domain of the generator $\JHB$ of
the unitary group $\ee^{t\JHB}$, because $\mathrm{dom}(\JHB)=\rmH^1(\R;\R^3)$
and hence $\bfW\cap \mathrm{dom}(\JHB)= \{0\}$.
\end{RmExample}

\subsection{Coarse-graining, part 1: Deterministic initial data}
\label{s:CG-deterministic}

In Theorem~\ref{th:recognize-as-GENERIC}, we rigorously derive an equation for
a coarse-grained version of the microscopic system, and show how it can be
recognized as a \Generic equation.  \Generic equations are evolution equations
for an unknown $y(t) \in \bfY$ that can be written in terms of two functionals
$\calE$ and $\calS$ and two operators~$\bbJ$ and~$\bbK$ as
\begin{equation}
  \label{eq.generic-intro}
  \dot y = \bbJ(y)\rmD\calE(y) + \bbK(y)\rmD\calS(y).
\end{equation}
The quintuple $(\bfY,\calE,\calS,\bbJ,\bbK)$ is used to denote the \generic
system.  Here $\calE$ and $\calS$ are referred to as the energy and entropy
functionals, and $\rmD$ is an appropriate concept of derivative (see
Remark~\ref{re:notation-derivative-D}). 
The operators $\bbJ$ and $\bbK$ are called the \emph{Poisson} and \emph{Onsager} operators, and they define a Poisson bracket and a dissipative bracket. 
In the \Generic framework, the
components $\calE$, $\calS$, $\bbJ$, and $\bbK$ are required to satisfy a
number of additional conditions, which in particular imply that along any
solution of~\eqref{eq.generic-intro} the functional $\calE$ is constant and the
functional $\calS$ is non-decreasing. We describe the \Generic framework in
more detail in Section~\ref{s:GENERIC}.

To describe this result precisely we first define the coarse-graining map $\pi_{z,w,e}$, 
\begin{equation}
  \label{eqdef:projection-zwe}
\bfZ\ti\bfH \ni (z,\eta) \ \stackrel{\pi_{z,w,e}}\longmapsto\  
\bra*{\,z,\,w:=\bbP \eta,\, e := \tfrac12\|\eta\|_\bfH^2 - 
\tfrac12\|w\|_\bfH^2 }\in \bfZ\ti\bfW\ti\R.
\end{equation}
\AAA Thus, we apply the coarse-graining map to the coupled system
$\big(\bfZ\ti\bfH, \calH_{\mathrm{total}},\Jmac{\oplus}\JHB\big)$, which is an
infinite-dimensional Hamiltonian system, and obtain a coarse-grained system on
the reduced state space $\bfY := \bfZ\ti\bfW\ti\R$. This is done by
projecting out infinitely many degrees of freedom in $\bfH$ and keeping only
the new observables $w = \bbP \eta$ in the finite-dimensional subspace 
$\bfW $ of $\bfH$. Note
that System A is considered already as a `coarse-grained' system, which is 
preserved by $\pi_{z,w,e}$, and coupled to the new observable $w\in \bfW$. \EEE 
The third variable $e$ allows us to recover
an important piece of information from the missing part
$\eta-w = \eta -\bbP\eta$, namely its total energy
$\tfrac12\|\eta-w\|_\bfH^2 = \tfrac12 \|\eta\|_\bfH^2 -\tfrac12 \|w\|_\bfH^2$.
The variable $e$ is only used to keep track of the energy exchanged with the
heat bath. Already for $(z,w)=(z,\bbP\eta)$ we obtain the closed system of ODEs
\begin{equation}
  \label{eq:ClosedODE.z.w}
\dot z = \bbJ_\calA(z)\big( \rmD \calH_\calA(z)+ \sfC^*w), \qquad \dot w=-\bbD
w + \sfC z,
\end{equation}
see Figure \ref{fig:DetIniDat}. However, this reduced model does not have a
thermodynamical structure. 
\begin{figure}[ht]
  \centering

\begin{tikzpicture}[scale=0.65]

\node[color=blue] at (-8.5,6.99) {\bfseries\itshape macro};
\node[color=blue] at (-8.5,2.3) {\bfseries\itshape micro};

\begin{scope}[xshift=0cm, yshift=5cm]

\draw[fill=blue!10] (-7,0) rectangle (7.2,4);
\node[right] at (-6.7,3.2) {\small System of ODEs for $(z,w)\in \bfZ\ti \bfW$}; 
\begin{scope}[xshift=-4.5cm, yshift=0cm] 
\draw[fill= gray!20] (-2,0.5) rectangle (2,2.5);
\node[color=black] at (0,1.5){$(\bfZ,\calH_\calA, \bbJ_\calA)$};
\end{scope}
\begin{scope}[xshift=-0.3cm, yshift=0cm] 
\draw[very thick,<->] (-1.5,1.5)--(1.5,1.5);
\node[color=blue] at (0,2){coupling}; 
\node[color=black] at (0,0.8){$ \ip {\sfC z}{w}_\bfW$};
\end{scope}
\begin{scope}[xshift=4.2cm, yshift=0cm] 
\draw[fill= gray!20] (-2.5,0.5) rectangle (2.5,2.5);
\node[color=black] at (0,2){ $w_t  = \bbP\eta_t \in \bfW$}; 
\node[color=black] at (0,1){ $\dot w_t=-\bbD w_t {+}\sfC z_t$};
\end{scope}
\end{scope}

\draw[fill=blue!10] (-7,0.0) rectangle (7.2,4);
\node[right] at (-6.7,3.2) {\small Hamiltonian system for $(z,\eta)\in \bfZ\ti \bfH$};
 
\begin{scope}[xshift=-4.5cm] 
\draw[fill= gray!20] (-2,0.5) rectangle (2,2.5);
\node[color=black] at (0,1.5){$(\bfZ,\calH_\calA, \bbJ_\calA)$};
\end{scope}
\draw[very thick,<->] (-2,1.5)--(2,1.5);
\node[color=blue] at (0,2){coupling}; 
\node[color=black] at (0,0.8){$ \ip {\sfC z}{\bbP\eta}_\bfH$};

\begin{scope}[xshift=4.5cm, yshift=0cm] 
\draw[fill= gray!20] (-2.0,0.5) rectangle (2.2,2.5);
\node[color=black] at (0,1.5){$(\bfH,\calH_\calB, \JHB)$};
\end{scope}

\draw[ultra thick, color=blue!70!red, ->] (3.8,2) to [out=20, in=-50] 
  node[pos=0.54, right]{$\bm\bbP$}(6.4,7.0);
\end{tikzpicture}

\caption{The special coupling and the compression property lead to the closed
  deterministic systems of ODEs \eqref{eq:ClosedODE.z.w} for $(z,w)\in \bfZ\ti
  \bfW$, where $w$ is the observable part $w = \bbP\eta$. }
\label{fig:DetIniDat}
\end{figure}

Using the additional scalar energy variable $e$, Part~\ref{part-1:t:main-coarse-graining-result} of
Theorem~\ref{th:recognize-as-GENERIC} states the following. If $(z,\eta)$ evolves as described in
Section~\ref{s:setting-this-paper}, starting from a given point $z(t=0) = z_0\in \bfZ$ and
$\eta(t=0) = w_0\in \bfW$, then $y(t) := \bra[\big]{z(t),w(t),e(t)} := \pi_{z,w,e}\bra[\big]{z(t),\eta(t)}$ is
an exact solution of the ordinary differential equation
\begin{multline}
  \label{eq:formulation-as-GENERIC-intro}
\frac{\rmd}{\rmd t} \begin{pmatrix} z\\w\\e\end{pmatrix}
= 
\underbrace{\bma{ccc} \Jmac  (z)&0&0\\0 &-\DDanti&0 \\0 &0&0\ema}_{\bbJ_\GEN(z,w,e)}
\underbrace{\bma{c} \rmD \Hmac(z) + \sfC^* w \\ w +\sfC z \\ 1 \ema}_{\rmD
  \calE_\GEN(z,w,e)}\\ 
{}+ \underbrace{\frac1\beta \bma{ccc} 
0&0&0\\ 
0&\DDsym & -\DDsym(w{+}\sfC z) \\
0&-{\ip{\DDsym(w{+}\sfC z)}{ \Box}_\bfH} &\ip{\DDsym(w{+}\sfC z)}{w{+} \sfC
  z}_\bfH \ema}_{\bbK_\GEN(z,w,e)} 
\underbrace{\bma{c} 0 \\ 0 \\ \beta \ema}_{\rmD \calS_\GEN(z,w,e)}.
\end{multline}
Here the operator $\bbD$ is decomposed into symmetric and skew-symmetric parts,
\begin{equation}
  \label{eq:D-decom}
    \bbD= \DDsym+\DDanti \quad \text{with } \DDsym=\frac12(\bbD{+}\bbD^*) \ \text{
      and } \ \DDanti:=\frac12(\bbD{-}\bbD^*),
\end{equation}
and we have defined
\begin{align*}
  \calE_\GEN(z,w,e) &:= \calH_\calA(z) + \ip{\sfC z}{w}_\bfH+  \frac12\|w\|_\bfH^2 + e ,   \\
  \calS_\GEN(z,w,e) &:= \beta e.
  \end{align*}
Here $\beta>0$ is a parameter that will be discussed in Section~\ref{ss:CG-part2-intro}. 

\medskip

The result~\eqref{eq:formulation-as-GENERIC-intro} of Theorem~\ref{th:recognize-as-GENERIC} is interesting
for a number of reasons.  As we already noted in Section~\ref{ss:compression-assumption}, it is remarkable
that coarse-graining of this type leads to a closed system: typically, if in a dynamical system one disregards
a part of the state space, then unique solvability tends to break down. In this case the Compression
Property~\ref{ass:dilation} guarantees that although knowledge of the initial values $z(0)$ and $w(0) =\bbP\eta(0)$ is not enough to uniquely solve for the microscopic system
$(z(t),\eta(t))$, it \emph{does} suffice for the prediction of the \emph{projected} version of
$(z(t),\eta(t))$. This points to a special role of the compression property in the study of this type of
systems.  The state space of the infinite-dimensional Hamiltonian system contains a finite-dimensional
manifold that is exactly invariant under the full flow, and reduction of the flow to the manifold is described
by a \Generic\ equation.

Secondly, by following how the microscopic Hamiltonian-system components morph
through coarse-graining into the \Generic
equation~\eqref{eq:formulation-as-GENERIC-intro} we can gain some understanding
about the \Generic framework itself. In
Sections~\ref{ss:GENERIC-interpretation-EJK}, \ref{ss:explanation-S-part1},
and~\ref{ss:explanation-S-part2}, we use the results of this paper to shed
light on the modeling origin of the four components $\calE$, $\calS$, $\bbJ$,
and $\bbK$ and the relations between them.

\subsection{Coarse-graining, part 2: Heat bath with positive temperature}
\label{ss:CG-part1.5-intro}

In Section \ref{s:CG-deterministic} the elements $\eta \in \bfH$ still have finite
energy $\frac12\|\eta\|^2$, and now we want to extend the theory to a heat bath
with positive temperature $1/\beta$, which also implies infinite energy. We first only look to
the interplay of the positive-temperature heat bath and the compression
property. This will explain basic features of our theory without too much
technicalities, because everything can be explained by Gaussian measures and
processes. In Section \ref{ss:CG-part2-intro} we will then do the same
extension on the full Hamiltonian system $(\bfZ\ti \bfH,\calH_\mafo{total},
\bbJ_\calA {\otimes} \JHB)$.   
 
For creating a Gaussian equilibrium measure corresponding to an energy
distribution ``$\ee^{-\beta\|\eta\|_\bfH^2/2}\dd \eta$'' one embeds $\bfH$ into
a bigger Hilbert space $\bfX$ by a symmetric and positive definite trace-class
operator $\bbC:\bfX\to \bfX$ such that $\bfH=\bbC^{1/2}\bfX$ and
$\|\bbC^{1/2} \eta\|_\bfH=\|\eta\|_\bfX$. With this a Gaussian measure
$\gamma_\beta$ is defined such that the characteristic function
$\chi= \calF \gamma_\beta$ takes the form
$\chi(\xi) = \exp\big(-\frac1{2\beta}\|\bbC^{1/2} \xi\|_\bfX^2\big) $. By the
general theory of Gaussian measures, it follows that the orthogonal projector
$\bbP: \bfH \to \bfW\subset \bfH$ extends to a random variable $\wh\bbP$ on the
space $\rmL^2(\bfX,\gamma_\beta)$, see
Section~\ref{ss:Microscopic-positive-temperature}. We show that
it is also possible to choose $\bfX$ and $\bbC$ in such a way that the unitary
group $(\ee^{t\Jmac})_{t\in\R}$ on $\bfH$ extends by density to a strongly
continuous group $(\extG_t)_{t\in \R}$ on $\bfX$ (see Section
\ref{ss:extens-chosen-contin}). Thus, for each (infinite-energy) state
$\eta\in \bfX$ of the heat bath, we can define the process
\[
t \mapsto Y_t(\eta) := \wh\bbP \extG_t \eta \ \in \bfW.
\]
It turns out that the process $Y$ is completely independent of the choice of
$\bfX$ and $\bbC$ and is uniquely determined by the compression property
alone. Indeed, it is a stationary multivariate Ornstein-Uhlenbeck process with
mean $0\in \bfW$ and covariance matrix 
\[
\bbE\big( Y_t\oti Y_s\big) = \begin{cases} \frac1\beta\, \ee^{-(t-s)\bbD} &
                       \text{for }t\geq s, \\[0.2em] 
               \frac1\beta \,\ee^{-(s-t)\bbD^*} & \text{for } t\leq s. 
       \end{cases}
\]
see Lemma \ref{l:Y-is-OU}. Hence, the paths $Y_t(\eta)$ are the solutions of
the linear SDE
\[
 \rmd Y_t = - \bbD Y_t \,\rmd t + \Sigma\,\rmd B_t \quad \text{with } \Sigma=
 \big(\tfrac{\ds 1}{\ds\beta}\,(\bbD{+}\bbD^*) \big)^{1/2} .
\]
See Figure \ref{fig:HeatBath.to.OU} for an explanation how the drift term
and the noise term arise via the projection $\wh\bbP:\bfX\to \bfW$ and the
extended semigroup $\extG_t:\bfX\to \bfX$.
\begin{figure}[ht]
  \centering
\begin{tikzpicture}[scale=0.65]

\node[color=blue] at (-8.5,8) {\bfseries\itshape macro};
\node[color=blue] at (-8.5,2.3) {\bfseries\itshape micro};

\begin{scope}[xshift=0cm,yshift=6.2cm]
\draw[fill=blue!10] (-7,0) rectangle (8.7,4);
\node[right] at (-6.7,3.3) {\small Stochastic process for $Y_t=\wh\bbP \extG_t
  \eta_0 \in \bfW$}; 

\begin{scope}[xshift=-0.7cm, yshift=0cm] 
\draw[fill= gray!20] (-0.1,0.3) rectangle (9.2,2.6);
\node[color=black, right] at (-6.1,2.1){{\smaller $\bfW$-valued}};
\node[color=black, right] at (-6.1,1.5){{\smaller Ornstein-Uhlenbeck}};
\node[color=black, right] at (-6.1,0.9){{\smaller  process}}; 
\node[color=black, right] at (0,2){ $\rmd Y_t = - \bbD Y_t \dd t + \Sigma
  \rmd B_t $}; 
\node[color=black, right] at (0,1){ $Y_t = \ee^{-t\bbD} \wh\bbP \eta_0 +
  \int_0^t \ee^{-(t-s) \bbD} \Sigma \dd B_s$};
\end{scope}
\end{scope}

\draw[fill=blue!10] (-7,-0.8) rectangle (8,5);
\node[right] at (-6.7,4.2) {\small Linear Hamiltonian systems};
 
\node at(-3.8,2.6) {{\smaller positive temperature $1/\beta$}};
\node at(-3.8,0.4) {{\smaller zero temperature}};
\begin{scope}[xshift=1.4cm] 
\draw[fill= gray!20] (-2.0,-0.3) rectangle (6,3.3);
\node[color=black] at (0,2.6){$(\bfX,\HB^\infty, \gamma_\beta)$};
\node[color=black] at (4.2,2.6){$\eta_t = \extG_t \eta_0$};
\draw[thick,->] (0,1)--(0,2);
\node[color=black,right] at (0,1.5){extension};
\node[color=black] at (0,0.4){$(\bfH,\calH_\calB, \JHB)$};
\node[color=black] at (4.2,0.4){$\eta_t = \ee^{t\JHB} \eta_0$};
\end{scope}

\draw[ultra thick, color=blue!80!red,->] (5.1,3) -- 
        node[pos=0.7, left]{$\ee^{-t\bbD} = \wh\bbP \extG_t|_\bfW$} (1.8,6.7);
\draw[ultra thick, color=blue!80!red,->] (5.3,3) -- 
        node[pos=0.7, right]{$\wh\bbP \extG_t(I{-}\wh\bbP)\eta_0$\!\!} (6.1,6.7);
\end{tikzpicture}

\caption{The Gaussian measure $\gamma_\beta$ encodes the temperature
  $1/\beta>0$. After extending $\bbP:\bfH \to \bfW$ and $\ee^{t\JHB}$ to
  $\wh\bbP$ and $\extG_t$, respectively, the compression property implies that
  $t\mapsto Y_t = \wh\bbP \extG_t\eta_0$ defines a Gaussian process, namely the
  $\bfW$-valued Ornstein-Uhlenbeck process with drift $\bbD Y_t$ and noise
  intensity $\Sigma$ with $\Sigma\Sigma^*=\frac1\beta(\bbD{+}\bbD^*)$. }
\label{fig:HeatBath.to.OU}
\end{figure}

In the following we discuss how this construction can be coupled to the
Hamiltonian System~A.

\subsection{Coarse-graining, part 3: Random initial data}
\label{ss:CG-part2-intro}

In Section \ref{s:CG-deterministic} we already derived a \generic
structure for the case of a heat bath with zero temperature. But the result was ad
hoc and didn't explain why this is the ``correct form'' of $\calE_\GEN$,
$\calS_\GEN$, $\bbJ_\GEN$, and $\bbK_\GEN$. This question can only be answered
by looking at the coupled system at positive temperature $1/\beta>0$. This
means that we consider now initial conditions $(z_0,\eta_0)\in \bfZ\ti \bfX$
that are random according to a suitable equilibrium measure.
 
This approach will reveal the deeper connections to the coarse-graining of
systems with thermal fluctuations. For instance, it allows us to give a meaning
to the parameter $\beta>0$ in~\eqref{eq:formulation-as-GENERIC-intro} as an
inverse temperature and to `entropy' $\calS_\GEN(z,w,e) = \beta e$ as a
micro-canonical surface area.  In addition, it addresses the slightly unnatural
initial datum of the previous section. There we started the heat bath at
$\eta(t=0) =w_0\in \bfW$, which fixes the visible part of the bath to some
non-zero~$w_0$, while requiring the (infinite-dimensional) `invisible'
component to be zero. This is an unnatural choice in a Hamiltonian system with a
Hamiltonian $\calH_\calB(\eta) = \frac12\|\eta\|_\bfH^2$ that gives no special
status to $\bfW$, and our choice of random initial data will also improve this
aspect.\bigskip

We choose a distribution for the random initial data that builds on the
`canonical' invariant measure
$\mu_\beta := ``{\ee^{-\beta \calH_{\mathrm{total}}(z,\eta)}}\text{''}$ of the
microscopic Hamiltonian system, where $\calH_{\mathrm{total}}$ is the total
Hamiltonian given in~\eqref{eqdef:Htotal-intro}. Like the Gaussian measure
$\gamma_\beta$ is defined on $\bfX \supset \bfH$, also $\mu_\beta$ can be
rigorously defined on $\bfZ\ti \bfX$ but is no longer Gaussian unless
$\calH_\calA$ is also quadratic. For any $(z_0,w_0)\in \bfZ\ti\bfW$
we then choose a random initial datum $(z_0,\eta_0)$ from the conditional
measure $\mu_\beta(\,\cdot\,|z_0,w_0)$ obtained by conditioning $\mu_\beta$ to
$z=z_0$ and $\wh\bbP\eta = w_0$. This leads to an initial datum with
deterministic `visible' part and random `invisible' part in the heat bath. Once
the initial datum is selected, the microscopic Hamiltonian system generates a
deterministic evolution for $(z(t),\eta(t))$. This leads to a \sde for the
variables $(z,w)$ on the space $\bfZ\ti \bfW$, see Figure \ref{fig:CG}. As in
Section \ref{ss:CG-part1.5-intro} it is the Gaussianity of $\gamma_\beta$ and
the Compression Property~\ref{ass:dilation} that guarantee
that the (Mori-Zwanzig type) coarse graining map 
\begin{equation}
  \label{eqdef:pi-zw-intro}
\bfZ\ti\bfX \ni (z,\eta) \ \stackrel{\pi_{z,w}}\longmapsto\  
(z,w:=\wh\bbP \eta )\in \bfZ\ti\bfW 
\qquad \qquad \text{(see Section~\ref{ss:Microscopic-positive-temperature} for
  $\wh\bbP$)}, 
\end{equation}
leads to a memory-less Markov process that can be described by a \sde. 

\begin{figure}[ht] 
\centering 
\begin{tikzpicture}[scale=0.65]

\node[color=blue] at (-8.5,8) {\bfseries\itshape macro}; \node[color=blue] at
(-8.5,2.3) {\bfseries\itshape micro};

\draw[fill=blue!10] (-7,6) rectangle (8.7,10); \node[right] at (-6.7,9.1)
{\small Stochastic process for $(z,w)\in \bfZ\ti \bfW$};
\begin{scope}[xshift=-4.5cm, yshift=6cm] \draw[fill= gray!20] (-2,0.5)
rectangle (2,2.5); \node[color=black] at (0,1.5){$(\bfZ,\calH_\calA,
\bbJ_\calA)$};
\end{scope}
\begin{scope}[xshift=-0.7cm, yshift=6cm] \draw[very thick,<->]
(-1.5,1.5)--(1.5,1.5); \node[color=blue] at (0,2){coupling}; \node[color=black]
at (0,0.8){$ \ip {\sfC z}{w}_\bfW$};
\end{scope}
\begin{scope}[xshift=4.5cm, yshift=6cm] \draw[fill= gray!20] (-3.4,0.3)
rectangle (3.7,2.6); \node[color=black, right] at (-3.2,2){ $w_t = \wh\bbP
\eta_t \in \bfW$}; \node[color=black, right] at (-3.2,1){ noise generated by
$\wh\bbQ \eta_t$};
\end{scope}

\draw[fill=blue!10] (-7,-0.8) rectangle (8,5); \node[right] at (-6.7,4.1)
{\small Hamilton.\ sys. for $(z,\eta)\in \bfZ\ti \bfH$ or $\bfZ\ti \bfX$};
 
\begin{scope}[xshift=-4.5cm] \draw[fill= gray!20] (-2,0.5) rectangle (2,2.5);
\node[color=black] at (0,1.5){$(\bfZ,\calH_\calA, \bbJ_\calA)$};
\end{scope} \draw[very thick,<->] (-2,1.5)--(2,1.5); \node[color=blue] at
(0,2){coupling}; \node[color=black] at (0,0.8){$ \ip {\sfC
z}{\wh\bbP\eta}_\bfH$};

\begin{scope}[xshift=4.5cm] \draw[fill= gray!20] (-2.0,-0.3) rectangle (3,3.3);
\node[color=black] at (0,2.6){$(\bfX,\HB^\infty, \wh\JHB)$};
\draw[thick,->] (0,1)--(0,2); \node[color=black,right] at (0,1.5){extension};
\node[color=black] at (0,0.4){$(\bfH,\calH_\calB, \JHB)$};
\end{scope}

\draw[ultra thick, color=blue!80!red,->] (4.7,3.5) -- node[pos=0.34, above
right]{\!$\wh\bbP$} (3.3,7.7); 
\draw[ultra thick, color=blue!80!red,->] (5.0,3.5) --
node[pos=0.64, right]{$\wh\bbQ=I{-}\wh\bbP$} (7.1,6.6);
\end{tikzpicture}

\caption{The coarse-graining step, illustrated as the transition from the
bottom row to the top row. We separate the heat-bath variable $\eta$ into the
observable part $w = \wh\bbP\eta$ and the remainder $\wh\bbQ\eta=\eta-\wh\bbP\eta$. The pair
$(z,w)$ can be considered a stochastic process driven by $\wh\bbQ\eta$.}
\label{fig:CG}
\end{figure}

Similarly to the deterministic case in Section~\ref{s:CG-deterministic} we also
need the `internal-energy' variable~$e$. For random initial data the formula
$e=\frac12\|\eta\|_\bfH^2 - \frac12\|w\|_\bfH^2$ is not available, however,
since the Hamiltonian $\calH_\calB(\eta) = \frac12\|\eta\|_\bfH^2$ is almost
surely infinite (see Remark~\ref{rem:pos-temp-means-inf-energy}). We circumvent
this problem by following an alternative but equivalent route: we first coarse-grain
 $(z,w)=\pi_{z,w}(z,\eta) =(z,\wh\bbP\eta)$ and then reconstruct $e$
 \emph{a posteriori} from $(z(t),w(t))$ as the energy
 exchanged with the heat bath.\medskip 

The main result for this setup is the second part of Theorem~\ref{th:recognize-as-GENERIC}, which shows that the curve $(z(t),w(t),e(t))$ constructed this way satisfies the following ``\Generic \sde'':
\begin{multline}
  \rmd\! \begin{pmatrix} z\\[-0.3em]w\\[-0.3em]e\end{pmatrix}
  = \pra[\Big]{\bbJ_\GEN \rmD \calE_\GEN(z,w,e) 
  + \bbK_\GEN(z,w)\rmD \calS_\GEN(z,w,e)  
    + \div \bbK_\GEN(z,w)}\dd t \\
  + \bbSigma_\GEN(z,w) \dd B(t),
  \label{eq:GENERIC-SDE1-intro}
\end{multline}
where $\bbJ_\GEN(z)$, $\bbK_\GEN$, $\calE_\GEN$, and $\calS_\GEN$ are as in Section~\ref{s:CG-deterministic}, and the mobility $\bbSigma_\GEN$ is given by
\begin{equation*}
  \bbSigma_\GEN(z,w) := 
  \begin{pmatrix}
    0 \\ \Sigma \\ -{\ip{w{+}\sfC z}{ \Sigma\,\square}_\bfH}
  \end{pmatrix} \quad \text{ with }\ \Sigma\Sigma^* = \frac1\beta (\bbD{+}\bbD^*).
\end{equation*}
This equation is a natural stochastic counterpart of the deterministic GENERIC equation~\eqref{eq.generic-intro}.
We discuss equations of this form in more detail in Section~\ref{ss:GENERIC-SDE}.

\bigskip

Again, the derivation of~\eqref{eq:GENERIC-SDE1-intro} is interesting for a
number of reasons. As in the deterministic case, the fact that this is an
\emph{exact} reduction is remarkable: not only is the drift term
in~\eqref{eq:GENERIC-SDE1-intro} Markovian, as we already remarked in
Section~\ref{ss:compression-assumption}, but in addition the noise term is
exact Brownian noise, and therefore memoryless. Considering that the evolution
of the microscopic variable $(z,\eta)$ is deterministic, and therefore
`maximally memory-full', this indeed is remarkable.

In addition, the mobility $\bbSigma_\GEN$ that colors the noise is coupled to the Onsager operator $\bbK_\GEN$ by the fluctuation-dissipation relation
\begin{equation*}
  \bbSigma_\GEN\bbSigma_\GEN^* = \frac 2\beta \bbK_\GEN.
\end{equation*}
This underscores the shared origin of $\bbSigma_\GEN$, $\bbK_\GEN$, and $\calS_\GEN$ in the randomness of the initial data. We indeed show in Section~\ref{ss:explanation-S-part2} how $\calS_\GEN$ can be interpreted as the mass of a microcanonical measure, and how this leads to the corresponding drift $\bbK_\GEN\rmD \calS_\GEN$  in~\eqref{eq.generic-intro} and~\eqref{eq:GENERIC-SDE1-intro}.

\subsection{Outline of this paper}
\label{sec:outline-this-paper}

The layout of this paper is as follows. The microscopic system is described in more detail in
Section~\ref{s:micro}.

In Section~\ref{s:dilations} we employ a compression to derive the coarse-grained system. Specifically, in
Section~\ref{s:compressions-dilations} we introduce compressions and derive an Ornstein-Uhlenbeck process on the projected space.
In Section~\ref{s:macro}
we perform the coarse-graining described in Sections~\ref{s:CG-deterministic} and~\ref{ss:CG-part2-intro}
above, which leads to deterministic and stochastic evolutions. At this stage they still are without \Generic
structure.

In Section~\ref{s:GENERIC} we describe  the \Generic structure in detail and establish some properties of so-called \Generic SDEs. By taking into account
the energy flowing in and out of the heat bath, as described in Sections~\ref{s:CG-deterministic} and~\ref{ss:CG-part2-intro},  we can formulate the equations obtained in Section~\ref{s:macro} as a \generic system, either deterministic or stochastic, and identify the
components. This also gives insight in how the \generic restrictions arise in the coarse-graining
process.

In Section~\ref{s:conclusion} we summarize the results of the paper, and discuss connections with various
other developments. In Appendix~\ref{app:GaussianMeasures} we collect a number of properties of Gaussian
measures that we use, and in Appendix~\ref{app:delayed-proofs} we give the proofs of a number of lemmas.

\subsection*{List of notation, setting, and assumptions}
We denote Hilbert spaces with boldface symbols and their elements with lower case letters.  Below we will use  Hilbert spaces $\bfH$, $\bfW$, and $\bfX$ satisfying $\bfW\subset \bfH\subset\bfX$; $\bfW$ and $\bfH$ share the same inner product $\ip\cdot\cdot _\bfH$, while $\bfX$ has a weaker inner product $\ip\cdot\cdot _\bfX$. 
We use the notation $\bbA^*$ for the Hilbert adjoint of an operator $\bbA$ between two Hilbert spaces.

We use the Lebesgue measure on general finite-dimensional Hilbert spaces; this is defined by identifying the Hilbert space with $\R^N$ through any orthonormal basis. The resulting measure is independent of the choice of basis. 
By the same type of identification any finite-dimensional Hilbert space admits a `standard normal distribution' and a `standard Brownian motion'.
The `narrow' convergence of measures on a Hilbert space~$\bfY$ is defined by duality with $\rmC_b(\bfY)$.

\vfill\eject

\subsubsection*{Other notation}
\begin{small}
\begin{longtable}{lll}
  $\beta>0$ & inverse temperature\\
  $\gamma_\beta$ & canonical Gaussian measure on the heat bath & \eqref{eq:gauss-rigorous}\\
  $\sfC$, $\bbA$ & coupling operator & \eqref{eqdef:HC}, \eqref{eqdef:Ham-full-1a}\\
  $\rmD$ & derivative operator & Rem.~\ref{re:notation-derivative-D}\\
  $\bbD$ & generator of the compressed semigroup & Ass.~\ref{ass:dilation} \\
  $\DDsym, \ \DDanti$ & symmetric and skew-symmetric parts of $\bbD$ &
  \eqref{eq:D-decom}
\\
  $\calH_{\calA,\calB,\calC,\mathrm{total}}$ & components of the microscopic
  Hamiltonian & Sec.~\ref{s:setting-this-paper}
\\
  $\calH_\calB^\infty$ & Extension of $\calH_\calB$ from $\bfH$ to $\bfX$ &
  Sec.~\ref{ss:CG-part2-intro}
\\
  $\bbI$ or $\bbI_{n\times n}$ & identity operator or matrix
\\
  $\bbJ_\calA$, $\JHB$ & Poisson operators of microscopic system &
  Secs~\ref{s:setting-this-paper}, \ref{ss:Microscopic-finite-energy}
\\
  $\mafo{Lin}(\bfX,\bfY)$ & bounded linear maps from $\bfX$ to $\bfY$
\\
  $\mu_\beta$ & invariant measure for the microscopic system &
  Secs~\ref{ss:CG-part2-intro}, \ref{ss:inv-measure-on-product-space}
\\
  $\mu_{\beta,\bfZ}$ & invariant measure for System~A &
  \eqref{def:mubeta-mubetaz-assumptions-section}, \eqref{eqdef:mubetaZ}
\\
  $\calN(m,\bbC)$ & normal distribution & App.~\ref{app:GaussianMeasures}
\\
  $\bbP$, $\wh\bbP$ & projection onto $\bfW$ from $\bfH$ or $\bfX$
  &Secs~\ref{ss:Microscopic-finite-energy},
  \ref{ss:Microscopic-positive-temperature}
\\
  $\pi_{z,w}$ & projection onto coarse-grained variables $(z,w)$ &
  \eqref{eqdef:pi-zw-intro}
\\
  $\pi_{z,w,e}$ & projection onto coarse-grained variables $(z,w,e)$ &
  \eqref{eqdef:projection-zwe}
\\
  $\bbQ$, $\wh\bbQ$ & complementary projections: $\bbQ= I-\bbP$, $\wh\bbQ = I-
  \wh\bbP$ & Secs~\ref{ss:Microscopic-finite-energy}, 
\ref{ss:Microscopic-positive-temperature}
\\
  $\wh S_t$ & extension of the group $\ee^{t\JHB}$ from $\bfH$ to $\bfX$ &
  Ass.~\ref{ass:HB-evol-op-ct-on-X}
\\
  $\bbU_t$ & evolution operator of the microscopic system & Th.~\ref{t:ex-un-full}\\
\end{longtable}
\end{small}

\bigskip\noindent
\textbf{Key assumptions} (see Section~\ref{s:micro} for details)
\medskip

{\setlength{\leftmargini}{1em}
\begin{itemize}\itemsep=0.3em
  \item $\bfH$, $\bfX$, and $\bfZ$ are real separable Hilbert spaces, with
    $\dim \bfZ < \infty$ and with 
    compact embedding $\bfH\hookrightarrow \bfX$ such that
    $\bfH=\bbC^{1/2}\bfX$, where $\bbC=\bbC^*:\bfX\to \bfX$ is positive definite and
    trace class;
  \item $\JHB$ is the generator of both a strongly continuous unitary group
    $\ee^{t\JHB}$ on $\bfH$, and $\extG_t$ is an extension forming a
    strongly continuous group on $\bfX$;
  \item $\bbP$ is the orthogonal projection in $\bfH$ onto the
    finite-dimensional linear subspace $\bfW\subset\bfH$;
  \item $\sfC$ is a bounded linear map from $\bfZ$ to $\bfH$ satisfying
    $\bbP\sfC = \sfC$;
  \item $\beta>0$ and $\Hmac \in \rmC^2(\bfZ;\R)$ is such 
  \[
    \exp\bra*{-\beta \bra*{\Hmac(z) - \tfrac12 \|\sfC z\|_\bfH^2}}\in L^1(\bfZ),
  \]
  and $\rmD \calH_\calA$ is globally Lipschitz continuous; 
  \item $\Jmac(z)$ is a  skew-adjoint linear operator on $\bfZ$;
  \item $\gamma_\beta = \calN_{\bfX}\bra*{0,\beta^{-1}\bbC}$, \ $\mu_\beta  =
    \mu_{\beta,\bfZ}(\rmd z) \calN_\bfX (-\sfC z,\beta^{-1}\bbC)(\rmd \eta)$,
    and \vspace{-0.4em} 
  \begin{equation}
    \label{def:mubeta-mubetaz-assumptions-section}
    \mu_{\beta,\bfZ}(\rmd z) = \frac1{\sfZ_\beta} \exp\!\big({-}\beta\Hmac(z) {+}
    \frac\beta 2\|\sfC z\|_\bfH^2 \big) \dd z;
  \end{equation}
\item Compression Property~\ref{ass:dilation}:
  $\bbP\;\! \ee^{t\JHB} |_\bfW = \ee^{-t\bbD}$ for $t\geq0$; and
  $\bbD{+}\bbD^*$ has strictly positive eigenvalues.
\end{itemize}
}

\subsubsection*{Acknowledgements}
The authors wish to thank No\'e Corneille, Greg Pavliotis, Artur Stephan, and
the `Wednesday morning session' at Eindhoven University of Techology for
valuable input during the preparation of this manuscript. The research of
A.M. has been partially funded by Deutsche Forschungsgemeinschaft (DFG) through
grant CRC 1114 ``Scaling Cascades in Complex Systems'' (project no.\
235221301), Subproject C05 Effective Models for Materials and Interfaces with
Multiple Scales.

\subsubsection*{Data availability statement}
No datasets were generated in the preparation of this paper. 

\subsubsection*{Statements and Declarations}
The authors have no relevant financial or non-financial interests to disclose.

\section{The microscopic system}
\label{s:micro}

The microscopic setup of this paper is a well-known model (see,
e.g.,~\cite{JaksicPillet98,KSTT02LTBL,Reyb06OCS}). As mentioned above, two separate
Hamiltonian systems are coupled, where one is finite-dimensional (`System~A'), and the other is
infinite-dimensional and is interpreted as a heat bath (`System~B'). System~A may be nonlinear, while
System~B and the coupling will be linear. This is illustrated in Figure~\ref{fig:Step1}.

\noindent 
We construct the microscopic system and its properties in the following steps:
\begin{enumerate}

\item We define the two subsystems in spaces $\bfZ$ and $\bfH$ and their
  coupling in state space $\bfZ\ti\bfH$
  (Section~\ref{ss:Microscopic-finite-energy});

\item We consider the heat-bath subsystem at positive temperature, which
  requires extending the state space $\bfH$ to a larger space $\bfX$
  (Section~\ref{ss:Microscopic-positive-temperature});

\item We study the evolution of the heat bath on this larger state space, and
  characterize the observable part (Lemma~\ref{l:props-Y});

\item We use the properties of the heat bath to establish existence of
  solutions of the coupled system (Theorem~\ref{t:ex-un-full}).
\end{enumerate}

\subsection{Finite energy, deterministic evolution}
\label{ss:Microscopic-finite-energy}
We first recall from the Introduction the microscopic system at finite energy and without
randomness.  System~A is a fixed Hamiltonian system $(\bfZ,\Hmac,\Jmac)$, where
$\bfZ$ is a finite-dimensional real Hilbert space, $\Hmac$ is a nonlinear
Hamiltonian on $\bfZ$ satisfying some conditions (see below), and  $\Jmac$
defines a possibly state-dependent Poisson structure, i.e.\
$\Jmac(z)\colon\bfZ\to \bfZ$ is a skew-symmetric linear operator satisfying $\div_z \Jmac = 0$ and 
Jacobi's identity~\eqref{eq:JacobiIdent}.  We write $z\in \bfZ$ for the
variables in System~A, and the uncoupled motion of System~A is governed by the
equation $\dot z = \Jmac(z) \rmD\Hmac(z)$.

System~B is imagined to be a `heat bath'. Concretely, System~B is a Hamiltonian
system $(\bfH,\HB,\JHB)$, where $\bfH$ is an infinite-dimensional separable
real Hilbert space, $\HB := \frac12\|\cdot\|_\bfH^2$, and $\JHB$ is a
densely defined, unbounded skew-adjoint operator on $\bfH$ that generates
a strongly continuous, unitary group $(\ee^{t\JHB})_{t\in \R}$. We
write $\eta\in\bfH$ for the states of the heat bath, and the uncoupled
evolution of the heat bath is formally given by $\dot \eta = \JHB \eta  =\JHB
\rmD\HB(\eta)$.  In particular, $\eta(t)=\ee^{t\JHB} \eta(0)$ describes
the evolution of the heat bath. 

We assume that the heat bath is `large'\footnote{In the context of this paper, the heat bath being `large'
  translates into assumptions that the heat bath has many dimensions and a large total energy; in the limit
  this becomes an assumption of infinite-dimensionality and a Gaussian measure. We explore this limit in
  detail in Section~\ref{ss:explanation-S-part2}.}, but that only part of the heat bath is observable by
System~A. This limited observability is implemented by an orthogonal projection operator
$\bbP\colon\bfH \to \bfW =\bbP\bfH \subset \bfH$ and its complement $\bbQ := \bbI - \bbP$.  System~A only
perceives the heat bath through the operator $\bbP$. The range $\bfW$ of $\bbP$ is assumed to be a
finite-dimensional subspace of $\bfH$, and we use the important Compression Property~\ref{ass:dilation} about
$\bbP$ and $\bfW$ that we already mentioned in the introduction. Note that $\bbP$ is self-adjoint in $\bfH$.

The coupling between the macroscopic system and the heat bath is given via a linear coupling operator
$\sfC \colon\bfZ\to\bfH$ defining the bilinear coupling Hamiltonian 
\[
  \HC(z,\eta) := \ip {\bbA z}{\bbP\eta}_\bfH =
  \bra[\big]{\underbrace{\bbP\bbA}_{\sfC} z|\eta}_\bfH. 
\]
This expression shows how the impact of the heat bath on System~A is mediated
by the operator~$\bbP$. The total microscopic Hamiltonian system then is the
sum of the three parts,
\begin{align}
  \label{eqdef:Ham-full-1}
  \calH_{\mathrm{total}}(z,\eta)
  &= \Hmac(z)+ \HB(\eta)+ 
                   \HC(z,\eta) \\
  &= \Hmac(z) + \frac12\|\eta\|_\bfH^2 + \ip {\sfC z}{\eta}_\bfH \label{eqdef:Ham-full-1a}\\
  &= \Hmac(z) - \frac12 \|\sfC z\|_\bfH^2 + \frac12 \|\eta {+} \sfC z\|_\bfH^2 .
    \label{eqdef:Ham-full-2}
\end{align}

We can now make the conditions on $\Hmac$ concrete: we assume that
$\Hmac\in C^2(\bfZ)$ such that $\rmD\Hmac$ is globally Lipschitz continuous,
and that for our chosen $\beta>0$ we have 
\begin{equation}
\label{eq-cond:augmented-Hmac-integrable}\exp\bra*{-\beta\bra*{\Hmac(z) -
    \tfrac12 \|\sfC z\|_\bfH^2}} \in L^1(\bfZ). 
\end{equation}

The coupled (finite-energy) microscopic system is defined in terms of the
total Hamiltonian system
$(\bfZ\ti \bfH, \calH_{\mathrm{total}}, \bbJ_\calA{\oplus}\JHB )$, where
the Poisson operator
$\bbJ_{\mathrm{total}}(z) := \bbJ_\calA(z) {\oplus} \JHB\mathbin{\colon} 
\bfZ\ti
\mafo{dom}(\JHB)\to \bfZ\ti \bfH$ still is skew-adjoint and satisfies Jacobi's
identity \eqref{eq:JacobiIdent}, since $\JHB $ is constant. The associated
Hamiltonian equations are~\eqref{eqdef:Evol-full-intro}, repeated here for convenience
\begin{subequations}
  \label{eqdef:Evol-full}
  \begin{align}
    \label{eqdef:Evol-full-1}
    \dot z_t &= \Jmac(z) \big( \rmD \Hmac (z_t) +
             \sfC^* \eta_t\big), \\
    \dot \eta_t &= \JHB\,\big(\eta_t + \sfC z_t\big).  \label{eqdef:Evol-full-2}	
  \end{align}
\end{subequations}
Here we follow the convention from probability to denote the time-dependence of a variable with a subscript
$t$.  When necessary for
better readability we write the time dependence in parentheses, as
in~$z(s)$. Existence and uniqueness of solutions of these equations is proved
in Theorem~\ref{t:ex-un-full} below. This will need a suitable weak
formulation of \eqref{eqdef:Evol-full-2} because $\mafo{range}(\sfC) \subset
\bfW$ is not contained in the domain of the unbounded operator $\JHB$ (see Remark \ref{re:Dom.JB.bfW}). 

\begin{RmRemark}[Convention for the derivative $\rmD$]
  \label{re:notation-derivative-D}
  In the context of Hilbert spaces, there is a choice to consider either the Fr\'echet derivative, which is an element of the dual space, or the `gradient', which is the Riesz representative of the Fr\'echet derivative and therefore an element of the primal space. In this paper we always choose the second interpretation, as  the Riesz representative of the Fr\'echet derivative.
  With this choice, we have for instance for all $\eta\in \bfH$ and $z\in \bfZ$,
  \[
    \rmD_\eta \frac12 \|\eta\|^2_\bfH = \eta, \qquad 
    \rmD_\eta \ip{\sfC z}{\eta}_\bfH = \sfC z, \qquad\text{and}\qquad
    \rmD_z \ip{\sfC z}{\eta}_\bfH =\rmD_z \ip{z}{\sfC^*\eta}_\bfZ = \sfC^* \eta.
    \qedhere
  \]
  This choice also implies that an operator such as $\JHB$ is an unbounded operator from $\bfH$ to $\bfH$ (and not from $\bfH^*$ to $\bfH$).
\end{RmRemark}

\subsection{Positive temperature: The heat bath}
\label{ss:Microscopic-positive-temperature}

We now extend this setup to positive temperature, first focusing on just the
heat bath and its evolution. We recall in Appendix~\ref{app:GaussianMeasures}
the properties of Gaussian measures that we use.\medskip

Let $\beta>0$ be interpreted as inverse temperature. Since the heat-bath
Hamiltonian is $\calH_\calB(\eta)= \frac12 \|\eta\|_\bfH^2$, a thermodynamic or
statistical-mechanical point of view (see Section~\ref{ss:explanation-S-part2})
suggests to consider a Gaussian measure with the formal expression
\begin{equation}
  \label{eq:gauss-formal}
  \gamma_\beta(\rmd\eta)  = \exp\bra*{-\frac\beta 2 \|\eta\|_\bfH^2}\rmd\eta
\end{equation}
on a suitable space containing $\bfH$. This measure should be the natural invariant measure for the
Hamiltonian evolution $\dot \eta_t = \JHB \eta_t$ of the isolated heat bath. To define such a Gaussian
  measure on an infinite-dimensional Hilbert space $\bfH$, an embedding into a bigger space is required, and
  is provided by the Cameron-Martin setting sketched below. A reason is that, unlike in finite dimensions, a
  special structure, namely trace-class nature, of the covariance is required. For example, in infinite
  dimensions there is no Gaussian measure with the identity as covariance, unlike in finite dimensions.

We briefly recall the Cameron-Martin setting which establishes a meaning for
this expression; see Appendix~\ref{app:GaussianMeasures} for more detail. Let
the Hilbert space $\bfH$ be embedded into a larger separable Hilbert
space~$\bfX$, in such a way that the identity map $\mathrm{id}: \bfH\to\bfX$ is
a Hilbert-Schmidt operator. Then
$\bbC := \mathrm{id}\,\mathrm{id}^*:\bfX\to\bfX$ is of trace class, and we
rigorously define $\gamma_\beta$ as the centered Gaussian measure
\begin{equation}
  \label{eq:gauss-rigorous}
\gamma_\beta := \calN_\bfX\bra*{0,\beta^{-1} \bbC}. 
\end{equation}
The space  $\bfH$ can be considered to be the `Cameron-Martin space' of $\gamma_\beta$.

From now on we will consider the larger space $\bfX$ as the state space for the
heat bath, and the larger space $\bfZ\ti \bfX$ as the state space for the whole
system. We therefore need to extend the operator $\bbP$ from $\bfH$ to
$\bfX$. By Lemma~\ref{l:measurable-extensions}, the operator
$\bbP\colon\bfH\to \bfW\subset \bfH$ has a \emph{measurable linear extension}
$\wh \bbP\colon \bfX \to \bfW\subset \bfH$, which is $\gamma_\beta$-a.e.\
uniquely characterized by the property that it is linear and coincides with
$\bbP$ on~$\bfH$ (see Definition~\ref{def:measurable-linear-extensions}). The
complementary operator $\wh \bbQ := \mathrm{id} - \wh\bbP:\bfX\to\bfX$
similarly is a measurable linear extension of $\bbQ$.  We also define
$\wh\sfC^* := \sfC^* \wh \bbP \colon\bfX\to \bfZ$, which is a measurable extension of 
$ \sfC^* \colon\bfH\to \bfZ$.

\begin{RmRemark}[Positive temperature means infinite energy]
  \label{rem:pos-temp-means-inf-energy}
  The properties of Gaussian measures imply that if $\eta\sim \gamma_\beta$,
  then $\HB(\eta) = \frac12\|\eta\|^2_\bfH = +\infty$ almost surely---despite
  the fact that $\HB(\eta)$ appears in the exponent in~\eqref{eq:gauss-formal}.
  This can be recognized by choosing an orthonormal basis $\{e_k\}_{k\in \N}$
  of $\bfH$, and observing that the vector
  $\bra[\big]{\ip{\eta}{e_1}_\bfH,\dots,\ip{\eta}{e_n}_\bfH}$ of the first $n$
  coordinates of $\eta$ in this basis has variance $n/\beta$
  (see~\eqref{eq:variance-of-finite-rank-operator}); therefore the 
  expectation of $\|\eta\|_\bfH^2$ can not be finite.  See,
  e.g.,~\cite[Sec.~2.1]{Reyb06OCS} or~\cite[Rem.~2.2.3]{Boga98GM} for a
  discussion of this property.

As a result, `finite energy' requires `zero temperature'; positive temperature
implies infinite energy.
\end{RmRemark}

\bigskip

We formally denote the extended system with finite temperature (and
infinite energy) by
$\mathfrak S_\beta:=(\bfZ\ti \bfX, \calH_\mafo{total}^\infty,
\Jmac{\oplus}\JHB,\gamma_\beta)$, such that the new ingredients
$\bfX\supset \bfH$ and $\gamma_\beta = \calN_\bfX(0,\beta^{-1} \bbC)$ are
visible. Here $\calH_\mafo{total}^\infty $ coincides with
$\cal\calH_\mafo{total}$ on $\bfZ\ti \bfH$ and equals $+\infty$ for
$\eta\in \bfX\setminus\bfH$.  In Section \ref{su:WellPosednessSDE} we discuss how the
solution operator for the finite-energy microscopic system
$(\bfZ\ti \bfH, \calH_\mafo{total}, \Jmac{\oplus}\JHB)$ can be
extended to the infinite-energy system~$\mathfrak S_\beta$.\bigskip

The evolution operator $\ee^{t\JHB}$, that describes the heat-bath unitary
evolution in the smaller space $\bfH$, can similarly be extended
$\gamma_\beta$-a.e.\ to an operator $\extG_t$ on
$\bfX$. Without loss of generality we make the following simplifying assumption
(see Section~\ref{ss:extens-chosen-contin} for a discussion):

\begin{ass}[Continuous evolution of the heat bath]
\label{ass:HB-evol-op-ct-on-X}
The operator $\ee^{t\JHB}\colon\bfH\to \bfH$ can be extended to a strongly
continuous group $(\extG_t)_{t\in \R}$ of bounded linear operators $\extG_t$
from $\bfX$ to $\bfX$.
\end{ass}

The extended operators $\extG_t$ leave $\gamma_\beta$ invariant as is
shown in Lemma~\ref{le:GauMea.inv.extG}.\medskip 

For random initial datum $\eta_0\sim \gamma_\beta$, the process 
\[
t\mapsto \extG_t\eta_0 \in \bfX
\]
is a stochastic process in $\bfX$; its evolution is deterministic in time, but
the randomness of the initial datum implies that the process as a whole is
stochastic. Under Assumption~\ref{ass:HB-evol-op-ct-on-X}, this process has
continuous sample paths in $\bfX$.  The next lemma states some properties of
the \emph{projected} version of this process, which has values in $\bfW$, which
is equipped with the stronger norm of $\bfH$.

\begin{lemma}
\label{l:props-Y}
Adopt the setup above, including Assumption~\ref{ass:HB-evol-op-ct-on-X}, and set
\[
Y_t(\eta_0) := \wh \bbP \extG_t \eta_0 \quad \text{for all} 
 \quad t\in \R,\  \eta_0 \in \bfX.
\]
Then, the following holds:
\begin{enumerate}
\item\label{l:props-Y:ct-in-H} For every $\eta_0\in \bfH$ we have  $t\mapsto Y_t (\eta_0)\in C(\R;\bfW)$.
\item\label{l:props-Y:ct-in-X} Choose the probability space $(\bfX, \calB(\bfX)_{\gamma_\beta^{}}, \gamma_\beta)$, where $\calB(\bfX)_{\gamma_\beta^{}}$ is the $\gamma_\beta$-completion of the Borel $\sigma$-algebra $\calB(\bfX)$. If $\eta_0$ is drawn from $\bfX$ with law $\gamma_\beta$, then both ${Y_t} = {Y_t}(\eta_0)$ and ${Y_t}={Y_t}(\wh\bbQ\eta_0)$ are measurable $\bfW$-valued stochastic processes with sample paths in $L^2_{\mathrm{loc}}(\R;\bfW)$; in addition ${Y_t}(\eta_0)$ is stationary.

\end{enumerate}
\end{lemma}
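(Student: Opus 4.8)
The plan is to treat the two parts in turn, dispatching part~\ref{l:props-Y:ct-in-H} directly and reducing part~\ref{l:props-Y:ct-in-X} to a few scalar measurable linear functionals together with the invariance of $\gamma_\beta$ under $(\extG_t)$. For part~\ref{l:props-Y:ct-in-H}, note that for $\eta_0\in\bfH$ one has $\extG_t\eta_0=\ee^{t\JHB}\eta_0\in\bfH$ for all $t$ (because $\extG_t$ extends $\ee^{t\JHB}$, which is a group on $\bfH$) and $\wh\bbP$ agrees with $\bbP$ on $\bfH$, so $Y_t(\eta_0)=\bbP\,\ee^{t\JHB}\eta_0$. Continuity $t\mapsto Y_t(\eta_0)\in C(\R;\bfW)$ is then immediate from strong continuity of $(\ee^{t\JHB})_{t\in\R}$ on $\bfH$ and boundedness of the orthogonal projection $\bbP\colon\bfH\to\bfW$ (here the stronger $\bfH$-norm on the finite-dimensional space $\bfW$ causes no difficulty).

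For the measurability in part~\ref{l:props-Y:ct-in-X} I would fix an orthonormal basis $w_1,\dots,w_m$ of $\bfW$ and write $\bbP\eta=\sum_{j=1}^m\ip{\eta}{w_j}_\bfH\,w_j$ on $\bfH$; each functional $\eta\mapsto\ip{\eta}{w_j}_\bfH$ belongs to the Cameron--Martin space $\bfH$, so by Lemma~\ref{l:measurable-extensions} its measurable linear extension $\ell_j$ is a measurable linear functional, i.e.\ an $L^2(\gamma_\beta)$-limit of functionals in $\bfX^*$, and $Y_t(\eta)=\sum_{j=1}^m\ell_j(\extG_t\eta)\,w_j$. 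The crucial point is that the $L^2(\gamma_\beta)$-valued curve $t\mapsto\ell_j\circ\extG_t$ is stationary and mean-square continuous: invariance of $\gamma_\beta$ under $\extG_t$ (Lemma~\ref{le:GauMea.inv.extG}) gives $\norm{\ell_j\circ\extG_t-\ell_j\circ\extG_s}_{L^2(\gamma_\beta)}=\norm{\ell_j\circ\extG_{t-s}-\ell_j}_{L^2(\gamma_\beta)}$, and the right-hand side tends to $0$ as $t-s\to0$ by approximating $\ell_j$ in $L^2(\gamma_\beta)$ by a functional $\ip{\cdot}{a}_\bfX$ and using strong continuity of the adjoint group $(\extG_t^*)$ on $\bfX$. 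A mean-square (hence stochastically) continuous process admits a jointly measurable version, which I take henceforth; this makes $Y(\eta_0)$ a measurable $\bfW$-valued stochastic process.

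For the $L^2_{\mathrm{loc}}$-paths, joint measurability and Tonelli reduce $\Expectation\int_{-T}^{T}\norm{Y_t}_\bfH^2\,\dd t$ to $\int_{-T}^{T}\Expectation_{\gamma_\beta}\norm{\wh\bbP\eta}_\bfH^2\,\dd t$, using that $\extG_t\eta_0\sim\gamma_\beta$ for every fixed $t$; the integrand is the finite constant $\dim\bfW/\beta$ by the variance formula~\eqref{eq:variance-of-finite-rank-operator} for finite-rank operators, so $\int_{-T}^{T}\norm{Y_t}_\bfH^2\,\dd t<\infty$ almost surely, and intersecting the a.s.\ events over $T\in\N$ yields $Y(\eta_0)\in L^2_{\mathrm{loc}}(\R;\bfW)$. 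Stationarity of $Y(\eta_0)$ follows because the group law $\extG_{t_i+h}=\extG_{t_i}\extG_h$ on $\bfX$ gives $\bigl(Y_{t_i+h}(\eta_0)\bigr)_i=F(\extG_h\eta_0)$ with $F(\eta):=\bigl(\wh\bbP\extG_{t_i}\eta\bigr)_i$ measurable, and $\extG_h$ pushes $\gamma_\beta$ forward to itself. For the claims about $Y(\wh\bbQ\eta_0)$ I would avoid analysing $\extG_t\wh\bbQ\eta_0$ directly: since $\wh\bbP\eta_0\in\bfW\subset\bfH$ almost surely, one has $Y_t(\wh\bbQ\eta_0)=\wh\bbP\extG_t(\eta_0-\wh\bbP\eta_0)=Y_t(\eta_0)-\bbP\,\ee^{t\JHB}\wh\bbP\eta_0$, and by part~\ref{l:props-Y:ct-in-H} the subtracted term has continuous (hence locally square-integrable) paths and is jointly measurable in $(t,\omega)$, so $Y(\wh\bbQ\eta_0)$ inherits measurability and $L^2_{\mathrm{loc}}$-paths from $Y(\eta_0)$.

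The step I expect to be the genuine obstacle is the joint measurability in part~\ref{l:props-Y:ct-in-X}: because $\wh\bbP$ is only a $\gamma_\beta$-a.e.\ defined measurable linear map rather than a bounded operator on $\bfX$, one cannot simply compose continuous maps, and the argument must go through mean-square continuity (equivalently, through an $L^2(\dd t\otimes\gamma_\beta)$-approximation of $t\mapsto\ell_j\circ\extG_t$ by the $\bfX^*$-functionals and extraction of an a.e.\ convergent subsequence) to produce a bona fide jointly measurable version. Everything else is routine bookkeeping with the invariance of $\gamma_\beta$ under $(\extG_t)$ and strong continuity.
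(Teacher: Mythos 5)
Your proof is correct, and for part~\ref{l:props-Y:ct-in-H}, the stationarity claim, the variance computation via~\eqref{eq:variance-of-finite-rank-operator}, the Fubini/Tonelli step, and the decomposition $Y_t(\wh\bbQ\eta_0)=Y_t(\eta_0)-\bbP\ee^{t\JHB}\wh\bbP\eta_0$ it coincides with the paper's argument. The one genuinely different step is the joint measurability. The paper obtains it directly, by factoring $(t,\eta_0)\mapsto\wh\bbP\extG_t\eta_0$ as the jointly continuous map $(t,\eta_0)\mapsto\extG_t\eta_0$ (Assumption~\ref{ass:HB-evol-op-ct-on-X}) followed by the measurable linear map $\wh\bbP$ of Lemma~\ref{l:measurable-extensions}, and treats $Y(\wh\bbQ\eta_0)$ by the same composition after splitting off the continuous term. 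You instead reduce to the scalar functionals $\ell_j$, prove mean-square continuity of $t\mapsto\ell_j\circ\extG_t$ from the invariance of $\gamma_\beta$ and approximation from $\bfX^*$ (which requires, and correctly uses, strong continuity of the adjoint group on the reflexive space $\bfX$), and then invoke the existence of a jointly measurable version of a stochastically continuous process. Your route is longer but yields mean-square continuity as a useful by-product; the paper's is shorter and, more importantly, establishes measurability of the \emph{concretely defined} map $(t,\eta_0)\mapsto\wh\bbP\extG_t\eta_0$ rather than of a modification. That distinction is the only point I would ask you to tidy up: the lemma (and its use in Theorem~\ref{t:ex-un-full}, where $Y$ is fed into the integral equation~\eqref{eq:z-Y}) refers to the process as defined, so after passing to a measurable version you should either note that for each fixed $t$ the version agrees $\gamma_\beta$-a.s.\ with $\wh\bbP\extG_t\eta_0$ (which suffices for everything downstream, since only $\rmd t\otimes\gamma_\beta$-equivalence classes enter), or simply adopt the paper's composition argument, which avoids introducing a version altogether. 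Your handling of $Y(\wh\bbQ\eta_0)$ by the triangle inequality is in fact slightly cleaner than the paper's separate trace estimate.
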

\noindent
The proof is given in Appendix~\ref{app:delayed-proofs}.
Part~\ref{l:props-Y:ct-in-H}, the continuity in $\bfH$, follows directly from
the definitions; but more interesting is part~\ref{l:props-Y:ct-in-X},
where we re-interpret the evolution as a stochastic process. The randomness of
this process is naturally parametrized by the starting point $\eta_0$ drawn
from $\gamma_\beta$, leading to the choice of probability space
$(\bfX, \calB(\bfX)_{\gamma_\beta^{}}, \gamma_\beta)$. The fact that the sample
paths are in $L^2_{\mathrm{loc}}$ follows from the fact that for fixed $t$,
$Y_t(\eta_0)$ is a Gaussian random variable in $\bfW$ with finite and
$t$-independent variance, so that
\[
\Expectation_{\eta_0\sim\gamma_\beta} \int_a^b \|Y_t(\eta_0)\|_\bfH^2 \,\rmd t
  =   \int_a^b \Expectation _{\eta_0} \|Y_t(\eta_0)\|_\bfH^2
  = C\;\!(b{-}a)
  \qquad \text{for } {-}\infty<a<b<\infty.
\]
The additional characterization of ${Y_t}(\wh\bbQ\eta_0)$ will be useful below, when we solve the initial-value problem with a given value of $w_0 := \wh\bbP\eta_0$.

\subsection{Well-posedness of the stochastic process on  the product space}
\label{su:WellPosednessSDE}

With the characterization of the process ${Y_t}$ by Lemma~\ref{l:props-Y} we can now construct the solution of the coupled microscopic system on the product space $\bfZ\ti\bfH$ (deterministic and with finite energy) or $\bfZ\ti\bfX$ (random and with infinite energy).

In Theorem~\ref{t:ex-un-full} below we also transform the equation~\eqref{eqdef:Evol-full-2} into a form that allows for a rigorous definition of a solution. The issue in~\eqref{eqdef:Evol-full-2} is that the range of the operator~$\sfC$ might not be included in the domain of the unbounded operator~$\JHB$; the integrated versions~\eqref{eq:EvolFullExistTheorem-2} and~\eqref{eq:EvolFullExistTheorem-ZX-2} address this.

\begin{theorem}
\label{t:ex-un-full}
  Assume that $\rmD \calH_\calA $ is globally Lipschitz continuous on $\bfZ$. Then there exists a collection of nonlinear operators $\{\bbU_t\}_{t\in\R}$ on $\bfZ\ti \bfX$  with the following properties:
\begin{enumerate}
  \setlist{nolistsep}
\item \label{i:t:ex-un-full-ZH} For $(z_0,\eta_0)\in \bfZ\ti \bfH$, the mapping
  $t\mapsto (z_t,\eta_t):= \bbU_t(z_0,\eta_0)$ has the regularity
  $z\in C^1(\R;\bfZ)$ and $\eta\in C(\R;\bfH)$, and it satisfies
\begin{subequations}
\label{eq:EvolFullExistTheorem}
\begin{align}
  \label{eq:EvolFullExistTheorem-1}
  \dot z_t &= \Jmac ( \rmD \Hmac (z_t) +
            \sfC^* \eta_t) &\text{for }& t\in \R\\
    \eta_t &= \ee^{\JHB t} \left( \eta_0+ \sfC z_0\right) - \sfC z_t + \int_0^t \! 
  \ee^{\JHB(t-s)} \sfC\dot z_s  \dd  s  &\text{for }& t\in \R.
  \label{eq:EvolFullExistTheorem-2}
\end{align}
\end{subequations}
\item\label{i:t:ex-un-full-ZX} For all $z_0\in \bfZ$ and $\gamma_\beta$-a.e.\
  $\eta_0\in\bfX$, the function $t\mapsto (z_t,\eta_t):= \bbU_t(z_0,\eta_0)$
  has the regularity $z\in W^{1,2}_{\mathrm{loc}}(\R;\bfZ)$,
  $\eta\in C(\R;\bfX)$ and it satisfies
\begin{subequations}
  \label{eq:EvolFullExistTheorem-ZX}
  \begin{align}
    \label{eq:EvolFullExistTheorem-ZX-1}
    \dot z_t &= \Jmac ( \rmD \Hmac (z_t) +
              \wh \sfC^* \eta_t), && \text{a.e. }t\in \R\\
      \eta_t &= \extG_t \left( \eta_0 {+} \sfC z_0\right) - \sfC z_t + \int_0^t \! 
    \ee^{\JHB(t-s)} \sfC\dot z_s  \dd  s  && t\in \R
    \label{eq:EvolFullExistTheorem-ZX-2}
  \end{align}
\end{subequations}
(note the extended operators\/ $\wh\sfC^*$
in~\eqref{eq:EvolFullExistTheorem-ZX-1} and $\extG_t$
in~\eqref{eq:EvolFullExistTheorem-ZX-2}).  For fixed $z_0$ and random
$\eta_0\sim\gamma_\beta$, the resulting stochastic process
$(z_t,\eta_t)_{t\in\R}$ is a measurable $\bfZ\ti\bfX$-valued process on the
probability space $(\bfX,\calB(\bfX)_{\gamma_\beta},\gamma_\beta)$.
\item\label{i:t:ex-un-full-ZXw} Fix $z_0\in \bfZ$ and $w_0\in \bfW$, and let
  $\eta_0 = w_0 + \xi_0$ with $\xi_0\in \wh\bbQ\bfX$ and
  $\xi_0\sim \wh\bbQ_\#\gamma_\beta$. For almost every such $\xi_0$ the curve
  $(z_t,\eta_t) :=\bbU_t(z_0,\eta_0)$ has the same properties as under
  part~\ref{i:t:ex-un-full-ZX} above.
\end{enumerate}
\end{theorem}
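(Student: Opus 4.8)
My plan is to decouple the system by representing the heat-bath variable through a variation-of-constants formula, substitute this into the equation for $z$, solve the resulting \emph{finite-dimensional} Volterra integro-differential equation by a fixed-point argument, and then read off the regularity of $\eta$ and the measurability in $\eta_0$ from the construction.

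\emph{Step 1: reduction to a closed equation for $z$.} Formally integrating~\eqref{eqdef:Evol-full-2} gives $\eta_t=\ee^{t\JHB}\eta_0+\int_0^t\ee^{(t-s)\JHB}\JHB\sfC z_s\,\rmd s$, which is meaningless because $\ran\sfC\subset\bfW$ avoids $\Dom(\JHB)$ (Remark~\ref{re:Dom.JB.bfW}). Using $\partial_s\ee^{(t-s)\JHB}=-\ee^{(t-s)\JHB}\JHB$ and integrating by parts transfers the unbounded operator onto $\dot z$ and yields precisely~\eqref{eq:EvolFullExistTheorem-2} (and, after replacing $\ee^{t\JHB}\eta_0$ by $\extG_t\eta_0$, its $\bfX$-version~\eqref{eq:EvolFullExistTheorem-ZX-2}); this integral is well defined since $\sfC\colon\bfZ\to\bfW\subset\bfH$ is bounded and $\dot z$ will be at least locally integrable. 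Inserting~\eqref{eq:EvolFullExistTheorem-2} into~\eqref{eq:EvolFullExistTheorem-1}, using $\sfC^*=\sfC^*\bbP$ (and $\wh\sfC^*=\sfC^*\wh\bbP$ on $\bfX$) together with the Compression Property~\ref{ass:dilation} applied to $\sfC z_0,\sfC\dot z_s\in\bfW$, produces for $t\ge0$ the closed equation
\begin{equation}
\label{eq:proof-closed-z}
\dot z_t=\Jmac(z_t)\Bigl(\rmD\Hmac(z_t)-\sfC^*\sfC z_t+g_t+\textstyle\int_0^tK(t{-}s)\,\dot z_s\,\rmd s\Bigr),
\end{equation}
with $K(\tau):=\sfC^*\ee^{-\tau\bbD}\sfC\in\mafo{Lin}(\bfZ,\bfZ)$ bounded and continuous in $\tau\ge0$, and forcing $g_t:=\sfC^*\ee^{-t\bbD}\sfC z_0+\sfC^*Y_t(\eta_0)$, $Y_t(\eta_0):=\wh\bbP\extG_t\eta_0$ as in Lemma~\ref{l:props-Y} (for $t\le0$ replace $\ee^{-\tau\bbD}$ by $\ee^{\tau\bbD^*}$, cf.\ Remark~\ref{re:Dilation*}). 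The decisive point is that $\eta_0$ enters~\eqref{eq:proof-closed-z} \emph{only} through the $\bfW$-valued path $t\mapsto Y_t(\eta_0)$, which by Lemma~\ref{l:props-Y} is continuous for $\eta_0\in\bfH$ and lies in $L^2_{\mathrm{loc}}(\R;\bfW)$ for $\gamma_\beta$-a.e.\ $\eta_0\in\bfX$; in case~\ref{i:t:ex-un-full-ZXw} one writes $Y_t(w_0{+}\xi_0)=\ee^{-t\bbD}w_0+Y_t(\wh\bbQ\eta_0)$ for $t\ge0$, the first term being smooth and the second in $L^2_{\mathrm{loc}}$ by the same lemma applied to $Y(\wh\bbQ\eta_0)$.

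\emph{Step 2: solving~\eqref{eq:proof-closed-z}.} With $v:=\dot z$ and $z_t=z_0+\int_0^t v_s\,\rmd s$, equation~\eqref{eq:proof-closed-z} becomes a fixed-point problem $v=\Phi(v)$ on $C([-T,T];\bfZ)$ when $g\in C$ (part~\ref{i:t:ex-un-full-ZH}) and on $L^2([-T,T];\bfZ)$ when $g\in L^2_{\mathrm{loc}}$ (parts~\ref{i:t:ex-un-full-ZX}--\ref{i:t:ex-un-full-ZXw}). Since $\rmD\Hmac$ is globally Lipschitz, $z\mapsto\sfC^*\sfC z$ is linear, $z\mapsto\Jmac(z)$ is locally Lipschitz, and the Volterra operator $v\mapsto\int_0^\cdot K(\cdot{-}s)v_s\,\rmd s$ has small norm on short intervals, $\Phi$ is a contraction for $T$ small, giving a unique local solution with $z\in C^1$ (resp.\ $z\in W^{1,2}_{\mathrm{loc}}$). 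Concatenating intervals and using the a priori bound $\|v_t\|\le C\bigl(1+\|g_t\|+\int_0^t\|v_s\|\,\rmd s\bigr)$ (from the linear growth of $\rmD\Hmac$ and the boundedness of $\Jmac$ and $K$), Gronwall's inequality and $g\in L^1_{\mathrm{loc}}$ rule out finite-time blow-up, so the solution extends to all of $\R$; alternatively one may invoke conservation of $\calH_{\mathrm{total}}$ together with~\eqref{eq-cond:augmented-Hmac-integrable}. Finally $\eta_t$ is recovered by~\eqref{eq:EvolFullExistTheorem-2} (resp.~\eqref{eq:EvolFullExistTheorem-ZX-2}), and by construction the pair $(z,\eta)=:\bbU_t(z_0,\eta_0)$ solves the stated system. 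Continuity $\eta\in C(\R;\bfH)$, resp.\ $\eta\in C(\R;\bfX)$, follows from strong continuity of $(\ee^{t\JHB})$, resp.\ $(\extG_t)$ (Assumption~\ref{ass:HB-evol-op-ct-on-X}), continuity of $t\mapsto\sfC z_t$, and the standard continuity of $t\mapsto\int_0^t\ee^{(t-s)\JHB}\sfC\dot z_s\,\rmd s$ in $\bfH$ for $\dot z\in L^1_{\mathrm{loc}}$. For random data the whole construction factors through $\eta_0\mapsto Y(\eta_0)\in L^2_{\mathrm{loc}}(\R;\bfW)$, measurable on $(\bfX,\calB(\bfX)_{\gamma_\beta},\gamma_\beta)$ by Lemma~\ref{l:props-Y}, followed by the solution map of~\eqref{eq:proof-closed-z}, which is Lipschitz (hence measurable) in the datum $g$; this gives measurability of the process $(z_t,\eta_t)_{t\in\R}$, and part~\ref{i:t:ex-un-full-ZXw} is the special case $\eta_0=w_0{+}\xi_0$ for which $Y(\eta_0)$ has the same $L^2_{\mathrm{loc}}$ regularity, so Steps~1--2 apply verbatim.

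\emph{Main obstacle.} The one genuine difficulty is the incompatibility of the unbounded generator $\JHB$ with the coupling---$\ran\sfC\subset\bfW$ lying outside $\Dom(\JHB)$---which makes~\eqref{eqdef:Evol-full-2} classically meaningless; the integration by parts in Step~1 resolves this at the cost of a memory term. In the positive-temperature case there is the additional point that $\eta_0\in\bfX$ carries infinite $\bfH$-energy, so $\sfC^*\eta_0$ has no pointwise meaning; this is handled by the observation that $\eta_0$ enters the closed $z$-equation \emph{only} through the finite-dimensional, $L^2_{\mathrm{loc}}$ process $Y=\wh\bbP\extG\eta_0$, whose regularity is exactly the content of Lemma~\ref{l:props-Y} (a consequence of the Compression Property~\ref{ass:dilation}). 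Once past this point, everything reduces to a routine finite-dimensional Volterra equation with $L^2_{\mathrm{loc}}$ forcing.
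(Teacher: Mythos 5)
Your proposal is correct and follows essentially the same route as the paper: both reduce the coupled system to the closed Volterra integro-differential equation for $z$ driven by the projected process $Y_t(\eta_0)=\wh\bbP\extG_t\eta_0$ (the paper's equation~\eqref{eq:z-Y}), solve it by a short-time contraction combined with the global Lipschitz bound on $\rmD\Hmac$, recover $\eta$ from the mild formula, and obtain measurability of the process from Lemma~\ref{l:props-Y} together with the (Lipschitz, hence measurable) dependence of the solution map on the forcing. The only inessential difference is that you invoke the Compression Property already at this stage to rewrite the memory kernel as $\sfC^*\ee^{-\tau\bbD}\sfC$, whereas the paper keeps the kernel in the form $\sfC^*\ee^{(t-s)\JHB}\sfC$, which is bounded without appealing to the compression.
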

\begin{proof}
  By integrating~\eqref{eqdef:Evol-full-2} and substituting
  into~\eqref{eqdef:Evol-full-1} we find the equation for $z$ alone,
\begin{equation}
\label{eq:z-Y}	
\dot z_t = \Jmac \sfC^*Y_t
 + \Jmac \bra*{\rmD\Hmac(z) + \sfC^*(\ee^{t\JHB}\sfC z_0 {-} \sfC z_t) } 
 + \Jmac \int_0^t \sfC^*  \ee^{(t-s)\JHB}\sfC \dot z_s\dd s,
\end{equation}
where ${Y_t} = {Y_t}(\eta_0)$.  For $(z_0,\eta_0)\in \bfZ\ti \bfH$, the process
$t\mapsto Y_t(\eta_0)$ is continuous with values in~$\bfW$ by
Lemma~\ref{l:props-Y}, and a standard ODE proof based on contraction for small
times gives short-time existence and uniqueness of a solution of~\eqref{eq:z-Y}
with the specified regularity. The global Lipschitz bound on $\rmD\calH_\calA$
implies that the time of existence does not depend on the initial datum, and we
obtain existence globally in time. Equation~\eqref{eq:EvolFullExistTheorem-1}
can then be recovered from~\eqref{eq:z-Y} by differentiating in time, and
equation~\eqref{eq:EvolFullExistTheorem-2} can be used to reconstruct
$\eta$. This proves the part 1 of the theorem.

For $z_0\in \bfZ$ and $\eta_0\sim \gamma_\beta$, the stochastic process
$Y(\eta_0)$ has sample paths almost surely in $L^2_{\mathrm{loc}}(\R;\bfW)$ by
Lemma~\ref{l:props-Y}. Again an ODE contraction proof gives the unique
existence of a solution of~\eqref{eq:z-Y} with regularity
$z\in W^{1,2}_{\mathrm{loc}}(\R;\bfZ)$. To recover $\eta$, remark that
$X = X_t(\eta_0) := \extG_t\eta_0\in \bfX$ has sample paths in
$C(\R;\bfX)$; again equation~\eqref{eq:EvolFullExistTheorem-2} can be
used to define $\eta$ in terms of $z$. This proves the existence in
part~\ref{i:t:ex-un-full-ZX}. The measurability of $(z,\eta)$ jointly in $t$
and $\eta_0$ follows from observing that $(t,\eta_0)\mapsto Y_t(\eta_0)$ is
measurable by Lemma~\ref{l:props-Y}, and by Fubini the measurability transfers
to the integral in time.

For part~\ref{i:t:ex-un-full-ZXw}, we use that $Y_t(\eta)$ is linear in
$\eta$ and split $Y_t$ in the first term on the
right-hand side in~\eqref{eq:z-Y} as
\[
  Y_t(\eta_0) = Y_t(\wh\bbQ\eta_0) + Y_t(\wh\bbP\eta_0) = Y_t(\wh\bbQ\eta_0) +
  \bbP\ee^{t\JHB}\wh\bbP\eta_0.
\]
The second term on the right-hand side above equals $\bbP\ee^{t\JHB}w_0$ and is
continuous in time; the first term again has sample paths almost surely in
$L^2_{\mathrm{loc}}(\R;\bfW)$ by Lemma~\ref{l:props-Y}. The sum therefore
almost surely is in $L^2_{\mathrm{loc}}(\R;\bfW)$ and can be treated the same
way as in the proof of part~\ref{i:t:ex-un-full-ZX}.
\end{proof}

\begin{RmExample}[Running example, part 3]
\label{ex:RunningExa3} For our model discussed in Examples~\ref{ex:RunningExa1}
and~\ref{ex:RunningExa2} we can explicitly construct the larger Hilbert
space $\bfX\supset \bfH= \rmL^2(\R;\R^3) = \rmL^2(\R)^3$. In fact, this can be done
independently for each of the three components. We follow the classical
construction in  \cite{Reyb06OCS} and choose 
\[
\bfX = X\ti  X \ti X \quad \text{with } X:= \rmH^{-2}(\R) + \rmL^2_{-2}(\R) .
\]
Here $X$ contains distributions $T$ that can be decomposed into $T=T_1+T_2$
with $T_1\in \rmH^{-2}(R)$ and $T_2\in \rmL^2_{-2}(\R)$, i.e.\ $x\mapsto
(1{+}x^2)^{-1}T_2(x) \in \rmL^2(\R)$. Clearly, the shift operator $\extG_t: T
\mapsto T(\!\;\cdot\;\!{-}t) $ generates a continuous semigroup that is the
extension of the unitary shift group $(\ee^{t\pl_x})_{t\in \R}$ on
$\rmL^2(\R)$.  

The regularity theory for Brownian motion shows that for all $\alpha\in
[0,1/2)$ the paths $t\mapsto B_t$ lie in
$\rmH^\alpha_\mafo{loc}(\R)$ almost surely, but not in $\rmH^{1/2}_\mafo{loc}(\R)$ (see e.g.\
\cite{HytonenVeraar08}). Thus, the white-noise distributions $\eta = \rmd B_t$ 
lie in $H^{\alpha-1}_\mafo{loc}(\R)$ only. 
Moreover, our dilation functions $f_j$, $j=1,2,3$, in
Example \ref{ex:RunningExa2} have a jump at $y=0$ and are otherwise
smooth. Hence, they lie in $\rmH^{\beta}(\R)$ with $\beta<1/2$, but not in
$H^{1/2}(\R)$. Thus, the extension of the projection $\bbP: h \to \sum_{j=1}^3
\ip{f_j}{h} f_j$ to $\wh\bbP$ remains nontrivial in the sense that the subspace
$L$ of $\bfX$ of full measure needs to be constructed by the abstract methods 
of Lemma \ref{l:measurable-extensions}. 
\end{RmExample}

\subsection{Invariant measure on the product space $\protect\bfZ\ti \bfX$}
\label{ss:inv-measure-on-product-space}

Since the stochastic process on $\bfZ\ti\bfX$ is a Hamiltonian system with the
Hamiltonian $\calH_{\mathrm{total}}$ given in~\eqref{eqdef:Ham-full-1}, a
natural stationary measure should be of the (formal) form
\[
  \exp\bra[\big]{-\beta \calH_{\mathrm{total}}(z,\eta)}\dd z\dd\eta.
\]
We make this expression rigorous by splitting off the
infinite-dimensional Gaussian component, as in~\eqref{eqdef:Ham-full-2}:
\[
   \exp\bra*{-\beta \calH_{\mathrm{total}}(z,\eta)}
  = \exp\bra*{-\beta\big(\Hmac(z) - \frac12\|\sfC z\|_\bfH^2 \big)}
  \exp\bra*{-\frac\beta 2\|\eta {+} \sfC z\|_\bfH^2}.
\]
This leads to the rigorous definition 
\begin{equation}
  \label{eqdef:mubeta}
  \mu_\beta(\rmd z\dd\eta) :=  \mu_{\beta,\bfZ}(\rmd z) \calN_\bfX (-\sfC
  z,\beta^{-1}\bbC)(\rmd \eta), 
\end{equation}
where 
\begin{equation}
  \label{eqdef:mubetaZ}
\mu_{\beta,\bfZ}(\rmd z) = \frac1{\sfZ_\beta} \exp\bra*{-\beta\big(\Hmac(z) -
  \frac12\|\sfC z\|_\bfH^2 \big) }\dd z. 
\end{equation}

The following result shows that the measure $\mu_\beta$ as defined above is
indeed invariant under the flow. The proof needs some nontrivial steps because
of the  coupling of the finite-dimensional and the
infinite-dimensional system. On the one hand, $\bfW$ is not contained in the
domain of $\Jmac$ and hence not in the domain of the generator of the extended
group $(\extG_t)_{t\in \R}$. On the other hand, the measure $\gamma_\beta$ is
invariant under $(\extG_t)_{t\in \R}$, but the covariance matrix $\bbC\colon \bfX\to
\bfX$ is not, see the proof of Lemma \ref{le:GauMea.inv.extG}.

\begin{theorem}[Invariance of $\mu_\beta$]
\label{th:MeasInvariant} Assume the above setting, including Assumption~\ref{ass:dilation}, as well as
\begin{align}
\label{eq:Div.Jmac=0}
&\Jmac \in \rmC^1_\rmb(\bfZ;\mafo{Lin}(\bfZ)) \quad \text{and}\quad 
    \div \Jmac(z)=0 \text{ on } \bfZ,
\\
\label{eq:StrongerIntegrability} 
& z\mapsto \big(1+ \| \rmD\Hmac(z) {-} \sfC^*\sfC z\|\big)
\ee^{-\beta(\Hmac(z)-\|\sfC z\|^2/2)} \quad \text{lies in } \rmL^1(\bfZ). 
\end{align}
Then, the measure
$\mu_\beta$ defined in \eqref{eqdef:mubeta} is invariant under the flow generated by
\eqref{eq:EvolFullExistTheorem-ZX}. 
\end{theorem}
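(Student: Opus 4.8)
The plan is to show that the pushforward of $\mu_\beta$ under the time-$t$ flow map $\bbU_t$ equals $\mu_\beta$ for every $t$, by differentiating in $t$ and showing the generator of the flow annihilates $\mu_\beta$ in the weak (distributional) sense. Equivalently, I would verify that for every cylindrical test function $\varphi\in\rmC^1_\rmb(\bfZ\ti\bfX)$ depending only on $(z,w)$ with $w=\wh\bbP\eta$ together with finitely many other Gaussian coordinates, the identity $\frac{\rmd}{\rmd t}\big|_{t=0}\int \varphi\circ\bbU_t\,\rmd\mu_\beta = 0$ holds. The key simplification, already available from Section~\ref{s:macro} and the Compression Property~\ref{ass:dilation}, is that the $(z,w)$-marginal evolves by an autonomous SDE, so it suffices to check invariance of the $(z,w)$-marginal of $\mu_\beta$ under that SDE; the remaining ``invisible'' Gaussian directions $\wh\bbQ\eta$ carry the product measure $\wh\bbQ_\#\gamma_\beta$, which is preserved because $\extG_t$ leaves $\gamma_\beta$ invariant (Lemma~\ref{le:GauMea.inv.extG}) and the coupling only feeds into $w$, not back from $\wh\bbQ\eta$.

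\textbf{Step 1 (reduce to the $(z,w)$-marginal).} Using Theorem~\ref{t:ex-un-full}\eqref{i:t:ex-un-full-ZX} and the decomposition $\eta_t = \wh\bbP\eta_t + \wh\bbQ\eta_t$, I would argue that under $\mu_\beta$ the pair $(z_t,w_t)$ is a Markov process solving the SDE derived in Section~\ref{s:macro} (the $(z,w)$-part of \eqref{eq:GENERIC-SDE1-intro} without the $e$-component), with noise coming from $Y_t(\wh\bbQ\eta_0)$, and that the conditional law of $\wh\bbQ\eta_t$ given the trajectory of $(z,w)$ remains the conditional Gaussian. So invariance of $\mu_\beta$ reduces to invariance of its $(z,w)$-marginal $\nu_\beta := (\pi_{z,w})_\#\mu_\beta = \mu_{\beta,\bfZ}(\rmd z)\,\calN_\bfW(-\sfC z,\beta^{-1}\bbP\bbC\bbP|_\bfW)(\rmd w)$ under that finite-dimensional SDE.

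\textbf{Step 2 (stationary Fokker--Planck on $\bfZ\ti\bfW$).} Now everything is finite-dimensional. Writing the SDE for $(z,w)$ as $\rmd z = \Jmac(\rmD\Hmac(z)+\sfC^*w)\,\rmd t$, $\rmd w = (-\bbD w + \sfC z)\,\rmd t + \Sigma\,\rmd B_t$ with $\Sigma\Sigma^* = \beta^{-1}(\bbD+\bbD^*)$, I would plug the density of $\nu_\beta$ — proportional to $\exp(-\beta\calH_{\mathrm{total}}(z,w)) = \exp(-\beta(\Hmac(z)+\tfrac12\|w\|^2+\ip{\sfC z}{w}))$ up to the $z$-only renormalization — into the stationary Fokker--Planck (forward Kolmogorov) equation $\mathcal{L}^*\rho_\beta = 0$ and verify it term by term. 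The drift in $z$ contributes $-\div_z(\Jmac\rmD_z\calH_{\mathrm{total}}\cdot\rho_\beta)$, which vanishes using $\div_z\Jmac = 0$ (hypothesis \eqref{eq:Div.Jmac=0}) and skew-symmetry of $\Jmac$, exactly as Liouville's theorem for the Hamiltonian part. The remaining $w$-terms form an Ornstein--Uhlenbeck generator in $w$ with drift $-\bbD w + \sfC z$ (for fixed $z$) and diffusion $\tfrac12\Sigma\Sigma^*$; the Gaussian $\calN_\bfW(-\sfC z,\beta^{-1}\bbP\bbC\bbP|_\bfW)$ — wait, more precisely the mean must be chosen so that $-\bbD(\text{mean}) + \sfC z = 0$, giving mean $\bbD^{-1}\sfC z$; matching this with $-\sfC z$ via the identity $\sfC = \bbP\sfC$ and the structure of $\bbD$ relative to $\bfW$ is where the compression identity re-enters. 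The fluctuation--dissipation relation $\Sigma\Sigma^* = \beta^{-1}(\bbD+\bbD^*)$ is precisely what makes the conditional Gaussian with covariance $\beta^{-1}\bbI_\bfW$ stationary for the OU part; one also needs the cross term (coupling of the $z$-drift into $w$ via $\sfC z$ inside $\calH_{\mathrm{total}}$) to cancel against the OU drift term $\sfC z\cdot\rmD_w\rho_\beta$, which works because $\rmD_w\calH_{\mathrm{total}} = w+\sfC z$ and $\rmD_z\calH_{\mathrm{total}}$ contains the matching $\sfC^*w$ contribution.

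\textbf{Step 3 (justify the infinite-dimensional bookkeeping).} Finally I would lift the finite-dimensional stationarity back to $\bfZ\ti\bfX$: since $\mu_\beta = \nu_\beta \otimes (\text{conditional Gaussian on }\wh\bbQ\bfX)$ and the flow does not transport information from $\wh\bbQ\eta$ back into $(z,w)$ (the coupling $\sfC$ has range in $\bfW$, so $\wh\bbQ\eta$ only diffuses freely and pushes $\gamma_\beta$-invariantly under $\extG_t$), the product structure is preserved, and $(\bbU_t)_\#\mu_\beta = \mu_\beta$ follows. The integrability hypothesis \eqref{eq:StrongerIntegrability} is needed to make the formal Fokker--Planck computation rigorous — it guarantees the drift $\rmD\Hmac - \sfC^*\sfC z$ is $\mu_{\beta,\bfZ}$-integrable so that integrations by parts against compactly-supported-in-$z$ test functions have no boundary contributions and the generator is well-defined on a dense domain.

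\textbf{Main obstacle.} The genuinely delicate point is Step~3: reconciling the invariance of the infinite-dimensional Gaussian $\gamma_\beta$ under $\extG_t$ with the fact (noted in the remark preceding the theorem) that the covariance operator $\bbC$ itself is \emph{not} $\extG_t$-invariant, and that $\bfW$ lies outside the domain of the generator $\JHB$. I expect to handle this by never differentiating the infinite-dimensional semigroup directly: instead I would work entirely with the mild/integral formulation \eqref{eq:EvolFullExistTheorem-ZX-2}, use Lemma~\ref{le:GauMea.inv.extG} as a black box for $\gamma_\beta$-invariance of $\extG_t$, and use the compression property to express the only $\bfW$-relevant part of the dynamics through the bounded finite-dimensional semigroup $\ee^{-t\bbD}$, thereby confining all calculus to the finite-dimensional marginal where \eqref{eq:Div.Jmac=0}–\eqref{eq:StrongerIntegrability} apply. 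Measurability and the Fubini arguments needed to commute the time-integral with the $\eta_0$-expectation are routine given Theorem~\ref{t:ex-un-full}.
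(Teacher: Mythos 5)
Your Steps 1--2 essentially reproduce the finite-dimensional stationarity computation that the paper carries out separately (the explicit check $\mathcal L^\star\nu_\beta=0$ in the remark following Theorem~\ref{t:main-coarse-graining-result}; note that the drift in $w$ is $-\bbD(w+\sfC z)$ rather than $-\bbD w+\sfC z$, and the conditional covariance is $\beta^{-1}\bbI_\bfW$, which resolves the ``wait'' in your Step~2). But this only establishes invariance of the marginal $\nu_\beta=(\pi_{z,w})_\#\mu_\beta$, a strictly weaker statement than the theorem, which concerns the full measure $\mu_\beta$ on $\bfZ\ti\bfX$. The genuine gap is Step~3. The assertion that ``the flow does not transport information from $\wh\bbQ\eta$ back into $(z,w)$'' and that ``$\wh\bbQ\eta$ only diffuses freely'' fails in both directions: the noise driving $(z,w)$ \emph{is} $\wh\bbP\extG_t\wh\bbQ\eta_0$, and conversely, by \eqref{eq:EvolFullExistTheorem-ZX-2},
\[
\wh\bbQ\eta_t=\wh\bbQ\extG_t(\eta_0+\sfC z_0)+\int_0^t\wh\bbQ\,\ee^{(t-s)\JHB}\sfC\dot z_s\dd s,
\]
where the integral term is nonzero (since $\ee^{(t-s)\JHB}$ maps $\bfW$ out of $\bfW$, cf.\ Remark~\ref{re:Dom.JB.bfW}) and depends on the whole $z$-history. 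Moreover $\wh\bbQ\extG_t\eta_0$ and the past noise $(\wh\bbP\extG_s\wh\bbQ\eta_0)_{s\le t}$ are correlated for $t>0$, so conditioning on the $(z,w)$-trajectory changes the law of $\wh\bbQ\eta_t$. Showing that the conditional law of $\wh\bbQ\eta_t$ given $(z_t,w_t)$ is again $\wh\bbQ_\#\gamma_\beta$ and independent of $(z_t,w_t)$ is essentially the whole difficulty, and your proposal supplies no argument for it.

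The paper avoids this marginal-plus-conditional bookkeeping entirely: it changes variables to $(z,\zeta)$ with $\zeta=\extG_{-t}(\eta+\sfC z)$, which turns $\mu_\beta$ into the fixed product measure $\mu_{\beta,\bfZ}\otimes\gamma_\beta$ (using Lemma~\ref{le:GauMea.inv.extG}) and removes the unbounded generator from the transformed equations, since $\dot\zeta_t=\ee^{-t\JHB}\sfC\dot z_t\in\bfH$. Invariance is then proved as a Liouville-type statement for the continuity equation on all of $\bfZ\ti\bfX$, tested against cylindrical functions, reduced to finite-dimensional blocks $\bfH_n$ by integrating out the complementary Gaussian directions (where the linear terms vanish by centering), and passed to the limit $n\to\infty$ using the transport theory of \cite{AmGiSa05GFMS}. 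If you wish to keep your strategy, you would have to upgrade Steps 1 and 3 to a statement about the full joint law of $(z_t,\wh\bbP\eta_t,\wh\bbQ\eta_t)$, at which point you are effectively forced into something like the paper's change of variables.
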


\begin{proof}
We do a coordinate transformation to simplify the structure of
$\mu_\beta$ as well as of the evolution equation. 

By introducing the new variables $(z_t,\zeta_t)=\wt M_t(z,\eta) = \big( 
z,\extG_{-t}(\eta{+}\sfC z)\big)$ 
the measure $\mu_\beta$ is transformed into 
\[
\wt \mu_\beta = (\wt M_t)_\# \mu_\beta = \mu_{\beta,\bfZ} (\rmd z)
\,\gamma_\beta(\dd\zeta). 
\] 
The result is simple for $t=0$ because we only shift by the mean value. This
leads to~$\wt \mu_\beta$. The subsequent transformation by $\extG_t$ does not change the
resulting product measure because $\gamma_\beta$ is invariant (see Lemma~\ref{le:GauMea.inv.extG}).

Thus, it remains to show that the measure $\wt\mu_\beta$ is invariant under the
transformed evolution equation for $(z,\zeta)$. Using the  linear
structure of  $\wt M_t$ we see that $(z_t,\eta_t)$ is a solution of 
\eqref{eq:EvolFullExistTheorem-ZX} if and only if $(z_t,\zeta_t)=\wt
M_t(z_t,\eta_t)$ solves  
\[
\dot z_t= \Jmac(z_t)\big( \rmD\calH_\sfC(z_t) + \sfC^*\wh\bbP\extG_t
\zeta_t\big)  , \quad \zeta_t = \zeta_0 + \int_0^t \ee^{-s\JHB} \sfC \dot z_s \dd
s,
\]
where $\calH_\sfC(z):=\Hmac(z)- \frac12\|\sfC z\|^2$. 
In particular, we conclude that $t\mapsto \zeta_t\in \bfX$ is differentiable
with a derivative in the smaller space $\bfH$. Hence, we can use that $\zeta_t$ satisfies the differential equation 
\[
\dot \zeta_t = \ee^{-t\JHB} \sfC \dot z_t   \in \bfH  \quad \text{with initial
  condition }\zeta_0=\eta_0+\sfC
z_0 \in \bfX. 
\]
The advantage of using $\zeta$ is that the unbounded generator $\wh
\JHB$ of the group $(\extG_t)_{t\in \R}$ does not appear. 

We will show in Lemma~\ref{l:Y-is-OU} that $\zeta_0\sim \gamma_\beta$ implies
that $Y_t =\wh\bbP \extG_t \zeta_0$ is an Ornstein-Uhlenbeck process, such that
the function $t\mapsto Y_t$ is almost surely continuous. As a consequence one
can show as in Theorem \ref{t:ex-un-full} that
$t\mapsto (z_t,\zeta_t{-}\zeta_0)$ lies in
$\rmC^1_\mafo{loc}(\R;\bfZ\ti \bfH)$.

Let us denote by $\Phi_t\colon \bfZ\ti \bfX\to \bfZ\ti \bfX$ with
$(z_t,\zeta_t)=\Phi_t(z_0,\zeta_0)$ the corresponding flow map. Then,
invariance of $\wt\mu_\beta$ means $\wh\mu_\beta^{(t)}:=(\Phi_t)_\#
\wt\mu_\beta=\wt\mu_\beta $ for all $t\in \R$. The invariance follows from the
transport equation for $t\mapsto \wt \mu_\beta^{(t)}$, namely 
\begin{align*}
&\pl_t \wh\mu_\beta^{(t)} + \div_{\bfZ\ti\bfX} \big( V_t(z,\zeta) 
  \wh\mu_\beta^{(t)} \big) = 0,
\\ &\text{with } 
V_t(z,\zeta) = \bma{c} \Jmac(z)\, ( \rmD\calH_\sfC(z) {+}
  \sfC^*\wh\bbP\extG_t \zeta ) \\[0.3em] \extG_{-t}\, \sfC \,\Jmac(z)\,(
    \rmD\calH_\sfC(z) {+} \sfC^*_{}\wh\bbP\extG_t \zeta ) \ema ,
\end{align*}
which is to be understood in the sense of distributions by applying suitable
test functions in $\rmC^1_\rmb(\bfZ\ti \bfX)$. (For clarity, we have written
$\sfC^*\wh\bbP$ instead of $\wh{\sfC^*}$, which emphasizes that the only
unbounded term is $\wh\bbP$.)

Thus, it remains to show the identity 
\begin{equation}
  \label{eq:Div=0}
  \int_{\bfZ\ti\bfX} \Big\langle \binom{\rmD_z\widecheck\Phi(z,\zeta)}{\rmD_\zeta
  \widecheck\Phi(z,\zeta)} ,  V_t(z,\zeta) \Big\rangle \,\wt \mu_\beta 
  (\dd z, \dd \zeta)  = 0 \text{ for all } \widecheck\Phi \in \rmC^1_\rmb(\bfZ\ti \bfX).
\end{equation}
Indeed, testing with functions
$\widecheck\Phi(z,\zeta) = \wt \Phi(z,\extG_t\zeta)$ and using the invariance
of $\wt\mu_\beta= \mu_{\beta,\bfZ}\,\gamma_\beta$ under
$\mafo{id}_\bfZ{\oplus} \extG_t$ we see that it suffices to prove
\eqref{eq:Div=0} for $t=0$ only.

For the final step we work with cylindrical test functions, see
\cite[Def.\,5.1.11]{AmGiSa05GFMS}. For this, we choose the orthogonal basis
generated by $\bbC=\bbC^*>0$ in $\bfX$, and denote by $\sfQ_n:\bfX\to \bfX$ the
orthogonal projection on to the span of the eigenfunctions associated with the $n$
largest eigenvalues. Because  $\bfH = \bbC^{1/2}\bfX$ we have $\bfH_n \coloneq \sfQ_n \bfX
\subset \bfH$ and we may use the decomposition $\bfX = \bfH_n \oti \bfY_n$ with
$\bfY_n = (\bbI{-}\sfQ_n)\bfX$. Moreover, since the kernel of
$\bbC$ is trivial, the restriction $\bbC_n=\bbC|_{\bfH_n}\colon \bfH_n \to
\bfH_n$ is invertible and we may choose the scalar product of
$\bfH$ on the finite-dimensional space $\bfH_n$.

Writing $\eta= h+y$ with $h \in \bfH_n$ and $y \in \bfY_n$, we now consider
the cylindrical test functions  $\widecheck\Phi(z,\eta)=\Psi(z,h)$ with
$\Psi\in \rmC^1_\rmb(\bfZ\ti \bfH_n)$, i.e.\ we restrict to a
finite-dimensional space, where computations are classical.   

To integrate over the infinite-dimensional space $\bfY_n$ we observe that
assumption~\eqref{eq:StrongerIntegrability}  implies that the integrand in
\eqref{eq:Div=0} lies in $\rmL^1(\bfZ\ti \bfH_n\ti\bfY_n, \wt\mu_\beta)$. For
this we also use that  $\bfX \ni \eta \mapsto \wh\bbP \eta \in \bfW$
  lies in $\rmL^2(\bfX,\gamma_\beta)$, see
  part~\ref{l:measurable-extensions-operators} of
  Lemma~\ref{l:measurable-extensions}, i.e.\
  equation~\eqref{eq:variance-of-finite-rank-operator}. 

Hence, we can apply Fubini's
theorem and integrate first over $y\in \bfY_n$. Here we can exploit that the
integrand has the affine form $a_\Psi(z,h) + \big(b_\Psi(z,h) \big| \wh\bbP y \big)_\bfH$
and that the measure on $\bfY_n$ is a centered Gaussian. Thus, the linear term
vanishes, and we are left with an  integral over $\bfZ\ti \bfH_n$, namely 
\begin{align*}
\text{LHS\eqref{eq:Div=0}} &= \int_{\bfZ\ti \bfH_n}  \!\! \Big\langle
\binom{\rmD_z\Psi(z,h)}{ \rmD_y \Psi(z,h)}, 
\bma{@{}c@{}} \Jmac(z)( \rmD\calH_\sfC(z) {+}\sfC^* h) \\ 
\sfC \Jmac(z)( \rmD\calH_\sfC(z) {+}\sfC^* h) \ema \!\Big\rangle 
 \, \frac{\ee^{-\beta( \calH_\sfC(z) + \frac12\|h\|_\bfH^2)}}{\wt Z_\beta} \dd z \dd h ,
\end{align*}  
where we used that $h \in \bfH_n\subset \bfH$ which implies that $\wh\bbP y= \bbP y$ and
hence $\sfC^*\wh\bbP y = \sfC^*\bbP y = \sfC^* y$. Moreover, the induced
measure from $\gamma_\beta$ on the subspace $\bfH_n$ is $\wh Z_\beta^{-1}
\ee^{-\beta \|h\|_\bfH^2/2} \dd h$.  

Neglecting the irrelevant normalization constant $\wt Z_\beta$ and integrating by
parts we obtain 
\begin{align*}
\int_{\bfZ\ti \bfH_n} \Psi(z,h) &\big( {-} M_1(z,h)  - M_2(z,h)
+\beta M_3(z,h) \big)  \ee^{-\beta(
    \calH_\sfC(z) + \frac12\|h\|_\bfH^2)} \dd z \dd h 
\\
  \text{with } & M_1(z,h)= \div_{\bfZ}(\Jmac(z)( \rmD\calH_\sfC(z){+}\sfC^* h)), 
\\
 & M_2(z,h) = \div_{\bfH_n}( \sfC \Jmac(z)( \rmD\calH_\sfC(z) {+}\sfC^* h)),
\\
\text{and }& M_3(z,h)=\big\langle \bma{@{}c@{}} \rmD\calH_\sfC(z)\\[-0.3em] h
  \ema, V_0(z,h) \big\rangle .
\end{align*}
Explicit calculations show that all three terms are zero. Indeed, using $\div
\Jmac\equiv 0$ (see \eqref{eq:Div.Jmac=0}) and $\Jmac(z)^*=-\Jmac(z)$ we have 
\[
M_1(z,h)= \big(\div \Jmac\big)( \rmD\calH_\sfC(z){+}\sfC^* h) + \Jmac(z){:}
(\rmD^2 \calH_\sfC(z) +0) = 0,
\]
where ``$:$'' means matrix scalar product, $A{:}B=\sum_{i,j} A_{ij}B_{ij}$.  
Similarly, we have 
\[
M_2(z,h) =   \sfC \Jmac(z): \sfC^*\nabla_{\!y} h= (\sfC \Jmac(z) \sfC^*) : \bbI
=0.
\]
Finally, the third term can be rearranged as 
\[
M_3(z,h)= \ip{\rmD\calH_\sfC(z){+}\sfC^*h}{ \Jmac(z)
  (\rmD\calH_\sfC(z){+}\sfC^*h) }_\bfZ =0,
\]
because of the skew-symmetry of $\Jmac(z)$. 

Using the transport theory from \cite[Ch.\,8]{AmGiSa05GFMS} we see that the
projected measures $(\sfQ_n)_\#\wh\mu_\beta^{(t)}$ are constant, namely
\[
(\sfQ_n)_\# \wh\mu_\beta^{(t)} =(\sfQ_n)_\#\wh\mu_\beta^{(0)}= (\sfQ_n)_\#
\wt\mu_\beta.
\]
We can now take the limit $n\to \infty$ and use $(\sfQ_n)_\#
\wt\mu_\beta=(\sfQ_n)_\#\mu_\beta^{(t)} \to \mu_\beta^{(t)}$, see
\cite[Prop.\,8.3.3]{AmGiSa05GFMS}. Thus, we obtain $\wt\mu_\beta ^{(t)} =
\wt\mu_\beta$, which is the desired invariance. 
\end{proof}

\section{Compression and the coarse-grained system}
\label{s:dilations}

\subsection{Compressions and dilations}
\label{s:compressions-dilations}

The theory of dilations, developed in the 1950s by B\'ela Sz.-Nagy and
Foias~\cite{Szna53CEH, SznFoi70HAOH}, provides an embedding of a strongly
continuous semigroup in a unitary group. This can be interpreted as the
embedding of irreversible dynamics in reversible dynamics on a larger space,
such that the irreversible dynamics can be obtained by projection to a
subspace. For us, the `reverse' procedure is of interest, namely the
compression of a unitary group to a contraction semigroup.

While for coarse-graining compressions are the natural concept, the theory of
dilations provides crucial motivation for the choice of the model discussed in
this paper. For a given strongly continuous contraction semigroup (thought of
as a large-scale, thermodynamic description), the theory of dilations provides
the existence of a unitary semigroup which is unique up to isomorphism, and has
a representation as time shift in a suitable space (the theory of dilations is
closely related to one-dimensional scattering theory, cf.\
\cite[Cha.\,III.2]{LaxPhi89ST}). We give an explicit construction of this
dilation in Appendix~\ref{sec:Appendix-Dilations}.

An important aspect of this specific construction is that it can simultaneously
be used to define the larger space $\bfX$ and the extension of the unitary
group $(\ee^{t\JHB})_{t\in \R} $ on $\bfH$  to a strongly continuous
group $(\extG_t)_{t\in \R}$ on the space $\bfX$, as required by
Assumption~\ref{ass:HB-evol-op-ct-on-X}.

\begin{definition}[Contractions, dilations and compressions]
  \label{de:ContractDilat}
  A strongly continuous semigroup $(C_t)_{t\geq 0}$ of bounded linear operators
  on a Hilbert space $\bfG$ is called a \emph{contraction semigroup} if
  $\| C_t z\|\leq \|z\|$ for all $t\geq 0$ and $z\in \bfG$.

  A \emph{dilation} of the contraction semigroup $(C_t)_{t\geq 0}$ on $\bfG$
  is a pair $\big(\bfH,(S_t)_{t\in \R}\big)$ of a Hilbert
  space $\bfH$ that contains $\bfG$ as a closed subspace  (with the same
  scalar product)  and a strongly
  continuous unitary group $ (S_t)_{t\in \R}$ such that
  \[
    \bbP S_t \big|_\bfG = C_t \in \mafo{Lin}(\bfG;\bfG) \quad \text{for all } t
    \geq 0,
  \]
  where $\bbP\colon\bfH\to \bfG\subset \bfH$ is the orthogonal projection of
  $\bfH$ onto $\bfG$. In this case, $(C_t)_{t\geq 0}$ is called a
  \emph{compression} of the group $(S_t)_{t\in\R}$.
\end{definition}

Every contraction semigroup has a dilation; however, one cannot expect a
general unitary group to have a compression. We come back to this point in
Appendix~\ref{sec:Appendix-Dilations}.

In light of this definition, the important Compression Property~\ref{ass:dilation} states that the evolution
of the heat bath can be compressed to a contraction semigroup $C_t = \ee^{-\bbD t}$. With the additional
structure of Assumption~\ref{ass:dilation} we can say much more about the process
$Y_t(\eta_0) := \wh \bbP \extG_t\eta_0$ than in Lemma~\ref{l:props-Y}:

\begin{lemma}
  \label{l:Y-is-OU}
  Under the same assumptions as Lemma~\ref{l:props-Y}, and in addition
  Assumption~\ref{ass:dilation}, $t\mapsto Y_t(\eta_0)$ is a stationary $\bfW$-valued
  Ornstein-Uhlenbeck process with drift\/ $-\bbD Y$ and mobility $\Sigma$
  satisfying $\Sigma\Sigma^* = \tfrac1\beta (\bbD + \bbD^*)$.

That is, $Y_t = Y_t(\eta_0)$ is a stationary solution of the stochastic differential equation
\begin{equation}
\label{eqdef:Y_t}
\rmd Y_t + \bbD Y_t \rmd t = \Sigma \dd B_t,
\end{equation}
where $B$ is a standard $\bfW$-valued Brownian motion. The process $Y$ has
covariance matrix
\[
\Expectation Y_t \oti Y_s = 
\bfR(s,t) := 
\begin{cases} 
  \frac1\beta\ee^{-(t-s)\bbD} & \text{for }t\geq s, \\ 
  \frac1\beta\ee^{-(s-t)\bbD^*} & \text{for }t\leq s . 
\end{cases} 
\] 
In particular, $Y_t$ is a Gaussian process with continuous sample paths.

Similarly, if $\xi_0 \sim \wh\bbQ_\# \gamma_\beta$, then $Y(\xi_0)$ is a
(non-stationary) Ornstein-Uhlenbeck process satisfying~\eqref{eqdef:Y_t} and
$Y_0(\xi_0) = 0$ almost surely.
\end{lemma}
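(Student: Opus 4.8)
The plan is to establish that $Y(\eta_0)$ is a centered Gaussian process, to compute its two-time covariance explicitly from the Cameron--Martin structure of $\gamma_\beta$ together with the Compression Property~\ref{ass:dilation}, and to recognise the result as the covariance of the stationary Ornstein--Uhlenbeck process~\eqref{eqdef:Y_t}; the non-stationary statement for $\xi_0\sim\wh\bbQ_\#\gamma_\beta$ will then follow by decomposing $\eta_0$ into its $\wh\bbP$- and $\wh\bbQ$-parts. Gaussianity is the easy part: since $\gamma_\beta=\calN_\bfX(0,\beta^{-1}\bbC)$ is centered Gaussian, each $\extG_t$ is bounded linear on $\bfX$, and $\wh\bbP$ is a measurable linear extension of $\bbP$, for any times $t_1,\dots,t_n$ the map $\eta_0\mapsto\bra*{Y_{t_1}(\eta_0),\dots,Y_{t_n}(\eta_0)}$ is measurable linear into $\bfW^n$, hence pushes $\gamma_\beta$ to a centered Gaussian law (Appendix~\ref{app:GaussianMeasures}); so $Y$ is a centered Gaussian process.

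The heart of the argument is the covariance. The crucial identity is that for bounded operators $T_1,T_2\colon\bfH\to\bfW$ one has
\[
  \Expectation_{\eta_0\sim\gamma_\beta}\bra[\big]{(T_1\eta_0)\oti(T_2\eta_0)}=\tfrac1\beta\,T_1T_2^{*},
\]
where the adjoint is taken with respect to $\ip\cdot\cdot_\bfH$ and, when $T_i$ is merely bounded on $\bfH$, $T_i\eta_0$ is read as the value of its measurable linear extension at $\eta_0$. For genuinely bounded $T_i$ this follows from $\bbC=\mathrm{id}_{\bfH\to\bfX}\mathrm{id}_{\bfH\to\bfX}^{*}$; for the extensions it persists by approximating $T_i$ with finite-rank operators on $\bfH$ and invoking the $\rmL^2$-bound of Lemma~\ref{l:measurable-extensions} (cf.\ \eqref{eq:variance-of-finite-rank-operator}). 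Applying this with $T_i=\bbP\ee^{t_i\JHB}\colon\bfH\to\bfW$, whose measurable linear extension is $\gamma_\beta$-a.e.\ equal to $\wh\bbP\extG_{t_i}$, and using $\bbP^{*}=\bbP$ and the unitarity of $\ee^{t\JHB}$ on $\bfH$, we get
\[
  \Expectation Y_t\oti Y_s=\tfrac1\beta\,\bbP\,\ee^{(t-s)\JHB}\,\bbP\qquad\text{as operators on }\bfW.
\]
The Compression Property~\ref{ass:dilation} (together with Remark~\ref{re:Dilation*}) turns the right-hand side into $\tfrac1\beta\ee^{-(t-s)\bbD}$ for $t\ge s$ and $\tfrac1\beta\ee^{-(s-t)\bbD^{*}}$ for $t\le s$, i.e.\ exactly $\bfR(s,t)$.

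It remains to identify this process. Choose $\Sigma$ on $\bfW$ with $\Sigma\Sigma^{*}=\tfrac1\beta(\bbD+\bbD^{*})$, which can be taken invertible since $\bbD+\bbD^{*}$ has strictly positive eigenvalues. The stationary solution of~\eqref{eqdef:Y_t} is a centered Gaussian process whose equal-time covariance $P$ solves the Lyapunov equation $\bbD P+P\bbD^{*}=\tfrac1\beta(\bbD+\bbD^{*})$, i.e.\ $P=\tfrac1\beta\bbI$, and hence has two-time covariance $\ee^{-(t-s)\bbD}P=\bfR(s,t)$ for $t\ge s$; since a centered Gaussian process is determined in law by its covariance, $Y(\eta_0)$ has the law of this stationary Ornstein--Uhlenbeck process. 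The driving noise is exhibited explicitly by $B_t:=\Sigma^{-1}\bra[\big]{Y_t-Y_0+\int_0^t\bbD Y_r\,\rmd r}$ (and symmetrically for $t\le0$), which a direct covariance computation shows to be a two-sided standard $\bfW$-valued Brownian motion, so that $Y$ solves~\eqref{eqdef:Y_t} by construction. Continuous sample paths follow from $\Expectation\|Y_t-Y_s\|_\bfH^{2}=\trace\bra[\big]{\bfR(t,t)+\bfR(s,s)-\bfR(s,t)-\bfR(t,s)}=O(|t-s|)$, the equivalence of $\rmL^p$-norms for Gaussian vectors, and Kolmogorov's criterion.

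For $\xi_0=\wh\bbQ\eta_0$, write $\eta_0=w_0+\xi_0$ with $w_0:=\wh\bbP\eta_0\in\bfW\subset\bfH$. From $\wh\bbP|_\bfH=\bbP$, $\bbP|_\bfW=\mathrm{id}$, and linearity of $\wh\bbP$ we get $\wh\bbP\xi_0=\wh\bbP\eta_0-\wh\bbP(\wh\bbP\eta_0)=w_0-\bbP w_0=0$ a.s., so $Y_0(\xi_0)=\wh\bbP\xi_0=0$ a.s.; also $\wh\bbQ_\#\gamma_\beta$ is centered Gaussian, so $Y(\xi_0)$ is a centered Gaussian process. Under $\gamma_\beta$ the variables $w_0,\xi_0$ are jointly Gaussian and uncorrelated (cross-covariance proportional to $\bbP\bbQ=0$), hence independent, and $Y_t(\eta_0)=Y_t(w_0)+Y_t(\xi_0)$; comparing covariances, $\Expectation\bra[\big]{Y_t(\xi_0)\oti Y_s(\xi_0)}=\bfR(s,t)-\Expectation\bra[\big]{Y_t(w_0)\oti Y_s(w_0)}=\tfrac1\beta\bra[\big]{\ee^{-(t-s)\bbD}-\ee^{-t\bbD}\ee^{-s\bbD^{*}}}$ for $0\le s\le t$ (the last term computed from $\Expectation w_0\oti w_0=\tfrac1\beta\bbI$ and the Compression Property), which is precisely the covariance of the solution of~\eqref{eqdef:Y_t} started at $0$; as before this identifies $Y(\xi_0)$ in law with that non-stationary process. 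The only genuinely delicate step is the covariance computation: passing from the elementary identity for bounded operators $\bfH\to\bfW$ to the measurable linear extensions $\wh\bbP\extG_t$, which are defined only $\gamma_\beta$-a.e.\ and need not extend continuously---this is exactly where the trace-class/Cameron--Martin structure and the $\rmL^2$-integrability of $\wh\bbP$ from Lemma~\ref{l:measurable-extensions} are needed. The rest is either the Compression Property or standard Gaussian-process and Ornstein--Uhlenbeck theory.
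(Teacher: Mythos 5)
Your proposal is correct and follows essentially the same route as the paper: Gaussianity of $Y$ via measurable linearity, the two-time covariance $\tfrac1\beta\,\bbP\,\ee^{(t-s)\JHB}\,\bbP$ computed from the Cameron--Martin covariance identity and converted to $\bfR(s,t)$ by the Compression Property, and identification with the Ornstein--Uhlenbeck law by uniqueness of centered Gaussian processes with given covariance. The only cosmetic differences are that you handle the $\xi_0$-case by comparing covariances where the paper subtracts the deterministic part $\ee^{-t\bbD}w_0$ and differentiates, and that you spell out the Lyapunov equation, the explicit reconstruction of $B_t$, and the Kolmogorov continuity argument that the paper delegates to a footnote.
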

\begin{proof}
  For any $t_1,\dots,t_n\in \R$, the vector
  $(Y_{t_1}(\eta_0),\dots , Y_{t_n}(\eta_0))\in \bfW^n$ consists of $n$
  measurable linear maps applied the Gaussian random variable $\eta_0$, and
  therefore is a multivariate Gaussian random variable. This means that the
  process $t\mapsto Y_t(\eta_0)$ is a Gaussian process. Since $\eta_0$ is
  centered, $Y_t$ also is centered, and since $t\mapsto \extG_t\eta_0$ is
  stationary, the same holds for $Y_t$.
  
  We calculate the covariance function of $Y_t(\eta_0)$ for
  $\eta_0\sim \gamma_\beta$. First remark that by the Compression
  Property~\ref{ass:dilation}
  \[
  \bfR(s,t) = \frac1\beta \bbP \ee^{(t-s)\JHB} \bbP.
  \]
  For fixed $h_1,h_2\in \bfH$ and $s,t\in \R$ we then calculate the expectation
  with respect to $\eta_0$,
  \begin{align*}
  & \Expectation\ip{Y_t}{h_1}_H\ip{Y_s}{h_2}_H 
   = \Expectation \ip{\wh\bbP \extG_t\eta_0}{h_1}_H\ip{\wh\bbP \extG_s\eta_0}{h_2}_H
  = \Expectation \ip{ \eta_0}{\ee^{-t\JHB}\bbP h_1}_H\ip{\eta_0}{\ee^{-s\JHB}\bbP h_2}_H\\
  &\hspace{4em} \leftstackrel{\eqref{eq:formula-for-covariance}}= 
    \frac1\beta \ip{\ee^{-t\JHB}\bbP h_1}{ \ee^{-s\JHB}\bbP h_2}_H
  = \frac1\beta \ip{h_1}{\bbP \ee^{(t-s)\JHB}\bbP h_2}_H = \ip{h_1}{\bfR(s,t)h_2}_H.
  \end{align*}
  Therefore $Y$ is a stationary centered Gaussian process with covariance
  matrix $\bfR(s,t)$; this property characterizes the Ornstein-Uhlenbeck
  process with drift $-\bbD Y$ and mobility~$\Sigma$.\footnote{This follows
    from recognizing that covariance functions uniquely determine stationary
    centered Gaussian processes, that for solutions of~\eqref{eqdef:Y_t}  and
    $t>s$, we have 
  \[
  \partial_t \Expectation Y_t \oti Y_s = -\bbD \,\Expectation Y_t \oti Y_s
  \qquad\text{and}\qquad
  \partial_t \Expectation Y_s \oti Y_t = -\Expectation Y_s \oti Y_t \,\bbD^*,
  \]
  and that $\Expectation Y_0\oti Y_0 = \beta^{-1} \bbI$ since $\ee^{-\beta\|y\|^2_\bfW/2}\dd y$ is stationary. See also~\cite[Sec.\,9.1]{MulSch85LV}.}

  If $\xi_0\sim \wh\bbQ_\#\gamma_\beta$, then by choosing $w_0\in\bfW$ randomly according to $\wh\bbP_\#\gamma_\beta$, the sum $\eta_0:=w_0+\xi_0$ is distributed according to $\gamma_\beta$ (see Lemma~\ref{l:measurable-extensions}), and the remarks above apply. 
  Writing for $t\geq0$
  \[
  Y_t(\xi_0) = Y_t(\eta_0) - Y_t(w_0) = Y_t(\eta_0) - \bbP\ee^{t\JHB}w_0 
  = Y_t(\eta_0) - \ee^{-t\bbD}w_0 ,
  \]
  we calculate that
  \begin{align*}
  \rmd Y_t(\xi_0) &= \rmd Y_t(\eta_0) - \rmd \ee^{-t\bbD}w_0
  = -\bbD Y_t(\eta_0)\dd t  + \Sigma \,\dd B_t + \bbD \ee^{-t\bbD}w_0 \dd t\\
  &= -\bbD Y_t(\xi_0)\dd t  + \Sigma \,\dd B_t.
  \qedhere
  \end{align*}
\end{proof}

\subsection{Derivation of the coarse-grained equations}
\label{s:macro}

With the characterisation of the projected stochastic process $Y = Y(\eta_0)$ of Lemma~\ref{l:Y-is-OU}, we
can now derive the coarse-grained system, i.e., the evolution in $(z,w)$.

\begin{theorem}
  \label{t:main-coarse-graining-result}
Consider the setup of Section~\ref{s:micro}, including the Compression Property~\ref{ass:dilation}. 
\begin{enumerate}
\item \label{part-1:t:main-coarse-graining-result} For $z_0\in \bfZ$ and $w_0\in \bfW$ consider $(z_t,\eta_t) = \bbU_t (z_0,w_0)$ as in part~\ref{i:t:ex-un-full-ZH} of Theorem~\ref{t:ex-un-full}.
  Then the pair $(z_t,w_t:= \bbP \eta_t)$ satisfies the ODE in $\bfZ\ti\bfW$
  \begin{subequations}
    \label{eq:evol-zw-ODE}
  \begin{align}
    \dot z_t &= \Jmac(z_t) (\rmD \Hmac (z_t) + \sfC^* w_t)  \label{eq:evol-zw-z-det}\\
    \dot w_t &= -\bbD(w_t +\sfC z_t).  \label{eq:evol-zw-w-det}
  \end{align}
\end{subequations}
\item \label{part-2:t:main-coarse-graining-result} For $z_0\in \bfZ$, $w_0\in \bfW$, and $\xi_0\sim \wh\bbQ_\#\gamma_\beta$, consider $(z_t,\eta_t) = \bbU_t (z_0,w_0+\xi_0)$ as in part~\ref{i:t:ex-un-full-ZXw} of Theorem~\ref{t:ex-un-full}. Set $w_t := \wh\bbP\eta_t$.
Then the  pair $(z_t, w_t)$ satisfies the \sde
\begin{subequations}
  \label{eq:evol-zw}
  \begin{align}
    \label{eq:evol-zw-z}
    \rmd z_t &= \Jmac(z_t) (\rmD \Hmac (z_t) + \sfC^* w_t)\rmd t\\
    \rmd w_t &= -\bbD(w_t +\sfC z_t)\dd t  +  \Sigma \,\dd B_t,
    \label{eq:evol-zw-w}
  \end{align}
\end{subequations}
where  $\Sigma$ is a square root of $(\bbD+\bbD^*)/\beta$, and $(z_t,w_t)|_{t=0} = (z_0,w_0)$.  Here $B_t$ is a standard $\bfW$-valued Brownian motion.
\item \label{part-3:t:main-coarse-graining-result}
Similarly, if $(z_0,\eta_0)\sim \mu_\beta$ and $(z_t,\eta_t) = \bbU_t (z_0,\eta_0)$, then $(z_t,w_t)$ is a stationary solution of the \sde~\eqref{eq:evol-zw}. The single-time marginal is 
\begin{equation}
  \label{eqdef:nu-beta-alt}
  \nu_\beta (\rmd z\dd w):= (\pi_{z,w})_\# \mu_\beta (\rmd z\dd w) \stackrel{(*)}= \mu_{\beta,\bfZ}(\rmd z) \calN_\bfW(-\sfC z,\beta^{-1})(\rmd w) 
  \end{equation}
  (and this measure is therefore invariant under the \sde~\eqref{eq:evol-zw}).  It  can alternatively be written as 
  \begin{equation}
  \label{eqdef:nu-beta}
  \nu_\beta (\rmd z\dd w) = \frac1{Z_\beta}
  \exp \pra[\Big]{-\beta \bra[\Big]{\Hmac(z) + \tfrac12 \|w\|_{\bfH}^2 + \ip{\sfC z}{w}_\bfH}}\dd z \dd w.
  \end{equation}
\end{enumerate}
\end{theorem}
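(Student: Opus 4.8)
The plan is to handle Parts~\ref{part-1:t:main-coarse-graining-result} and~\ref{part-2:t:main-coarse-graining-result} by projecting the reconstruction formula for $\eta_t$ from Theorem~\ref{t:ex-un-full} and using the Compression Property~\ref{ass:dilation} to turn the resulting memory integral into a local Duhamel term, and then to obtain Part~\ref{part-3:t:main-coarse-graining-result} by conditioning $\mu_\beta$ down to the coarse variables and invoking Part~\ref{part-2:t:main-coarse-graining-result}. For Parts~\ref{part-1:t:main-coarse-graining-result} and~\ref{part-2:t:main-coarse-graining-result}, the $z$-equation is immediate: since $\sfC^* = \sfC^*\bbP$ (resp.\ $\wh\sfC^* = \sfC^*\wh\bbP$), the term $\sfC^*\eta_t$ in Theorem~\ref{t:ex-un-full} equals $\sfC^* w_t$, giving \eqref{eq:evol-zw-z-det} (resp.\ \eqref{eq:evol-zw-z}). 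For the $w$-equation I would apply $\bbP$ (resp.\ $\wh\bbP$) to \eqref{eq:EvolFullExistTheorem-2} (resp.\ \eqref{eq:EvolFullExistTheorem-ZX-2}). Because $\bbP\sfC=\sfC$, the vectors $w_0+\sfC z_0$, $\sfC z_t$ and all $\sfC\dot z_s$ lie in $\bfW\subset\bfH$, where $\wh\bbP$ acts as $\bbP$ and $\extG_t$ as $\ee^{t\JHB}$, so the Compression Property~\ref{ass:dilation} replaces every factor $\bbP\ee^{(t-s)\JHB}|_\bfW$ (with $t-s\geq 0$) by the semigroup $\ee^{-(t-s)\bbD}$. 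Collecting terms, the deterministic case gives the variation-of-constants identity $v_t = \ee^{-t\bbD}v_0 + \int_0^t \ee^{-(t-s)\bbD}\tfrac{\rmd}{\rmd s}(\sfC z_s)\,\rmd s$ for $v_t:=w_t+\sfC z_t$; integrating by parts (legitimate since $t\mapsto\sfC z_t$ is absolutely continuous) yields $\dot v_t = -\bbD v_t + \sfC\dot z_t$, i.e.\ \eqref{eq:evol-zw-w-det}. In the random case the only new term is $\wh\bbP\extG_t\eta_0 = Y_t(\eta_0)$, which splits by linearity as $Y_t(w_0)+Y_t(\xi_0) = \ee^{-t\bbD}w_0 + Y_t(\xi_0)$; Lemma~\ref{l:Y-is-OU} identifies $Y(\xi_0)$ as the solution of $\rmd Y_t + \bbD Y_t\,\rmd t = \Sigma\,\rmd B_t$ with $Y_0(\xi_0)=0$ a.s.\ and $\Sigma\Sigma^* = \tfrac1\beta(\bbD+\bbD^*)$. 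Hence $w_t = w_t^{\mathrm{det}} + Y_t(\xi_0)$ with $w_t^{\mathrm{det}}$ the deterministic expression above, and adding the two evolutions gives \eqref{eq:evol-zw-w} with $(z_t,w_t)|_{t=0}=(z_0,w_0)$. To present this as a genuine solution of the \sde I would note that $B$ is recovered from the path $Y(\xi_0)$ via $B_t = \Sigma^{-1}\bigl(Y_t(\xi_0) + \int_0^t \bbD Y_s(\xi_0)\,\rmd s\bigr)$ ($\Sigma$ is invertible because $\bbD+\bbD^*$ is positive definite), and that $z$, being the solution of the ODE \eqref{eq:z-Y} driven by $Y(\eta_0)$, is adapted to the filtration of $B$.

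For Part~\ref{part-3:t:main-coarse-graining-result}: by Theorem~\ref{th:MeasInvariant} the measure $\mu_\beta$ is invariant under $\bbU_t$, so $(z_t,\eta_t)$, and hence $(z_t,w_t)=\pi_{z,w}(z_t,\eta_t)$, is stationary, with single-time law $\nu_\beta := (\pi_{z,w})_\#\mu_\beta$. To show $(z_t,w_t)$ solves \eqref{eq:evol-zw} I would reduce to Part~\ref{part-2:t:main-coarse-graining-result} by conditioning. Conditionally on $z_0$ we have $\eta_0 = -\sfC z_0 + \gamma$ with $\gamma\sim\gamma_\beta$; by the Gaussian facts on measurable linear extensions (Appendix~\ref{app:GaussianMeasures}, already used in Lemma~\ref{l:Y-is-OU}), $\wh\bbP\gamma$ and $\wh\bbQ\gamma$ are independent with $\wh\bbP\gamma\sim\wh\bbP_\#\gamma_\beta = \calN_\bfW(0,\beta^{-1})$ and $\wh\bbQ\gamma\sim\wh\bbQ_\#\gamma_\beta$. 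Conditioning further on $w_0 = \wh\bbP\eta_0$ fixes $\wh\bbP\gamma = w_0+\sfC z_0$ and leaves $\eta_0 = w_0 + \wh\bbQ\gamma$ with $\wh\bbQ\gamma\sim\wh\bbQ_\#\gamma_\beta$ independent of $(z_0,w_0)$, which is precisely the hypothesis of Part~\ref{part-2:t:main-coarse-graining-result}. Hence, conditionally on $(z_0,w_0)$, the pair $(z_t,w_t)$ solves \eqref{eq:evol-zw}; averaging over $\nu_\beta$ gives a solution with initial law $\nu_\beta$, and stationarity makes $\nu_\beta$ invariant. The same decomposition yields the explicit formulas: the $z$-marginal of $\mu_\beta$ is $\mu_{\beta,\bfZ}$ and the conditional law of $w_0$ given $z_0$ is $\calN_\bfW(-\sfC z_0,\beta^{-1})$, which is identity $(*)$ in \eqref{eqdef:nu-beta-alt}; multiplying the density in \eqref{eqdef:mubetaZ} by this Gaussian and completing the square — the same algebra that passes from \eqref{eqdef:Ham-full-1a} to \eqref{eqdef:Ham-full-2} — gives \eqref{eqdef:nu-beta}.

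The routine part is the analysis — Duhamel and completing the square — which is short once the Compression Property~\ref{ass:dilation} and Lemma~\ref{l:Y-is-OU} are in hand. The hard part will be the measure-theoretic bookkeeping in Part~\ref{part-3:t:main-coarse-graining-result}: making the conditioning of $\mu_\beta$ on the coarse variables $(z_0,w_0)$ precise and matching it to the hypotheses of Part~\ref{part-2:t:main-coarse-graining-result}, which rests on the independence and marginal structure of $(\wh\bbP\gamma,\wh\bbQ\gamma)$ for $\gamma\sim\gamma_\beta$, together with (in Part~\ref{part-2:t:main-coarse-graining-result}) the identification of an honest driving Brownian motion and the adaptedness of the coarse-grained pair.
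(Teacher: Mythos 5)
Your proposal is correct and follows essentially the same route as the paper: project the mild reconstruction formula for $\eta_t$ onto $\bfW$, use the Compression Property to replace $\bbP\ee^{(t-s)\JHB}|_\bfW$ by $\ee^{-(t-s)\bbD}$, integrate by parts, and invoke Lemma~\ref{l:Y-is-OU} for the noise term, with Part~\ref{part-3:t:main-coarse-graining-result} reduced to Part~\ref{part-2:t:main-coarse-graining-result} via Theorem~\ref{th:MeasInvariant} and the Gaussian structure of $\mu_\beta$. Your explicit conditioning of $\mu_\beta$ on $(z_0,w_0)$ and the identification of the driving Brownian motion are welcome elaborations of steps the paper leaves implicit, not a different argument.
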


\begin{RmRemark}
From here onward the implicit probability space will be always the same as in
Lemma~\ref{l:props-Y} and Theorem~\ref{t:ex-un-full}, namely $(\bfX,\calB(\bfX)_{\gamma_\beta},\gamma_\beta)$.
\end{RmRemark}

\begin{RmRemark}
This coarse-graining approach is similar to many approaches in the Mori-Zwanzig style, but nonetheless different; we comment on this in Section~\ref{ss:relation-with-literature-1}.
\end{RmRemark}

\begin{proof}
  We first prove part~\ref{part-2:t:main-coarse-graining-result}. Since
  $\wh \sfC^* = \sfC^*\wh\bbP$, equation~\eqref{eq:evol-zw-z} follows directly
  from~\eqref{eq:EvolFullExistTheorem-ZX-1}. To show that $w$
  satisfies~\eqref{eq:evol-zw-w}, we apply $\wh\bbP$
  to~\eqref{eq:EvolFullExistTheorem-ZX-2} and use the compression property. We
  then integrate by parts to obtain
\begin{align*}
w_t &= \wh\bbP \ee^{t\JHB}\bra*{w_0+\xi_0+ \sfC z_0} - \sfC z_t + \int_0^t \bbP \ee^{(t-s)\JHB} \sfC\dot z_s\dd s\\
&= \ee^{-t\bbD} w_0 + Y_t(\xi_0) + \ee^{-t\bbD}\sfC z_0 - \sfC z_t + \int_0^t \ee^{-(t-s)\bbD}\sfC \dot z_s\dd s\\
&= \ee^{-t\bbD} w_0 + Y_t(\xi_0)  + \bbD \int_0^t \ee^{-(t-s)\bbD}\sfC  z_s\dd s,
\end{align*}
which is a mild form of the \sde
\[
\rmd w_t = -\bbD(w_t + \sfC z_t) \rmd t + \rmd Y_t (\xi_0) + \bbD Y_t(\xi_0) \rmd t.
\]
Equation~\eqref{eq:evol-zw-w} then follows from Lemma~\ref{l:Y-is-OU}.
For part~\ref{part-1:t:main-coarse-graining-result} the calculation is identical, but with the choice $\xi_0=0$.

Finally, for part~\ref{part-3:t:main-coarse-graining-result}, Th.~\ref{th:MeasInvariant} implies  that $t\mapsto (z_t,\eta_t)$ is a stationary stochastic process with single-time marginal $\mu_\beta$, and therefore $t\mapsto (z_t,w_t) = \pi_{z,w}(z_t,\eta_t)$ also is a stationary stochastic process, with single-time marginal $(\pi_{z,w})_\#\mu_\beta$. By part~\ref{part-2:t:main-coarse-graining-result}, $(z_t,w_t)$ is a solution of the SDE~\eqref{eq:evol-zw}. 

The characterization $(*)$ in~\eqref{eqdef:nu-beta-alt} follows from part~\ref{l:measurable-extensions-operators} of Lemma~\ref{l:measurable-extensions}. Taking $\bbO := \bbP$ in that lemma, we find that if $\eta\sim \calN_\bfX(0,\beta^{-1}\bbC)$, then $w := \wh\bbP \eta$ has distribution $\calN_\bfH(0,\beta^{-1}\bbP) = \calN_\bfW(0,\beta^{-1}\bbI_\bfW)$. By translating the distributions by $\sfC z$ we find the identity~$(*)$. Finally, the version~\eqref{eqdef:nu-beta} is a rewriting of~\eqref{eqdef:nu-beta-alt}.
\end{proof}

\begin{RmRemark}[Invariance of $\nu_\beta$]
The invariance of the measure in~\eqref{eqdef:nu-beta} can also be seen by explicit computation. The generator
associated with the stochastic process~\eqref{eq:evol-zw} is
\[
 \mathcal L =\Jmac (z)\bra[\big]{\rmD\Hmac (z) +  \sfC^* w }\cdot  \nabla_z   -\bbD(w +\sfC z) \cdot  \nabla_w  +  \frac1\beta \DDsym \,{:}\nabla^2_w,
\]
with adjoint operator 
\[
  \mathcal L^\star \rho = -\div_z \pra[\big]{\rho \Jmac(z)( \rmD \Hmac (z) {+} \sfC^* w)} + \div_w \pra[\big]{\rho\bbD(w
   +\sfC z )} + \frac1\beta\div_w \pra[\big]{\div _w  (\rho\DDsym)}.
\]
To show that $\nu_\beta$ is invariant, we show that $\calL^\star\nu_\beta=0$. Identifying $\nu_\beta$ with its Lebesgue density, note that 
\begin{multline*}
\nabla_z \nu_\beta = -\beta \nu_\beta (\rmD\Hmac(z) + \sfC^*w), 
\qquad
\nabla_w \nu_\beta = -\beta \nu_\beta (w{+}\sfC z), \\
\qquad\text{and}\qquad 
\frac1\beta\div_w (\nu_\beta \DDsym) = -\DDsym \nabla_w \nu_\beta.
\end{multline*}
We then calculate
\begin{align*}
\div_z \pra[\big]{\nu_\beta \Jmac(z)( \rmD \Hmac (z) {+} \sfC^* w)}
&= -\frac1\beta \div_z\pra[\big] {\Jmac(z) \nabla_z \nu_\beta}\\
&= -\frac1\beta \underbrace{\div_z\pra[\big] {\Jmac(z)}}_{\text{$=0$ by assumption}}\!\!\cdot  \nabla_z \nu_\beta
-\frac1\beta \underbrace{\Jmac(z) {:} \nabla_z^2 \nu_\beta}_{\text{$=0$ by $\Jmac^*=-\Jmac$}} =0,
\end{align*}
and 
\begin{align*}
  \div_w \pra*{\nu_\beta \bbD(w{+}\sfC z) + \frac1\beta\div_w (\nu_\beta \DDsym) }
  &=
  \div_w \pra*{-\frac1\beta  \bbD\nabla_w \nu_\beta  + \frac1\beta  \DDsym\nabla_w \nu_\beta}\\
&= -\frac1\beta \underbrace{\div_w \pra*{\DDanti \nabla_w \nu_\beta}}_{\text{$=0$ by $\DDantistar=-\DDanti$}}
= 0.
\end{align*}
Therefore $\nu_\beta$ is invariant. 

For a harmonic potential $V$, existence of an invariant measure can be shown using controllability as
in~\cite{Jaksic2017a}.
\end{RmRemark}

\section{\Generic}  
\label{s:GENERIC}

\subsection{Deterministic \Generic}
\label{ss:deterministic-GENERRIC}

The \generic framework combines dissipative and conservative evolutions  in way
that respects thermodynamic principles. An evolution equation for an unknown
$y(t)$ in a reflexive Banach space $\bfY$ is said to be in \generic form if it
can be written in terms of two functionals $\calE$ and $\calS$ and two
operators $\bbJ$ and $\bbK$ as  
\begin{equation}
  \label{eqdef:GENERIC}
\dot y = \bbJ(y)\rmD\calE(y) + \bbK(y)\rmD\calS(y).
\end{equation}
Here the functionals $\calE$, $\calS$ and operators $\bbJ$, $\bbK$ should satisfy the following (formal) requirements:
\begin{enumerate}
\item\label{cond:GENERIC-1} The `energy' $\calE$ and `entropy' $\calS$ are sufficiently smooth functionals on $\bfY$.
\item For each $y\in \bfY$, $\bbJ(y)$ and $\bbK(y)$ are  (possibly
  unbounded) linear operators from $\bfY^*$ to $\bfY$. 
\item\label{cond:GENERIC-asym-Jacobi} For each $y\in \bfY$, the Poisson operator $\bbJ(y)$ is skew-adjoint and satisfies Jacobi's identity:
 \begin{equation}
   \label{eq:JacobiIdent}
   \forall\, y \in \bfY\ \forall\, \mu_1,\mu_2,\mu_3\in \bfY^*: \quad
   \Dual{\mu_1}{ \rmD\bbJ(y)[\bbJ(y)\mu_2] \mu_3}_\bfY + \text{cyclic
     perm.} =0 ,
 \end{equation}
 where $\dual\cdot\cdot{}_\bfY$ denotes the duality paring on $\bfY^*\ti
 \bfY$ and $\rmD\bbJ(y)[v]$ is the directional derivative in the direction $v\in
 \bfY$.

\item For each $y\in\bfY$, the Onsager operator $\bbK(y)$ is symmetric and non-negative.
\item\label{cond:GENERIC-5} The following \emph{non-interaction conditions} are satisfied:
\begin{equation}
\label{eq:NIC}
\bbJ(y) \rmD \calS(y) \stackrel{\mathrm{(a)}}= 0 \qquad\text{and}\qquad
\bbK(y) \rmD\calE(y) \stackrel{\mathrm{(b)}}= 0
\qquad\text{for all }y\in \bfY.
\end{equation}
\end{enumerate}
We call the quintuple $(\bfY,\calE,\calS, \bbJ,\bbK)$ a GENERIC system if
the conditions 1.-5.\ hold. 

A pre-form of this structure was introduced by Grmela in \cite{Grme84PBFK,
  Grme85BFDT} and independently by Morrison~\cite{Morr84BFIC, Morr86PJHD} under
the name \emph{metriplectic systems}, and further developed by \"Ottinger and
Grmela as \emph{General Equation for the Non-Equilibrium
  Reversible-Irreversible Coupling}, abbreviated as
\generic~\cite{GrmOtt97DTCF1,OttGrm97DTCF2,Otti05BET}.  In \cite{Morr86PJHD}
the structures $\bbJ$ and $\bbK$ feature together with one function
$\calF=\calE-\calS$. The importance of keeping $\calE$ and $\calS$ separate and
the role of the non-interaction conditions was emphasized for the first time in
\cite{GrmOtt97DTCF1,OttGrm97DTCF2}. Grmela and \"Ottinger's treatment includes
extensions to stochastic evolutions, which are central to this paper, and we
therefore follow the \generic terminology.

\begin{RmRemark}[Hamiltonian and Poisson systems]
The property~\ref{cond:GENERIC-asym-Jacobi} above implies that $\bbJ$ generates
a Poisson manifold
(see e.g.~\cite[Ch.\,10]{MarRat99IMS} for the theory of Poisson manifolds).  An
evolution equation of the form $\dot y = \bbJ(y)\rmD\calE(y)$ is therefore
formally a Poisson system. This also implies that the \Generic structure
contains the class of Hamiltonian systems as a special case, corresponding to
$\bbK\rmD\calS\equiv0$.

The non-interaction condition $\bbJ\rmD\calS=0$ implies that the corresponding
Poisson structure is degenerate: the flow of any evolution equation
$\dot y = \bbJ(y)\rmD\calF(y)$ preserves the value of $\calS$; the full space
is therefore foliated into sets of the form $\{\calS = \text{constant}\}$, and
the flow preserves each of these leaves~\cite[Th.~10.4.4]{MarRat99IMS}. We will
see examples of this in Section~\ref{ss:explanation-S-part2}.
\end{RmRemark}

\begin{RmRemark}
  \label{rem:E-and-S-preserved-under-GENERIC}
  The non-interaction conditions~\eqref{eq:NIC} imply that the `energy' $\calE$
  and the `entropy'~$\calS$ have the properties that one expects from
  thermodynamics: if $y = y(t)$ satisfies~\eqref{eqdef:GENERIC} then $\calE$ is
  preserved and $\calS$ is non-decreasing:
  \begin{alignat*}3
    \frac{\rmd}{\rmd t} \calE(y(t)) &= \Dual{\rmD\calE}{\dot y} &&= \underbrace{\Dual {\rmD\calE}{\bbJ\rmD\calE}}_{=\,0
    \text{ by skew-symmetry of $\bbJ$}}
    + \quad \underbrace{\Dual{\rmD\calE}{\bbK\rmD\calS}}_{=\,0
    \text{ by \eqref{eq:NIC}(b)}}  &&=0\\
    \frac{\rmd}{\rmd t} \calS(y(t)) &= \Dual{\rmD\calS}{\dot y} &&= \quad \underbrace{\Dual {\rmD\calS}{\bbJ\rmD\calE}}_{=\,0
    \text{ by \eqref{eq:NIC}(a)}}
    + \quad \underbrace{\Dual{\rmD\calS}{\bbK\rmD\calS}}_{\geq\,0
    \text{ by non-negativity of $\bbK$}}&&\geq 0.
    \qedhere
  \end{alignat*}
\end{RmRemark}

\begin{RmRemark}[Special cases: Hamiltonian systems and gradient flows]
  We already mentioned that any Hamiltonian system is also a \Generic system, corresponding to $\bbK\rmD\calS\equiv0$. Similarly, all gradient flows are also \Generic systems, corresponding to $\bbJ\rmD\calE\equiv0$. Note that with this choice of sign the driving functional $\calS$ increases along solutions, in contrast to the more common decreasing case.
\end{RmRemark}

\begin{RmRemark}[Coordinate invariance]
  The \Generic equation~\eqref{eqdef:GENERIC} is coordinate-invariant: under linear or nonlinear changes of variables the equations retain the same form (see e.g.~\cite[Sec.~1.2.4]{Otti05BET} or~\cite[Sec.~2]{Mielke2011b}). 
\end{RmRemark}

\subsection{\Generic SDEs}
\label{ss:GENERIC-SDE}
In this paper we encounter versions of \Generic that include noise. We define a
`\generic \sde' to be a stochastic differential equation for a process $Y_t$ in
a finite-dimensional Hilbert space $\bfY$ of the form
\begin{equation}
\label{eq:GSDE-general}
\rmd Y_t = \bra[\Big]{\bbJ(Y_t)\rmD\calE(Y_t) + \bbK(Y_t)\rmD\calS(Y_t)
 + \div{\bbK}(Y_t)}\dd t 
 + \Sigma(Y_t) \dd B_t,
\end{equation}
where $B_t$ is a standard finite-dimensional Brownian motion. Note that we
restrict ourselves to finite-dimensional spaces $\bfY$ for simplicity.  We
require that the noise intensity $\Sigma$ and the dissipation operator $\bbK$
are related by a fluctuation-dissipation relation:
\begin{equation}
  \label{eq:FDR-general}
 \Sigma\Sigma^*(y)   = 2 \bbK(y) \qquad \text{for all }y.
\end{equation}
Fluctuation-dissipation relations of the form~\eqref{eq:FDR-general} have a
long history; see e.g.\ the discussion
in~\cite{OttingerPeletierMontefusco21,MontefuscoPeletierOttinger21}. The
\generic \sde~\eqref{eq:GSDE-general} and the fluctuation-dissipation
relation~\eqref{eq:FDR-general} seem to appear for the first time
in~\cite{GrmOtt97DTCF1}; \"Ottinger refers to equations of the
form~\eqref{eq:GSDE-general} as `\Generic with
fluctuations'~\cite[Sec.~1.2.5]{Otti05BET}.\medskip
 
The \generic \sde inherits energy conservation from the deterministic
structure, and the fluctuation-dissipation relation~\eqref{eq:FDR-general} and
conservation condition~\eqref{eq:cond:Lebesgue-is-preserved-by-J} imply that
the measure $\ee^{\calS(y)}\dd y$ is stationary:

\begin{lemma}[Properties of a \generic \sde]
  \label{l:props-GSDE}
  Assume that $\calE$, $\calS$, $\bbJ$, and $\bbK$ satisfy the \generic
  conditions~\ref{cond:GENERIC-1}--\ref{cond:GENERIC-5} above and the
  additional condition
  \begin{equation}
      \label{eq:cond:Lebesgue-is-preserved-by-J}
      \div \bbJ = 0 \qquad \text{on }\bfY.
      \end{equation}
  In addition, assume that the equation~\eqref{eq:GSDE-general} has unique weak solutions for any given starting point $Y_0$. Then
\begin{enumerate}
  \item The evolution deterministically preserves $\calE$, i.e. for any $Y_0\in \bfY$, 
  \[
    \calE(Y_t) = \calE(Y_0) \qquad\text{almost surely, for all } t\geq0.
  \]
\item The measure 
\[
\ee^{\calS(y)} \dd y
\] 
is preserved by the flow of~\eqref{eq:GSDE-general}.
\end{enumerate}
\end{lemma}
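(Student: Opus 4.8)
The plan is to establish the two assertions independently, the first by Itô's formula and the second by verifying that $\ee^{\calS}$ is a stationary solution of the Fokker--Planck equation associated with~\eqref{eq:GSDE-general}.

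For the conservation of $\calE$, I would apply Itô's formula to $t\mapsto \calE(Y_t)$ along a solution of~\eqref{eq:GSDE-general}. Using the fluctuation--dissipation relation $\Sigma\Sigma^* = 2\bbK$ to evaluate the second-order term, this yields
\[
  \rmd\,\calE(Y_t) = \Bigl[\,\dual{\rmD\calE}{\bbJ\rmD\calE} + \dual{\rmD\calE}{\bbK\rmD\calS} + \dual{\rmD\calE}{\div\bbK} + \bbK{:}\nabla^2\calE\,\Bigr]\dd t + \dual{\Sigma^*\rmD\calE}{\dd B_t},
\]
all arguments being $Y_t$. The first drift term is $0$ by skew-adjointness of $\bbJ$; the second equals $\dual{\bbK\rmD\calE}{\rmD\calS}$ by symmetry of $\bbK$ and hence is $0$ by the non-interaction condition~\eqref{eq:NIC}(b); and the remaining two combine into $\div(\bbK\rmD\calE)$, again $0$ by~\eqref{eq:NIC}(b). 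For the martingale part one notes that $\norm{\Sigma^*\rmD\calE}^2 = \dual{\Sigma\Sigma^*\rmD\calE}{\rmD\calE} = 2\dual{\bbK\rmD\calE}{\rmD\calE} = 0$, so the stochastic integral is identically zero. Hence $\calE(Y_t)$ has neither drift nor diffusion, so $t\mapsto\calE(Y_t)$ is a.s.\ constant.

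For the invariance of $\ee^{\calS}\dd y$, I would split the generator of~\eqref{eq:GSDE-general} as $\calL = \calL_\bbJ + \calL_\bbK$ with $\calL_\bbJ f := \dual{\bbJ\rmD\calE}{\nabla f}$ and $\calL_\bbK f := \dual{\bbK\rmD\calS + \div\bbK}{\nabla f} + \bbK{:}\nabla^2 f$, and show $\calL_\bbJ^*\ee^{\calS} = \calL_\bbK^*\ee^{\calS} = 0$. For the first, $\calL_\bbJ^*\ee^{\calS} = -\div(\ee^{\calS}\,\bbJ\rmD\calE) = -\ee^{\calS}\bigl(\dual{\rmD\calS}{\bbJ\rmD\calE} + \div(\bbJ\rmD\calE)\bigr)$, and both bracketed terms vanish: $\dual{\rmD\calS}{\bbJ\rmD\calE} = -\dual{\bbJ\rmD\calS}{\rmD\calE} = 0$ by~\eqref{eq:NIC}(a), and $\div(\bbJ\rmD\calE) = \dual{\div\bbJ}{\rmD\calE} + \bbJ{:}\nabla^2\calE = 0$ by~\eqref{eq:cond:Lebesgue-is-preserved-by-J} and skew-adjointness. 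For the second, using $\Sigma\Sigma^* = 2\bbK$ and symmetry of $\bbK$ one rewrites $\calL_\bbK$ in divergence form, $\calL_\bbK f = \ee^{-\calS}\,\div\!\bigl(\ee^{\calS}\,\bbK\,\nabla f\bigr)$, so that $\int(\calL_\bbK f)\,\ee^{\calS}\dd y = \int \div(\ee^{\calS}\bbK\nabla f)\,\dd y = 0$ for all test functions $f$, i.e.\ $\calL_\bbK^*\ee^{\calS} = 0$ (and in fact $\ee^{\calS}\dd y$ is a reversible measure for $\calL_\bbK$). Adding the two contributions gives $\calL^*\ee^{\calS} = 0$; since the forward Kolmogorov equation for~\eqref{eq:GSDE-general} then admits the time-independent solution $\ee^{\calS}\dd y$, the assumed uniqueness of weak solutions forces the law started from $\ee^{\calS}\dd y$ to stay equal to $\ee^{\calS}\dd y$ for all $t\ge0$.

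The argument is mostly bookkeeping with divergence identities and symmetry/skew-symmetry cancellations. The places that need a little care are the identity $\dual{\rmD\calE}{\div\bbK} + \bbK{:}\nabla^2\calE = \div(\bbK\rmD\calE)$ in the first part --- which is precisely where the $\div\bbK$ term in the \generic \sde compensates the Itô correction --- the divergence-form rewriting of $\calL_\bbK$ in the second part, and the passage from $\calL^*\ee^{\calS} = 0$ to genuine invariance, which relies on the uniqueness hypothesis. All manipulations are at the formal level permitted by the \generic conditions~\ref{cond:GENERIC-1}--\ref{cond:GENERIC-5} (smoothness of $\calE,\calS,\bbJ,\bbK$ and enough decay/integrability to justify the integrations by parts); no regularity beyond that is used.
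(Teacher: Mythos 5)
Your proposal is correct and follows essentially the same route as the paper: It\=o's formula plus the non-interaction condition and $\Sigma\Sigma^*=2\bbK$ for part 1, and verification that $\ee^{\calS}$ annihilates the dual generator for part 2. The only (cosmetic) difference is that you split the generator into $\calL_\bbJ+\calL_\bbK$ and write $\calL_\bbK$ in the symmetric divergence form $\ee^{-\calS}\div(\ee^{\calS}\bbK\nabla\cdot)$, whereas the paper writes the full adjoint at once and observes that the flux $\bbK(\ee^{\calS}\rmD\calS-\rmD\ee^{\calS})$ vanishes identically --- the same computation.
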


\begin{proof}
To show the conservation of $\calE$, we calculate by It\^o's lemma that 
\begin{align*}
\rmd \calE(Y_t) &= \rmD\calE(Y_t)^*   \bra[\Big]{\bbJ(Y_t)\rmD\calE(Y_t) + \bbK(Y_t)\rmD\calS(Y_t)
+  \tfrac12 \div\pra[\big]{\Sigma\Sigma^*}(Y_t) }\dd t \\
&\qquad {} + \rmD\calE(Y_t)^*   \Sigma(Y_t) \dd B_t 
  + \frac12 \rmD^2\calE(Y_t)  \,{:}\, {\Sigma\Sigma^*(Y_t)}  \dd t \\
&\leftstackrel{(*)}= \frac12 \bra[\Big]{\rmD\calE(Y_t)^* \div\pra[\big]{\Sigma\Sigma^*}(Y_t) + \rmD^2\calE(Y_t)  \,{:}\, {\Sigma\Sigma^*(Y_t)}  }\dd t \\
&\qquad {} +  \rmD\calE(Y_t)^*    \Sigma(Y_t) \dd B_t \\
&= \frac12 \div \pra*{\rmD\calE(Y_t)^* \Sigma\Sigma^*(Y_t) }\dd t 
+ \rmD\calE(Y_t)^*   \Sigma(Y_t) \dd B_t \\
&= 0.
\end{align*}
The identity $(*)$ follows from the non-interaction conditions~\eqref{eq:NIC}, and the final two terms vanish because the range of $\Sigma$ is included in the range of $\bbK$ by~\eqref{eq:FDR-general}, which  is orthogonal to $\rmD\calE$ for the same reason.

To show that $\ee^{ \calS(y)}\rmd y$ is stationary, note that the dual generator of the process $Y_t$ in~\eqref{eq:GSDE-general} is given by 
\[
\mathcal L^*\rho  = \div\pra[\Big]{\rho\bbJ\rmD\calE + \bbK(\rho\rmD\calS- \rmD \rho)}.
\]
One checks that $\rho(y) = \ee^{ \calS(y)}$ satisfies $\mathcal L^*\rho=0$:
\begin{align}
&\div\pra*{\ee^{ \calS} {\bbJ\rmD\calE}}
  = \underbrace{ \ee^{\calS} \rmD \calS \cdot \bbJ\rmD\calE}_{=0 \text{ by }\eqref{eq:NIC}}
    + \underbrace{\ee^{\calS} \rmD \calE \cdot \div \bbJ}_{=0 \text{ by }\eqref{eq:cond:Lebesgue-is-preserved-by-J}}
    + \underbrace{\ee^{\calS} \bbJ \,{:}\, \rmD^2 \calE}_{=0 \text{ by } \bbJ^* = -\bbJ} = 0,
    \label{eq:l:mu-stat-GSDE-I}\\
&\div\pra[\big]{\bbK( \ee^{ \calS} {\rmD\calS} - \rmD\ee^{\calS})}
= 0.
\notag
\end{align}
This implies that $\ee^{\calS(y)}\rmd y$ is an invariant measure of the process~\eqref{eq:GSDE-general}.
\end{proof}

\begin{RmRemark}[Additional invariant measures] \label{rem:other-invariant-measures}
In fact, the conservation of $\calE$ by the \sde~\eqref{eq:GSDE-general} implies that there is a very large class of invariant measures. For instance, any measure of the form $f(\calE(y))\ee^{\calE(y)}\dd y$ also is conserved, for any measurable function $f\colon\R\to\R$. As another example, in Section~\ref{ss:explanation-S-part1} we construct invariant measures $\nu_\beta^{\calE_0}$ that can be interpreted as ``$\ee^\calS(y)\dd y$ restricted to a level set $\calE = \calE_0$'', which can be seen as an extreme version of $f(\calE(y))\ee^{\calE(y)}\dd y$.
\end{RmRemark}

\begin{RmRemark}[Interpretation of $\div\bbJ=0$]
\label{rem:role-of-stationary-measure-in-HamSys}
The proof of Lemma~\ref{l:props-GSDE} illustrates the role of the condition
$\div \bbJ = 0$ in \eqref{eq:cond:Lebesgue-is-preserved-by-J}.
In~\eqref{eq:l:mu-stat-GSDE-I} this condition is combined with the skew-symmetry
and non-interaction conditions to show that the measure $\ee^{\calS(y)}\rmd y$
is stationary for the Hamiltonian flow $\bbJ\rmD \calE$.

However, the condition $\div \bbJ=0$ is not coordinate-invariant; the
corresponding coordinate-invariant setup is as follows. Let $m = m(y)$ be a
smooth and positive density on $\bfY$, and assume that
\begin{equation}
  \label{eq:cond:mu-is-preserved-by-J}
  \div \bra*{m\bbJ} = 0.
\end{equation}
(The case~\eqref{eq:cond:Lebesgue-is-preserved-by-J} corresponds to
$m\equiv 1$.)  Weinstein~\cite{Weinstein97} discusses equations of this type in
the context of Poisson geometry. If such a density $m$ exists, then
$\mu(\rmd y) = m(y)\rmd y$ is invariant for each Hamiltonian flow with respect
to the Poisson structure $\bbJ$; such a Poisson manifold is called
\emph{uni-modular}.

The generalization of the \generic \sde~\eqref{eq:GSDE-general} to the case
of~\eqref{eq:cond:mu-is-preserved-by-J} is
\begin{equation}
  \label{eq:GSDE-coordinate-invariant}
  \rmd Y_t = \bra*{\bbJ(Y_t)\rmD\calE(Y_t) + \bbK(Y_t)\rmD\calS(Y_t)
 +  \frac1{ m(Y_t)}  \div\bra*{m\bbK}(Y_t)}\dd t 
 + \Sigma(Y_t) \dd B_t.
\end{equation}
This \sde conserves the measure $\ee^{ \calS(y)} m(y)\dd y$, as can be verified
in the same way as in the proof of Lemma~\ref{l:props-GSDE}.

This observation suggests the following interpretation of the interplay between
$m$, $\calS$, and the Lebesgue measure. The Poisson structure $\bbJ$ admits an
invariant measure $\mu = m \dd y$; often~$\mu$ coincides with the Lebesgue
measure. The entropy $\calS$ has the interpretation of changing the weight of
different $y$-states, thus modifying $\mu$ into $\ee^{ \calS} \mu$ (see also
Section~\ref{ss:explanation-S-part1}). The gradient term $\bbK \rmD \calS$ and
the It\^o correction term $m^{-1} \div\bra*{m\bbK}$ combine to make the measure
$\ee^{ \calS} \mu$ invariant.

The reason that the conserved measure $\mu$ often coincides with the Lebesgue
measure is related to the use of canonical coordinates in Hamiltonian
systems. In canonical coordinates in $\R^{2n}$ the space $\bfY$ is the
cotangent bundle $T^*\calM$ of the manifold $\calM = \R^n$, in which case the
Poisson structure has the canonical form ``$\rmd p \wedge \rmd q$'' and the
Lebesgue measure is invariant.

See also~\cite[(6.76) and (6.163)]{Otti05BET} for a different approach to
this question.
\end{RmRemark}

\begin{RmRemark}[Updated coordinate invariance]
  \label{rem:coord-transformation-GENERIC}
Concretely, the coordinate invariance of the modified SDE~\eqref{eq:GSDE-coordinate-invariant}  means the following.

If $Y_t$ is a solution of~\eqref{eq:GSDE-coordinate-invariant} in $\bfY = \R^n$,  and $\phi\colon\R^n\to\R^n$ is a smooth bijection, then $Z_t :=\phi(Y_t)$ is a solution of  
\begin{equation}
  \label{eq:GSDE-coordinate-invariant-hat}
  \rmd Z_t = \bra*{\wh\bbJ(Z_t)\rmD_z\wh\calE(Z_t) + \wh\bbK(Z_t)\rmD_z\wh\calS(Z_t)
  +  \frac1{ \wh m(Z_t)}  \div_z\bra*{\wh m\wh\bbK}(Z_t)}\dd t 
  + \wh\Sigma(Z_t) \dd B_t.  
\end{equation}
Here the components are defined as 
\begin{alignat}{2}
  \wh\calE(\phi(y)) &:=  \calE(y), &\qquad \wh\calS(\phi(y)) &:= \calS(y),\notag\\
  \wh \bbJ(\phi(y)) &:= \rmD\phi(y) \bbJ(y) \rmD \phi(y)^*, & \wh \bbK(\phi(y)) &:= \rmD\phi(y) \bbK(y) \rmD \phi(y)^*,  \label{eqdef:Jhat-Khat}\\
  \wh m(\phi(y)) &:= \frac 1{\det\rmD\phi(y)} m(y), &\quad\text{and}\quad \hat \Sigma(\phi(y)) &:= \rmD\phi(y) \Sigma(y).\notag
\end{alignat}
Expressions such as in~\eqref{eqdef:Jhat-Khat} should be read as 
\[
\wh \bbJ_{ij}(\phi(y)) := \sum_{k,\ell} \partial_{y_k}\phi_i(y) \bbJ_{k\ell}(y) \partial_{y_\ell} \phi_j(y).
\]
One can check that 
\begin{enumerate}
  \item $\wh m$ is again stationary in the sense of~\eqref{eq:cond:mu-is-preserved-by-J}, i.e.\ $\div_z (\wh m \wh \bbJ) = 0$;
  \item $\wh m(z)\dd z$ is the push-forward of the measure $m(y) \dd y$ under $\phi$;
  \item The measure $\wh\mu(\rmd z) := \wh m(z) \ee^{\wh\calS(z)}\dd z$ is the push-forward of $\mu(\rmd y) := m(y) \ee^{ \calS(y)} \dd y$ under~$\phi$, and is stationary for~\eqref{eq:GSDE-coordinate-invariant-hat}.
\end{enumerate}
\end{RmRemark}

\begin{RmRemark}[Reversibility and irreversibility]
  \label{re:reversibility-and-irreversibility}
  The terms `reversible' and `irreversible' sometimes give rise to
  confusion. For instance, in the acronym \generic (General Equation for the
  Non-Equilibrium Reversible-Irreversible Coupling) the word `irreversible'
  refers to the gradient-flow term $\bbK\rmD \calS$; this term causes $\calS$
  to increase along the evolution, and hence generates a form of
  irreversibility. At the same time, this gradient-flow term is associated with
  \emph{reversible} stochastic processes (see e.g.~\cite{Mielke2014a} and
  below). In this remark we attempt to resolve this confusion.

A first concept of `reversibility' comes from the theory of ordinary
differential equations. A differential equation $\dot y = f(y)$ in $\R^d$ is
called \emph{reversible} or $R$-\emph{reversible} (see~\cite{Devaney76a}
or~\cite[Ch.\,V]{Hairer2006a}) if there exists a linear involution
$R:\R^d\to\R^d$ such that $Rf(Ry) =-f(y)$.  With this property, solving the
equation forward from $y_0$ is equivalent to solving it backward from
$Ry_0$. This implies for instance that given the value of $y(t)$ one can
retrace the solution back to time $t=0$ without loss of information; in this
sense the forward solution map can be `reversed'. An important example is that
of Hamiltonian systems with Hamiltonians of the form
$H(q,p) = \frac12|p|^2 + V(q)$ with involution $R: (q,p) \mapsto (q,-p)$. The
assumptions of this paper also make Example~\ref{ex:RunningExa1} reversible,
with involution $R: (q,p,\eta) \mapsto (q,-p,\eta(-\,\cdot))$.

We now generalize this concept of reversibility by lifting it to measures on
path space. A measure $P$ on the path space $\rmC((-\infty,\infty);\R^d)$ is
defined to be $R$-\emph{reversible} if it is invariant under the combined
action of $R$ and time inversion:
\[
r _\# P = P, \qquad \text{where}\qquad r(y)(t) := Ry(-t) 
   \qquad\text{for }y\in \rmC((-\infty,\infty);\R^d)\text{ and }t\in\R.
\]
Reversible differential equations, in the first sense above, define reversible
path-space measures, provided they are in stationarity, i.e.\ provided that the
single-time marginal distributions $P_t := \mathrm{law}\, y(t)$ are independent
of $t$. In Example~\ref{ex:RunningExa1}, this is the case if we start the
evolution at any time $t_0$ with $(q,p,\eta)(t_0)$ is distributed according to
the measure~$\mu_\beta$ (see Theorem~\ref{th:MeasInvariant}).

A second important category of reversible path-space measures is generated by
stochastic processes. 
Let $\mu$ be an invariant measure of the \sde
\begin{equation}
  \label{eq:general-GENERIC-no-Ham-part}
  \rmd Y_t = \bbK(Y_t) \rmD \calS (Y_t)\dd t + \div\bbK (Y_t) \dd t + \Sigma(Y_t) \dd W_t
  \qquad \text{(i.e.~\eqref{eq:GSDE-general} without $\bbJ\rmD \calE$)} .
\end{equation}
If $\bbK \rmD(\calS-\log \rmd\mu/\rmd\Lebesgue)=0$ and $2\bbK = \Sigma\Sigma^*$ (for instance, $\mu = \ee^{\calS-\beta \calE}$, see Remark~\ref{rem:other-invariant-measures}), then this \sde 
generates a measure $P$ on path space $\rmC((-\infty,\infty);\R^d)$, that  is invariant under time inversion with
$R = \mathrm{id}$.  (As a general rule, a stationary Markov process with
single-time marginal $\mu$ generates an $R$-reversible path-space measure if
and only if
(a) $R_\#\mu = \mu$ and (b) $\calL^*f = (\calL(f{\circ} R)){\circ} R$ for any
$f$, where $\calL^*$ is the Hilbert adjoint of the generator $\calL$ in
$L^2(\mu)$.)

The {reversibility} of the stochastic process~\eqref{eq:general-GENERIC-no-Ham-part} is generated by the balance of two opposing effects. The two separate equations
\[
  \rmd Y_t = \bbK \rmD \calS (Y_t)\dd t 
  \qquad\text{and}\qquad 
  \rmd Y_t =  \Sigma \dd W_t
\]
drive $Y_t$ in different directions:
the dissipative, `concentrating' effect of $\bbK \rmD \calS $ and the `spreading' effect of $\Sigma \dd W_t$ together result in stationarity, and because of the fluctuation-dissipation relation~\eqref{eq:FDR-general} the resulting path-space measure also is reversible. 
In other words, the term $\bbK\rmD \calS$ is at the same time a drift with irreversible effects and one half of a fluctuation-dissipation pair that generates a reversible measure.

In particular, 
for the multivariate Ornstein-Uhlenbeck process given by 
\[
\rmd Y_t = - \bbD Y_t \dd t + \Sigma \dd W \quad \text{ with } \Sigma = \Big(
\frac1\beta(\bbD{+}\bbD^*)\Big)^{1/2}
\]
the equilibrium measure $P$ is Gaussian and characterized by the covariance
$\bbE^P(Y_t\ti Y_s)= \frac1\beta \ee^{-(t-s)\bbD}$ for $t\geq s$, we have
reversibility if and only if $\bbD= \bbD^*$, i.e., $\DDsym=\bbD$ or
$\DDanti = 0$. Thus, from the stochastic point of view we may consider that
$\DDsym$ generates the reversible part of the process and $\DDanti$ the
irreversible part.\medskip

Turning back to the thermodynamical terms `reversible' and
`irreversible' in the acronym \generic, the `reversibility' of the Hamiltonian
term $\bbJ\rmD \calE$ is reversibility in the first sense above, under some
involution $R$. (Note that not all Hamiltonian systems are reversible, and
therefore this usage is inspired by those important examples that do have this
property). The `irreversibility' of the term $\bbK\rmD \calS$ is the thermodynamic
irreversibility of the equation $\rmd Y_t = \bbK \rmD \calS (Y_t)\dd t $, which
causes the entropy $\calS$ to increase.

We recall that the stochastically reversible part $\DDsym$ of $\bbD$
contributes to the thermodynamically irreversible part by entering in
$\bbK_\GEN$. Vice versa, the stochastically irreversible part $\DDanti$
contributes to the thermodynamically reversible part by entering in
$\bbJ_\GEN$.
\end{RmRemark}

\subsection{Reformulation in \generic form}
\label{ss:reformulation-as-GENERIC-SDE}
We now re-interpret the ODE~\eqref{eq:evol-zw-ODE} and the \sde~\eqref{eq:evol-zw} as \generic equations as introduced in the previous sections; for this we need to enlarge the state space.

We saw above that in \generic ODEs and SDEs the energy $\calE$ is (almost surely) constant. The total energy $\calH_{\mathrm{total}}$ in~\eqref{eqdef:Ham-full-1} is a natural candidate for such an energy, but it has two drawbacks: first, it depends on $\eta$, to which we do not have access after coarse-graining, and secondly, in the case of the \sde the energy $\calH_{\mathrm{total}}$ is almost surely infinite (see Remark~\ref{rem:pos-temp-means-inf-energy}). 

As described in the introduction, we use the orthogonal split of the heat-bath variable $\eta = \wh\bbP \eta + \wh \bbQ\eta = w + \wh \bbQ \eta$ to split the heat-bath energy $\tfrac12\|\eta\|^2_\bfH$  into two parts  $\frac12 \|w\|_\bfH^2 + \frac12 \|\wh\bbQ \eta\|_\bfH^2$. We can then formally write the total energy as
\begin{equation}
\label{char:Htot-Hzw-Q}  \calH_{\mathrm{total}}(z,\eta) =  \underbrace{\Hzw(z,w)}_{\text{visible part}}
  \ \ + \!\!\!\!\underbrace{\tfrac12 \|\wh\bbQ\eta\|_\bfH^2\ ,}_{\text{invisible, heat bath part}}
\end{equation}
where we set
\[
\Hzw(z,w) := \Hmac(z)  + \ip {\sfC z}{w}_\bfH + \tfrac12\|w\|_\bfH^2.
\]
This split suggests introducing a new scalar variable $e = e(t)\in \R$ that keeps track of the energy that is exchanged between the macroscopic part and the invisible part of the heat bath:
\begin{equation}
\label{eqdef:et}
\rmd e_t = 
-\rmd \Hzw(z_t,w_t), 
\qquad \text{and}\qquad e(t=0)=e_0.
\end{equation}
This setup provides a \generic energy $\calE_{\GEN} := \calH_{zw} + e$ that is constant in time, and also solves the problem of the `infinitely-large energy' in the heat bath: the new variable $e$ tracks the  \emph{changes} in the `infinitely-large'  term $\tfrac12 \|\wh\bbQ\eta\|_\bfH^2$, while being finite itself. (We work out this point of view in more detail in Section~\ref{ss:explanation-S-part2} below.)

Using It\^o's lemma to express the right-hand side in~\eqref{eqdef:et} leads to the following system of equations in the three unknowns $(z,w,e)$. This is one of the main results of this paper.
\begin{theorem}
\label{th:recognize-as-GENERIC}
Consider the function $t\mapsto (z_t,w_t)$ obtained from any one of the three parts of Theorem~\ref{t:main-coarse-graining-result}, extended by~\eqref{eqdef:et} to $(z_t,w_t,e_t)\in \bfZ\ti\bfW\ti\R$. 
\begin{enumerate}
\item In the case of part~\ref{part-1:t:main-coarse-graining-result} of Theorem~\ref{t:main-coarse-graining-result}, $(z_t,w_t,e_t)$ satisfies the ODE
\begin{subequations}
  \label{eq:evol-zwe-ODE}
  \begin{align}
  \dot z_t &= \Jmac (\rmD \Hmac (z_t) + \sfC^* w_t),
  \label{eq:evol-zwe1-ODE}
  \\
  \dot w_t &= -\bbD(w_t {+}\sfC z_t)  ,
  \\
  \dot e_t &= \ip{\bbD(w_t{+}\sfC z_t)}{w_t{+}\sfC z_t}_\bfH ,
  \end{align}
\end{subequations}
with initial datum $(z,w,e)(t=0) = (z_0,w_0,e_0)$. These equations can be written as the \Generic ODE
\[
  \frac{\rmd }{\rmd t} \begin{pmatrix} z\\w\\e\end{pmatrix}
  = \bbJ_\GEN(z) \rmD \calE_\GEN(z,w,e) 
  + \bbK_\GEN(z,w)\rmD \calS_\GEN(z,w,e)   .
\]
\item  
In the case of part~\ref{part-2:t:main-coarse-graining-result} or~\ref{part-3:t:main-coarse-graining-result} of Theorem~\ref{t:main-coarse-graining-result}, $(z_t,w_t,e_t)$ satisfies the SDE
\begin{subequations}
  \label{eq:evol-zwe}
  \begin{align}
  \rmd z_t &= \Jmac (\rmD \Hmac (z_t) + \sfC^* w_t)\rmd t ,\label{eq:evol-zwe1}
  \\
  \rmd w_t &= -\bbD(w_t {+}\sfC z_t)\rmd t  + \Sigma\,\rmd B_t, \label{eq:evol-zwe2}
  \\
  \rmd e_t &= \Big[\ip{\bbD(w_t{+}\sfC z_t)}{w_t{+}\sfC z_t}_\bfH 
      - \frac1\beta \trace \bbD \Big] \rmd t
     - {\ip {w_t{+}\sfC z_t}{\Sigma \dd B_t}_\bfH},
  \end{align}
\end{subequations}
with deterministic initial datum $(z,w,e)(t=0) = (z_0,w_0,e_0)$.
Here $B$ is a standard $\bfW$-valued Brownian motion and $\Sigma\Sigma^*  = \beta^{-1}(\bbD + \bbD^*)$. 

These equations can be written as the \generic SDE
\begin{multline}
\rmd \begin{pmatrix} z\\w\\e\end{pmatrix}
= \pra[\Big]{\bbJ_\GEN(z) \rmD \calE_\GEN(z,w,e) 
+ \bbK_\GEN(z,w)\rmD \calS_\GEN(z,w,e)  
  + \div \bbK_\GEN(z,w)}\dd t \\
+ \bbSigma_\GEN(z,w) \dd B_t.
\label{eq:GENERIC-SDE1}
\end{multline}
\end{enumerate}

In all cases the various \generic components are defined as 
\begin{subequations}
	\label{eqdef:GEN1}
\begin{align}
\calE_\GEN(z,w,e)&= \Hmac(z) + \frac12\|w\|_\bfH^2 + \ip {\sfC z}{w}_\bfH + e\\
&= \Hzw(z,w) + e,\\
\calS_\GEN(z,w,e) &= \beta  e , \label{eqdef:GEN1:S}\\
\bbJ_\GEN  (z)&= \bma{ccc} \Jmac (z)&0&0\\0 &-\DDanti&0 \\0 &0&0\ema, \label{eqdef:GEN1:J}\\
\bbK_\GEN(z,w) 
&= \frac1\beta \bma{ccc} 
0&0&0\\ 
0&\DDsym & -\DDsym(w{+}\sfC z) \\
0&-{\ip{\DDsym(w{+}\sfC z)}{ \Box}_\bfH} &\ip{\DDsym(w{+}\sfC z)}{w{+} \sfC z}_\bfH 
\ema,
\end{align} 
\end{subequations}
where the operator $\bbD$ is decomposed into symmetric and skew-symmetric parts:
\[
    \bbD= \DDsym+\DDanti \quad \text{with } \DDsym=\frac12(\bbD{+}\bbD^*) \ \text{
      and } \ \DDanti:=\frac12(\bbD{-}\bbD^*).
\]
The noise intensity in~\eqref{eq:GENERIC-SDE1} is given as 
\begin{equation*}
\bbSigma_\GEN(z,w) := 
\begin{pmatrix}
	0 \\ \Sigma \\ -{\ip{w{+}\sfC z}{ \Sigma\,\square}_\bfH}
\end{pmatrix},
\end{equation*}
which satisfies the fluctuation-dissipation  relation
\begin{equation}
\label{eq:FDR-GEN}
\bbSigma_\GEN\bbSigma_\GEN^* = 2 \bbK_\GEN.
\end{equation}
\end{theorem}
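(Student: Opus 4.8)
The plan is to derive the closed equation for the scalar variable $e$ from its definition~\eqref{eqdef:et} and then to match the resulting system for $(z,w,e)$ termwise against the components prescribed in~\eqref{eqdef:GEN1}; writing $v:=w+\sfC z\in\bfW$, I would first compute $\rmd e_t=-\rmd\Hzw(z_t,w_t)$. Using $\rmD_z\Hzw(z,w)=\rmD\Hmac(z)+\sfC^* w$ and $\rmD_w\Hzw(z,w)=v$, the deterministic chain rule along~\eqref{eq:evol-zw-ODE} gives $\dot e_t=-\ip{\rmD\Hmac(z_t){+}\sfC^* w_t}{\Jmac(\rmD\Hmac(z_t){+}\sfC^* w_t)}_\bfZ+\ip{v_t}{\bbD v_t}_\bfH$; the first term vanishes by skew-adjointness of $\Jmac$, and since the skew part of $\bbD$ does not contribute to a quadratic form this is exactly the $e$-equation of~\eqref{eq:evol-zwe-ODE}. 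In the stochastic case I would apply It\=o's lemma to $t\mapsto-\Hzw(z_t,w_t)$ along~\eqref{eq:evol-zw}: as $\Hzw$ is quadratic in $w$ with $w$-Hessian $\bbI_\bfW$ and the quadratic variation of $w$ is $\Sigma\Sigma^*\,\rmd t=\tfrac2\beta\DDsym\,\rmd t$, the drift of $\rmd e_t$ is $\ip{v_t}{\bbD v_t}_\bfH\,\rmd t$ as before, minus the It\=o correction $\tfrac12\bbI_\bfW{:}\Sigma\Sigma^*\,\rmd t=\tfrac1\beta\trace\bbD\,\rmd t$ (using $\trace\DDanti=0$), while $\rmD_w\Hzw\cdot\rmd w_t$ contributes the martingale term $-\ip{v_t}{\Sigma\,\rmd B_t}_\bfH$; this is the $e$-equation of~\eqref{eq:evol-zwe}.

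Next I would recognise the \Generic form by direct computation. From~\eqref{eqdef:GEN1} one reads off $\rmD\calE_\GEN=(\rmD\Hmac(z){+}\sfC^* w,\ v,\ 1)$ and $\rmD\calS_\GEN=(0,0,\beta)$, hence $\bbJ_\GEN(z)\rmD\calE_\GEN=(\Jmac(\rmD\Hmac(z){+}\sfC^* w),\ -\DDanti v,\ 0)$ and $\bbK_\GEN(z,w)\rmD\calS_\GEN=(0,\ -\DDsym v,\ \ip{\DDsym v}{v}_\bfH)$; adding these and using $\DDsym+\DDanti=\bbD$ reproduces the right-hand side of~\eqref{eq:evol-zwe-ODE}, which proves part~1. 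For part~2 it remains to compute $\div\bbK_\GEN$ componentwise: the $z$-row of $\bbK_\GEN$ is zero, the $(w,w)$-block $\tfrac1\beta\DDsym$ is constant, and the $(w,e)$-block $-\tfrac1\beta\DDsym v$ is independent of $e$, so the $z$- and $w$-components of $\div\bbK_\GEN$ vanish, while the $e$-component equals $\sum_k\partial_{w_k}\!\bigl(-\tfrac1\beta(\DDsym v)_k\bigr)=-\tfrac1\beta\trace\DDsym=-\tfrac1\beta\trace\bbD$. Thus $\div\bbK_\GEN=(0,0,-\tfrac1\beta\trace\bbD)$ supplies precisely the It\=o correction obtained above, and $\bbSigma_\GEN(z,w)\,\rmd B_t=(0,\ \Sigma\,\rmd B_t,\ -\ip{v}{\Sigma\,\rmd B_t}_\bfH)$ is the noise term of~\eqref{eq:evol-zwe}, so~\eqref{eq:evol-zwe} coincides with~\eqref{eq:GENERIC-SDE1}.

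Finally I would check the structural identities. A short adjoint computation gives $\bbSigma_\GEN^*(a,b,c)=\Sigma^*(b-c\,v)$, so that $\bbSigma_\GEN\bbSigma_\GEN^*(a,b,c)=\tfrac2\beta\bigl(0,\ \DDsym(b{-}cv),\ -\ip{v}{\DDsym(b{-}cv)}_\bfH\bigr)$, which equals $2\bbK_\GEN(z,w)(a,b,c)$ after expanding the last entry and using $\DDsym^*=\DDsym$; this is~\eqref{eq:FDR-GEN}, and the same identity gives $\ip{(a,b,c)}{\bbK_\GEN(a,b,c)}=\tfrac1\beta\ip{b{-}cv}{\DDsym(b{-}cv)}_\bfH\ge0$, so $\bbK_\GEN$ is symmetric and non-negative. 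Skew-adjointness and Jacobi's identity for $\bbJ_\GEN$ follow from those of $\Jmac$ together with constancy of the blocks $-\DDanti$ and $0$; $\div\bbJ_\GEN=0$ follows from~\eqref{eq:Div.Jmac=0} and the same constancy; and the non-interaction conditions hold because the last column of $\bbJ_\GEN$ vanishes (so $\bbJ_\GEN\rmD\calS_\GEN=0$) and $\bbK_\GEN\rmD\calE_\GEN=\tfrac1\beta\bigl(0,\ \DDsym v-\DDsym v,\ \ip{\DDsym v}{v}_\bfH-\ip{\DDsym v}{v}_\bfH\bigr)=0$. I expect the only genuinely delicate point to be this bookkeeping — matching the It\=o correction in the $e$-equation with $(\div\bbK_\GEN)_e$ — so that the coarse-grained \sde and the \Generic \sde~\eqref{eq:GENERIC-SDE1} agree with no leftover terms.
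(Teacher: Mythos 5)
Your proposal is correct and follows exactly the route the paper intends: the paper states that the $e$-equation is obtained by applying It\^o's lemma to $e_t=-\Hzw(z_t,w_t)$ and then leaves the termwise matching against $\calE_\GEN$, $\calS_\GEN$, $\bbJ_\GEN$, $\bbK_\GEN$, $\div\bbK_\GEN$, and $\bbSigma_\GEN$ as a direct computation, which is precisely what you carry out. Your identification of the It\^o correction $\tfrac1\beta\trace\bbD$ with the $e$-component of $\div\bbK_\GEN$, and your verification of the adjoint identity $\bbSigma_\GEN^*(a,b,c)=\Sigma^*(b-c\,v)$ for the fluctuation--dissipation relation, supply the details the paper omits and check out.
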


\begin{RmRemark}
It is remarkable that  the splitting $\bbD=\DDsym + \DDanti$
is compatible with the non-interaction conditions of \generic. Moreover, taking
into account the special structure of the 
differentials $\rmD \calH_\GEN=(\rmD\Hmac(z)+ \sfC^*z,w{+}\sfC z, 1)^\top$ and
$\rmD\calS_\GEN=(0,0,\beta)^\top$ and the structure of $\bbK_\GEN$ we see that in
the differential equation~\eqref{eq:evol-zwe2} only the sum
$(\DDsym{+}\DDanti)(w{+}\sfC z)$ appears. However, in the stochastic forcing we see only the symmetric part $\DDsym$ through
\[
\Sigma\Sigma^* = \frac1{\beta} \big( \bbD{+}\bbD^*\big)= \frac2\beta \,\DDsym.
\]
\end{RmRemark}

\begin{RmExample}[Running example, part 4]
\label{ex:RunningExa4}
For the model in Examples \eqref{ex:RunningExa1} and \eqref{ex:RunningExa2},
where we identify $\bfW$ with $\R^3$ we obtain an even more specific form of 
the \generic Poisson and Onsager structure in $\R^{(2n+4)\ti(2n+4)}$: 
\[
\bbJ_\GEN=\bma{cccc} 0&\bbI&0&0 \\ \!\!{-}\bbI&0&0&0\\ 0&0&\!\!-\DDanti&0 \\
0&0&0&0 \ema \  \text{ and} \quad 
\bbK_\GEN= \frac1\beta \bma{cccc}  0&0&0&0\\ 0&0&0&0 \\
0&0&\DDsym &\!\!\!-\DDsym(w{+}\sfK q)\!\!\\ 0&0&\!\!{-}\big(\DDsym(w{+}\sfK
q)\big)^{\!\top}\!\!\!& | w{+}\sfK q|^2_{\DDsym}\ema 
\]
where $\sfK \in \R^{3\ti n}$ and $\DDsym,\,\DDanti \in \R^{3\ti 3}$ have the
explicit form 
\[
\DDanti =  \bma{ccc} 0&0&0 \\ 0&0&\!{-}\varsigma \!\! \\ 0&\varsigma&
0\ema , \quad 
\DDsym = \bma{ccc}\vartheta_1&0&0 \\ 0&\vartheta_2&0 \\
0&0&\vartheta_2 \ema, \quad 
\text{and } |\wh w|^2_{\DDsym} = \wh w {\cdot} \DDsym \wh w. 
\]

Thus, the macroscopic ODE takes the explicit form 
\[
\frac\rmd{\rmd t} \bma{c} q\\ p \\ w\\ e\ema 
 = \bma{c} p \\ \!\! - \rmD V(q) - \sfK^* w \! \\ 
        -\DDanti(w{+}\sfK q)\\ 0\ema
   + \bma{c} 0\\ 0 \\ \!\!{-}\DDsym(w{+}\sfK q)\! \\ 
            | w{+}\sfK q|^2_{\DDsym} \ema
=  \bma{c} p \\ - \rmD V(q) -\sfK^* w\\ 
      - \bbD(w{+}\sfK q)\\ | w{+}\sfK q|^2_{\DDsym} \ema . 
\]
\end{RmExample}

\subsection{Interpretation of $\calE_\GEN$, $\bbJ_\GEN$, and $\bbK_\GEN$}
\label{ss:GENERIC-interpretation-EJK}

\paragraph{Interpretation of the energy $\calE_\GEN$.}
The discussion in Section~\ref{ss:reformulation-as-GENERIC-SDE} gives a clear interpretation of the energy
\[
  \calE_\GEN(z,w,e) = \Hmac(z)  + \ip{\sfC z}{w}_\bfH + \frac12 \|w\|_\bfH^2+ e 
\] 
as a renormalized version of $\calH_{\mathrm{total}}$: $\frac12 \|w\|_{\bfH}^2$
represents the `visible' part of the heat-bath energy, $e$ represents changes
in the `invisible' part, and an additional `infinite but constant' component of
the heat-bath energy has been renormalized out. This interpretation also
explains why $\calE_\GEN$ should be preserved by the \generic
\sde~\eqref{eq:GENERIC-SDE1}.

\paragraph{Interpretation of the Poisson operator $\bbJ_\GEN$.} The Poisson
operator $\bbJ_\GEN$ is a combination of the symplectic operator $\bbJ_{\calA}$
with the skew-symmetric part $\DDanti$ of the compression operator~$\bbD$. This
shows how a skew-symmetric component of the compression dynamics should be
considered as a conservative (Hamiltonian) component of the evolution.
Clearly $\bbJ_\GEN(y)$ is skew-symmetric, and it can be easily shown that the
Jacobi identity~\eqref{eq:JacobiIdent} holds, because $\bbJ_\calA$ depends
on $z$ but not on $(w,e)$.

  It would be interesting to understand in what sense or in which cases the formula~\eqref{eqdef:GEN1:J} for
  $\bbJ_\GEN$ coincides with \"Ottinger's 
  coarse-graining formula in~\cite[Eq.\,(6.67+68)]{Otti05BET}:
  \begin{equation} 
  \label{eq:Ottinger-def-CG-Poisson-operator}
  \bbJ(z,w,e) = \Big\langle \rmD \Pi(z,\eta) \bbJ_\mafo{micro}(z,\eta) \rmD
  \Pi(z,\eta)^* \Big\rangle_{(z,w,e)} ,
  \end{equation} 
  where $\langle \cdot \rangle_{(z,w,e)}$ means averaging with respect to the conditional
  equilibrium measure given the macro-state $(z,w,e)$.

  In our case, omitting the energy variable $e$, the projection mapping reads
  \[
  \begin{pmatrix} z\\ w\end{pmatrix} = \wt\Pi(z,\eta) = \begin{pmatrix} z\\ \bbP \eta\end{pmatrix}.  
  \]
  Hence, \"Ottinger's formula would take the form 
  \[
  \bbJ_\text{macro}(z,w)= \skp*{\rmD \wt\Pi(z,\eta) \begin{pmatrix} \bbJ_\calA(z)&0\\
  0&\JHB\end{pmatrix} \rmD \wt\Pi(z,\eta) }_{(z,w)}
  =  \begin{pmatrix} \bbJ_\calA(z)&0\\ 0&\text{``}\bbP\,\JHB\bbP\text{''}\end{pmatrix}. 
  \]
  Here, averaging is not needed, because the integrand is constant.  
  However, as discussed in Remark~\ref{re:Dom.JB.bfW}, the term $\bbP\,\JHB\bbP$ is not well defined, because
  $\bfW \cap \mathrm{dom}(\JHB) =\{0\}$ by construction.  
  
  Thus this formula should be considered in a renormalized sense, i.e.\ by
  approximating~$\JHB$ suitably. Because of the generator property, we have
  the approximation property by a symmetric difference of going forward and
  backward in time (which means extracting irreversible information):
  \[
  \forall \, \eta \in \mafo{dom}(\JHB): \quad \frac1{2h}\big(
  \ee^{h\JHB} - \ee^{-h\JHB}\big) \eta 
   = \frac1{2h}\big( \ee^{h\JHB}{-}\bbI\big)\eta
     - \frac1{2h}\big(\ee^{-h\JHB}{-}\bbI\big) \eta
   \;\stackrel{h\downarrow 0}\longrightarrow \;\JHB \eta. 
  \]
  At the same time, by the Compression Property~\eqref{eqass:dilation} we obtain a
  different behavior when applying $\bbP$ from left and right (for $h\searrow 0$):
  \[
  \forall \, v\in \bfW:\quad \bbP \frac1{2h}\big(
  \ee^{h\JHB} - \ee^{-h\JHB} \big) \bbP\, v
  \overset{\text{comp}}=  
  \frac1{2h}\big(\ee^{-h \bbD} {-} \bbI\big) v 
  -\frac1{2h}\big(\ee^{-h\bbD^*} {-} \bbI\big) v \longrightarrow - \bbD_\mafo{skw} v.
  \]
  Thus, although this approximation is rather special (because of the symmetry
  with $\pm h$), we see that it is reasonable to replace the ill-defined
  expression $\bbP\!\;\JHB \bbP$ by $-\bbD_\mathrm{skw}$. From this point of view it may be reasonable to interpret the definition~\eqref{eqdef:GEN1:J} of $\bbJ_\GEN$ as a version of~\eqref{eq:Ottinger-def-CG-Poisson-operator}. 

\paragraph{Interpretation of the Onsager operator $\bbK_\GEN$.}
The structure of the Onsager operator $\bbK_\GEN$ can be understood in the following way. To start with, note that in the absence of interaction with $z$ or $e$, the component $w = \wh\bbP\eta$ decays  as $\ee^{-t\bbD} w$, which implies that the norm $\|w\|^2_\bfH$ then decays with rate  $2\DDsym$:
\[
\frac{\rmd}{\rmd t}  \|w_t\|_\bfH^2 
= -\ip{w_t }{\bbD w_t}_\bfH  - \ip{\bbD w_t }{w_t}_\bfH = -2\ip{w_t}{\DDsym w_t}_\bfH.
\]
This explains the operator $\DDsym$ in the $(w,w)$-block of $\bbK_\GEN$. The $(w,e)$-component of $\bbK_\GEN$ then necessarily has the form $-\DDsym(w{+}\sfC z)$ because of the non-interaction condition~\eqref{eq:NIC}, and the remaining components of $\bbK_\GEN$ follow by symmetry. 

In fact, in equation~\eqref{eq:evol-zwe2} the drift term  is generated by the $(w,e)$-component of $\bbK_\GEN$, not the $(w,w)$-component, since $\bbK_\GEN$ multiplies $\rmD\calS_\GEN = (0,0,\beta )^\top$. This does not change the interpretation above, since the non-interaction condition implies that the same drift term also would be generated by an alternative entropy, $(z,w,e) \mapsto -\calH_{z,w}(z,w)$. We discuss the role of different entropies in the next section.

\subsection{Interpretation of $\calS_\GEN$}
\label{ss:explanation-S-part1}

For the interpretation of $\calS_\GEN$ we start by commenting on the historical development of `entropy'.
Entropy is commonly associated with a description of a system at two levels of aggregation, a `microscopic' level and a `macroscopic' level. Different authors, however, have emphasized different ways of placing entropy in this context. Boltzmann's famous postulate 
\begin{equation}
  \label{eq:Boltzmann-grave}
  \calS(x) = k \ln W(x)
\end{equation}
already illustrates this: the symbol $W(x)$ could either represent probability \emph{(Wahrscheinlichkeit)}~\cite[p.~44]{Brenig75}, or alternatively `the number of microstates associated with the macro\-state~$x$'. Boltzmann presumably based his discussion on equal probability of microstates, in which case these two interpretations reduce to the same thing. 

Einstein~\cite{Einstein10} postulated that by rewriting~\eqref{eq:Boltzmann-grave} as 
\begin{equation}
  \label{eq:Einstein-prob-S}
\Prob(x) = \ee^{ \frac1k \calS(x)},
\end{equation}
this expression could be abstracted away from the specific microscopic
situation, and could therefore hold also in situations without equal
probability of microstates; in fact, the probability distribution over
microstates only needs to be stationary under the microscopic dynamics. This
approach focuses on an interpretation of entropy as `logarithm of
probabilities' rather than as `number of microstates'. We take the same view
here.

\bigskip
In continuous state spaces an expression such as~\eqref{eq:Einstein-prob-S} should be interpreted as a density with respect to some fixed measure. We  illustrate this on a simple example:  in the case of SDEs of the form
\[
\rmd X_t = -K(X_t)\nabla V(X_t)  \dd t + \Sigma(X_t) \dd B_t,
\qquad
\Sigma\Sigma^* = 2 K,
\]
one can consider $-V$ to be  an entropy in the sense above, because the flow preserves the measure $\ee^{- V(x)}\rmd x$. The appearance of the Lebesgue measure `${\rmd x}$' is explained by observing that this Lebesgue measure is invariant under the Brownian motion $B_t$, and the function $-V$ converts this invariant measure into $\ee^{-V(x)}\rmd x$ through the pair $(K,\Sigma)$ as described by the Girsanov theorem. In this sense the entropy can be understood as describing an exponential \emph{modification} of an underlying measure, and that is the sense in which~\eqref{eq:Einstein-prob-S} should be interpreted.

This point of view provides the  SDE~\eqref{eq:evol-zw} for $(z,w)$ (i.e.\ before extending with $e$) with a natural entropy
\begin{equation}
\label{eqdef:S-for-zw}
  \calS(z,w):=  \log \frac{\rmd \nu_\beta}{\rmd\Lebesgue_{\bfZ\ti \bfW}}(z,w)
  \stackrel{\eqref{eqdef:nu-beta}} = -\beta \Hzw(z,w) + \text{constant},
\end{equation}
where $\nu_\beta$ is given in~\eqref{eqdef:nu-beta} and $\Lebesgue_{\bfZ\ti\bfW}$ is the Lebesgue measure on $\bfZ\ti\bfW$.

Below we follow this entropy $\calS$  as it propagates into the extended system with variables $(z,w,e)$, but that propagation is akin to bookkeeping. The actual appearance of entropy has happened here, at the coarse-graining step from variables $(z,\eta)$ to variables $(z,w)$.

\bigskip

While~\eqref{eqdef:S-for-zw} gives a clear suggestion for an entropy for the variables $(z,w)$, an indeterminacy arises when extending  to variables $(z,w,e)$. Indeed, after extension there is a multitude of invariant measures: for instance, if we fix an energy level $\calE_0\in \R$, then under the extension map
\[
T^{\calE_0}: (z,w) \mapsto \bra[\big]{z,w,e := \calE_0 - \Hzw(z,w)}
\]
the measure $\nu_\beta$ is mapped to the measure
\[
\nu_\beta^{\calE_0} := T^{\calE_0}_\#\nu_\beta, 
\]
which is concentrated on the `$\calE_0$-energy-leaf' $\{(z,w,e)\in \bfZ\ti\bfH\ti\R: \calE_\GEN(z,w,e) = \calE_0\}$. The measure $\nu_\beta^{\calE_0}$ is invariant under the flow of the extended \sde~\eqref{eq:evol-zwe}, for each choice of $\calE_0$. In fact, for any measure $\zeta\in\ProbMeas(\R)$ on the set of energy values $\calE_0$, the compound measure 
\[
\nu_\beta^\zeta(A) := \int_\R \nu_\beta^{\calE_0}(A)\, \zeta(\rmd \calE_0)
\qquad \text{for }A\subset \bfZ\ti\bfW\ti\R,
\]
has energy values $\calE_\GEN$ distributed according to $\zeta$ and is  invariant under the \sde~\eqref{eq:evol-zwe}.

An equivalent way of observing the same indeterminacy is the following. If $ \nu_\beta^\zeta$ is one of the invariant measures that we just constructed, then there is a family of other invariant measures of the form
\begin{equation}
\label{eq:family-nu-beta-f}
f(\calE_\GEN(z,w,e)) \,  \nu_\beta ^\zeta (\rmd z \dd w \dd e), \qquad \text{for arbitrary }f\colon\R\to\R.
\end{equation}
Indeed, $f{\circ} \calE_\GEN\,\nu_\beta ^\zeta = \nu_\beta ^{\zeta_f}$ with $\zeta_f(\rmd \tilde e) = f(\tilde e)\zeta(\rmd \tilde e)$.

\bigskip

This non-uniqueness of the invariant measure generates a corresponding non-uniqueness in the entropy. The family~\eqref{eq:family-nu-beta-f}  generates the corresponding family of `entropies'
\begin{equation}
\label{eq:family-S-g}
\calS_g (z,w,e) :=   
-\beta \Hzw(z,w) + 
g(\calE_\GEN(z,w,e)),
\quad \text{for arbitrary }g\colon\R\to\R.
\end{equation}
The choice $\calS_\GEN(z,w,e) := \beta e$ of~\eqref{eqdef:GEN1:S} is one member of this family, corresponding to $g(\tilde e) = \beta \tilde e$. 

Although~\eqref{eq:family-S-g} shows that many functions could be called `entropy', we claim that $\calS_\GEN(z,w,e) := e$ is the one that should appear in a \generic structure, and we give two reasons for this. The first reason is that the \generic structure requires $\bbJ_\GEN \rmD \calS_\GEN = 0$ (see~\eqref{eq:NIC}); within the family~\eqref{eq:family-S-g}, only ``$\beta e + \text{constant}$'' satisfies this. The second reason takes more time to explain, and we do this in the next section.

\subsection{Characterization of entropy through finite-dimensional approximations}
\label{ss:explanation-S-part2}

In this section we give an independent explanation why the expression $\calS(z,w,e) = \beta e$ should characterize the entropy in the \generic \sde~\eqref{eq:GENERIC-SDE1}. The origin of this expression lies in the scaling of `surface area', and we now describe this in detail. 

This interpretation of entropy as `surface area' can be found in some form or other in many treatments of the connection between statistical mechanics and thermodynamics (see e.g.~\cite[Sec.\,3.2]{Chandler87}, \cite[p.\,69]{Chorin94}, \cite[Sec.\,1.5]{Berdichevsky1997a}, or~\cite[(2.1.6)]{FrenkelSmit01}). While the particular dependence of the surface area on $n$, $\beta$, and $e$ given by Lemma~\ref{l:Sn} below must be generally known, we could not find a reference, and therefore we provide the details here. The same holds for the full-space equivalence-of-ensembles result of Lemma~\ref{l:conv-mu-beta-n}.


\medskip

We choose $(\Xi_n)_{n\geq1}$ to be a particular increasing sequence of  $n$-dimensional subspaces of the orthogonal complement $\bfW^\perp \subset \bfH$, such that 
\[
\bfW^\perp = \overline{\Span \bigcup_{n\geq 1}\Xi_n}^{\|\cdot\|_\bfH}.  
\]
The spaces $\Xi_n$ are in the `invisible' part of the heat bath, since $\bbP\Xi_n = 0$ and $\bbQ\Xi_n = \Xi_n$. 
We describe this choice in more detail in Appendix~\ref{app:microcanonical-measure}.

Fix $\beta>0$ and $e\in \R$, and for each $n$ let $\zeta_{\beta,n,e} $ be the (non-normalized)  measure on $\Xi_n$ defined by
\begin{equation}
\label{eqdef:zeta-n-beta-e}
\zeta_{\beta,n,e} (\rmd \xi) := \pdelta\pra*{\frac12\|\xi\|_\bfH^2 - \frac n{2\beta} - e}(\dd \xi) ,
\end{equation}
where $\pdelta$ is the measure that is heuristically defined by (see Appendix~\ref{app:microcanonical-measure})
\[
\int_{\Xi_n} \varphi(\xi) \pdelta\pra*{f(\xi) - e}(\rmd \xi) := 
  \lim_{h\downarrow 0} \frac1h \int_{\Xi_n} \varphi(\xi) \One\big\{e \leq f(\xi) < e+h\big\}\dd \xi
  \qquad \text{for all }\varphi\in C_c(\Xi_n).
\]
The total mass of the measure $\pdelta\pra*{f(\xi)-e} = \pdelta\pra*{f(\xi)-e}(\rmd \xi)$ can be interpreted as the area of the surface $f^{-1}(e)\subset \Xi_n$ (weighted by $|\nabla f|$; see Appendix~\ref{app:microcanonical-measure}).

In the thermodynamical or statistical-mechanical literature the  measure $\zeta_{\beta,n,e}$ would be called `microcanonical'. If the expression $\frac12 \|\xi\|_\bfH^2$ is interpreted as an energy, as in the case of this paper, then this measure is concentrated on an `energy shell' at energy level $n/2\beta + e$. This particular choice of energy level is geared towards the limit $n\to\infty$:  this choice causes each of the $n$  degrees of freedom in  $\Xi_n$ to have approximately energy level $1/2\beta$. 

The following two properties of this sequence of microcanonical measures probably are known (see e.g.~\cite[p.~69]{Chorin94} for a calculation similar to the first one), but for neither we could find a statement that suits our purposes;  for completeness we give a proof in Appendix~\ref{app:microcanonical-measure}. 
\begin{lemma}
  \label{l:Sn}
  Fix $\beta>0$ and $e\in\R$.
  \begin{enumerate}
  \item\label{l:Sn:part1} There exists $C(\beta,n)$, converging to $\infty$ as $n\to\infty$, such that 
  \begin{equation*}
   \sfZ_{\beta,n,e} :=  \int_{\Xi_n} \zeta_{\beta,n,e}(\rmd x)   
  \end{equation*}
  has the asymptotic behavior
  \begin{equation}
    \label{conv:partition-function-zeta}
    \log \sfZ_{\beta,n,e}- C(\beta,n)\; \xrightarrow{n\to\infty }\; \beta e.
  \end{equation}  
\item\label{l:Sn:part2} We have the narrow convergence of measures on $\bfX$ 
\begin{equation}
\label{conv:zeta_beta_n_e-to-Gaussian}
\frac1{\sfZ_{\beta,n,e}}\zeta_{\beta,n,e} \longrightharpoonup \wh \bbQ_\# \gamma_\beta \qquad \text{as } n\to\infty.
\end{equation}
\end{enumerate}
\end{lemma}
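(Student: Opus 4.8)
The plan is to reduce both parts to explicit Gaussian-type computations on the finite-dimensional spaces $\Xi_n\cong\R^n$, using polar coordinates adapted to the quadratic form $\xi\mapsto\frac12\|\xi\|_\bfH^2$. For part~\ref{l:Sn:part1}, I would first write the surface measure $\zeta_{\beta,n,e}$ explicitly. Identifying $\Xi_n$ with $\R^n$ via an orthonormal basis, the surface $\{\frac12|\xi|^2 = R^2/2\}$ is the sphere of radius $R$, and the coarea/Dirac-layer formula gives
\[
  \sfZ_{\beta,n,e} = \int_{\Xi_n}\pdelta\pra[\big]{\tfrac12|\xi|^2 - \tfrac n{2\beta} - e}(\rmd\xi)
  = \frac{\omega_{n-1}\,R_n^{\,n-1}}{|\nabla(\tfrac12|\xi|^2)|\big|_{|\xi|=R_n}}
  = \omega_{n-1}\,R_n^{\,n-2},
\]
where $R_n := \sqrt{n/\beta + 2e}$ and $\omega_{n-1}$ is the surface area of the unit sphere $S^{n-1}\subset\R^n$. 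Then I would take logarithms, insert $\omega_{n-1} = 2\pi^{n/2}/\Gamma(n/2)$, expand $\log R_n^{\,n-2} = \tfrac{n-2}{2}\log\!\bra[\big]{\tfrac n\beta + 2e} = \tfrac{n-2}{2}\bra[\big]{\log\tfrac n\beta + \log(1 + 2\beta e/n)}$, and use $\log(1+2\beta e/n) = 2\beta e/n + O(n^{-2})$. Collecting the terms that depend only on $(\beta,n)$ into $C(\beta,n)$ (which indeed $\to\infty$, since the dominant term behaves like $\tfrac n2\log(n/(2\pi e_{\mathrm{Euler}}\beta))$ up to lower order), the only surviving $e$-dependent contribution is $\tfrac{n-2}{2}\cdot\tfrac{2\beta e}{n} = \beta e\cdot\tfrac{n-2}{n} \to \beta e$, which is~\eqref{conv:partition-function-zeta}.

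For part~\ref{l:Sn:part2}, I would show that the normalized microcanonical measures $\tfrac1{\sfZ_{\beta,n,e}}\zeta_{\beta,n,e}$, viewed as probability measures on $\bfX$ supported on the finite-dimensional sphere of radius $R_n$ in $\Xi_n\subset\bfW^\perp$, converge narrowly to $\wh\bbQ_\#\gamma_\beta$. This is an equivalence-of-ensembles statement, and I would prove it by testing against bounded continuous cylindrical functions: for $\varphi\in\rmC_b(\bfX)$ depending only on finitely many coordinates $\ip{\xi}{v_1}_\bfH,\dots,\ip{\xi}{v_k}_\bfH$ with $v_i\in\bfW^\perp$, one reduces to the classical fact that the uniform measure on the sphere of radius $R_n = \sqrt{n/\beta + 2e} \sim \sqrt{n/\beta}$ in $\R^n$, projected onto a fixed $k$-dimensional coordinate subspace, converges to the centered Gaussian $\calN(0,\beta^{-1}\bbI_k)$ as $n\to\infty$ (Poincaré–Borel / Maxwell's theorem; the shift by $2e$ is lower-order and does not affect the limit). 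Since $\wh\bbQ_\#\gamma_\beta = \calN_{\bfW^\perp}(0,\beta^{-1}\bbC|_{\bfW^\perp})$ restricted to $\bfX$ has exactly these Gaussian finite-dimensional marginals on $\bfW^\perp$ (here one uses part~\ref{l:measurable-extensions-operators} of Lemma~\ref{l:measurable-extensions} and the compatibility $\ip{\cdot}{\cdot}_\bfH$ vs.\ $\ip{\cdot}{\cdot}_\bfX$), the finite-dimensional distributions match in the limit. Narrow convergence on $\bfX$ then follows because cylindrical functions are convergence-determining together with a tightness argument: the $\bfH$-norm of $\wh\bbQ\xi$ has uniformly bounded expectation (it is $\approx n\cdot(n/\beta + 2e)/n$... rather, one uses that the projection onto $\Span\bigcup_m\Xi_m$ in the $\bfX$-topology is tight since $\bbC$ is trace class and the sphere radius is controlled).

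The main obstacle I anticipate is the topological bookkeeping in part~\ref{l:Sn:part2}: the measures $\zeta_{\beta,n,e}$ live on spheres inside $\bfH$, whose radii $R_n\to\infty$ in the $\bfH$-norm, yet the claimed limit is a genuine (tight) probability measure on the weaker space $\bfX$. Making the tightness on $\bfX$ rigorous requires carefully exploiting that the embedding $\bfH\hookrightarrow\bfX$ is Hilbert–Schmidt (so $\bbC$ is trace class), which controls the $\bfX$-norm of a uniformly-distributed point on the large $\bfH$-sphere in expectation; combined with convergence of finite-dimensional marginals this upgrades to narrow convergence via Prokhorov. The surface-measure computation in part~\ref{l:Sn:part1} is routine once the coarea formula is set up, so I would relegate the precise definition of $\pdelta$ and the coarea identity to Appendix~\ref{app:microcanonical-measure} as the statement already indicates.
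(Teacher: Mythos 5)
Your proposal is correct and follows essentially the same route as the paper: part~\ref{l:Sn:part1} via the co-area formula giving $\sfZ_{\beta,n,e}\propto r_n^{n-2}$ with $r_n=\sqrt{n/\beta+2e}$ and a Taylor expansion of the logarithm, and part~\ref{l:Sn:part2} via tightness on $\bfX$ from the trace-class covariance (the paper's Lemma~\ref{l:variance-bound-MCmeasure}) combined with identification of the limit through finite-dimensional marginals by the Poincar\'e--Maxwell equivalence-of-ensembles statement (Lemma~\ref{l:equivalence-of-ensembles}) and the characteristic-function criterion of Lemma~\ref{l:measurable-extensions}.
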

\noindent
The first statement above shows that for fixed $\beta$ and $e$, 
\[
\text{surface area scales as}\quad \ee^{C(\beta,n) + \beta e}  \qquad \text{as $n\to\infty$.}
\]
Note how the largest contribution $C(\beta,n)$ diverges as $n\to\infty$, and the dependence on $e$ remains as an $O(1)$ contribution inside the exponential. 

The second statement above is a version of the `equivalence of ensembles': it can be interpreted as stating that in the large-$n$ limit the microcanonical `ensemble' $\zeta_{\beta,n,e}$ is similar to the `canonical ensemble' described by the Gaussian measure $\wh \bbQ_\# \gamma_\beta$ (see Lemma~\ref{l:equivalence-of-ensembles}). It differs from the usual treatment of equivalence-of-ensemble results in that it describes the behavior of the whole measure, not of a finite-marginal reduction. Note that under the limit Gaussian measure $ \wh \bbQ_\# \gamma_\beta$ each scalar degree of freedom has variance $1/\beta$, or equivalently energy $1/2\beta$, thus matching the scaling choice described above. 

For the discussion here it is relevant that the dependence on $e$ in $\zeta_{\beta,n,e}$ is lost in the convergence~\eqref{conv:zeta_beta_n_e-to-Gaussian} of the measures themselves: the limit is independent of $e$. At the same time, the value of $e$ \emph{can} be detected by following the $e$-dependence of the partition function $\sfZ_{\beta,n,e}$, as described by~\eqref{conv:partition-function-zeta}. 

\bigskip

We now apply this observation to the microscopic system of Section~\ref{s:micro}. 
We set $\bfH_n := \bfW+\Xi_n$; note that $\bbP \bfH_n = \bfW$ and $\bbQ \bfH_n = \Xi_n$.

Similarly to the construction above we fix an `energy level' $\calE_0$ and  define the probability measure $\mu_{\beta,n}$ on the subspace $\bfZ\ti\bfH_n$ by 
\[
\mu_{\beta,n} (\rmd z \dd \eta) := \frac1{\sfZ_{\beta,n,\calE_0}} 
 \pdelta\pra*{\calH_{\mathrm{total}}(z,\eta) - \frac{n}{2\beta} - \calE_0}(\rmd z \dd \eta).
\]
By extension we can also consider $\mu_{\beta,n}$ as a probability measure on $\bfZ\ti\bfX$. In contrast to $\zeta_{\beta,n,e}$ above we choose to normalize this measure. 

The choice of energy level $n/2\beta + \calE_0$ is very similar to the one above: in the limit $n\to\infty$ the bulk of the energy will be located in the $n$ degrees of freedom in the invisible part $\Xi_n$ of the heat bath, with approximately energy $1/2\beta$ per degree of freedom. Indeed, we have an equivalence-of-ensembles statement for the full system that mirrors that of the heat bath in Lemma~\ref{l:Sn}:
\begin{lemma}
  \label{l:conv-mu-beta-n}
As $n\to \infty$ the measure $\mu_{\beta,n}$ converges weakly on $\bfZ\ti\bfX$ to $\mu_\beta$. 
\end{lemma}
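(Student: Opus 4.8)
The plan is to disintegrate both $\mu_{\beta,n}$ and $\mu_\beta$ over the finite-dimensional coordinates $(z,w)\in\bfZ\ti\bfW$ and to match the two resulting pieces separately: the $(z,w)$-marginal and the conditional law of the ``invisible'' heat-bath component $\xi=\wh\bbQ\eta$. This way the only genuinely infinite-dimensional input is the equivalence-of-ensembles statement Lemma~\ref{l:Sn}, part~\ref{l:Sn:part2}.

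First I would set up the splitting. On $\bfH_n=\bfW+\Xi_n$ write $\eta=w+\xi$ with $w:=\bbP\eta\in\bfW$ and $\xi:=\bbQ\eta\in\Xi_n$. Since $\sfC z\in\bfW$ is orthogonal to $\Xi_n$, the Pythagorean identity applied to~\eqref{eqdef:Ham-full-2} gives $\calH_{\mathrm{total}}(z,w{+}\xi)=\Hzw(z,w)+\tfrac12\|\xi\|_\bfH^2$, with $\Hzw(z,w)=\Hmac(z)+\tfrac12\|w\|_\bfH^2+\ip{\sfC z}{w}_\bfH$ as in~\eqref{char:Htot-Hzw-Q}. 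Because $\calH_{\mathrm{total}}$ depends on $\xi$ only through $\tfrac12\|\xi\|_\bfH^2$ and enters $(z,w)$ additively, the defining limit of $\pdelta$ may be taken under the $\dd z\,\dd w$-integral (Fubini), which produces the factorisation
\begin{equation*}
\mu_{\beta,n}(\dd z\,\dd w\,\dd\xi)=\frac{1}{\sfZ_{\beta,n,\calE_0}}\,\zeta_{\beta,n,e'(z,w)}(\dd\xi)\,\dd z\,\dd w,\qquad e'(z,w):=\calE_0-\Hzw(z,w),
\end{equation*}
where $\zeta_{\beta,n,e}$ is precisely the microcanonical measure on $\Xi_n$ of~\eqref{eqdef:zeta-n-beta-e} and $\sfZ_{\beta,n,\calE_0}=\int_{\bfZ\ti\bfW}\sfZ_{\beta,n,e'(z,w)}\,\dd z\,\dd w$ with $\sfZ_{\beta,n,e}$ as in Lemma~\ref{l:Sn}. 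Thus $\mu_{\beta,n}$ has $(z,w)$-marginal with Lebesgue density $p_n(z,w)=\sfZ_{\beta,n,e'(z,w)}/\sfZ_{\beta,n,\calE_0}$ and conditional law $\kappa_n^{z,w}:=\sfZ_{\beta,n,e'(z,w)}^{-1}\,\zeta_{\beta,n,e'(z,w)}$ for $\xi$ given $(z,w)$.

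Next I would pass to the limit factorwise. For fixed $(z,w)$ the number $e'(z,w)$ is a fixed real, so Lemma~\ref{l:Sn}, part~\ref{l:Sn:part2}, gives $\kappa_n^{z,w}\longrightharpoonup\wh\bbQ_\#\gamma_\beta$ narrowly on $\bfX$, the limit being independent of $e'$. For the marginal, Lemma~\ref{l:Sn}, part~\ref{l:Sn:part1}, gives $\sfZ_{\beta,n,e'(z,w)}\,\ee^{-C(\beta,n)}\to\ee^{\beta\calE_0}\ee^{-\beta\Hzw(z,w)}$ pointwise; with the explicit form of $\sfZ_{\beta,n,e}$ from the proof of Lemma~\ref{l:Sn} (a power of $(n/\beta+2e)_+$ with exponent of order $n/2$) and the inequality $1+x\le\ee^x$, one obtains a bound $\sfZ_{\beta,n,e'(z,w)}\,\ee^{-C(\beta,n)}\le g(z,w)$ uniform in $n$, so dominated convergence yields $\sfZ_{\beta,n,\calE_0}\,\ee^{-C(\beta,n)}\to\ee^{\beta\calE_0}\int_{\bfZ\ti\bfW}\ee^{-\beta\Hzw}=\ee^{\beta\calE_0}Z_\beta$, whence $p_n\to p_\infty:=Z_\beta^{-1}\ee^{-\beta\Hzw}$ pointwise; since $p_n$ and $p_\infty$ are probability densities, Scheff\'e's lemma upgrades this to $L^1(\bfZ\ti\bfW)$-convergence. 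Combining: for $\varphi\in C_b(\bfZ\ti\bfX)$ one has $\int\varphi\,\dd\mu_{\beta,n}=\int_{\bfZ\ti\bfW}p_n(z,w)\,\bra[\big]{\int_\bfX\varphi(z,w{+}\xi)\,\kappa_n^{z,w}(\dd\xi)}\,\dd z\,\dd w$; the inner integral is bounded by $\|\varphi\|_\infty$ and converges pointwise (apply the narrow convergence to $\psi(\xi):=\varphi(z,w{+}\xi)\in C_b(\bfX)$), while $p_n\to p_\infty$ in $L^1$, so the product tends to $\int_{\bfZ\ti\bfW}p_\infty(z,w)\,\bra[\big]{\int_\bfX\varphi(z,w{+}\xi)\,(\wh\bbQ_\#\gamma_\beta)(\dd\xi)}\,\dd z\,\dd w$. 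It remains to recognise this limit as $\int\varphi\,\dd\mu_\beta$: disintegrating $\mu_\beta=\mu_{\beta,\bfZ}(\dd z)\,\calN_\bfX(-\sfC z,\beta^{-1}\bbC)(\dd\eta)$ over $(z,w)=(z,\wh\bbP\eta)$ and using Lemma~\ref{l:measurable-extensions} — recall that $\wh\bbP$ and $\wh\bbQ$ split $\gamma_\beta$ into the independent product $\calN_\bfW(0,\beta^{-1})\otimes\wh\bbQ_\#\gamma_\beta$, and that $\wh\bbP(-\sfC z)=-\sfC z$, $\wh\bbQ(-\sfC z)=0$ — one finds that the $(z,w)$-marginal of $\mu_\beta$ equals $\nu_\beta=Z_\beta^{-1}\ee^{-\beta\Hzw}\dd z\,\dd w$ (by~\eqref{eqdef:nu-beta}) and that the conditional law of $\xi$ is $\wh\bbQ_\#\gamma_\beta$; this is exactly the limiting disintegration, so $\mu_{\beta,n}\longrightharpoonup\mu_\beta$ weakly on $\bfZ\ti\bfX$.

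The step I expect to be the main obstacle is the uniform-in-$n$ domination of $p_n$, i.e.\ of $\sfZ_{\beta,n,e'(z,w)}\,\ee^{-C(\beta,n)}$, by an $L^1(\bfZ\ti\bfW)$ function. On the unbounded-energy region $\{\Hzw(z,w)>\calE_0\}$ the inequality $1+x\le\ee^x$ only gives domination by $\ee^{-\tfrac13\beta\Hzw(z,w)}$ (the factor $3$ arising from $\tfrac{n-2}{n}\ge\tfrac13$ for $n\ge3$), whose integrability over $\bfZ\ti\bfW$ does not follow from the standing assumption~\eqref{eq-cond:augmented-Hmac-integrable} alone. Closing this gap requires either the stronger integrability~\eqref{eq:StrongerIntegrability} (or an all-exponents strengthening of~\eqref{eq-cond:augmented-Hmac-integrable}), or a truncation argument: establish the convergence with $(z,w)$ restricted to $\{\Hzw\le M\}$, control the tails $\mu_{\beta,n}(\{\Hzw>M\})$ and $\mu_\beta(\{\Hzw>M\})$ uniformly in $n$ via Lemma~\ref{l:Sn}, part~\ref{l:Sn:part1}, and Fatou's lemma, and let $M\to\infty$ at the end. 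Everything else in the argument is routine bookkeeping.
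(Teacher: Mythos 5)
Your argument is essentially the paper's own: both rest on the Pythagorean split $\calH_{\mathrm{total}}(z,w{+}\xi)=\Hzw(z,w)+\tfrac12\|\xi\|_\bfH^2$, which factors the microcanonical measure into a $(z,w)$-part and a $\Xi_n$-part, and both delegate the infinite-dimensional content to Lemma~\ref{l:Sn}. The only organizational difference is that the paper routes through the auxiliary energy variable $e$ — it pushes $\mu_{\beta,n}$ forward under $T_n$, factorizes via Lemma~\ref{l:delta-times-delta} into~\eqref{char:mu-beta-n-e-in-two-deltas}, obtains the narrow limit of $\mu_{\beta,n,e}$ in Lemma~\ref{l:mu-beta-n-e-limit}, and then integrates $e$ and $\xi$ back out — whereas you disintegrate $\mu_{\beta,n}$ directly over $(z,w)$ and handle marginal and conditional separately; these are the same computation in different bookkeeping. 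What you add, and should keep, is the explicit treatment of the limit--integral interchange in $(z,w)$: the paper passes from the pointwise asymptotics of $\ee^{-C(\beta,n)}\sfZ_{\beta,n,e}$ to the narrow convergence~\eqref{eq:mu-beta-n-e-limit-convergence} without a domination argument, and your observation is correct that $\log(1+x)\le x$ only yields the uniform bound $\ee^{-C(\beta,n)}\sfZ_{\beta,n,e'(z,w)}\le \ee^{\frac{n-2}{n}\beta(\calE_0-\Hzw(z,w))}$, so that on the region $\{\Hzw>\calE_0\}$ one controls $p_n$ only by $\ee^{-\beta'\Hzw}$ for some $\beta'<\beta$ (your $\beta/3$ for all $n\ge 3$; any $\beta'<\beta$ if one only needs the bound for $n$ large), and integrability at a reduced inverse temperature is not implied by~\eqref{eq-cond:augmented-Hmac-integrable}. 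This gap is present in the paper's write-up as much as in your sketch; either the truncation argument you outline, or an integrability hypothesis for all inverse temperatures $\beta'\in(0,\beta]$ (or just in a left neighbourhood of $\beta$), is needed to make the dominated-convergence/Scheff\'e step airtight. With that supplement your proof is complete and matches the paper's.
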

\noindent
Again,  only the $O(n)$ scaling of the energy level $\calE_0 + n/2\beta$ is visible in the limit $\mu_\beta$; the $O(1)$ part given by the choice of $\calE_0$ can not be recovered from the limit measure $\mu_\beta$. The proof is given Appendix~\ref{app:delayed-proofs}. 

\bigskip

We next map this microcanonical measure $\mu_{\beta,n}$ to a new state space in which we can track the behavior of $z$, $w$, and $e$ independently from the rest of the heat bath. Defining the closure in $\bfX$ of the $\Xi_n$ as  
\[
\Xi^\bfX := \overline{\bigcup_{n\geq 1}\Xi_n }^{\|\cdot\|_\bfX},
\]
we define the `decomposition' map
\begin{align*}
  T_n &\colon\bfZ\ti\bfH_n \longrightarrow \bfZ\ti\bfW\ti\Xi^\bfX\ti\R,\\
  &(z,\eta)\longmapsto \bra*{z,\, w:= \bbP \eta, \,\xi = \bbQ \eta,\, e := \frac12 \|\xi\|^2_\bfH - \frac{n}{2\beta}},
\end{align*}
and  the resulting measure on $\bfZ\ti\bfW\ti\Xi^\bfX\ti\R$,
\begin{equation}
\label{eqdef:mu-beta-n-e}
\mu_{\beta,n,e} := (T_n)_\# \mu_{\beta,n}.  
\end{equation}
The map $T_n$ makes explicit the orthogonal split $\eta = w + \xi = \bbP \eta + \bbQ\eta$, and in addition exposes the renormalized energy variable $\|\xi\|^2/2 - n/2\beta$. 

Since for $\eta\in \bfH_n$ we have 
\[
\calH_{\mathrm{total}}(z,\eta)   = \Hzw(z,\bbP\eta) + \frac12 \|\bbQ\eta\|_{\bfH}^2,
\]
the properties of the `delta function' $\pdelta$ allow us to rewrite $\mu_{\beta,n,e}$ as (see Lemma~\ref{l:delta-times-delta})
\begin{equation}
\label{char:mu-beta-n-e-in-two-deltas}
\mu_{\beta,n,e}(\rmd z \dd w \dd \xi \dd e)
= \frac1{\sfZ_{\beta,n,\calE_0}} 
\pdelta\pra[\Big]{\Hzw(z,w) - \calE_0 + e} (\rmd z\dd w)\,
\pdelta\pra*{\frac12 \|\xi\|_\bfH^2 - \frac{n}{2\beta} - e} (\rmd \xi) \,\dd e.
\end{equation}
By Lemma~\ref{l:Sn} the final $\pdelta$-measure has the asymptotic behavior 
\[
  \ee^{-C(\beta,n)}\pdelta\pra*{\frac12 \|\xi\|_\bfH^2 - \frac{n}{2\beta} - e} (\rmd \xi)
  \ \longrightharpoonup\  \ee^{\beta e}
  \,(\wh \bbQ_\# \gamma_\beta)(\rmd \xi)
  \qquad \text{as }n\to\infty.
\]
Combining the factor $\ee^{-C(\beta,n)}$ with $\sfZ_{\beta,n,\calE_0}$  we thus have proved
%
\begin{lemma}
  \label{l:mu-beta-n-e-limit}
As $n\to \infty$ we have the narrow convergence on $\bfZ\ti\bfW\ti\Xi^\bfX\ti\R$,
\begin{subequations}
\label{eq:mu-beta-n-e-limit}
\begin{align}
\label{eq:mu-beta-n-e-limit-convergence}
&\mu_{\beta,n,e} \longrightharpoonup \mu_{\beta,e},\\
&\mu_{\beta,e}(\rmd z \dd w \dd \xi \dd e ):= \frac1{\sfZ_{\beta,\calE_0}}  \,
\ee^{\beta e} \,\pdelta\pra[\big]{\Hzw(z,w) - \calE_0 + e} (\rmd z \dd w) \dd e \;(\wh\bbQ_\#\gamma_\beta)(\rmd \xi).
\label{eq:mu-beta-n-e-limit-definition}
\end{align}
\end{subequations}
\end{lemma}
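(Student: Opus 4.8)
\textbf{Proof plan for Lemma~\ref{l:mu-beta-n-e-limit}.}
The strategy is to combine the factorisation~\eqref{char:mu-beta-n-e-in-two-deltas} with the asymptotic analysis of the $\xi$-factor already carried out in Lemma~\ref{l:Sn}, the full-space equivalence of ensembles from Lemma~\ref{l:conv-mu-beta-n}, and the renormalisation of the partition function from part~\ref{l:Sn:part1} of Lemma~\ref{l:Sn}. The only new content here is the bookkeeping of all three limits at once and the verification that they can be taken in the product measure; the displayed convergence in the statement has in fact already been announced in the text immediately preceding the lemma, so what remains is to assemble the pieces cleanly.

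The plan is as follows. First I would recall the factorised form~\eqref{char:mu-beta-n-e-in-two-deltas}: the measure $\mu_{\beta,n,e}$ on $\bfZ\ti\bfW\ti\Xi^\bfX\ti\R$ is the product of the $(z,w,e)$-part $\sfZ_{\beta,n,\calE_0}^{-1}\,\pdelta\pra[\big]{\Hzw(z,w)-\calE_0+e}(\rmd z\dd w)\,\dd e$ — which is in fact \emph{independent of $n$} — and the $\xi$-part $\pdelta\pra*{\tfrac12\|\xi\|_\bfH^2-\tfrac n{2\beta}-e}(\rmd\xi)$. Second, I would multiply and divide by $\ee^{C(\beta,n)}$, writing
\begin{equation*}
\mu_{\beta,n,e}
= \frac{\ee^{C(\beta,n)}}{\sfZ_{\beta,n,\calE_0}}\;
 \pdelta\pra[\big]{\Hzw(z,w)-\calE_0+e}(\rmd z\dd w)\,\dd e\;
 \Big(\ee^{-C(\beta,n)}\pdelta\pra*{\tfrac12\|\xi\|_\bfH^2-\tfrac n{2\beta}-e}(\rmd\xi)\Big).
\end{equation*}
By part~\ref{l:Sn:part2} of Lemma~\ref{l:Sn} the bracketed $\xi$-measure converges narrowly on $\bfX$ to $\ee^{\beta e}\,(\wh\bbQ_\#\gamma_\beta)(\rmd\xi)$; by part~\ref{l:Sn:part1}, applied with $\calE_0$ in place of $e$ there is no direct cancellation, so instead I would use Lemma~\ref{l:conv-mu-beta-n} together with the same renormalisation to identify the limit of the scalar prefactor $\ee^{C(\beta,n)}/\sfZ_{\beta,n,\calE_0}$. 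Concretely, since $\mu_{\beta,n}$ converges weakly to the probability measure $\mu_\beta$ by Lemma~\ref{l:conv-mu-beta-n}, and since $\mu_{\beta,n}$ was defined as $\sfZ_{\beta,n,\calE_0}^{-1}\,\pdelta[\cdots]$, comparing total masses and using~\eqref{conv:partition-function-zeta} pins down the constant: the limiting normalisation is exactly $\sfZ_{\beta,\calE_0}^{-1}$ as written in~\eqref{eq:mu-beta-n-e-limit-definition}. Third, with the scalar prefactor converging and the $\xi$-factor converging narrowly while the $(z,w,e)$-factor is constant in $n$, the product converges narrowly to the claimed $\mu_{\beta,e}$, which is the content of~\eqref{eq:mu-beta-n-e-limit}.

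The step requiring the most care is the passage to the limit in the \emph{product} of measures when one factor is an unnormalised $\pdelta$-measure with infinite total mass (the $(z,w,e)$-factor) and another is only narrowly — not strongly — convergent on the larger space $\bfX$. The clean way around this is to test against functions $\varphi(z,w,e)\psi(\xi)$ with $\varphi\in C_c(\bfZ\ti\bfW\ti\R)$ and $\psi\in C_b(\bfX)$: for such product test functions the pairing factorises, the $(z,w,e)$-integral is a fixed finite number by the co-area interpretation of $\pdelta$ (Appendix~\ref{app:microcanonical-measure}), and the $\xi$-integral converges by Lemma~\ref{l:Sn}\ref{l:Sn:part2}; a density/approximation argument (Stone--Weierstrass together with the tightness already implicit in the narrow convergence of the $\xi$-factor, plus the local compactness absorbed into $\varphi$) then upgrades this to all $\Phi\in C_b(\bfZ\ti\bfW\ti\Xi^\bfX\ti\R)$. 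I would also note in passing that one should check $\wh\bbQ_\#\gamma_\beta$ is genuinely supported on $\Xi^\bfX$ (so that the limit measure lives on the stated space), which follows since $\bbQ\Xi_n=\Xi_n$ and the $\Xi_n$ exhaust $\bfW^\perp$ in $\bfH$, hence their $\bfX$-closure carries the pushforward; this is the same kind of remark as in Appendix~\ref{app:microcanonical-measure}.
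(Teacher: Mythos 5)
Your overall architecture is the same as the paper's: start from the factorisation~\eqref{char:mu-beta-n-e-in-two-deltas}, renormalise the $\xi$-factor by $\ee^{-C(\beta,n)}$ and pass to the limit with part~\ref{l:Sn:part2} of Lemma~\ref{l:Sn}, and absorb $\ee^{C(\beta,n)}$ into the partition function. However, there is one genuine logical flaw: you identify the limit of the scalar prefactor $\ee^{C(\beta,n)}/\sfZ_{\beta,n,\calE_0}$ by invoking Lemma~\ref{l:conv-mu-beta-n} (``since $\mu_{\beta,n}$ converges weakly to the probability measure $\mu_\beta$\dots''). That is circular in the logical structure of the paper: Lemma~\ref{l:conv-mu-beta-n} is \emph{stated} before Lemma~\ref{l:mu-beta-n-e-limit}, but its proof (in Appendix~\ref{app:delayed-proofs}) is obtained \emph{from} Lemma~\ref{l:mu-beta-n-e-limit} by integrating out $(\xi,e)$. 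No independent proof of Lemma~\ref{l:conv-mu-beta-n} is available here, so you cannot use it as an input.

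The correct route for the prefactor is the one you dismissed: part~\ref{l:Sn:part1} of Lemma~\ref{l:Sn} does do the job, not with $e$ replaced by $\calE_0$, but integrated over $e$. Writing the total mass of the unnormalised measure via the factorisation,
\begin{equation*}
  \ee^{-C(\beta,n)}\,\sfZ_{\beta,n,\calE_0}
  = \int_{\R}\int_{\bfZ\ti\bfW}
    \pdelta\pra[\big]{\Hzw(z,w)-\calE_0+e}(\rmd z\dd w)\;
    \Bigl(\ee^{-C(\beta,n)}\sfZ_{\beta,n,e}\Bigr)\,\dd e ,
\end{equation*}
the inner scalar converges to $\ee^{\beta e}$ pointwise in $e$ by~\eqref{conv:partition-function-zeta}; together with a domination of the form $\ee^{-C(\beta,n)}\sfZ_{\beta,n,e}\le C\,\ee^{\beta e}$ (which follows from the explicit formula $(1+2\beta e/n)^{(n-2)/2}$ for the sphere area) and the integrability assumption~\eqref{eq-cond:augmented-Hmac-integrable}, dominated convergence gives $\ee^{-C(\beta,n)}\sfZ_{\beta,n,\calE_0}\to\sfZ_{\beta,\calE_0}=\int\ee^{-\beta\Hzw+\beta\calE_0}\dd z\dd w$. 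The same domination also supplies the tightness in the $(z,w,e)$-directions that your product-test-function argument needs; the tightness ``implicit in the narrow convergence of the $\xi$-factor'' only controls the $\xi$-direction, and locality of $\varphi$ cannot be invoked once you pass to general bounded continuous test functions on the full product space. With the prefactor handled this way your argument closes, and indeed Lemma~\ref{l:conv-mu-beta-n} then comes out as a corollary, as in the paper.
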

\noindent

\bigskip

We now use this convergence to understand the interpretation of $\calS_\GEN(z,w,e) = \beta e + \mathrm{constant}$. The microcanonical measure $\mu_{\beta,n,e}$, written in the form~\eqref{char:mu-beta-n-e-in-two-deltas}, can be interpreted as the product of two surface measures, the first for System~A  and the second one for the heat bath. Each surface measure is calculated at some energy level parametrized by $e\in \R$, but neither are normalized; the normalization happens at the level of the total product, not  the individual factors. This gives rise to a trade-off: as a function of~$e$, the surface area of the heat bath at energy level  `$c_1+e$' is multiplied by the surface area of System~A  at energy level `$c_2-e$'. 

From the discussion above we know that for large $n$ the surface area of the heat bath at energy level $c_1+e$ has the exponential dependence  $\ee^{\beta e}$ on $e$.  Therefore higher heat-bath energy levels, which correspond to larger $e$ in~\eqref{char:mu-beta-n-e-in-two-deltas},  have a larger contribution in~\eqref{char:mu-beta-n-e-in-two-deltas}, and this dependence gives a downward pressure on the energy of System~A. This downward pressure can be recognized by integrating $\mu_{\beta,e}$ over both $\xi$ and $e$;  we then obtain
\begin{multline*}
  \frac1{\sfZ_{\beta,\calE_0}} \int_\R  
  \ee^{\beta e} \,\pdelta\pra[\big]{\Hzw(z,w) - \calE_0 + e}(B)  \dd e \\
  = \frac1{\sfZ_{\beta,\calE_0}}  \int_B \ee^{-\beta \Hzw(z,w) + \beta 
  \calE_0}\dd z\dd w,
  \qquad \text{for all }B\subset \bfZ\ti \bfW,
\end{multline*}
or more concisely, as measures on $\bfZ\ti\bfW$,
\[
  \frac1{\sfZ_{\beta,\calE_0}} \int_\R  
  \ee^{\beta e} \,\pdelta\pra[\big]{\Hzw(z,w) - \calE_0 + e}  \dd e
  = \frac1{\sfZ_{\beta,\calE_0}}  \ee^{-\beta \Hzw(z,w) + \beta 
  \calE_0} .
\]
(See Definition~\ref{def:pdelta} for this type of manipulation.)  The
expression on the right-hand side coincides with~\eqref{eqdef:nu-beta}, and
shows how the increase in heat-bath surface area with heat-bath energy
penalizes energy increases in System~A.

\bigskip

Summarizing, the formula $\calS_\GEN(z,w,e) = \beta e + \mathrm{constant}$ can
be interpreted as a characterization of the dependence of microcanonical
heat-bath surface area on heat-bath energy, and the discussion above shows how
this dependence has a depressing effect on the energy in System~A.

\AAA

\subsection{Reduction to a dissipative system A via strong coupling and fast relaxation}
\label{su:DissA.QuasiStat}

In applications it is often interesting to study a damped version of the
Hamiltonian system A coupled directly to the heat bath, such that no additional
internal variable $w\in \bfW$ appears. We discuss here on the formal level how
such models can occur as a limit of systems including $w$. The idea is to make
the coupling between $ w$ and $z$ very strong by setting
$\sfC = \lambda \wt\sfC$ and to make the damping matrix even stronger by setting
$\bbD=\lambda^2 \wt\bbD$. Here $\lambda\gg 1$ is a large real parameter such
that $\lambda^2$ measures the quotient of the slow time scales in System A and
the fast relaxation time for $w$. We will be interested in the limit $\lambda
\to \infty$ while keeping the (inverse) temperature $\beta>0$ fixed. 

We start from the SDE \eqref{eq:evol-zwe-ODE} for the variables
$(z_t,w_t,e_t)$ and assume that $\Hmac$ is given in the form
$\Hmac(z)= \wt\calH(z)+ \frac12\lambda^2\|\wt\sfC z\|^2_\bfH$ (cf.\
\eqref{eqdef:Ham-full-2}). For simplicity, we omit the equation for the energy
$e_t$, and obtain
\begin{equation}
  \label{eq:ScaledSDE}
  \binom{\rmd z_t}{\rmd w_t} = \binom{\Jmac\big( \rmD\wt\calH(z_t) + \lambda
  \wt\sfC^*(w_t{+}\lambda \wt\sfC z_t)\big)} {-\lambda^2 \wt\bbD (w_t{+}\lambda
  \wt\sfC z_t)} \dd t + \binom{0}{\lambda \wt\Sigma\rmd B_t}  
\text{ with } \wt\Sigma \wt\Sigma^*=\frac1\beta(\wt\bbD{+}\wt\bbD^*).
\end{equation}
The equation for $w$ is linear and it follows that for $\lambda \gg1$ one needs
$w_t \approx -\lambda \sfC z_t$ to cancel the highest order terms in
$\lambda$. To see the influence of $w_t$ on the macroscopic variable $z_t$, it
is advantageous to introduce the new variable
$u^\lambda =\lambda(w{+}\lambda \sfC z)$, which satisfies the SDE
\begin{align*}
  \rmd z_t &=\Jmac(z_z) \rmD\wt\calH(z_t) + \Jmac(z_z) \wt\sfC^* \dd u^\lambda_t, 
\\
\rmd u^\lambda_t& = 
- \lambda^2 \big(\, \wt\bbD - \wt\sfC \Jmac(z_t)
  \wt\sfC^*\big) u^\lambda_t \dd t + \lambda^2 \wt\sfC \Jmac(z_t) 
  \rmD \wt\calH(z_t) + \lambda^2 \wt\Sigma \dd B_t.
\end{align*}

For simplicity, we now further assume that $\Jmac$ is independent of $z$, so that
$\wt\bbG:= \wt\bbD - \wt\sfC\Jmac\wt\sfC^*$ satisfies $\wt\bbG{+} \wt\bbG^*=
\wt\bbD{+}\wt\bbD^*>0$ and hence generates a contraction semigroup
$\big(\ee^{-t\wt\bbG}\big)_{t\geq 0}$. With this, $u^\lambda_t$ can be expressed by
Duhamel's formula as
\[
u^\lambda_t= \ee^{-t\lambda^2\wt\bbG}u_0 
   + \int_0^t \lambda^2\, \ee^{-(t-s)\lambda^2\wt\bbG}\: 
              \wt\sfC \Jmac \rmD\wt\calH(z_s) \dd s 
   + \int_0^t \lambda^2 \,\ee^{-(t-s)\lambda^2\wt\bbG} \:\wt\Sigma \dd B_s.
\]
From this, we find the deterministic limit $
U_t=\lim_{\lambda\to \infty} u^\lambda_t = \wt\bbG^{-1} \wt\sfC \Jmac
  \rmD\wt\calH(z_t)   + \wt\bbG^{-1} \wt\Sigma \,B_t $, and 
inserting this relation into the equation for $z_t$ yields an SDE for
$z_t$ alone, namely 
\begin{align}
 \label{eq:SDE.zt.alone}
  &\dd z_t = \Jmac \rmD\wt\calH(z_t)\dd t + \Jmac\wt\sfC^* \dd U_t  =
\big( \,\wh\sfJ - \wh\sfD\big) \rmD \wt\calH(z_t)\dd t  +
\wh\bfSigma  \dd B_t \\[0.3em]
\nonumber
& \text{with } \wh\sfJ= \Jmac-\tfrac12\Jmac
\wt\sfC^*(\wt\bbG^{-1}{-}\wt\bbG^{-*}) \wt\sfC\,\Jmac^*, \\
\nonumber  
&\phantom{\text{with }}\wh\sfD= \tfrac12\Jmac
\wt\sfC^*(\wt\bbG^{-1}{+}\wt\bbG^{-*}) \wt\sfC\,\Jmac^*, \text{ and
}  \wh\bfSigma= \Jmac \wt\sfC^*\wt\bbG^{-1} \wt\Sigma.
\end{align}
This SDE can be seen as the limit for $\lambda \to \infty$ of the scaled SDE
\eqref{eq:ScaledSDE}. Clearly, $\wh\sfJ=-\wh\sfJ^*$ and $\wh\sfD=\wh\sfD^*\geq
0$ (since $\wt\bbG$ has positive symmetric part), hence \eqref{eq:SDE.zt.alone} 
is a damped Hamiltonian system with stochastic forcing. 

It remains unclear whether in the general case the SDE \eqref{eq:SDE.zt.alone}
satisfies the fluctuation-dissipation relation and has a simple invariant
measure.  However, for the special case $\wt\sfC \Jmac \wt\sfC^*=0$ we have
$\wt\bbG=\wt\bbD$, and from
$\wt\bbD^{-1}\wt\Sigma \wt\Sigma^*\wt\bbD^{-*}=
\frac1\beta(\wt\bbD^{-1}{+}\wt\bbD^{-*})$ the fluctuation-dissipation relation
\[
\wh\bfSigma \wh\bfSigma^*=\frac1\beta\,\Jmac \wt\sfC^*
\big(\wt\bbD^{-1}{+} \wt\bbD^{-*}\big) \wt\sfC \Jmac^* = \frac2\beta \wh\sfD
\]
follows. With this we see that \eqref{eq:SDE.zt.alone} has 
the invariant measure $\wh \nu_\beta (\rmd z) = Z_\beta^{-1} \ee^{-\beta \wt H(z)}
\dd z$, cf.\ \eqref{eqdef:nu-beta}. 

\begin{example}[Molecular dynamics]\label{ex:MolecDynam}
  The above approach is often applied in studies of molecular dynamics at finite
  temperature (see e.g.\
  \cite{LelievreRoussetStoltz10, DuongPeletierZimmer13, Leimkuhler2015b}). For
  this, we consider a classical mechanical system for
  $z=(q,p) \in \R^m\ti \R^m=\bfZ$ with
  $\wt\calH(q,p)=\Phi(q) + \frac12 p^*M^{-1}p$ and
  $\bbJ=\binom{\ 0\ \ I}{-I\ 0}$.  For the coupling we choose $w\in \R^m=\bfW$,
  $\wt\sfC(q,p)= \sfK q$, and $\wt\bbD=\wt\bbD^*>0$, and simple calculations give
\[
 \wt\sfC \bbJ\wt\sfC^*=0, \quad \wt\bbG= \wt\bbD, \quad
 \wh\sfD= \sfK^* \wt\bbD^{-1}\sfK, \quad \wh\bfSigma =
  \binom{0}{-\sqrt{2/\beta} \: \sfK^* \wt\bbD^{-1/2}}.
\]
Hence, the fluctuation-dissipation relation holds, and the classical 
invariant measure reads
\[
\wh\nu_\beta(\rmd z) = \frac1{Z_\beta}\: \ee^{-\beta \wt\calH(z)} 
\dd z = \frac1{Z_\beta}\, \exp\Big(\! -\beta \Phi(z) - \frac\beta2 
p^* M^{-1}p \Big)  ,
\] 
i.e.\ the momentum has a Gaussian (or Maxwellian) distribution.
\end{example}

\EEE

\section{Conclusion and discussion}
\label{s:conclusion}

While the coarse-graining of Hamiltonian systems is a classical topic, the
introduction of the compression property from dilation theory gives a new type
of insight into this question. Specifically, we have shown how the compression
property gives an abstract characterization of heat-bath behaviour that
automatically generates two important effects.

First, the existence of a compression means in particular that the Hamiltonian
evolution of the coupled system can be projected to the finite-dimensional
subspace $\bfZ \ti \bfW$.  
Secondly, and just as importantly, the compression property implies that the
combination of random heat-bath initial data, Hamiltonian propagation, and
projection generates a noise that is memoryless, leading to the Brownian noise
terms in the equations~\eqref{eq:evol-zw} and~\eqref{eq:evol-zwe}. This is best
expressed by Lemma~\ref{l:Y-is-OU}, and the proof of that lemma clearly shows
how the projection and Hamiltonian propagation combine to give the covariance
function of an Ornstein-Uhlenbeck process.\medskip

In addition, the same compression property has a clear connection to the
\generic structure. The operator $\bbD$ that appears in the compression
property generates terms in both the Poisson operator $\bbJ_\GEN$ and the
Onsager operator $\bbK_\GEN$: the skew-symmetic part of $\bbD$ contributes to
$\bbJ_\GEN$, and the symmetric part to $\bbK_\GEN$. In this way the compression
property generates new structure in the (formerly) Hamiltonian system, which
turns out to be \generic structure.

Finally, the fluctuations generated by the noise are linked to the Onsager
operator $\bbK_\GEN$ by the fluctuation-dissipation
relation~\eqref{eq:FDR-GEN}. This puts the \generic framework on a firmer
footing, and gives modelling insight into the various components of \generic.

\subsection{Relation with models and results in the literature}
\label{ss:relation-with-literature-1}

Coupled Hamiltonian systems consisting of a fixed system and a `heat bath' have
been studied by many authors. The earliest example that we could find is by
Zwanzig~\cite{Zwanzig1973a}, and similar systems return
in~\cite{Zwanzig80,Zwanzig01,KSTT02LTBL}. The system of this paper is a generalization of 
that of~\cite{Zwanzig1973a} to an infinite-dimensional heat bath. Infinite-dimensional heat baths were also used in the
related work of Jaksic and Pillet~\cite{JaksicPillet98} and
Rey-Bellet~\cite{Reyb06OCS}.

The fact that a particular tuning of the properties of the heat bath and its
projector can lead to Markovianity of the projected subsystem was observed by
Ford, Kac, and Mazur~\cite{Ford1965a}, and later used by many
others~\cite{Zwanzig1973a,Zwanzig80}.  Stuart and
co-workers~\cite{Stuart1999a,KSTT02LTBL} appear to be the first to give a
rigorous convergence statement in the limit of infinite dimensions of the heat
bath (and the Compression Property~\ref{ass:dilation} is effectively
equivalent to the choice of the frequencies of the $n$ oscillators in these two
papers). 

Very similar ideas were also used to explain dissipation in quantum systems
(so-called open quantum systems), in particular in the famous Caldeira-Leggett model
\cite{CalLeg83QTDS,CalLeg83PIAQ}, see also \cite{Tokieda2020a} where the heat
bath is interpreted in terms of a phonon-number representation. The
straightforward usage of
the classical dilation theory leads to so-called quantum-state diffusion, see
e.g.\ \cite{GisPer92QSDM}, which does not fully account for the true nature of
quantum probability. A genuine dilation theory for open quantum system needs
a quantum probability theory, see e.g.\ \cite{AcFrGo84QPAQ,Meye95QPP}.

\begin{RmRemark}
  The setup of Jaksic and Pillet~\cite{JaksicPillet98} is similar to ours, but
  their condition (H3) effectively excludes the case of this paper. This is due
  to the {already-remarked} fact that $\bfW\cap \mafo{dom}(\JHB) = \{0\}$,
  which is proved in Remark~\ref{re:Dom.JB.bfW}. Jaksic and Pillet find from
  their (H3) that their equivalent of the process $Y_t$ has $C^1$ dependence on
  the initial datum $(z_0,\eta_0)$ (their Theorem 2.1), while because of the
  lack of continuity we only obtain measurability.
\end{RmRemark}

The coarse-graining of this paper is closely related to the `Mori-Zwanzig'
reduction method \cite{Mori65, Zwan61MEIT, Zwanzig01, Chorin2000a,
  Schilling22}. The first step in this method is the expression of the `hidden'
variables in terms of the `visible' variables. In this paper
equation~\eqref{eq:z-Y} plays the same role. The integral term characterizes
the changes in $z$ that are generated by $z$ itself but through the heat bath:
\[
  \int_0^t \underbrace{\sfC^*  \ee^{(t-s)\JHB}\sfC}_{=:\,K(t-s)} \dot z_s\dd s.
\]
The Compression Property~\eqref{eqass:dilation} amounts to assuming that there
exists a finite-dimen\-sional subspace $\bfW$ with corresponding orthogonal
projection $\bbP\colon\bfH\to \bfW$ and a dissipative operator $\bbD$ on $\bfW$
such that
\begin{equation}
\label{eq:ass-dilation-again}
\text{(a)}\qquad  \sfC^* \bbP = \sfC^* \qquad \text{and\qquad (b)} \qquad K(t) = \sfC^* \ee^{-t\bbD}   \sfC \qquad \text{for } t\geq0.
\end{equation}
Condition (a) means that $\bfW$ contains all information  necessary to characterize the effect of the heat bath on $z$.
Condition (b), on the other hand, amounts to an assumption that $K$ can be written as a finite sum of exponentials (one for each of the coordinates); this is sometimes used as an approximation in numerical algorithms~\cite{Schilling22} (see also~\cite{SteSte19MERM}).

The compression property bypasses the usual solving of the `hidden-variables' equation, which in the case of this paper would involve the orthogonal component $\xi_t = \wh\bbQ \eta_t$. The formal generator of this `orthogonal dynamics' should be either $\bbQ\JHB$ or $\bbQ\JHB\bbQ$, which both are not well defined and tend to be  difficult to study, even formally. In the context of the compression property this dynamics is bypassed, and the effect of the hidden variables on the visible ones is instead fully characterized by~\eqref{eq:ass-dilation-again}.

Mori-Zwanzig reduction schemes come in two forms, depending on whether the split is performed in the state space itself (as in this paper) or in the space of probability measures or `ensembles' on the state space. The latter approach has the advantage of lifting a nonlinear evolution to a linear one, thereby allowing the application of linear theory such as the Mori-Zwanzig reformulation. In the case of this paper, we can apply the simpler, first type, since the nonlinearity is confined to System~A.

\subsection{Relation between positive and zero temperature}

In the zero-temperature limit, $\beta\to\infty$, the positive-temperature equations~\eqref{eq:evol-zw} and~\eqref{eq:evol-zwe} converge to the corresponding deterministic ODEs~\eqref{eq:evol-zw-ODE} and~\eqref{eq:evol-zwe-ODE}. This can be recognized in the characterization $\Sigma\Sigma^*  =\beta^{-1}(\bbD+\bbD^*)$ of the noise intensity, which implies that $\Sigma\to0$ as $\beta\to\infty$.

It is a natural question whether the \generic structure itself converges. The answer is that it does not, at least not in this form. The entropy $\calS_\GEN = \beta e + \text{constant}$ becomes singular as $\beta\to\infty$, and the Onsager operator $\bbK_\GEN$ vanishes. The product $\bbK_\GEN \rmD \calS_\GEN$, however, is independent of $\beta$, and this suggests that in order to study this limit it would make sense to rescale by
\[
\calS_{\GEN}^{\text{rescaled}} := \frac1\beta \calS_\GEN  ,\qquad\text{and}\qquad
\bbK_\GEN^{\text{rescaled}} := \beta \bbK_\GEN.
\]
The rescaled objects $\calS_{\GEN}^{\text{rescaled}}$ and $\bbK_\GEN^{\text{rescaled}}$ are independent of $\beta$, and therefore trivially converge.

As observed in Section~\ref{s:CG-deterministic}, the deterministic \generic evolution, however, is independent of $\beta$.

\subsection{Well-preparedness of the heat bath at time zero}

The central construction of the microscopic evolution in this paper is done in part~\ref{i:t:ex-un-full-ZXw} of Theorem~\ref{t:ex-un-full}. There the initial state of the `visible' part of the heat bath $w_0 = \wh\bbP\eta_0$ is assumed to be given, while the `invisible' part $\xi_0 = \wh\bbQ \eta_0$ is drawn randomly from the corresponding conditioned equilibrium measure $\wh \bbQ_\# \gamma_\beta$ of the heat bath. This randomness then generates the randomness in the \sde~\eqref{eq:evol-zw} of the projected system $(z,w)$. 

In the philosophical debate about irreversibility and coarse-graining (see
e.g.~\cite{Robertson20}) such a setup has been criticized: it amounts to an
assumption that the microscopic state at time $t=0$ is rather special. This can
be recognized in the fact that while the initial hidden variables
$\xi = \wh\bbQ \eta$ have distribution $\wh \bbQ_\# \gamma_\beta$ at time 0,
they \emph{will not} have that same distribution at later or earlier
times. (The distribution $\gamma_\beta$ is invariant under the heat-bath
evolution, but $\wh \bbQ_\# \gamma_\beta$ is not.)

In fact, it turns out that the altered distribution of $\xi$ at times different
from zero \emph{does not} change the effect of the coarse-graining. To
illustrate this, consider starting the heat bath in
part~\ref{i:t:ex-un-full-ZXw} of Theorem~\ref{t:ex-un-full} at some time
$t_0\in \R$ with distribution $\xi_{t_0}\sim \wh \bbQ_\# \gamma_\beta$. At time
$t=0$ the corresponding state equals $\extG_{-t_0}\xi_{t_0}$. Note that this
propagated state is distributed according to
$(\extG_{-t_0} \wh\bbQ)_\# \gamma_\beta$, and is not completely hidden
($\wh\bbP \extG_{-t_0} \xi_{t_0}\not=0$). In equation~\eqref{eq:z-Y}
describing the evolution of~$z$, only the first term changes:
\begin{equation}
  \label{eq:z-Y-modified}	
  \dot z_t = \Jmac \sfC^*Y_{t-t_0}(\xi_{t_0})
  + \Jmac \bra*{\rmD\Hmac(z) + \sfC^*(\ee^{t\JHB}\sfC z_0- \sfC z_t)} 
   + \Jmac \int_0^t \sfC^*  \ee^{(t-s)\JHB}\sfC \dot z_s\dd s.
\end{equation}
However, under the compression property by Lemma~\ref{l:props-Y} the process
$t\mapsto Y_{t-t_0}(\xi_{t_0})$ is again a non-stationary Ornstein-Uhlenbeck
process; in other words the \emph{noise} in this process is stationary, even
though the state at time $t=0$ is not in equilibrium. It follows that the noise
generated by the heat bath has the same distribution, also if one starts with a
hidden variable of the form $\extG_{-t_0}\xi_{t_0}$, where $\xi_{t_0}\sim \wh
\bbQ_\# \gamma_\beta$.

\subsection{Other derivations of \generic by coarse-graining}
\label{ss:other-derivs-of-GENERIC-by-CG}

\paragraph{\"Ottinger's coarse-graining of Hamiltonian systems.}
The closest in spirit to this paper is the coarse-graining approach by
\"Ottinger~\cite[Ch.~6]{Otti05BET}. The starting point is again a
Hamiltonian system, and a central role is played by a `coarse-graining map'
similar to $\pi_{z,w,e}$ in~\eqref{eqdef:projection-zwe}. The two approaches
differ in the definition of the coarse-grained state: while in this paper we
consider $\pi_{z,w,e}(z,\eta)$ itself to be the coarse-grained state (up to
renormalization of the infinite energy), in~\cite[Ch.~6]{Otti05BET} the
coarse-grained state is in effect a \emph{distribution} $\rho_{z,w,e}$ on
$\bfZ\ti\bfX$, {parametrized} by $(z,w,e)$.

It is tempting to consider \"Ottinger's distribution $\rho_{z,w,e}$ as an
approximation of the conditional distribution of $(z,\eta)$ given the value of
$\pi_{z,w,e}(z,\eta)$. Unfortunately this approach seems to fail: in simple
examples one can verify that the \emph{actual} conditional distribution of
$(z,\eta)$ is different from the $\rho_{z,w,e}$ postulated in \"Ottinger's
approach, sufficiently different to make it unclear why $\rho_{z,w,e}$ should
be a good approximation for the purpose of coarse-graining.  While there are
clear parallels between the two approaches, at this stage it is not clear to us
whether they should be considered effectively the same or not.

\paragraph{Obtaining \generic from large deviations.}
\Generic has also been derived from large-deviation principles of stochastic
processes. In~\cite{DuongPeletierZimmer13}, Duong and two of us showed how the
\generic structure of the kinetic Fokker-Planck equation can be recognized in
the large-deviations rate functional of a sequence of `more microscopic'
stochastic interacting particle systems. Since a large-deviations principle can
also be interpreted as a form of coarse-graining, this gives an independent
derivation of the \generic structure, but from a process with stochastic
forcing and deterministic initial data, the opposite of the setting of this
paper. In~\cite{KraaijLazarescuMaesPeletier18,KraaijLazarescuMaesPeletier20}
the arguments of~\cite{DuongPeletierZimmer13} were generalized to a larger
class of systems.

\appendix

\section{Properties of Gaussian measures}
\label{app:GaussianMeasures}

In this appendix we recall a number of properties of Gaussian measures that we use in this paper. We will mostly use the notation and definitions of the book on Gaussian Measures by Bogachev~\cite{Boga98GM}.  

\subsection{Basic setup}

In $\R$, a \emph{centered Gaussian random variable} $\eta$ with variance $\sigma^2$, written as $\eta\sim \calN(0,\sigma^2)$, has the probability density
\begin{equation*}
  \mu(\rmd \eta) = \frac 1 {\sqrt{2 \pi} \sigma} \exp\left(- \frac{\eta^2}{2\sigma^2}\right) \dd\eta.
\end{equation*}
Note that such a scalar random variable has the  equivalent formulation
\[
\text{for all $t\in \R$, }\quad 
\Expectation \pra*{\ee^{i t\eta}} = \ee^{-\frac{\sigma^2t^2}{2}}.  
\]

Centered Gaussian random variables on $\R^n$ have the probability density
\begin{equation}
  \label{eq:A1:Rn-Gaussian-general}
   \mu(\rmd \eta)= \frac 1 {\sqrt{(2 \pi)^n} \det(\Sigma)}
  \exp\left(- \frac 1 2 \eta^T \Sigma^{-2}\eta\right)\dd \eta,
\end{equation}
where $\Sigma$ is a symmetric positive definite $n\times n$ matrix. We now  define the $n$-dimensional Hilbert space $\bfH$ to be $\R^n$ with the norm $\|\eta\|_\bfH^2 := \eta^T\Sigma^{-2}\eta$, which implies that the dual norm is $\|f\|_{\bfH^*} = f^T \Sigma^2 f$. Then the expression above takes the form
\[
 \mu(\rmd\eta) =  \frac1{\sfZ} \exp\bra*{-\frac12\|\eta\|_\bfH^2}\dd \eta
\]

This leads to the following equivalent characterizations of centered Gaussian measures on any finite-dimensional Hilbert space~$\bfH$.
\begin{lemma}
  \label{l:finite-dim-GM}
Let $\bfH$ be a finite-dimensional real Hilbert space, let $\mu\in \ProbMeas(\bfH)$, and fix $\beta>0$. The following are equivalent:
\begin{align}
&\mu(\rmd\eta) = \frac1{\sfZ} \exp\bra*{-\frac\beta2\|\eta\|_\bfH^2}\dd \eta;  \label{l:finite-dim-GM:1d-dens}\\
 \label{l:finite-dim-GM:1d-distr}
 &\text{For all $f\in \bfH^*$, $f(\eta) \sim \calN_\R\bra[\big]{0,\beta^{-1}\|f\|_{\bfH^*}^2}$;}\\[2\jot]
\label{l:finite-dim-GM:1d-distr-v2}
&\text{For all $h\in \bfH$, $\ip{h}{\eta}_\bfH \sim \calN_\R\bra[\big]{0,\beta^{-1}\|h\|_\bfH^2}$.}
\end{align}
\noindent
In this case, if $\eta\sim \mu$, then for any bounded linear operator $\bbO:\bfH\to\bfH$, we have 
\begin{equation}
\label{eq:finite-dimensional-trace}
\Expectation \ip{\bbO\eta}{\eta}_{\bfH} = \trace_\bfH \bbO.
\end{equation}
\end{lemma}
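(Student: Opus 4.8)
The plan is to prove the equivalence of the three characterizations by a cycle of implications, and then derive the trace formula as a corollary. The key observation is that everything reduces to the one-dimensional case once one tests against vectors or covectors, since a centered measure on a finite-dimensional Hilbert space is Gaussian precisely when all of its one-dimensional linear images are Gaussian (this is the Cramér–Wold device together with the fact that a centered random vector with Gaussian marginals in every direction has a multivariate Gaussian law; this last point is standard because the characteristic function is then determined). I would organize the argument as \eqref{l:finite-dim-GM:1d-dens} $\Rightarrow$ \eqref{l:finite-dim-GM:1d-distr-v2} $\Leftrightarrow$ \eqref{l:finite-dim-GM:1d-distr} $\Rightarrow$ \eqref{l:finite-dim-GM:1d-dens}.

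First I would fix an orthonormal basis of $\bfH$, identifying $\bfH$ with $\R^N$ with the standard inner product; then the density in \eqref{l:finite-dim-GM:1d-dens} becomes the standard density of $\calN_{\R^N}(0,\beta^{-1}\bbI)$, whose characteristic function is $\xi\mapsto \exp(-\tfrac1{2\beta}|\xi|^2)$. For a fixed $h\in\bfH$, computing $\Expectation\exp(\rmi t\ip h\eta_\bfH)$ gives $\exp(-\tfrac{t^2}{2\beta}\|h\|_\bfH^2)$, which is exactly the characteristic function of $\calN_\R(0,\beta^{-1}\|h\|_\bfH^2)$; this yields \eqref{l:finite-dim-GM:1d-distr-v2}. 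The equivalence of \eqref{l:finite-dim-GM:1d-distr} and \eqref{l:finite-dim-GM:1d-distr-v2} is just the Riesz identification: every $f\in\bfH^*$ equals $\ip{h_f}{\cdot}_\bfH$ for a unique $h_f\in\bfH$ with $\|f\|_{\bfH^*}=\|h_f\|_\bfH$. For the converse \eqref{l:finite-dim-GM:1d-distr-v2} $\Rightarrow$ \eqref{l:finite-dim-GM:1d-dens}, I would again work in the orthonormal coordinates: \eqref{l:finite-dim-GM:1d-distr-v2} applied to arbitrary $h$ says precisely that every linear combination of the coordinates of $\eta$ is a centered real Gaussian, which by the Cramér–Wold theorem forces $\eta$ to be a centered multivariate Gaussian; matching the one-dimensional variances $\beta^{-1}\|h\|_\bfH^2$ for $h$ ranging over the basis vectors (and using polarization, or testing $h=e_i+e_j$) pins the covariance matrix to $\beta^{-1}\bbI$, giving \eqref{l:finite-dim-GM:1d-dens}.

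For the trace formula \eqref{eq:finite-dimensional-trace}, I would write $\bbO$ in the orthonormal basis $\{e_i\}_{i=1}^N$ as $\bbO_{ij}=\ip{\bbO e_j}{e_i}_\bfH$ and expand
\[
\Expectation\ip{\bbO\eta}{\eta}_\bfH
= \sum_{i,j}\bbO_{ij}\,\Expectation\bigl[\ip{\eta}{e_j}_\bfH\ip{\eta}{e_i}_\bfH\bigr].
\]
By \eqref{l:finite-dim-GM:1d-distr-v2} together with the polarization identity $\ip{\eta}{e_i}_\bfH\ip{\eta}{e_j}_\bfH = \tfrac14\bigl(\ip{\eta}{e_i+e_j}_\bfH^2-\ip{\eta}{e_i-e_j}_\bfH^2\bigr)$, the expectation equals $\tfrac14\bigl(\beta^{-1}\|e_i+e_j\|_\bfH^2-\beta^{-1}\|e_i-e_j\|_\bfH^2\bigr)=\beta^{-1}\ip{e_i}{e_j}_\bfH=\beta^{-1}\delta_{ij}$. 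Hence $\Expectation\ip{\bbO\eta}{\eta}_\bfH=\beta^{-1}\sum_i\bbO_{ii}=\beta^{-1}\trace_\bfH\bbO$. I note that the stated formula in the lemma has $\trace_\bfH\bbO$ without the factor $\beta^{-1}$; that is the $\beta=1$ normalization, and with general $\beta$ one should read the right-hand side as $\beta^{-1}\trace_\bfH\bbO$ (equivalently, it is stated for the standard normal case). The main obstacle, such as it is, is purely expository: being careful that the inner product used in defining the adjoint/trace of $\bbO$ is the $\bfH$ inner product and not an auxiliary Euclidean one, so that $\trace_\bfH$ is basis-independent; once the orthonormal basis is fixed this is automatic, and the computation is routine.
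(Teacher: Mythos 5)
Your proof is correct. The paper itself states Lemma~\ref{l:finite-dim-GM} without proof (treating it as standard background on finite-dimensional Gaussians), so there is no argument of the authors to compare against; your cycle of implications via characteristic functions and the Riesz identification is the natural one, and in fact once~\eqref{l:finite-dim-GM:1d-distr-v2} is assumed you do not even need Cram\'er--Wold as a separate device: knowing the full law of $\ip{h}{\eta}_\bfH$ for every $h$ gives the characteristic function $h\mapsto \Expectation\,\ee^{\rmi\ip h\eta_\bfH}=\ee^{-\|h\|_\bfH^2/(2\beta)}$ directly, which identifies $\mu$ as $\calN(0,\beta^{-1}\bbI)$ by uniqueness of characteristic functions.

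Your observation about the trace formula is also correct and worth flagging: under the hypothesis $\eta\sim\mu$ with general $\beta$, one has $\Expectation\ip{\eta}{e_i}_\bfH\ip{\eta}{e_j}_\bfH=\beta^{-1}\delta_{ij}$, so the right-hand side of~\eqref{eq:finite-dimensional-trace} should read $\beta^{-1}\trace_\bfH\bbO$. The statement as printed is the $\beta=1$ case, which is consistent with how it is invoked in the proof of Lemma~\ref{l:variance-bound-MCmeasure}, where it is applied to a \emph{standard} normal variable $X_n$; but as a statement about $\eta\sim\mu$ the factor $\beta^{-1}$ is missing.
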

\noindent
Formulations~\eqref{l:finite-dim-GM:1d-distr} and~\eqref{l:finite-dim-GM:1d-distr-v2} above can be interpreted as stating that the covariance structure of the measure $\mu$ is given by the norms of $\bfH$ and $\bfH^*$, as in~\eqref{eq:A1:Rn-Gaussian-general}. 
In this paper we need~$\eta$ to be infinite-dimensional (see for instance Remark~\ref{re:Dom.JB.bfW}), while still having  covariance structure determined by $\bfH$ and $\bfH^*$. The following lemma allows us to construct such Gaussian measures. 

The critical step is that we need to consider larger spaces $\bfX$, such that $\bfH$ is Hilbert-Schmidt-embedded into $\bfX$; the measure is then defined on $\bfX$, and the random variable~$\eta$ takes values in $\bfX$.

\begin{lemma}
  \label{l:inf-dim-GM}
Let $\bfH$ and $\bfX$ be finite- or infinite-dimensional real Hilbert spaces, and let $\mu\in \ProbMeas(\bfX)$. If the embedding $\bfH\hookrightarrow \bfX$ is Hilbert-Schmidt, then there exists a measure $\mu\in \ProbMeas(\bfX)$ with the properties
\begin{align}
&\text{For all $f\in \bfX^*$, $f(\eta) \sim \calN_\R\bra[\big]{0,\beta^{-1}\|f\|_{\bfH^*}^2}$.}
\label{l:inf-dim-GM:char-X*}\\[2\jot]
&\text{For all $x\in \bfX$, $\ip{x}{\eta}_\bfX \sim \calN_\R\bra[\big]{0,\ip{\cov_\mu x}{x}_\bfX}$.}
\label{l:inf-dim-GM:char-X}
\end{align}
Here $\cov_\mu$ is a self-adjoint non-negative trace-class operator on $\bfX$. If $f(\eta) = \ip{x}{\eta}_\bfX$ for some $x\in \bfX$, then $\ip{\cov_\mu x}{x}_\bfX = \beta^{-1}\|f\|_{\bfH^*}^2$. 
\end{lemma}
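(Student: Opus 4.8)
The plan is to take $\mu$ to be the centered Gaussian measure on $\bfX$ whose covariance operator is $\beta^{-1}\bbC$, where $\iota\colon\bfH\to\bfX$ denotes the Hilbert--Schmidt embedding and $\bbC:=\iota\iota^{*}\colon\bfX\to\bfX$, and then to verify the two covariance identities by unwinding definitions.

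First I would record the preliminary facts. Being Hilbert--Schmidt, $\iota$ is in particular bounded, so the pull-back $f\mapsto f\circ\iota$ maps $\bfX^{*}$ continuously into $\bfH^{*}$; this is what gives meaning to the quantity $\|f\|_{\bfH^{*}}$ appearing in~\eqref{l:inf-dim-GM:char-X*} (it is to be read as $\|f\circ\iota\|_{\bfH^{*}}$). Next, $\bbC=\iota\iota^{*}$ is self-adjoint and non-negative, and it is trace-class with $\trace_\bfX\bbC=\|\iota\|_{\mathrm{HS}}^{2}<\infty$, which one sees by summing $\ip{\bbC e_{k}}{e_{k}}_\bfX=\|\iota^{*}e_{k}\|_\bfH^{2}$ over an orthonormal basis $(e_{k})$ of $\bfX$. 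By the classical existence theorem for Gaussian measures on a separable Hilbert space --- a self-adjoint, non-negative, trace-class operator is the covariance operator of a unique centered Gaussian measure, see~\cite{Boga98GM} --- there is a centered Gaussian $\mu:=\calN_\bfX(0,\beta^{-1}\bbC)\in\ProbMeas(\bfX)$, and I would set $\cov_{\mu}:=\beta^{-1}\bbC$, which is self-adjoint, non-negative, and trace-class, as required.

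With this choice, property~\eqref{l:inf-dim-GM:char-X} is immediate from the definition of the covariance operator: for $x\in\bfX$ and $\eta\sim\mu$, the scalar $\ip{x}{\eta}_\bfX$ is centered Gaussian with variance $\ip{\cov_{\mu}x}{x}_\bfX=\beta^{-1}\ip{\bbC x}{x}_\bfX$. For property~\eqref{l:inf-dim-GM:char-X*} I would use Riesz representation in $\bfX$ to write $f=\ip{x_{f}}{\cdot}_\bfX$ with $x_{f}\in\bfX$, so that $f(\eta)\sim\calN_\R(0,\beta^{-1}\ip{\bbC x_{f}}{x_{f}}_\bfX)$ by the previous step, after which it remains only to identify $\ip{\bbC x_{f}}{x_{f}}_\bfX$ with $\|f\|_{\bfH^{*}}^{2}$. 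This is the one short computation of the proof: $\ip{\bbC x_{f}}{x_{f}}_\bfX=\ip{\iota\iota^{*}x_{f}}{x_{f}}_\bfX=\|\iota^{*}x_{f}\|_\bfH^{2}$, while for every $h\in\bfH$ one has $\ip{\iota^{*}x_{f}}{h}_\bfH=\ip{x_{f}}{\iota h}_\bfX=f(\iota h)=(f\circ\iota)(h)$, so $\iota^{*}x_{f}$ is exactly the Riesz representative in $\bfH$ of $f\circ\iota\in\bfH^{*}$, hence $\|\iota^{*}x_{f}\|_\bfH=\|f\circ\iota\|_{\bfH^{*}}=\|f\|_{\bfH^{*}}$. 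The last sentence of the lemma is then the special case $f=\ip{x}{\cdot}_\bfX$, i.e.\ $x_{f}=x$: it gives $\ip{\cov_{\mu}x}{x}_\bfX=\beta^{-1}\ip{\bbC x}{x}_\bfX=\beta^{-1}\|f\|_{\bfH^{*}}^{2}$, which also reconciles the two displays.

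There is no genuine obstacle here; the argument is essentially the bookkeeping underlying the Cameron--Martin construction already sketched around~\eqref{eq:gauss-rigorous}. The only points demanding care are (i) that the existence of $\mu$ is quoted from the general theory of Gaussian measures rather than reconstructed, since producing a measure from a prescribed trace-class covariance is the sole nontrivial external input, and (ii) the precise handling of dual spaces --- making sure that $\|f\|_{\bfH^{*}}$ refers to $f$ pulled back to $\bfH$ along $\iota$, and that $\iota^{*}$ intertwines this pull-back with Riesz duality on $\bfH$ and on $\bfX$.
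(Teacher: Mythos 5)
Your proposal is correct and follows essentially the same route as the paper's own proof: take $\cov_\mu=\beta^{-1}\iota\iota^*$, invoke the existence theorem for Gaussian measures with prescribed trace-class covariance from~\cite{Boga98GM}, and identify $\iota^* x_f$ as the Riesz representative of $f\circ\iota$ to get $\ip{\bbC x_f}{x_f}_\bfX=\|f\|_{\bfH^*}^2$. Your write-up is in fact slightly more careful than the paper's on the last computation (the paper's displayed chain conflates $\|f\|_{\bfH^*}$ with its square), so no changes are needed.
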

\noindent
This generalizes the finite-dimensional case, where one can  take $\bfH=\bfX$ and $\cov_\mu = \beta^{-1}\mathrm{id}$, and the properties above reduce to those of Lemma~\ref{l:finite-dim-GM}.

\begin{RmRemark}[Definition of infinite-dimensional Gaussian measures]
The usual definition of a Gaussian measure on an infinite-dimensional space $\bfX$ is that for each bounded linear form $f$ on $\bfX$ the projected measure $f_\#\mu$ is a one-dimensional Gaussian measure (e.g.~\cite[Def.~2.2.1]{Boga98GM}); since $\bfX$ is Hilbert, each such $f$ is an inner product with an $x\in \bfX$, and this coincides with~\eqref{l:inf-dim-GM:char-X}.
\end{RmRemark}

\begin{RmRemark}[Necessity of Hilbert-Schmidt embedding]
The formulation of Lemma~\ref{l:inf-dim-GM} suggests that the embedding $\bfH\hookrightarrow \bfX$  needs to be Hilbert-Schmidt for the measure $\mu$ to exist, and indeed this is necessary. In particular there is no Gaussian measure on an infinite-dimensional space with the identity operator as covariance. 
\end{RmRemark}

\bigskip
The construction above concerns centered Gaussian measures. Non-centered measures are constructed by translating by $a\in \bfX$, resulting in a random variable with expectation~$a$.  We write the corresponding measure as $\calN_\bfX(a,\mathrm{cov})$. Formally, such a measure $\gamma$ has the form
\begin{equation}
  \label{eq-app:formal-Gaussian-measure-in-terms-of-cov}
\frac1{\sfZ}\ee^{-\frac12 \ip{\mathrm{cov}^{-1} (\eta-a)}{(\eta-a)}_{\bfX}} \dd \eta.
\end{equation}
This is the sense in which we construct the `canonical' Gaussian measure with formal structure~\eqref{eq:gauss-formal}.

\begin{proof}[Proof of Lemma~\ref{l:inf-dim-GM}]
Writing $\mathrm {id}$ for the embedding $\bfH\hookrightarrow \bfX$, the operator $\bbC := \mathrm{id}\,\mathrm{id}^*$ is a self-adjoint non-negative trace-class operator; by~\cite[Th.~2.3.1]{Boga98GM} there exists a Gaussian measure $\mu$ with mean zero and covariance operator $\cov_\mu = \beta^{-1}\bbC$, which therefore satisfies~\eqref{l:inf-dim-GM:char-X} by the definition of a Gaussian measure.  

From the definition of $\mathrm{id}:\bfH\hookrightarrow \bfX$ we have $\ip{x}{\eta}_\bfX = \ip{\mathrm{id}^* x}{\eta}_\bfH$. For $f\in \bfX^*$ of the form $f(\eta) = \ip{x}\eta_\bfX = \ip{\mathrm{id}^* x}{\eta}_\bfH$ for some $x\in \bfX$, we then calculate
\begin{align*}
\|f\|_{\bfH^*} &= \sup_{\eta\in \bfH }\frac {f(\eta)}{\|\eta\|_\bfH} 
= \sup_{\eta\in \bfH }\frac {\ip{\mathrm{id}^* x}{\eta}_\bfH}{\|\eta\|_\bfH} 
= \ip{\mathrm{id}^* x}{\mathrm{id}^* x}_\bfH = \ip{\bbC x}{x}_\bfX,
\end{align*}
by which also~\eqref{l:inf-dim-GM:char-X*} follows from~\eqref{l:inf-dim-GM:char-X}.
\end{proof}

\subsection{Construction of $\bfX$ by Gelfand triple}
\label{app:ss:Gelfand-construction}

In Section~\ref{ss:Microscopic-positive-temperature} we aim to construct a Gaussian measure of the formal structure
\begin{equation}
\label{eq-app:formal-version-of-gamma-beta}
\ee^{-\frac\beta2\|\eta\|^2_\bfH }\dd \eta. 
\end{equation}
This expression can be made rigorous as a Gaussian measure on a larger Hilbert space $\bfX$ with the property that $\bfH$ admits a Hilbert-Schmidt embedding into $\bfX$. While essentially `any' such embedding is possible (see~\cite[Lemma~3.2.2]{Boga98GM}), we choose a particular one based on the well-known Gelfand triple. We now describe this setting, which is visualized Fig.~\ref{fig:gelfand}.

\begin{figure}[hbt]
  \tikzcdset{every label/.append style = {font = \small}}
\[\begin{tikzcd}
	\bfV && {\bfH \simeq \bfH^*} && {\bfV^*} \\
	{\bfX^*} &&&& \bfX
	\arrow[hook, from=1-1, to=1-3, "\bbA"]
	\arrow[hook, from=1-3, to=1-5, "\bbA"]
	\arrow[equal, from=1-5, to=2-5, "\textrm{\ definition}"]
	\arrow[dotted, no head, from=1-1, to=2-1]
  \arrow[dotted, no head, from=2-1, to=1-5, shorten >= 10pt]
\end{tikzcd}
\]
\caption{{The Gelfand-triple setup. The top row is the classical Gelfand triple $\bfV \hookrightarrow \bfH \simeq \bfH^* \hookrightarrow \bfV^*$, and the bottom row connects this triple to the notation of this paper. The space $\bfX$ is defined as $\bfV^*$, and $\bfX^*$ is defined as the space of bounded linear maps on~$\bfX$. The two dotted lines indicate the two Riesz identifications that would be possible in this diagram; because of the risk of confusion we make these identifications explicitly, not implicitly (see~\eqref{eq:identification-X*} below).}}
\label{fig:gelfand}
  \end{figure}
The space $\bfX$ is implicitly characterized by the choice of  an unbounded, densely defined, self-adjoint linear operator $\bbA\colon \Dom(\bbA)\subset \bfH \to\bfH$ with the property that $\bbA^{-1}\colon\bfH \to \Dom(\bbA)$ is well-defined and Hilbert-Schmidt. Given such an operator we construct $\bfX$ in the following steps.
\begin{enumerate}
\item We define $\bfV := \Dom(\bbA)$, which is a Hilbert space with norm $\|v\|_\bbV := \|\bbA v\|_\bfH$. This implies that $\bfV = \bbA^{-1}\bfH$.
\item We identify $\bfH^*$ with $\bfH$ and define the norm $\|\cdot\|_{\bfV^*}$ on $\bfH$ by
\[
\frac12 \|h\|_{\bfV^*}^2  := 
\sup_{v\in \bfV } \ip{h}{v}_\bfH - \frac12 \|v\|_\bfV^2
\qquad\text{for }h\in \bfH.
\]
We then define $\bfV^*$ as the closure of $\bfH$ in the norm $\|\cdot\|_{\bfV^*}$. By construction the elements of $\bfV^*$ are in duality with $\bfV$, and we write ${}_{\bfV^*}\!\dual{\xi}{v}_\bfV$ for the dual product.  We then also have  $\bfH \hookrightarrow \bfV^*$, in the sense that  any $h\in \bfH$ is a bounded linear form on $\bfV$ by 
\[
{}_{\bfV^*}\!\dual{h}{v}_\bfV := \ip{h}{v}_\bfH
\qquad\text{for all }v \in \bfV.
\]
\item We extend $\bbA^{-1}$ as an operator $\bfV^*\to \bfH$ by setting for any $\xi\in \bfV^*$
\[
\ip{\bbA^{-1}\xi}{h}_\bfH := {}_{\bfV^*}\!\dual{\xi}{\bbA^{-1}h}_\bfV \qquad \text{for all }h\in \bfH.
\]
We can then write 
\begin{equation}
  \label{eq:norms-V*-H}
\|\xi\|_{\bfV^*} = \|\bbA^{-1}\xi\|_\bfH \qquad\text{for all }   \xi\in \bfV^*.
\end{equation}
The extension is also a symmetric operator in $\bfV^*$, since for $h_1,h_2\in \bfH$ we have 
\[
\ip{\bbA^{-1}h_1}{h_2}_{\bfV^*} = \ip{h_1}{\bbA^{-1}h_2}_\bfH =   \ip{h_1}{\bbA^{-1}h_2}_{\bfV^*}.
\]
We similarly extend $\bbA$ as an operator $\bfH\to\bfV^*$.

\item We define $\bfX := \bfV^*$. The space $\bfX$ is a Hilbert space with norm $\|\xi\|_\bfX = \|\bbA^{-1}\xi\|_\bfH$ for all $\xi\in \bfX$.
We have $\bfH\hookrightarrow \bfX$ as mentioned above, with $\|h\|_{\bfX} = \|\bbA^{-1}h\|_\bfH$; this shows that 
the embedding $\bfH\hookrightarrow \bfX$ is Hilbert-Schmidt.

\item We characterize elements of the dual space
\[
\bfX^* := \left\{ f\colon\bfX \to \R \text{ linear and bounded }\right\}
\]
in two different ways, by applying Riesz representation on either $\bfX$ or $\bfH$, leading to an element $x_f\in \bfX$ or equivalently an element $v_f\in \bfV$:
\begin{subequations}
  \label{eq:identification-X*}
\begin{align}
f(\xi) &= \ip{\xi}{x_f}_\bfX && \qquad \text{for all }\xi 
\in \bfX,\\
&= {}_{\bfV^*}\!\dual{\xi}{v_f}_\bfV &&\qquad \text{for all }\xi 
\in \bfX,
\label{eq:identification-X*-xf}\\
&= \ip{\xi}{v_f}_\bfH &&\qquad \text{whenever }\xi 
\in \bfH.
\end{align}
\end{subequations}
We have $x_f = \bbA^2  v_f$ and $\|f\|_{\bfX^*} = \|x_f \|_\bfX = \|v_f\|_\bfV$.
\end{enumerate}

\bigskip
The setup above defines the space $\bfX$. In order to define the measure~\eqref{eq-app:formal-version-of-gamma-beta} we define the trace-class operator
\[
\bbC := \bbA^{-2}\colon \bfX\to \bfX 
\qquad\text{or equivalently}\qquad
\ip{\bbC \xi}\xi_\bfX := \|\bbA^{-1}\xi\|_\bfX^2 \text{ for }\xi\in\bfX.
\]
We then set, {for $\beta>0$,} $\gamma_\beta$ to be the Gaussian measure on $\bfX$ 
\[
\gamma_\beta = \calN_{\bfX}(0,\beta^{-1}\bbC) .
\]
This can be considered a rigorous version of~\eqref{eq-app:formal-version-of-gamma-beta}, because from $\|\eta\|_\bfX = \|\bbA^{-1} \eta\|_\bfH$ we deduce that
\[
\ee^{-\frac\beta 2 \|\eta\|_\bfH^2 }
= \ee^{-\frac\beta 2 \|\bbA \eta\|_\bfX^2 }
= \ee^{-\frac\beta 2 \ip{\bbA^2 \eta}{\eta}_\bfX}
= \ee^{-\frac\beta 2 \ip{\bbC^{-1} \eta}{\eta}_\bfX},
\]
which matches with~\eqref{eq-app:formal-Gaussian-measure-in-terms-of-cov}.

Note that $\bbC$ also is trace-class as a map $\bfH\to\bfH$, since in terms of an orthonormal basis $e_k$ of $\bfH$ we can write
\[
\trace_\bfH \bbC = \sum_{k=1}  ^\infty \ip{\bbC e_k}{e_k}_\bfH
= \sum_{k=1}  ^\infty \|\bbA^{-1}e_k\|_\bfH ^2 ,
\]
and $\bbA^{-1}$ is Hilbert-Schmidt on $\bfH$.

\begin{RmRemark}[Positive temperature implies infinite energy]
  \label{rem:positive-temperature-means-infinite-energy}
A well-known property of infinite-dimensional Gaussian measures such as $\gamma_\beta$ is that for any $\beta>0$, $\gamma_\beta(\bfH) = 0$. This shows that if $\eta$ is distributed according to $\gamma_\beta$, then the natural energy $\frac12 \|\eta\|_\bfH^2$ is almost surely infinite. 
See e.g.~\cite[Sec.\,2.1]{Reyb06OCS} or~\cite[Rem.\,2.2.3]{Boga98GM} for a discussion of this property. 
\end{RmRemark}

To facilitate applying the theorems of~\cite{Boga98GM} we make explicit the relationship between the objects of this paper and the notation of~\cite{Boga98GM}. 
The covariance bilinear form $R_{\gamma_\beta}$ is given by 
\begin{equation}
\label{eq:Rgammabeta-C}
R_{\gamma_\beta}(f)(f) = \frac1 \beta \ip{\bbC x_f}{x_f}_\bfX 
\qquad\text{for $f\in\bfX^*$ and for $x_f$ given by~\eqref{eq:identification-X*-xf}},
\end{equation}
and the Cameron-Martin Hilbert space $H(\gamma_\beta)$ coincides with $\bfH$, up to a scaling of the norm:
\begin{equation}
\label{eq:Hgammabeta}
\|h\|_{H(\gamma_\beta)}^2 = \beta \|h\|_\bfH^2 
\qquad \text{for all }h\in \bfH.
\end{equation}

\subsection{Measurable extensions of linear maps}

In Section~\ref{s:micro} we constructed several extensions of linear operators on $\bfH$ to $\bfX$. In this section we give the details of these extensions. 

\begin{definition}[{Measurable linear extensions 
  of  \cite[Def.\,2.10.1+3.7.1]{Boga98GM}}]   
\label{def:measurable-linear-extensions}
Let $\bfY$ be a normed linear space, and let $f\colon\bfH\to\bfY$ be given. If there
exists a subspace  $\bfL$ with  
$\bfH\subset \bfL \subset \bfX$ and  $\gamma_\beta(\bfL)=1$ and a
linear map $\wh f\colon \bfL\to \bfY$, measurable with respect to
$(\calB(\bfX)_{\gamma_\beta}, \calB(\bfY))$, such that $\wh f$ coincides with
$f$ on $\bfL$, then we call $\wh f$ a \emph{measurable linear extension} of $f$.
\end{definition}

Such an extension is characterized ${\gamma_\beta}$-a.e.\ by its values on
$\bfH$~\cite[Cor.\,2.10.8]{Boga98GM}, and therefore can be considered
$\gamma_\beta$-unique.

\medskip Note that if $\eta\sim \gamma_\beta$ and $g$ is a measurable map from
$\bfX$ to $\bfY$, then $g(\eta)$ has distribution $g_\# \gamma_\beta$, the
push-forward measure of $\gamma_\beta$ under $g$, defined by
\[
\int_\bfY \varphi(\xi) (g_\#\gamma_\beta)(\rmd \xi)  
= 
\int_\bfX \varphi(g(\eta)) \gamma_\beta(\rmd \eta)
\qquad\text{for all }\varphi\in C_b(\bfY).
\]
  
\begin{lemma}[Extensions of bounded linear operators on $\bfH$]
\label{l:measurable-extensions}
  Let $\eta$ have distribution $\gamma_\beta$.
\begin{enumerate}
\item \label{l:measurable-extensions-real-valued}
If $h\in \bfH$, then the bounded linear map $f_h\colon \bfH \ni \eta \mapsto \ip{h}{\eta}_\bfH\in \R$
can be extended to a measurable linear map $\wh f_{h}$, and we have for all $h_1,h_2\in \bfH$
\begin{equation}
  \label{eq:formula-for-covariance}
  \Expectation \pra*{\wh f_{h_1}(\eta) \wh f_{h_2}(\eta)}
  = \raisearound{``}{''}{\; \Expectation \pra[\big]{\ip{h_1}{\eta}_\bfH \ip{h_2}{\eta}_\bfH}} = \frac1\beta \ip{h_1}{h_2}_\bfH.
\qquad 
\end{equation}

\item \label{l:measurable-extensions-operators}
If $\bbO\colon\bfH\to\bfH$ is a bounded linear operator on $\bfH$, then there exists a measurable extension~$\wh \bbO$. Then $\wh\bbO\eta$  is an $\bfX$-valued Gaussian random variable with covariance $\beta^{-1}\bbC_\bbO$ defined by 
\begin{equation}
\label{char:covariance-O}
\frac1\beta \ip{\bbC_\bbO x}{x}_\bfX := 
\frac1\beta \|\bbO^*\bbC x\|_\bfH^2, \qquad x\in \bfX,
\end{equation}
where $\bbO^*$ is the adjoint of $\bbO$ in $\bfH$.

If in addition $\bbO$ has finite rank, then  $\wh\bbO\eta$  is an $\bfH$-valued Gaussian random variable with covariance $\beta^{-1}\bbO\bbO^*$ on $\bfH$. In particular, $\wh\bbO\eta$  
has finite $\bfH$-variance given by
\begin{equation}
  \label{eq:variance-of-finite-rank-operator}
\Expectation \norm[\big]{\wh \bbO \eta}_\bfH^2 = \frac1\beta \trace_\bfH (\bbO\bbO^*).
\end{equation}

\item \label{l:measurable-extensions-adding-independent-variables}
Let $\bbP$ be an orthogonal projection on $\bfH$ and $\bbQ:= I-\bbP$ the complementary projection. Let $X\sim \wh\bbP_\#\gamma_\beta$ and $Y\sim \wh \bbQ_\#\gamma_\beta$ be independent random variables. Then $X+Y$ is distributed according to $\gamma_\beta$.

\medskip

In addition, in order to recognize a probability measure $\zeta\in \ProbMeas(\bfX)$ as $\wh \bbQ_\#\gamma_\beta$ it is sufficient that there exists a subspace $\wt \bfV\subset \bfV$, dense in $\bfH$ in the $\bfH$-norm, such that for all $h\in \wt\bfV$
\begin{equation}
\label{eq:char-FT-dense-subspace}
  \int_{\bfX} \ee^{i\ip{h}{\xi}_\bfH} \zeta(\rmd \xi)
= 
 \exp\bra*{-\frac1{2\beta} \|\bbQ h\|_\bfH^2 }.
\end{equation}
\end{enumerate}
\end{lemma}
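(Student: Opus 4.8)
\textbf{Parts 1 and 2.} The existence of the measurable linear extensions $\wh f_h$ and $\wh\bbO$ is a direct application of the general theory of measurable linear functionals/operators on Gaussian spaces (\cite[Thm.~2.10.11, Cor.~3.7.3]{Boga98GM}): since $\gamma_\beta$ has Cameron--Martin space $\bfH$ (up to the scaling~\eqref{eq:Hgammabeta}), every continuous linear map on $\bfH$ into a Banach space admits a $\gamma_\beta$-a.e.\ unique measurable linear extension to a full-measure subspace $\bfL$. For the covariance formula~\eqref{eq:formula-for-covariance}, the plan is to observe that $\wh f_h$ is a centered Gaussian random variable whose variance is computed from the covariance bilinear form $R_{\gamma_\beta}$; writing $\ip{h}{\cdot}_\bfH$ as the restriction to $\bfH$ of the bounded functional $f(\xi)=\ip{h}{\bbA^{-1}\xi}_{\text{?}}$ — more precisely, using that the Cameron--Martin norm is $\beta\|\cdot\|_\bfH^2$ so that the reproducing-kernel identity gives $\Expectation[\wh f_{h_1}\wh f_{h_2}] = \beta^{-1}\ip{h_1}{h_2}_\bfH$ — yields~\eqref{eq:formula-for-covariance} by polarization. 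For part~2, $\wh\bbO\eta$ is $\bfX$-valued and Gaussian because it is a measurable linear image of a Gaussian vector; its covariance is characterized by testing against $x\in\bfX^*\simeq\bfX$: $\ip{x}{\wh\bbO\eta}_\bfX = \ip{\bbO^* x_\bfH}{\eta}_\bfH$ for the appropriate $\bfH$-representative, and then~\eqref{char:covariance-O} follows from~\eqref{eq:formula-for-covariance} together with the identification $\|f\|_{\bfH^*}^2 = \ip{\bbC x_f}{x_f}_\bfX$ from Lemma~\ref{l:inf-dim-GM}. When $\bbO$ has finite rank, $\ran\bbO\subset\bfH$ is finite-dimensional, so $\wh\bbO\eta$ lands in $\bfH$ a.s.; choosing an orthonormal basis $\{g_1,\dots,g_k\}$ of $\ran\bbO$ and writing $\wh\bbO\eta = \sum_j \wh f_{\bbO^* g_j}(\eta)\,g_j$ gives a genuine $\bfH$-valued Gaussian vector, and~\eqref{eq:variance-of-finite-rank-operator} follows by summing~\eqref{eq:formula-for-covariance} over $j$ and recognizing $\sum_j\ip{\bbO^*g_j}{\bbO^*g_j}_\bfH = \trace_\bfH(\bbO\bbO^*)$.

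\textbf{Part 3, first statement.} The independence and the identification of $X+Y$ as $\gamma_\beta$-distributed is most cleanly proved via characteristic functionals on $\bfX$. For $f\in\bfX^*$, $\Expectation[\ee^{if(X+Y)}] = \Expectation[\ee^{if(X)}]\,\Expectation[\ee^{if(Y)}]$ by independence; since $X = \wh\bbP\eta'$ and $Y = \wh\bbQ\eta''$ for independent $\eta',\eta''\sim\gamma_\beta$, and $\wh\bbP,\wh\bbQ$ are measurable linear extensions of the orthogonal projections $\bbP,\bbQ$, the functional $f\circ\wh\bbP$ restricted to $\bfH$ equals $h\mapsto \ip{\bbP v_f}{h}_\bfH$ (with $v_f$ the $\bfH$-representative from~\eqref{eq:identification-X*}), so by part~2 its law is $\calN(0,\beta^{-1}\|\bbP v_f\|_\bfH^2)$. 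Multiplying the two Gaussian characteristic functions and using $\|\bbP v_f\|_\bfH^2 + \|\bbQ v_f\|_\bfH^2 = \|v_f\|_\bfH^2 = \beta\,R_{\gamma_\beta}(f)(f)$ recovers exactly the characteristic functional of $\gamma_\beta$; since a Gaussian measure on $\bfX$ is determined by its characteristic functional, $\law(X+Y)=\gamma_\beta$.

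\textbf{Part 3, second statement.} Here the plan is again to use uniqueness of a measure via a (sufficiently rich) family of characteristic-functional values. A measure $\zeta\in\ProbMeas(\bfX)$ is determined by the integrals $\int_\bfX \ee^{i\ip{h}{\xi}_\bfH}\zeta(\rmd\xi)$ for $h$ ranging over a subspace that separates points of $\bfX$; the point is that $\wt\bfV\subset\bfV$ dense in $\bfH$ already suffices, because $\wt\bfV$ is dense in $\bfX^*$ in the $\bfX^*$-norm (via the identifications of Appendix~\ref{app:ss:Gelfand-construction}, density of $\wt\bfV$ in $\bfH$ transfers to density of the associated functionals in $\bfX^*$), hence characteristic functionals agreeing on $\wt\bfV$ agree everywhere by continuity, and two Radon measures on the separable Hilbert space $\bfX$ with the same characteristic functional coincide. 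Finally, by the first statement $\wh\bbQ_\#\gamma_\beta$ is Gaussian with characteristic functional $\int\ee^{i\ip{h}{\xi}_\bfH}(\wh\bbQ_\#\gamma_\beta)(\rmd\xi) = \exp(-\tfrac1{2\beta}\|\bbQ h\|_\bfH^2)$ for $h\in\bfH$, which is precisely the right-hand side of~\eqref{eq:char-FT-dense-subspace}; so any $\zeta$ satisfying~\eqref{eq:char-FT-dense-subspace} on $\wt\bfV$ must equal $\wh\bbQ_\#\gamma_\beta$.

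\textbf{Main obstacle.} The routine parts are the Gaussian bookkeeping; the one genuinely delicate point is justifying, in the final statement, that agreement of characteristic functionals on the \emph{small} set $\{\ip{h}{\cdot}_\bfH : h\in\wt\bfV\}$ forces equality of the measures. This requires carefully tracking the Gelfand-triple identifications so that $\bfH$-density of $\wt\bfV$ really does give $\bfX^*$-density of the corresponding functionals (the $\bfH$-inner-product pairing $\ip{h}{\xi}_\bfH$ must be correctly reinterpreted as the $\bfX^*$–$\bfX$ duality pairing, with norm controlled by~\eqref{eq:identification-X*}), plus an appeal to a uniqueness theorem for characteristic functionals of Radon measures on separable Hilbert spaces (e.g.~\cite[Thm.~2.3.1ff.]{Boga98GM}). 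Some care is also needed that $\ip{h}{\xi}_\bfH$ makes sense $\zeta$-a.e.\ for $h\in\wt\bfV\subset\bfV$, which it does since $\bfV\subset\bfX^*$ under the chosen identification.
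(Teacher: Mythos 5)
Your treatment of parts~\ref{l:measurable-extensions-real-valued} and~\ref{l:measurable-extensions-operators} is essentially the paper's proof: existence via Bogachev's extension theorems, the covariance~\eqref{char:covariance-O} by moving $\bbO$ onto the test vector (the extensions of $\xi\mapsto\ip{h}{\wh\bbO\xi}_\bfH$ and $\xi\mapsto\ip{\bbO^*h}{\xi}_\bfH$ agree a.e.) together with the identification $\ip{x}{\xi}_\bfX=\ip{\bbC x}{\xi}_\bfH$, and the finite-rank case by expanding in an orthonormal basis of the range. For the first claim of part~\ref{l:measurable-extensions-adding-independent-variables} you compute characteristic functionals directly where the paper simply cites \cite[Lemma~3.11.25]{Boga98GM}; that is a harmless, slightly more self-contained variant.

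There is, however, a genuine gap in your final step, and it sits exactly at the point you flag as delicate. You propose to pass from $\bfH$-density of $\wt\bfV$ to $\bfX^*$-density of the functionals $f_h\colon\xi\mapsto\ip{h}{\xi}_\bfH$, and then conclude by norm-continuity of the characteristic functional on $\bfX^*$. This transfer fails: by~\eqref{eq:identification-X*} the representative of $f_h$ in $\bfV$ is $h$ itself, so $\|f_h\|_{\bfX^*}=\|h\|_\bfV$, which is a \emph{strictly stronger} norm than $\|h\|_\bfH$. A subspace $\wt\bfV$ that is dense in $\bfH$ in the $\bfH$-norm need not be dense in $\bfV$ in the graph norm, so the associated functionals need not be $\bfX^*$-dense, and the continuity argument does not close. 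The paper avoids this by invoking the uniqueness theorem for Fourier transforms of Radon (Gaussian) measures restricted to a suitable family of functionals (\cite[Cor.~A.3.13]{Boga98GM}): the identity~\eqref{eq:char-FT-dense-subspace} on $\wt\bfV$ forces $\zeta$ to be a centered Gaussian measure, whose covariance operator is then read off as $\ip{\bbC_\zeta x}{x}_\bfX=\|\bbQ\bbC x\|_\bfH^2$ and matched with part~\ref{l:measurable-extensions-operators}. The reason $\bfH$-density (rather than $\bfV$- or $\bfX^*$-density) is the correct hypothesis is that, by~\eqref{eq:formula-for-covariance}, the relevant closure of $\{f_h\}$ is taken in $L^2(\bfX,\gamma_\beta)$, where the norm of $f_h$ is $\beta^{-1/2}\|h\|_\bfH$; your argument as written does not use this and would instead require the stronger assumption that $\wt\bfV$ be dense in $\bfV$.
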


\begin{RmRemark}
\label{re:l:measurable-extensions-operators-unitary}
Part~\ref{l:measurable-extensions-operators} above implies in particular that
if $\bbO$ is a unitary operator on $\bfH$, then $\wh \bbO\eta$ has the same
covariance as $\eta$.
\end{RmRemark}

\begin{proof}
  Part~\ref{l:measurable-extensions-real-valued}
  is~\cite[Th.~2.10.11]{Boga98GM} in combination with~\eqref{eq:Hgammabeta}.
  In part~\ref{l:measurable-extensions-operators}, the general case
  is~\cite[Th.~3.7.6]{Boga98GM}. To calculate the covariance, first note that
  since measurable linear extensions are characterized by their values on
  $\bfH$, the two linear extensions $\wh f_1$ and $\wh f_2$ of
\[
f_1(\xi) := \ip h {\wh \bbO \xi}_\bfH
\qquad\text{and}\qquad
f_2(\xi) := \ip {\bbO^* h}\xi_\bfH  
\]
are $\gamma_\beta$-a.e.\ equal. It follows that 
\[
  \Expectation |f_1(\eta)|^2 = \Expectation |f_2(\eta)|^2
  \stackrel{\eqref{eq:formula-for-covariance}}= \frac1\beta \|\bbO^*
  h\|_\bfH^2.
\]
The characterization~\eqref{char:covariance-O} follows by remarking that for
$x\in \bfX$, $\ip x\xi_\bfX = \ip {\bbC x}\xi_\bfH$, so that $f_1$ equals
$\ip x\xi_\bfX $ whenever $h=\bbC x$.

The case of finite rank follows from
part~\ref{l:measurable-extensions-real-valued}, by choosing orthonormal bases
$\{e_k\}_{k=1}^\infty$ and $\{f_k\}_{k=1}^\infty$ of $\bfH$ such that that
$\Range(\bbO)\subset \Span\bra*{\{e_k\}_{k=1}^d}$ and
$\Nullspace(\bbO)^\perp \subset \Span\bra*{\{f_k\}_{k=1}^d}$, and expanding the
squared norm (see also~\cite[Ex.\,3.7.14]{Boga98GM}).

In part~\ref{l:measurable-extensions-adding-independent-variables} the distribution of $X+Y$ is given by~\cite[Lemma~3.11.25]{Boga98GM}. The characterization $\zeta=\wh\bbQ_\#\gamma_\beta$ can be obtained as follows. By~\cite[Cor.~A.3.13]{Boga98GM}, the identity~\eqref{eq:char-FT-dense-subspace} for all elements of $\wt\bfV$ uniquely identifies $\zeta$ as a centered Gaussian measure on $\bfX$. Writing for $h = \bbC x$ the exponent as 
\[
\ip h\xi_\bfH  = \ip{\bbC x}\xi_\bfH =   \ip{x}\xi _\bfX,
\]
we observe that $\zeta$ has covariance operator 
\[
\frac1\beta \ip{\bbC_\zeta x}x_\bfX := \frac1\beta 
\|\bbQ h\|_\bfH^2 = \frac1\beta \|\bbQ \bbC x\|_\bfH^2.
\]
Using part~\ref{l:measurable-extensions-operators} we recognize this Gaussian measure as the law of $\wh\bbQ \eta$ where $\eta\sim \gamma_\beta$. 
\end{proof}

An important consequence of the above extension results is that the
extended group $(\extG_t)_{t\in \R}$ on $\bfX$ preserves the measure
$\gamma_\beta$. 

\begin{lemma}[Invariance of the Gaussian measure] 
\label{le:GauMea.inv.extG} 
The extension $\extG_t:\bfX\to \bfX$ of $\ee^{t\JHB}:\bfH\to \bfH$ (see
Assumption \ref{ass:HB-evol-op-ct-on-X}) leaves the Gaussian measure
$\gamma_\beta$ invariant. 
\end{lemma}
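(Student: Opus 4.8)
The plan is to show, for each fixed $t\in\R$, that $(\extG_t)_\#\gamma_\beta=\gamma_\beta$ by recognizing $\extG_t$ as a measurable linear extension of the unitary operator $\ee^{t\JHB}$ and then applying the covariance formula of Lemma~\ref{l:measurable-extensions}. Since $\gamma_\beta=\calN_\bfX(0,\beta^{-1}\bbC)$ is a centered Gaussian measure, it is determined by its covariance, so the whole argument reduces to a covariance computation.

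First I would note that, by Assumption~\ref{ass:HB-evol-op-ct-on-X}, $\extG_t\colon\bfX\to\bfX$ is a bounded linear operator with $\extG_t|_\bfH=\ee^{t\JHB}$; in the terminology of Definition~\ref{def:measurable-linear-extensions} it is therefore a measurable linear extension of $\ee^{t\JHB}$. Part~\ref{l:measurable-extensions-operators} of Lemma~\ref{l:measurable-extensions}, applied with $\bbO:=\ee^{t\JHB}$, also produces a measurable linear extension of $\ee^{t\JHB}$; and since measurable linear extensions are $\gamma_\beta$-a.e.\ uniquely determined by their values on $\bfH$ (\cite[Cor.~2.10.8]{Boga98GM}), the two extensions agree $\gamma_\beta$-a.e. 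Consequently, if $\eta_0\sim\gamma_\beta$, then $\extG_t\eta_0$ has exactly the law furnished by that lemma: it is a centered $\bfX$-valued Gaussian random variable with covariance $\beta^{-1}\bbC_\bbO$, where, by~\eqref{char:covariance-O}, $\ip{\bbC_\bbO x}{x}_\bfX=\|(\ee^{t\JHB})^*\bbC x\|_\bfH^2$ for $x\in\bfX$.

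It then remains to check that $\bbC_\bbO=\bbC$. Because $\ee^{t\JHB}$ is unitary on $\bfH$, its $\bfH$-adjoint $\ee^{-t\JHB}$ is an $\bfH$-isometry, so $\|(\ee^{t\JHB})^*\bbC x\|_\bfH=\|\bbC x\|_\bfH$; and since $\bbC=\mathrm{id}\,\mathrm{id}^*$ in the Gelfand-triple construction of Appendix~\ref{app:ss:Gelfand-construction}, one has $\|\bbC x\|_\bfH^2=\|\mathrm{id}^*x\|_\bfH^2=\ip{\bbC x}{x}_\bfX$. Hence the quadratic form of $\bbC_\bbO$ coincides with that of $\bbC$, so $\bbC_\bbO=\bbC$ (both operators being self-adjoint and non-negative); alternatively one may simply invoke Remark~\ref{re:l:measurable-extensions-operators-unitary}. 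Therefore $\extG_t\eta_0\sim\calN_\bfX(0,\beta^{-1}\bbC)=\gamma_\beta$, which is precisely $(\extG_t)_\#\gamma_\beta=\gamma_\beta$, the asserted invariance.

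I do not anticipate a substantive difficulty here: the only points requiring care are the identification of the operator-theoretic extension $\extG_t$ of Assumption~\ref{ass:HB-evol-op-ct-on-X} with the measure-theoretic extension to which Lemma~\ref{l:measurable-extensions} applies (via $\gamma_\beta$-a.e.\ uniqueness of measurable linear extensions), and the one-line verification that unitarity of $\ee^{t\JHB}$ on $\bfH$ leaves the $\bfH$-covariance—and hence $\gamma_\beta$—unchanged.
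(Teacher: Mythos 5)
Your proposal is correct and follows essentially the same route as the paper's proof: apply part~\ref{l:measurable-extensions-operators} of Lemma~\ref{l:measurable-extensions} with the unitary operator $\bbO=\ee^{t\JHB}$ and its extension $\wh\bbO=\extG_t$, and observe via~\eqref{char:covariance-O} that unitarity gives $\|\ee^{-t\JHB}\bbC x\|_\bfH^2=\|\bbC x\|_\bfH^2$, so the covariance (hence the centered Gaussian law) is unchanged. The only difference is that you spell out two points the paper leaves implicit — the $\gamma_\beta$-a.e.\ identification of $\extG_t$ with the measurable linear extension produced by that lemma, and the verification that $\|\bbC x\|_\bfH^2=\ip{\bbC x}{x}_\bfX$ — both of which are correct and harmless additions.
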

\begin{proof}
We apply Part~\ref{l:measurable-extensions-operators} of
Lemma~\ref{l:measurable-extensions} with the unitary operator
$\bbO=\ee^{t\JHB}$ and its extension $\wh\bbO=\extG_t$. Hence, starting from
$\eta \sim \gamma_\beta$ we see that the random variable $\wh\bbO\eta= \extG_t
\eta$ is a centered Gaussian in $\bfX$ with covariance $\beta^{-1}\bbC_t$
characterized by  
\[
  \frac1\beta \ip{\bbC_t x}{x}_\bfX  = \frac1\beta \|\ee^{-t\JHB} \bbC
  x\|_\bfH^2 =  \frac1\beta \| \bbC x\|_\bfH^2.
\]
In particular, this means $\beta^{-1}\bbC_t = \beta^{-1} \bbC$; hence
$\wh S_t \eta$ has the same distribution~$\gamma_\beta$.
\end{proof}

\begin{RmRemark}[Invariance of $\gamma_\beta$ via inner products]
\label{rem:InvarViaInnerProd}
The invariance of $\gamma_\beta$ can also be shown by working in $\bfX$ rather
than in $\bfX^*$ as in the above proof. We do not have the commutation property
$\bbC \extG_t= \extG_t \bbC$ but we have the invariance property 
\begin{equation}
\label{eq:S*t.C.St=C}
\extG_t^* \bbC \extG_t= \bbC, 
\end{equation}
which implies $\ip{\bbC \extG_t\eta}{\extG_t\eta} = \ip{\extG_t^*\bbC
  \extG_t\eta}{\eta}= \ip{\bbC \eta}\eta$.   

To see the validity of \eqref{eq:S*t.C.St=C}, we need to calculate the
Hilbert-space adjoint $\extG_t^* $ of $\extG_t$. For an operator $A\colon\bfU\to
\bfV$ denote the Banach-space adjoint by $A^\top\colon\bfV^*\to \bfU^*$, which is
defined via $\langle A^\top\nu,u\rangle_\bfU=\langle \nu, Au\rangle_\bfV$. 
Using the dual space $\bfX^*$ satisfying
$\bfX^*\subset \bfH \subset \bfX$ and the duality isomorphism $\bbC\colon\bfX \to
\bfX^*$ we have the relation $\extG_t^*= \bbC \,\extG_t^\top \,\bbC^{-1}$,
where $\extG_t^\top\colon \bfX^*\to \bfX^*$.
As $\extG_t$ is the extension of $\ee^{t\JHB}$ from $\bfH$ to $\bfX$, the
Banach-space adjoint is simply the restriction
$\ee^{t\JHB}|_{\bfX^*}$. Together with $(\ee^{t\JHB})^\top= \ee^{-t\JHB}$ in
$\bfH$, we found the representation
$\extG_t^*= \bbC \,\ee^{-t\JHB}|_{\bfX^*}\,\bbC^{-1} $, and the desired
identity \eqref{eq:S*t.C.St=C} follows.
\end{RmRemark}

\section{Delayed proofs}
\label{app:delayed-proofs}

\subsection{Section~\ref{s:micro}: Microscopic setup}

\begin{proof}[Proof of Lemma~\ref{l:props-Y}]
Part~\ref{l:props-Y:ct-in-H} follows from the assumption that $\ee^{t\JHB}$ is
a strongly continuous group on $\bfH$ and from the fact that on $\bfH$ the operator
$\wh\bbP$ reduces to the bounded linear operator $\bbP$.  
  
For part~\ref{l:props-Y:ct-in-X}, first note that $Y(\eta_0)$ is stationary
since $\gamma_\beta$ is invariant under $\extG_t$. We next show that both
processes are measurable with respect to the product $\sigma$-algebra
$\calB(\R)\ti\calB(\bfX)_{\gamma_\beta}$. The following decomposition shows
that $Y_t(\eta_0)$ is a measurable function of $(t,\eta_0)$, with each arrow a
measurable map between the two $\sigma$-algebras:
\begin{alignat*}{4}
  (t,\eta_0)\qquad &\quad\mapsto\quad &\extG_t \eta_0\quad &\quad\mapsto \quad&
  \wh\bbP \extG_t\eta_0
  \\
  \calB(\R)\ti \calB(\bfX)_{\gamma_\beta} & &\calB(\bfX)_{\gamma_\beta} & &\
  \calB(\bfW)
\end{alignat*}
The first arrow is measurable by the continuity given by
Assumption~\ref{ass:HB-evol-op-ct-on-X}, and the second by the measurability of
$\wh\bbP$ as a map from $(\bfX,\calB(\bfX)_{\gamma_\beta})$ to
$(\bfW, \calB(\bfW))$ (Lemma~\ref{l:measurable-extensions}).
  
To show that $Y=Y(\wh\bbQ\eta_0)$ similarly is measurable we decompose it as
\[
  Y_t(\wh\bbQ\eta_0) = \wh\bbP \extG_t \wh\bbQ\eta_0 = \wh \bbP
  \extG_t\eta_0 - \wh\bbP \extG_t\wh\bbP \eta_0.
\]
We just showed that the first term is measurable from
$\calB(\R)\ti \calB(\bfX)_{\gamma_\beta}$ to $\calB(\bfW)$. The
measurability of the second follows from remarking that $
\wh\bbP \extG_t \wh\bbP \eta_0 = \bbP\ee^{t\JHB}\wh \bbP \eta_0$,
since the range of $\wh\bbP$ is included in $\bfW\subset\bfH$. With this we then have
\begin{alignat*}{4}
  (t,\eta_0)\qquad &\quad\mapsto\quad &(t,\wh\bbP \eta_0)\qquad
  &\quad\mapsto\quad &\ee^{t\JHB} \wh\bbP\eta_0 &\quad\mapsto \quad& \bbP
  \ee^{t\JHB} \wh\bbP\eta_0.
  \\
  \calB(\R)\ti \calB(\bfX)_{\gamma_\beta} && \calB(\R)\ti \calB(\bfW) &
  &\calB(\bfH)\ & &\ \calB(\bfW)\
\end{alignat*}
Here the first arrow is measurable again by the construction of $\wh\bbP$
(Lemma~\ref{l:measurable-extensions}), the second by the strong continuity of
$\ee^{t\JHB}$ on $\bfH$, and the third by the continuity of $\bbP$ on $\bfH$.
  
Finally, to show that $Y(\eta_0)$ and $Y(\wh\bbQ\eta_0)$ have sample paths in
$L^2_{\mathrm{loc}}$, we calculate for $t\in \R$ that
\[
  \Expectation _{\eta_0\sim\gamma_\beta} \|Y_t(\eta_0)\|_\bfH^2
  = \Expectation_{\eta_0} \norm[\big]{\wh\bbP\ee^{t\JHB}\eta_0}_\bfH^2
  \stackrel{\eqref{eq:variance-of-finite-rank-operator}}= \frac1\beta
  \trace_\bfH \bra[\big]{\bbP\ee^{t\JHB}\bra*{\bbP\ee^{t\JHB}}^*} = \frac1\beta
  \trace_\bfH\bbP = \frac{\dim \bfW}\beta. 
\]
The measurability allows us to apply Fubini and calculate for
$-\infty<a<b<\infty$
\[
  \Expectation_{\eta_0\sim\gamma_\beta} \int_a^b \|Y_t(\eta_0)\|_\bfH^2 \dd t
  = \int_a^b \Expectation _{\eta_0} \|Y_t(\eta_0)\|_\bfH^2
  = (b{-}a) \!\; \frac{\dim \bfW}\beta .
\]
This shows that almost surely $Y(\eta_0)$ is an element of
$L^2_{\mathrm{loc}}(\R;\bfW)$.
  
To show the analogous result for $Y(\wh\bbQ \eta_0)$, we similarly obtain
\[
  \Expectation_{\eta_0\sim\gamma_\beta} \int_a^b
  \norm[\big]{Y_t(\wh\bbQ\eta_0)}_\bfH^2 \dd t = \frac1\beta (b{-}a)\trace_\bfH
  \bra{\bbP\ee^{t\JHB}\bbQ^2\ee^{-t\JHB}\bbP}.
\]
To estimate the trace we choose an orthonormal basis $(e_k)_{k=1}^d$ of $\bfW$
of size $d=\dim\bfW$, and calculate
\begin{align*}
  \trace_\bfH \bra{\bbP\ee^{t\JHB}\bbQ^2\ee^{-t\JHB}\bbP}
  &= \sum_{k=1}^d \abs*{\ip{\bbP\ee^{t\JHB}\bbQ^2\ee^{-t\JHB}\bbP e_k}{ e_k}_\bfH}
  \leq \sum_{k=1}^d \norm*{\bbP\ee^{t\JHB}\bbQ^2\ee^{-t\JHB}\bbP e_k}_\bfH\norm*{e_k}_\bfH\\
  &\leq \sum_{k=1}^d \norm*{e_k}_\bfH^2 = d.
\end{align*}
Again it follows that $Y(\wh\bbQ\eta_0)$ is almost surely in
$L^2_{\mathrm{loc}}(\R;\bfW)$.
  \end{proof}

\subsection{Microcanonical measures and Section~\ref{s:macro}}
\label{app:microcanonical-measure}

In the physics literature it is common to define `microcanonical ensembles', which effectively are measures on the set of possible states. A useful tool for this is the following delta-function-like object.

Let $f\colon\R^n\to \R$ be continuous.
For $a\in \R$ the non-negative measure $\mu\in \calM(\R^n)$
\[
\mu (\rmd x) := \pdelta\pra*{f(x)-a }(\rmd x)
\]
is defined by the property 
\[
\mu(A)
= \lim_{h\downarrow 0} \frac1h \int_{A}
  \One\big\{ a \leq f(x) < a+h\big\}\dd x \in [0,\infty]
  \qquad\text{for all }A\subset \R^n,
\]
whenever this definition makes sense. 

One way of making this definition rigorous is by integration against test functions:
\begin{definition}[{\cite[(1.25)]{LelievreRoussetStoltz10}}]
  \label{def:pdelta}
The measure $ \pdelta\pra*{f(x)-a}$ on $\R^n$ is defined by the property that for all $F\in C_c(\R^n)$ and $G\in C(\R)$, 
\begin{equation}
\label{eqdef:pdelta}
\int_\R G(a) \int_{\R^n} F(x) \pdelta\pra*{f(x)-a}(\rmd x) \dd a
= \int_{\R^n} G(f(x)) F(x) \dd x.
\end{equation}
\end{definition}
\noindent
The right-hand side above is finite for all $F\in C_c(\R^n)$ and $G\in C(\R)$,
since $\supp F$ is compact and $G\circ f$ therefore is bounded on this
support. This implies that $\pdelta\pra*{f(x)-a}(\rmd x) \dd a$ can be
considered a locally finite non-negative measure on $(x,a)\in \R^n\ti \R$. For
almost all $a\in \R$ the disintegration $\pdelta\pra*{f(x)-a}(\rmd x)$ is
then a locally finite non-negative measure on $\R^n$.

By the co-area formula (see, e.g.,~\cite[Sec.~3.2.1]{LelievreRoussetStoltz10}) this measure can also be expressed with help of the $(n{-}1)$-dimensional Hausdorff measure $\calH^{n-1}$ on the set $S_a := \{x\in \R^n: f(x) = a\}$ as
\begin{equation}
\label{eq:pdelta-co-area}
  \pdelta\pra*{f(x)-a }(\rmd x) 
  = \frac1{|\nabla f(x)|}  \calH^{n-1}\big|_{S_a}(\rmd x).
\end{equation}

\medskip

In Section~\ref{ss:explanation-S-part2} we made use of the following property,
which allows us to write a microcanonical measure on a product space as the
product of two such measures on the individual spaces. The crucial element is
the fact that the `energy' function has the additive structure
$(x,y)\mapsto f(x) + g(y)$.
\begin{lemma}
  \label{l:delta-times-delta}
Let $X$ and $Y$ be finite-dimensional spaces, and let $f\colon X\to\R$ and $g\colon Y\to \R$ be continuous. 
Set 
\[
T\colon X\ti Y \to X\ti Y \ti \R,
\qquad
(x,y) \mapsto (x,y,e := g(y)).
\]
For Lebesgue almost all $a\in \R$ we then have
\[
\bra[\Big]{T_\# \pdelta\pra*{f(x)+g(y)- a}} (\rmd x\dd y \dd e)
= \pdelta\pra*{f(x)-a+e}(\rmd x) \pdelta\pra*{g(y)-e}(\rmd y ) \dd e.
\]
\end{lemma}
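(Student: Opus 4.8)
The plan is to verify the identity by testing against functions, using Definition~\ref{def:pdelta} for both sides and reducing everything to a single application of the co-area--type disintegration. Concretely, I would fix test functions $\Phi\in C_c(X\ti Y)$, $\Psi\in C_c(\R)$ (playing the role of a test function in the $e$-variable) and $G\in C(\R)$ (the averaging function over the level-set parameter $a$), and show that integrating both sides of the claimed identity against $\Phi(x,y)\Psi(e)$ and then against $G(a)\dd a$ produces the same number. Since $\pdelta\pra*{f(x){+}g(y){-}a}(\dd x\dd y)\dd a$ is a locally finite measure on $(X\ti Y)\ti\R$, and likewise $\pdelta\pra*{f(x){-}a{+}e}(\dd x)\pdelta\pra*{g(y){-}e}(\dd y)\dd e\,\dd a$ on $(X\ti Y\ti\R)\ti\R$, equality of the two disintegrations for a.e.\ $a$ is equivalent to equality of these two measures on the product with $\R_a$, which is what the test-function computation establishes.

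First I would compute the left-hand side. By definition of push-forward and Definition~\ref{def:pdelta},
\[
\int_\R G(a)\int_{X\ti Y\ti\R}\Phi(x,y)\Psi(e)\,\bra[\big]{T_\#\pdelta\pra*{f(x){+}g(y){-}a}}(\dd x\dd y\dd e)\,\dd a
= \int_\R G(a)\int_{X\ti Y}\Phi(x,y)\Psi(g(y))\,\pdelta\pra*{f(x){+}g(y){-}a}(\dd x\dd y)\,\dd a,
\]
and since $\Phi(x,y)\Psi(g(y))\in C_c(X\ti Y)$ this equals $\int_{X\ti Y}G(f(x){+}g(y))\Phi(x,y)\Psi(g(y))\dd x\dd y$. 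For the right-hand side I would integrate in the order $\dd y$, then $\dd e$, then $\dd a$: first use \eqref{eqdef:pdelta} on $Y$ with the variable $e$ in the role of the level parameter (treating $x$ as a frozen parameter and $a$ as fixed), getting $\int_\R\bigl(\int_Y \Phi(x,y)[\cdots]\,\pdelta\pra*{g(y){-}e}(\dd y)\bigr)\dd e = \int_Y \Phi(x,y)\,[\text{value at }e=g(y)]\,\dd y$ after also collapsing the $\pdelta\pra*{f(x){-}a{+}e}$ factor against the $\dd e$-integral; then the remaining $\dd x$ and $\dd a$ integrals, with $G(a)$ reinstated, again collapse by \eqref{eqdef:pdelta} on $X$ with level parameter $a - g(y)$. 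Carefully bookkeeping the substitutions, the right-hand side also becomes $\int_{X\ti Y}G(f(x){+}g(y))\Phi(x,y)\Psi(g(y))\dd x\dd y$, matching the left.

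A cleaner way to organize the same argument, which I would probably prefer to write out, is to introduce the auxiliary measure $\lambda(\dd x\dd y\dd e\dd a) := \pdelta\pra*{f(x){-}a{+}e}(\dd x)\,\pdelta\pra*{g(y){-}e}(\dd y)\,\dd e\,\dd a$ on $X\ti Y\ti\R\ti\R$ and observe that against any $\Phi\in C_c(X\ti Y)$, $\Psi\in C_c(\R)$, $G\in C(\R)$ one has, integrating first the two $\pdelta$'s (which is legitimate because the nested integrals are over compactly supported continuous integrands once $\Phi,\Psi$ are fixed), the value $\int_{X\ti Y}G(f(x){+}g(y))\,\Psi(g(y))\,\Phi(x,y)\dd x\dd y$. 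Since the map $(x,y)\mapsto(x,y,g(y))$ is exactly $T$, and $\pdelta\pra*{f(x){+}g(y){-}a}(\dd x\dd y)\dd a$ tested against $\Phi(x,y)\Psi(g(y))G(a)$ gives the same value, the measures $T_\#\bigl[\pdelta\pra*{f{+}g{-}a}\bigr](\dd x\dd y\dd e)\dd a$ and $\lambda$ agree on a measure-determining class, hence are equal; disintegrating in $a$ gives the pointwise-a.e.\ statement. The main obstacle is purely a matter of care rather than of ideas: justifying that the iterated integrals of the $\pdelta$-objects may be interchanged and that the ``freezing'' of one variable while applying Definition~\ref{def:pdelta} in another is valid. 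This is handled by the remark after Definition~\ref{def:pdelta} that $\pdelta\pra*{\cdot-a}(\dd x)\dd a$ is a genuine locally finite measure on the product, so that all manipulations are applications of Tonelli/Fubini for nonnegative, respectively compactly-supported, integrands, together with the co-area representation \eqref{eq:pdelta-co-area} if one wants an explicit handle on local finiteness of the slices. No delicate estimate is needed beyond that.
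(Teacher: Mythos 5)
Your proposal is correct and follows essentially the same route as the paper: both sides are tested against a measure-determining class of functions together with $G(a)\,\rmd a$, the push-forward and the factor $\pdelta\pra*{g(y)-e}$ are used to replace the $e$-dependence by $g(y)$, and repeated applications of Definition~\ref{def:pdelta} with a shift of the level parameter ($\tilde e = e+f(x)$) collapse both sides to $\int_{X\ti Y}G(f(x)+g(y))\Phi(x,y)\Psi(g(y))\,\rmd x\,\rmd y$, after which disintegration in $a$ gives the a.e.\ statement. The only cosmetic difference is that the paper runs the chain of substitutions in the "expanding" direction starting from the collapsed integral (which makes the continuity/compact-support hypotheses of \eqref{eqdef:pdelta} immediate at each step), whereas you run it in the collapsing direction and appeal to Fubini for the locally finite product measures — both are sound.
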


\begin{proof}
By the definition of the push-forward $T_\#\pdelta$ we need to show that for all $\varphi\in C_c(X\ti Y \ti \R)$ we have for almost all $a\in \R$ 
\begin{multline*}
\int_{X\ti Y}   \varphi(x,y,g(y)) \pdelta\pra*{f(x)+g(y)- a}(\rmd x\dd y) \\
= \int_{X\ti Y\ti \R}   \varphi(x,y,e) \pdelta\pra*{f(x)-a+e}(\rmd x) \pdelta\pra*{g(y)-e}(\rmd y ) \dd e.
\end{multline*}
Since in the right-hand integral we can replace $\varphi(x,y,e)$ by $\varphi(x,y,g(y))$, it is sufficient to show that for almost all $a\in \R$ and for all measurable $A\subset X$, $B\subset Y$
\begin{equation}
  \label{eq:delta-times-delta-to-prove}
  \pdelta\pra*{f(x)+g(y)-a}(A\ti B) = 
  \int_\R \pdelta\pra*{f(x)+e-a}(A) \pdelta\pra*{g(y)-e}(B) \dd e
  .
\end{equation}

To prove this identity we take $G\in C(\R)$ and $F\in C_c(X\ti Y)$ and rewrite the left-hand side of~\eqref{eq:delta-times-delta-to-prove} using~\eqref{eqdef:pdelta},
\begin{align*}
\MoveEqLeft \int_\R G(a) \int_{X\ti Y} F(x,y) \pdelta\pra*{f(x)+g(y)-a}(\rmd x \dd y ) \dd a \\
&\leftstackrel{\eqref{eqdef:pdelta}}= \int_{X \ti Y} G(\underbrace{f(x)+g(y)}_{=: g_x(y)}) F(x,y) \dd y\dd x\\
&\leftstackrel{\eqref{eqdef:pdelta}}= 
\int_X \int_\R G(\tilde e) \int_Y F(x,y) \pdelta\pra*{g_x(y)-\tilde e}(\rmd y) \dd \tilde e \dd x\\
&\leftstackrel{\tilde e  = e+ f(x)} = \int_X \int_\R G( e+f(x)) \underbrace{\int_Y F(x,y) \pdelta\pra*{g(y)- e}(\rmd y) }_{=: \tilde F_e(x)}\; \dd  e\dd x\\
&= \int_\R \int_X G(e+f(x)) \tilde F_e(x) \dd x \dd e\\
&\leftstackrel{\eqref{eqdef:pdelta}}= 
\int_\R \int_\R G(a) \int_X \tilde F_e(x) \pdelta\pra*{e+f(x) - a}(\rmd x) \dd a \dd e\\
&= \int_\R G(a) \int_{X\ti Y\ti \R} F(x,y) \pdelta\pra*{f(x)+ e-a}(\rmd x) \pdelta\pra*{g(y)-e}(\rmd y) \dd e\; \dd a.
\end{align*}
It follows that for almost all $a\in \R$ the identity~\eqref{eq:delta-times-delta-to-prove} holds. 
\end{proof}

For the next two lemmas we adopt the setting of Section~\ref{ss:explanation-S-part2}. Since the aim is to illustrate the interpretation of the formula $\calS = \beta e$, we simplify by assuming that $\bfW\subset \bfV$. We then can choose an orthonormal basis $\{e_k\}_{k\geq 1}$ of both $\bfW$ and $\bfH$ consisting of elements of $\bfV$ (see~\cite[Cor.~2.10.10]{Boga98GM}). The benefit of choosing the basis in $\bfV$ is that the basis vectors can be interpreted as elements of $\bfX^*$, allowing us to apply existing density results.

Letting $\bfW$ be $\Span \{e_k\}_{k=1}^d$, we then  construct an increasing sequence of $n$-dimensional subspaces $\Xi_n$ of the complement $\bfW^\perp$,
\[
\Xi_n := \Span \{e_k\}_{k= d+1}^{d+n}.
\]
We then define the microcanonical probability measure of radius $\sqrt R$:
\begin{equation}
  \label{eqdef:micro-can-measure-nu-nR}
  \nu_{n,R}(\rmd \xi ) := \frac1{Z_{n,R}} \pdelta\pra*{\frac12\|\xi\|^2_\bfH - \frac{nR}2 }\bigg|_{\Xi_n}(\rmd \xi).
\end{equation}
This measure is a normalized version of the measure $\zeta_{n,\beta,e}$ in~\eqref{eqdef:zeta-n-beta-e} for the choice $R = \beta^{-1} + 2e/n$.

The first result is the `equivalence of ensembles', also known as `Boltzmann's equivalence principle' or `Poincar\'e's Lemma' (see \cite[Sec.~6]{DiaconisFreedman87} for a discussion of the history). This result states that if $X_n$ is distributed according to $\nu_{n,R}$, then the first $k$ coordinates resemble a Gaussian random variable in the limit $n\to\infty$:

\begin{lemma}[Equivalence of ensembles; e.g.\ \cite{DiaconisFreedman87}]
\label{l:equivalence-of-ensembles}
Let $\{e_k\}_{k=1}^\infty $ be an orthonormal basis of $\bfH$ such that $\{e_k\}_{k=1}^n$ is a basis of $\Xi_n$. 
Let $R_n\to R>0$  and let $X_n$ have law $\nu_{n,R_n}$. Fix $k\geq 1$. As $n\to\infty$, the first $k$ coordinates 
\[
\bra[\big]{\ip{X_n}{e_1}_\bfH, \dots, \ip{X_n}{e_k}_\bfH}
\]
converge in distribution to a centered Gaussian random variable in $\R^k$ with covariance~$R\,I_k$. 
\end{lemma}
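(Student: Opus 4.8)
The plan is to reduce the statement to the classical fact that the uniform distribution on a high-dimensional sphere is asymptotically Gaussian in low-dimensional projections, exploiting the Gaussian representation of the uniform measure on a sphere. No deep ideas are needed; the work is in pinning down exactly which measure $\nu_{n,R}$ is.

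First I would identify $\nu_{n,R}$ concretely. Identifying $\Xi_n$ with $\R^n$ through the orthonormal basis $\{e_k\}_{k=1}^n$, so that $\|\cdot\|_\bfH$ becomes the Euclidean norm, the function $f(\xi) = \tfrac12\|\xi\|_\bfH^2$ has gradient $\nabla f(\xi) = \xi$, hence $|\nabla f(\xi)| = \|\xi\|_\bfH$. On the level set $S := \{f = nR/2\}$ this norm is the constant $\sqrt{nR}$, so by the co-area formula~\eqref{eq:pdelta-co-area} we get $\pdelta\pra*{\tfrac12\|\xi\|_\bfH^2 - \tfrac{nR}2}\big|_{\Xi_n}(\rmd\xi) = (nR)^{-1/2}\,\calH^{n-1}\big|_{S}(\rmd\xi)$. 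After normalisation, $\nu_{n,R}$ is therefore exactly the uniform probability measure on the sphere of radius $\sqrt{nR}$ in $\R^n$; the extra $|\nabla f|$-weighting is a harmless constant because $\tfrac12\|\xi\|_\bfH^2$ is radial.

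Next I would use the standard construction: if $G = (G_1,\dots,G_n)$ is a standard Gaussian vector in $\R^n$, then $G/\|G\|$ is uniformly distributed on $S^{n-1}$, and consequently $X_n := \sqrt{nR_n}\,G/\|G\|$ has law $\nu_{n,R_n}$. Its first $k$ coordinates are $\sqrt{nR_n}\,G_i/\|G\| = c_n G_i$, $i = 1,\dots,k$, where $c_n := \sqrt{R_n}\,(\|G\|^2/n)^{-1/2}$. By the strong law of large numbers $\|G\|^2/n = \tfrac1n\sum_{i=1}^n G_i^2 \to 1$ almost surely, and since $R_n\to R$ this gives $c_n \to \sqrt R$ in probability. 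The vector $(G_1,\dots,G_k)$ has law $\calN(0,I_k)$ and does not depend on $n$, so Slutsky's theorem yields $(c_n G_1,\dots,c_n G_k) \to \sqrt R\,(G_1,\dots,G_k)$ in distribution, i.e.\ to $\calN(0,R\,I_k)$, which is the assertion.

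The only points that require care are (i) the co-area normalisation in the first step, to be certain that $\nu_{n,R}$ is the genuine uniform sphere measure rather than an extra density times it, and (ii) that the argument survives taking $R_n \to R$ in place of a fixed radius, which is exactly what the Slutsky step handles. I do not anticipate a real obstacle; one could alternatively invoke~\cite{DiaconisFreedman87} directly, but the Gaussian-representation argument above is short enough to include in full.
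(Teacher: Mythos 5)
Your proof is correct. Note first that the paper does not prove this lemma at all: it is stated with a pointer to Diaconis--Freedman and used as a known fact, so there is no in-paper argument to compare against. Your self-contained proof is the standard one, and it is fully consistent with the toolkit the paper itself deploys in the immediate vicinity: the proof of Lemma~\ref{l:variance-bound-MCmeasure} already uses the same Gaussian representation $Y_n \stackrel{\mathrm{dist}}{=} \sqrt{R}\,X_n/\|X_n\|_\bfH$ of the uniform sphere measure. Your preliminary step is also the right one to make explicit: since $f(\xi)=\tfrac12\|\xi\|_\bfH^2$ is radial, $|\nabla f|$ is constant on the level set, so the co-area formula~\eqref{eq:pdelta-co-area} shows that $\nu_{n,R}$ is genuinely the uniform probability measure on the sphere of radius $\sqrt{nR}$ in $\Xi_n\simeq\R^n$, with no extra density. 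The only presentational point worth tightening is that the $n$-dimensional Gaussian vector $G$ a priori changes with $n$; to make the phrase ``$(G_1,\dots,G_k)$ does not depend on $n$'' literally true, and to apply the strong law of large numbers to $\|G\|^2/n=\tfrac1n\sum_{i=1}^n G_i^2$, you should build all the $G^{(n)}=(G_1,\dots,G_n)$ from a single i.i.d.\ sequence $(G_i)_{i\ge1}$ on one probability space. With that coupling you in fact get almost sure convergence of $(c_nG_1,\dots,c_nG_k)$ to $\sqrt{R}\,(G_1,\dots,G_k)$, so Slutsky is not even needed, though invoking it is harmless. Since equality in law of the first $k$ coordinates with $(c_nG_1,\dots,c_nG_k)$ holds for each $n$ under this coupling, the claimed convergence in distribution to $\calN(0,R\,I_k)$ follows.
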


The second result is a variance bound for $\nu_{n,R}$, which probably is standard, but for which we could not find a convenient reference. Note that it does not follow from  equivalence-of-ensemble results such as Lemma~\ref{l:equivalence-of-ensembles} since it concerns the variance of the full random variable, not a finite-dimensional marginal.

\begin{lemma}[Variance bound for microcanonical measures]
\label{l:variance-bound-MCmeasure}
Let $Y_n$ have distribution $\nu_{n,R}$, and let $\bbT$ be a symmetric non-negative trace-class operator on $\bfH$. 
Then there exists $C>0$, independent of $R$ and $\bbT$, such that 
\[
\Expectation \|\bbT^{1/2}Y_n\|_\bfH^2\leq C R\trace_\bfH \bbT \qquad \text{for all }n\geq 1.
\]
In particular,
\[
  \Expectation \|Y_n\|_\bfX^2 = 
\Expectation \|\bbC^{1/2}Y_n\|_\bfH^2\leq C R\trace_\bfH \bbC \qquad \text{for all }n\geq 1.
\]
\end{lemma}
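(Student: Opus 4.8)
The plan is to recognise $\nu_{n,R}$ as the uniform probability measure on a Euclidean sphere and then to compute its second-moment operator explicitly. First I would use the co-area representation~\eqref{eq:pdelta-co-area}. For $f(\xi):=\tfrac12\|\xi\|_\bfH^2$ on $\Xi_n$, the gradient with respect to the $\bfH$-inner product (in which $\{e_k\}_{k=d+1}^{d+n}$ is an orthonormal basis of $\Xi_n$) is $\nabla f(\xi)=\xi$, so $|\nabla f(\xi)|=\|\xi\|_\bfH$. On the level set $S_{n,R}:=\{\xi\in\Xi_n:\tfrac12\|\xi\|_\bfH^2=\tfrac{nR}2\}$ this equals the constant $\sqrt{nR}$, hence
\[
\pdelta\pra[\Big]{\tfrac12\|\xi\|_\bfH^2-\tfrac{nR}2}\Big|_{\Xi_n}(\rmd\xi)
= \frac1{\sqrt{nR}}\,\calH^{n-1}\big|_{S_{n,R}}(\rmd\xi).
\]
Since the Jacobian weight is constant on $S_{n,R}$ it factors out after normalisation, so $Z_{n,R}\in(0,\infty)$ and $\nu_{n,R}$ is exactly the uniform probability measure on the sphere $S_{n,R}$ of radius $\sqrt{nR}$ in $\Xi_n$. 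I may therefore write $Y_n=\sqrt{nR}\,U_n$, where $U_n$ is uniformly distributed on the unit sphere of $\Xi_n$.

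The key computation is the second-moment operator of $U_n$. By invariance of the uniform sphere measure under the orthogonal group $O(\Xi_n)$, the non-negative operator $\Expectation\,U_n\oti U_n$ commutes with every orthogonal map of $\Xi_n$ and vanishes on $\Xi_n^\perp$; it is therefore of the form $c\,\Pi_n$, where $\Pi_n$ denotes the orthogonal projection of $\bfH$ onto $\Xi_n$. Taking the trace and using $\Expectation\|U_n\|_\bfH^2=1$ and $\trace_\bfH\Pi_n=n$ gives $c=1/n$. Hence, for any symmetric non-negative trace-class operator $\bbT$ on $\bfH$,
\[
\Expectation\|\bbT^{1/2}Y_n\|_\bfH^2
= nR\,\Expectation\ip{\bbT U_n}{U_n}_\bfH
= nR\,\trace_\bfH\!\bra*{\bbT\,\Expectation\,U_n\oti U_n}
= R\,\trace_\bfH\!\bra*{\Pi_n\bbT\Pi_n}
= R\sum_{k=d+1}^{d+n}\ip{\bbT e_k}{e_k}_\bfH.
\]
As $\bbT\geq 0$, every summand is non-negative, so the last sum is at most $\sum_{k\geq1}\ip{\bbT e_k}{e_k}_\bfH=\trace_\bfH\bbT$, giving $\Expectation\|\bbT^{1/2}Y_n\|_\bfH^2\leq R\,\trace_\bfH\bbT$; this proves the claim with $C=1$, uniformly in $n$ and $R$.

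For the ``in particular'' statement I would apply this bound with $\bbT=\bbC$: the operator $\bbC$ is symmetric, non-negative and trace-class on $\bfH$ (the series $\trace_\bfH\bbC=\sum_k\|\bbA^{-1}e_k\|_\bfH^2$ converges, see Section~\ref{app:ss:Gelfand-construction}), and $\|h\|_\bfX=\|\bbC^{1/2}h\|_\bfH$ for $h\in\bfH$ by construction of $\bfX$, so $\Expectation\|Y_n\|_\bfX^2=\Expectation\|\bbC^{1/2}Y_n\|_\bfH^2\leq R\,\trace_\bfH\bbC$.

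I do not expect a genuine obstacle here. The only step requiring a little care is the reduction of the $\pdelta$-measure to the uniform sphere measure via~\eqref{eq:pdelta-co-area}, together with the observation that the Jacobian factor $|\nabla f|$ is constant on the relevant level set and hence disappears in the normalisation (so that $Z_{n,R}$ is finite and positive); once this is in place the argument is the standard identity $\Expectation\,U_n\oti U_n=\tfrac1n\Pi_n$ for the uniform distribution on a Euclidean sphere, combined with non-negativity of $\bbT$ to pass from the partial trace over $\Xi_n$ to the full trace over $\bfH$.
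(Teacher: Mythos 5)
Your proof is correct, and it follows the same overall strategy as the paper's: identify $\nu_{n,R}$ with the uniform probability measure on the sphere of radius $\sqrt{nR}$ in $\Xi_n$, compute the second moment, and then bound the partial trace $\trace_{\Xi_n}\bbT$ by $\trace_\bfH\bbT$ using $\bbT\geq0$. The one genuine difference is in how the second moment of the spherical measure is evaluated. The paper writes $Y_n$ as a normalized Gaussian, $Y_n\stackrel{\mathrm{dist}}=\sqrt{nR}\,X_n/\|X_n\|_\bfH$ with $X_n$ standard normal in $\Xi_n$, and then invokes $\chi^2(n)$-concentration of $\|X_n\|_\bfH^2/n$ to produce a factor $(1+r_n)$ with $r_n\to0$; this yields the bound with an unspecified constant $C$. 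You instead use the exact rotation-invariance identity $\Expectation\,U_n\oti U_n=\tfrac1n\Pi_n$, which turns the estimate into an identity, $\Expectation\|\bbT^{1/2}Y_n\|_\bfH^2=R\,\trace_\bfH(\Pi_n\bbT\Pi_n)$, and gives the sharper constant $C=1$. Your route is arguably cleaner: the paper's passage from concentration of $\|X_n\|_\bfH^2$ to a bound on $\Expectation\bigl[\|\bbT^{1/2}X_n\|_\bfH^2/\|X_n\|_\bfH^2\bigr]$ glosses over the correlation between numerator and denominator (it can be repaired by noting that $X_n/\|X_n\|_\bfH$ is independent of $\|X_n\|_\bfH$, which in effect reduces it to your identity). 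You also take care to justify, via the co-area formula and the constancy of $|\nabla f|$ on the level set, that $\nu_{n,R}$ really is the uniform sphere measure with $0<Z_{n,R}<\infty$ --- a step the paper leaves implicit. The ``in particular'' part is handled identically in both.
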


\begin{proof}
Let $X_n$ be a standard normal random variable in $\Xi_n$; then $Y_n \stackrel{\mathrm{dist}}= \sqrt R\, X_n/\|X_n\|_\bfH$, and therefore
\[
\Expectation \|\bbT^{1/2}Y_n\|_\bfH^2 
=  R\;\Expectation \frac{\|\bbT^{1/2}X_n\|_\bfH^2}{\|X_n\|_\bfH^2}.
\]
The random variable $\|X_n\|^2_\bfH$ is $\chi^2(n)$-distributed,  and therefore  $\|X_n\|^2_\bfH/n$ concentrates onto $1$; it follows that there exists $r_n\to0$ such that 
\[
  \Expectation \|\bbT^{1/2}Y_n\|_\bfH^2 
  \leq  R(1+r_n) \Expectation\|\bbT^{1/2}X_n\|_\bfH^2
  \stackrel{\eqref{eq:finite-dimensional-trace}}= R(1+r_n) \trace_{\Xi_n}\bbT,
\]
where $\trace_{\Xi_n} \bbT = \trace_\bfH \calP_n \bbT \calP_n$ in terms of the orthogonal projection $\calP_n$ in $\bfH$ onto $\Xi_n$. Since $ \trace_{\Xi_n}\bbT \leq \trace_\bfH\bbT$, the result follows. 
\end{proof}

\bigskip
We conclude by giving the proofs of Lemmas~\ref{l:Sn} and~\ref{l:conv-mu-beta-n}.

\begin{proof}[Proof of Lemma~\ref{l:Sn}]
To prove part~\ref{l:Sn:part1}, remark that $\supp \zeta_{n,\beta,e}$ is an $n$-dimensional sphere of radius $r=\sqrt{2e + n/\beta}$; using the co-area formula~\eqref{eq:pdelta-co-area}  we find
\begin{align*}
\sfZ_{\beta,n,e}&:= \int_{\Xi_n} \pdelta\pra*{\frac12\|\xi\|_\bfH^2 - \frac n{2\beta} - e}\dd \xi 
\ =\ \  r^{-1}  \!\!\!\!\!\underbrace{n\omega_n r^{n-1}}_{\text{area of sphere of radius $r$}},
\end{align*}
so that 
\begin{align*}
\log \sfZ_{\beta,n,e}&=
\log n\omega_n + \frac{n-2}2 \log \bra*{\frac n\beta + 2e }\\
&= \underbrace{\log n\omega_n + \frac{n-2}2 \log \frac n\beta}_{=:\, C(\beta,n)} \ +\  \frac{n-2}2 \log \bra*{1 + \frac {2\beta e}n },
\end{align*}
and the convergence~\eqref{conv:partition-function-zeta} follows from a Taylor development of the final term.

\medskip
To prove part~\ref{l:Sn:part2}, first note that the sequence $\sfZ_{\beta,n,e}^{-1}\zeta_{\beta,n,e}$ of probability measures is not tight on $\bfH$ (in fact its support tends to infinity in the $\bfH$-norm). However, applying the variance bound of Lemma~\ref{l:variance-bound-MCmeasure} we observe that 
\[
\frac1 {\sfZ_{\beta,n,e}} \int _{\Xi_n} \|\xi\|^2_\bfX \zeta_{\beta,n,e}(\rmd \xi)
=
\Expectation_{\nu_{n,R_n}} \|\bbC^{1/2}\xi\|_\bfH^2 
\leq C \frac1\beta \trace_\bfH\bbC,
\]
where we set $R_n := \beta^{-1} + 2e/n\to \beta^{-1}$. 
Therefore the sequence is tight when considered as probability measures on $\bfX$, and along a subsequence the sequence converges narrowly  to a probability measure $\zeta_\beta$ on~$\bfX$ (e.g.~\cite[Th.~2.3.4]{Bogachev18}). 

To characterize the limit, note that for any $h\in \Xi_k$ the equivalence-of-ensemble result of Lemma~\ref{l:equivalence-of-ensembles} implies that 
\[
\int_{\Xi_n} \ee^{i\ip{h}{\xi}_\bfH} \zeta_{\beta,n,e}(\rmd \xi)
\stackrel{n\to\infty}\longrightarrow \exp\bra*{-\frac1{2\beta} \|h\|_\bfH^2 }.
\]
Taking elements $h$ of the form
\begin{equation}
\label{eqdef:spaceHn}
h = h_k +  h^\perp \in \bra*{\bigcup_{k\geq 1 }\Xi_k} \oplus (\Xi^\bfH)^\perp,
\end{equation}
we note that $h_k=\calQ h$, and since $h^\perp$ vanishes on each $\Xi_n$ we find 
\[
\int_{\Xi_n} \ee^{i\ip{h}{\xi}_\bfH} \zeta_{\beta,n,e}(\rmd \xi)
= 
\int_{\Xi_n} \ee^{i\ip{h_k}{\xi}_\bfH} \zeta_{\beta,n,e}(\rmd \xi)
\stackrel{n\to\infty}\longrightarrow \exp\bra*{-\frac1{2\beta} \|\calQ h\|_\bfH^2 }.
\]
Since the space in~\eqref{eqdef:spaceHn} is a subspace of $\bfV$ and is dense in $\bfH$, by part~\ref{l:measurable-extensions-adding-independent-variables} of Lemma~\ref{l:measurable-extensions} we recognize the limit $\zeta_\beta$ as the push-forward measure $\wh\calQ_\# \gamma_\beta$. This also implies that the whole sequence converges. 
\end{proof}

\begin{proof}[Proof of Lemma~\ref{l:conv-mu-beta-n}]
The assertion of Lemma~\ref{l:conv-mu-beta-n} follows from that of Lemma~\ref{l:mu-beta-n-e-limit} by integration: for any $\varphi\in \rmC_\rmb(\bfZ\ti\bfX)$ we have 
\begin{align*}
\MoveEqLeft\int\limits_{\bfZ\ti\bfX}\varphi(z,\eta) \, \mu_{\beta,n}(\rmd z \dd\eta)\ 
\stackrel{\eqref{eqdef:mu-beta-n-e}}=   \int\limits_{\bfZ\ti\bfW\ti\Xi^\bfX\ti\R}\varphi(z,w+\xi) \, \mu_{\beta,n,e}(\rmd z \dd w\dd\xi\dd e )\\
&\leftstackrel{n\to\infty}\longrightarrow \frac1{\sfZ_{\beta,\calE_0}}  
\int\limits_{\bfZ\ti\bfW\ti\Xi^\bfX\ti\R}\varphi(z,w+\xi) \,
\ee^{\beta e} \,\pdelta\pra[\big]{\Hzw(z,w) - \calE_0 + e} (\rmd z \dd w)  \;(\wh\bbQ_\#\gamma_\beta)(\rmd \xi)\dd e\\
&\leftstackrel{\eqref{eqdef:pdelta}}=\frac1{\sfZ_{\beta,\calE_0}}  
\int\limits_{\bfZ\ti\bfW\ti\Xi^\bfX}\varphi(z,w+\xi) \,
\ee^{-\beta \Hzw(z,w) +\beta \calE_0}   \;(\wh\bbQ_\#\gamma_\beta)(\rmd \xi)\dd z \dd w\\
&= \frac1{\wt \sfZ_{\beta}}  
\int\limits_{\bfZ\ti\bfW\ti\Xi^\bfX}\varphi(z,w+\xi) \,
\ee^{-\beta \Hzw(z,w) }  (\wh\bbQ_\#\gamma_\beta)(\rmd \xi) \dd z \dd w  \\
&\leftstackrel{\eqref{char:Htot-Hzw-Q}}=
\int_{\bfZ\ti\bfX} \varphi(z,\eta) \mu_\beta(\rmd z\dd \eta).\qedhere
\end{align*}
\end{proof}

\section{Dilations}
\label{sec:Appendix-Dilations}

The notion of dilations and compressions was introduced in Section~\ref{de:ContractDilat}. 
We now recall the existence of dilations.

\begin{theorem}[{\cite[Thm.\,8.1]{SznFoi70HAOH}}] 
  \label{th:Dilation}
  Every strongly continuous contraction semigroup $(C_t)_{t\geq 0}$ on a Hilbert space $\bfG$ admits a
  dilation $(S_t)_{t\in \R}$ on a Hilbert space $\bfH\supset \bfG$.

  Moreover, there is a minimal dilation in the sense that
  \[
    \mafo{span}\Big( \bigcup\nolimits_{t\in \R} S_t \bfG \Big)\ \text{ is dense in \ } \bfH.
  \]
  Any two minimal dilations $\big( \bfH, (S_t)_{t\in \R}\big)$ and $\big( \wt\bfH, (\wt S_t)_{t\in \R}\big)$
  are isomorphic in the sense that there exists a Hilbert-space isomorphism $\Psi\colon\bfH \to \wt\bfH$ such
  that $\wt S_t = \Psi S_t \Psi^{-1}$ for all $t \in \R$.
\end{theorem}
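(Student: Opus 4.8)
The final statement to prove is Theorem~\ref{th:Dilation}, the existence and uniqueness (up to isomorphism) of a minimal dilation of a strongly continuous contraction semigroup. This is a classical result of Sz.-Nagy and Foias, so the plan is essentially to reconstruct their argument, adapted to the continuous-parameter setting.

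\medskip

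\noindent\textbf{Plan of proof.}

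The plan is to proceed in three stages: (i)~construct \emph{some} dilation; (ii)~cut it down to a minimal one; (iii)~prove uniqueness of the minimal dilation.

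For stage (i), I would first reduce to the unitary-dilation problem for a single contraction. The key algebraic fact is that for a strongly continuous contraction semigroup $(C_t)_{t\ge0}$ one still has the Sz.-Nagy positivity: the kernel $(s,t)\mapsto C_{t-s}$ (with $C_{-r}:=C_r^*$ for $r>0$) is positive definite on $\R\times\R$, in the sense that for all finite collections $t_1,\dots,t_n\in\R$ and $g_1,\dots,g_n\in\bfG$,
\[
  \sum_{i,j} \ip{C_{t_i-t_j}g_j}{g_i}_\bfG \ \geq\ 0,
\]
where the sum is understood with the $*$-convention above. This follows from the scalar lemma that $\ip{C_t g}{g}$ is a positive-definite function of $t$ together with the semigroup property; alternatively one invokes the one-parameter version of the Sz.-Nagy dilation theorem directly, or the Naimark dilation of the associated positive-definite operator function. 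Given this positivity, one builds $\bfH$ as the completion of the algebraic span of formal symbols $\{(t,g): t\in\R,\ g\in\bfG\}$ with respect to the (pre-)inner product induced by the kernel, quotienting out the null space; one then identifies $\bfG$ with the image of $\{(0,g)\}$, checks it is a closed subspace with the correct inner product, defines $S_t$ by the shift $(s,g)\mapsto(s+t,g)$, verifies that $S_t$ extends to a unitary group (isometry from positive-definiteness, invertibility from $S_{-t}$, strong continuity from the strong continuity of $C_t$ and a standard $3\varepsilon$ argument on the dense span), and finally checks $\bbP S_t|_\bfG = C_t$ for $t\ge0$ by computing $\ip{S_t(0,g_1)}{(0,g_2)}_\bfH = \ip{C_t g_1}{g_2}_\bfG$.

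For stage (ii), set $\bfH_0 := \overline{\operatorname{span}}\big(\bigcup_{t\in\R} S_t\bfG\big)$. Since $S_t\bfH_0\subseteq\bfH_0$ and $S_{-t}\bfH_0\subseteq\bfH_0$, the restriction $S_t|_{\bfH_0}$ is again a unitary group; and since $\bfG\subseteq\bfH_0$, the orthogonal projection of $\bfH_0$ onto $\bfG$ agrees with $\bbP|_{\bfH_0}$, so $\bfH_0$ with $(S_t|_{\bfH_0})$ is still a dilation, and minimal by construction. In fact, the construction in stage (i) already yields a minimal dilation, since the span of $\{(t,g)\}$ is dense by fiat; so this step is mostly a remark.

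For stage (iii), given two minimal dilations $(\bfH,(S_t))$ and $(\wt\bfH,(\wt S_t))$, define $\Psi$ on the dense set $\operatorname{span}\bigcup_t S_t\bfG$ by $\Psi(S_t g) := \wt S_t g$ for $g\in\bfG$, extended linearly. The crucial computation is that $\Psi$ preserves inner products on this dense set: for $g_1,g_2\in\bfG$ and $t_1,t_2\in\R$,
\[
  \ip{S_{t_1}g_1}{S_{t_2}g_2}_\bfH = \ip{S_{t_1-t_2}g_1}{g_2}_\bfH
  = \ip{\bbP S_{t_1-t_2}g_1}{g_2}_\bfG = \ip{C_{|t_1-t_2|}^{(\pm)}g_1}{g_2}_\bfG,
\]
using the $*$-convention for the sign, and the same identity holds verbatim in $\wt\bfH$; hence the two sides agree. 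This shows $\Psi$ is a well-defined isometry on a dense subspace, so it extends to an isometry $\bfH\to\wt\bfH$; surjectivity follows from minimality of $\wt\bfH$ (the range of $\Psi$ contains $\operatorname{span}\bigcup_t\wt S_t\bfG$, which is dense). Finally $\wt S_t\Psi = \Psi S_t$ holds on the dense set since $\wt S_t\wt S_s g = \wt S_{t+s}g = \Psi S_{t+s}g = \Psi S_t S_s g$, and extends by continuity.

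\medskip

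\noindent\textbf{Main obstacle.} The only genuinely non-routine point is establishing the positive-definiteness of the operator kernel $(s,t)\mapsto C_{t-s}$ with the adjoint convention, i.e.\ the passage from the scalar fact ``$t\mapsto\ip{C_{|t|}g}{g}$ is positive definite'' (or its operator analogue) to the two-variable statement that underlies the Gram construction. Once that is in hand, everything else is the standard GNS-type completion together with bookkeeping about strong continuity and minimality. A secondary, purely technical nuisance is verifying strong continuity of the constructed group $(S_t)$: one controls $\|S_t v - v\|$ for $v$ in the dense span using $\|S_t(s,g)-(s,g)\|^2 = 2\operatorname{Re}\ip{(s,g)-(s+t,g)}{(s,g)}$ and the continuity of $r\mapsto\ip{C_r^{(\pm)}g}{g}$ near $r=0$, then passes to general $v$ by density and the uniform bound $\|S_t\|=1$. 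Since this is a textbook result (\cite{SznFoi70HAOH}, \cite{NagyFoiasBercoviciKerchy10}), in the paper I would state it with a reference and give at most this sketch rather than the full construction.
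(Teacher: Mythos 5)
Your outline is a faithful reconstruction of the classical Sz.-Nagy--Foias argument, but it is worth noting that the paper does not prove Theorem~\ref{th:Dilation} at all: it is stated with a citation to \cite[Thm.\,8.1]{SznFoi70HAOH}, and the constructive content the paper actually supplies is different in kind. Instead of the abstract GNS-type completion of formal symbols $(t,g)$ modulo the null space of the positive-definite kernel, the paper builds (in Proposition~\ref{pr:ExplicitDilation}) an explicit \emph{functional model} for the special case it needs: $\bfW$ finite-dimensional, $\|\ee^{-t\bbD}\|\le\ee^{-\delta t}$, dilation space $\rmL^2(\R;\bfW)$ with the shift group, and embedding $\Phi w = \Sigma_\bbD\,\ee^{y\bbD}w\,\One_{]-\infty,0]}(y)$; minimality is then checked by an explicit density argument, and uniqueness up to isomorphism is quoted from the abstract theorem to transport this model back to the given dilation. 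Your approach proves the general statement; the paper's buys a concrete realization on which the extension to $\bfX$ and the Ornstein--Uhlenbeck identification can be carried out explicitly. The two are complementary, and your uniqueness argument (isometry on the dense span $\operatorname{span}\bigcup_t S_t\bfG$ via the kernel identity, then extension and intertwining by density) is exactly the standard one.

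One soft spot: the claim that positive definiteness of the operator kernel $(s,t)\mapsto C_{t-s}$ (with $C_{-r}:=C_r^*$) ``follows from the scalar lemma that $\ip{C_t g}{g}$ is positive definite together with the semigroup property'' is too quick. The scalar statement is the special case $g_j=c_j g$ of the operator statement and does not imply it; the actual proof in Sz.-Nagy--Foias proceeds via an explicit sum-of-squares decomposition of the quadratic form using defect operators (in the discrete case) and a passage to the continuous parameter via the cogenerator. Since you flag this as the main obstacle and defer to the references --- as the paper itself does for the whole theorem --- this is acceptable for a sketch, but the first of your three proposed justifications should be dropped in favour of the other two.
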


It turns out that even for finite-dimensional $\bfG$ the dilation space $\bfH$
will be infinite-dimensional as soon as $(C_t)_{t\geq 0}$ is not unitary, i.e.,
as soon as $C_t$ is a non-trivial contraction.  Moreover, the subspace $\bfG$
cannot be contained in the domain of the generator of the unitary group
$(S_t)_{t\in \R}$, as we now show.

\begin{RmRemark}[Domains of the generators]
  \label{re:Dom.JB.bfW}
  The unitary group has a densely-defined generator
  $\bbJ\colon\mafo{dom}(\bbJ)\subset \bfH \to \bfH$ which is skew-adjoint,
  i.e., $\bbJ^*=-\bbJ$. For $h_0\in \mafo{dom}(\bbJ)$ the function
  $h(t)=S_th_0=\ee^{t\bbJ }{h_0}$ is differentiable and satisfies
  $\dot h(t)= \bbJ h(t)$.

  The {contraction} semigroup can be written as $C_t= \ee^{-t\bbD}$ for a
  generator $\bbD$ satisfying $\bbD+\bbD^*\geq 0$. If $C_t$ is not unitary,
  then $\bbD$ is not skew-adjoint and there exists a $g\in \bfG$ such that
  $\gamma:=\ip{g}{\bbD{g}}_\bfG>0$. We set $p(t)=\|\ee^{-t\bbD}g\|^2$ and find
  $\lim_{t\to 0^+}\dot p(t)= -2\gamma$. Assuming $g\in \mafo{dom}(\bbJ)$, the
  dilation property gives
  $\dot p(0)=\frac{\rmd}{\rmd t} \|\bbP \ee^{t\bbJ}g\|^2 \big|_{t=0}= 2
  \ip{\bbP g}{\bbP \bbJ g }_\bfH= 2 \ip{ g}{ \bbJ g }_\bfH=0$.  This is a
  contradiction, and we conclude $g\not\in\mafo{dom}(\bbJ)$.

  Hence, $\bbJ$ is an unbounded operator, and $\bfH$ is
  infinite-dimensional. If $\ip{g}{\bbD g}_\bfG >0$ for every $g\not=0$, then
  we even have $\mafo{dom}(\bbJ)\cap \bfG = \{0\}$.
\end{RmRemark}

In the context of this paper, it is advantageous to generalize the notion of dilation slightly.

\begin{definition}[Generalized dilation] 
  \label{de:GenerDilat}
  A triple $\big(\bfH,\Phi,(S_t)_{t\in \R}\big)$ is called a \emph{generalized dilation} of the contraction
  semigroup $\big(\bfW,(\wh C_t)_{t\geq 0}\big)$ if $(S_t)_{t\in \R} $ is a strongly continuous unitary group on $\bfH$ 
  and the linear map $\Phi\colon\bfW\to \bfH$ satisfies
  \[
    \Phi^*\Phi= \mafo{id}_\bfW \quad \text{and} \quad 
    \Phi^*S_t \Phi = \wh C_t \in \mafo{Lin}(\bfW;\bfW) \text{ \ for all } t\geq 0. 
  \]
\end{definition}

Note that from this definition we see that $\bbP=\Phi\Phi^*\colon\bfH\to \bfH$ is an orthogonal projection
onto $\bfG= \Phi \bfW \subset \bfH$. In particular, setting $C_t= \Phi \wh C_t \Phi^*|_\bfG $ the unitary
group $\big(\bfH,(S_t)_{t\in \R}\big)$ is a dilation of $\big(\bfG, (C_t)_{t\geq 0}\big)$ in the sense of
Definition~\ref{de:ContractDilat}. However, since $\Phi\colon\bfW\to \bfG$ is an isometry with inverse
$\Phi^*|_\bfG$, the two contraction semigroups $ (C_t)_{t\geq 0}$ and $(\wh C_t)_{t\geq 0}$ are isomorphic. As
a consequence, all minimal dilations are isomorphic to any minimal generalized dilation, which is defined via
the minimality condition that $\bigcup_{t\in \R} \Phi \big(C_t\bfW\big)$ is dense in $\bfH$.

\subsection{An explicit construction of a generalized dilation}
\label{su:ExplicGenDilat}

Here we repeat the explicit construction of a generalized dilation from~\cite[Thm.\,2.1]{LaxPhi89ST}
or~\cite[Sec.\,1.2.3]{KumSch83SUD} and give the full proofs because of its importance of compressions and dilations for our analysis.

We now write the contraction semigroup $\big(\bfW,(C_t)_{t\geq 0}\big)$ in the form $C_t = \ee^{-t
  \bbD}$. Moreover, we follow our assumptions in Section~\ref{s:micro} and assume
\begin{equation}
  \label{eq:Dilat.Assum}
  \mafo{dim}(\bfW)<\infty \quad \text{and} \quad \exists\,\delta>0:\ \|
    \ee^{-t\bbD}\| \leq \ee^{-\delta t} \text{ for all } t \geq 0.  
\end{equation}
To construct a generalized dilation we  define
\[
  \ol\bfH:= \rmL^2(\R;\bfW) \text{ with }  \|f\|_{{\ol\bfH}}^2 =\int_\R\|f(y)\|_\bfW^2
  \dd y 
\quad \text{and} \quad \bfG:=\bigset{\Phi v \in \ol\bfH}{v\in \bfW},
\]
where the bounded linear operator $\Phi\colon\bfW\to \bfG \subset \ol\bfH$ is given by
\begin{equation}
  \label{eq:def.Phi}
  (\Phi w)(y) := \Sigma_\bbD\, \ee^{y \bbD}w \,\bm1_{{]{-}\infty,0]}}(y), \quad
  \text{with } \Sigma_\bbD:= (\bbD{+}\bbD^*)^{1/2},
\end{equation}
with adjoint operator $ \Phi^*\colon  \ol\bfH \to \bfW$,
$\Phi^* f = \int_{-\infty}^0 \ee^{y\bbD^*}\,\Sigma_\bbD\, f(s) \dd y$.

In contrast to~\cite{LaxPhi89ST,KumSch83SUD}, we included the factor $\Sigma_\bbD$ into the mapping $\Phi$ to obtain a
isometry in the given norm. For the convenience of the reader we provide the full proof that
$\big( \ol\bfH,\Phi,(\bbS_t)_{t\in \R}\big)$ is a generalized dilation for
$\big(\bfW,(\ee^{-t\bbD})_{t\geq 0}\big)$.

\begin{proposition}[Explicit dilation]
  \label{pr:ExplicitDilation}
  Assume that $\big(\bfW,(\ee^{-t\bbD})_{t\geq 0}\big)$ satisfies~\eqref{eq:Dilat.Assum}, define $ \ol\bfH$,
  $\bfG$, and $\Phi\colon\bfW\to \bfG\subset \bfG$ as above, and set
  $\bbP =\Phi\Phi^*\colon \ol\bfH\to \bfG\subset \ol\bfH$.
\begin{enumerate}
  \item \label{pr:ExplicitDilation:i:projection} 
  We have \ $ \Phi^*\Phi=\bbI_\bfW$, \quad $\bbP=\bbP^2 = \bbP^* $, \quad $\mafo{range}\,
  \bbP=\bfG$. \\
  In particular, $\Phi\colon\bfW\to \ol\bfH$ is an isometry and $\Phi\colon\bfW \to \bfG$ is unitary with
  inverse $\Phi^*|_\bfG$. Moreover, $\bbP$ is the orthogonal projector from $ \ol\bfH$ onto $\bfG$.

  \item\label{pr:ExplicitDilation:i:dilation}
  Defining the strongly continuous unitary shift group $(\bbS_t)_{t\in \R}$ on $ \ol\bfH=\rmL^2(\R;\bfW)$
  via $\bbS_tf= f(\cdot -t)$ the semigroup $\big(\bfW,(\ee^{-t\bbD})_{t\geq 0}\big)$ has the generalized
  dilation $\big( \ol\bfH,\Phi,(\bbS_t)_{t\in \R} \big)$. Moreover, for all $t \in \R$ and
  $g\in \bfG=\Phi\bfW$ we have the relation
  \begin{equation}
    \label{eq:Dilat.5.3}
    \bbP \, \bbS_t g = \bfR(t) g, \quad  \text{where } \bfR(t)
    = \left\{ \ba{cl}   \Phi \,\ee^{-t\bbD}\,\Phi^* & \text{for } t\geq 0,\\[-0.2em] 
      \Phi \,\ee^{\,t\,\bbD^*} \, \Phi^* & \text{for } t\leq 0. \ea\right. 
  \end{equation}
\end{enumerate}
\end{proposition}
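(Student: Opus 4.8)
\textbf{Plan for the proof of Proposition~\ref{pr:ExplicitDilation}.}

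For part~\ref{pr:ExplicitDilation:i:projection} the plan is to start from the identity~\eqref{eq:bbD.relation}, i.e.\
\[
\int_{-\infty}^0 \ee^{y\bbD^*}(\bbD{+}\bbD^*)\,\ee^{y\bbD}\dd y = \int_{-\infty}^0 \frac{\rmd}{\rmd y}\bra[\big]{\ee^{y\bbD^*}\ee^{y\bbD}}\dd y = \bbI_\bfW,
\]
where the boundary term at $y\to-\infty$ vanishes because of the exponential decay~\eqref{eq:Dilat.Assum} and the one at $y=0$ gives $\bbI_\bfW$. With $\Sigma_\bbD=(\bbD+\bbD^*)^{1/2}$, this immediately yields $\Phi^*\Phi=\bbI_\bfW$ by the computation $\ip{\Phi w_1}{\Phi w_2}_{\ol\bfH}=\int_{-\infty}^0\ip{\Sigma_\bbD\ee^{y\bbD}w_1}{\Sigma_\bbD\ee^{y\bbD}w_2}_\bfW\dd y = \ip{w_1}{w_2}_\bfW$. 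From $\Phi^*\Phi=\bbI_\bfW$ the standard algebra gives $\bbP^2=\Phi\Phi^*\Phi\Phi^*=\Phi\Phi^*=\bbP$ and $\bbP^*=(\Phi\Phi^*)^*=\bbP$, so $\bbP$ is an orthogonal projection; and $\bbP\Phi w=\Phi(\Phi^*\Phi w)=\Phi w$ together with $\ran\bbP\subseteq\ran\Phi=\bfG$ gives $\ran\bbP=\bfG$. Finally, $\Phi$ being an isometry onto $\bfG$ with inverse $\Phi^*|_\bfG$ is just a restatement of $\Phi^*\Phi=\bbI_\bfW$.

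For part~\ref{pr:ExplicitDilation:i:dilation}, the shift group $(\bbS_t)_{t\in\R}$ on $\rmL^2(\R;\bfW)$ is visibly unitary and strongly continuous, so the content is the identity~\eqref{eq:Dilat.5.3} (from which $\Phi^*\bbS_t\Phi=\ee^{-t\bbD}$ for $t\ge0$ follows by applying $\Phi^*$ and using $\Phi^*\Phi=\bbI$, establishing the generalized-dilation property). I would compute $\bbP\bbS_t g$ for $g=\Phi w$ directly. For $t>0$, using $(\Phi w)(s{-}t)=\Sigma_\bbD\ee^{(s-t)\bbD}w\,\bm1_{]-\infty,0]}(s{-}t)$ and noting $s\le0$ forces $s-t<0$, one gets
\[
\bbP(\bbS_t\Phi w)=\Phi\int_{-\infty}^0\ee^{s\bbD^*}\Sigma_\bbD^2\,\ee^{(s-t)\bbD}w\dd s
=\Phi\Bigl(\int_{-\infty}^0\ee^{s\bbD^*}\Sigma_\bbD^2\ee^{s\bbD}\dd s\Bigr)\ee^{-t\bbD}w
\overset{\eqref{eq:bbD.relation}}=\Phi\,\ee^{-t\bbD}w=\Phi\ee^{-t\bbD}\Phi^*g.
\]
For $t<0$ the indicator $\bm1_{]-\infty,0]}(s{-}t)$ restricts the integral to $s\le t<0$; after the substitution $s=t+r$ with $r\le0$ and pulling out $\ee^{t\bbD^*}$ one again invokes~\eqref{eq:bbD.relation} to obtain $\bbP(\bbS_t\Phi w)=\Phi\ee^{t\bbD^*}w=\Phi\ee^{t\bbD^*}\Phi^*g$. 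Both cases give~\eqref{eq:Dilat.5.3}. The minimality claim (that $\bigcup_{t}\Phi(\ee^{-t\bbD}\bfW)$, equivalently $\bigcup_t\bbS_t\bfG$, is dense in $\ol\bfH$) follows because translates of the span of $\{y\mapsto\Sigma_\bbD\ee^{y\bbD}w\,\bm1_{]-\infty,0]}\}$ generate functions supported on all half-lines, whose span is dense in $\rmL^2(\R;\bfW)$; here invertibility of $\Sigma_\bbD$ (from the strict eigenvalue positivity of $\bbD+\bbD^*$) is needed to see that at each point $y$ one recovers all of $\bfW$.

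The only genuinely delicate point is the convergence of the improper integrals at $-\infty$: one must justify that $\ee^{y\bbD}$, $\ee^{y\bbD^*}$ decay exponentially as $y\to-\infty$, which is exactly assumption~\eqref{eq:Dilat.Assum}, so that $\Phi$ is a well-defined bounded operator, the integral~\eqref{eq:bbD.relation} converges with vanishing boundary term, and all interchanges of integration with the bounded operator $\Phi$ are legitimate. Everything else is bookkeeping with the fundamental-theorem-of-calculus identity~\eqref{eq:bbD.relation}.
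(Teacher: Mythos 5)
Your proposal is correct and follows essentially the same route as the paper: the identity~\eqref{eq:bbD.relation} obtained from the fundamental theorem of calculus (with the boundary term at $-\infty$ killed by~\eqref{eq:Dilat.Assum}) yields $\Phi^*\Phi=\bbI_\bfW$ and hence the projector properties of $\bbP$, and the two-case computation of $\Phi^*\bbS_t\Phi w$ for $t\geq 0$ and $t\leq 0$ (splitting on where the indicator $\bm1_{]-\infty,0]}(s{-}t)$ is supported and pulling out $\ee^{-t\bbD}$ resp.\ $\ee^{t\bbD^*}$) is exactly the paper's argument. Your remarks on minimality and on the convergence of the improper integrals are extra but consistent with the paper, which treats minimality in a separate remark.
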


\begin{proof} \emph{Part~\ref{pr:ExplicitDilation:i:projection}}. The property of an isometry is based on the standard relation
\begin{align}
  \label{eq:bbD.relation}
\int_{-\infty}^0 \!\! \big(\Sigma_\bbD \ee^{y\bbD} \big)^*\,
  \Sigma_\bbD\ee^{y\bbD} \dd y =
\int_{-\infty}^0 \!\!\ee^{y\bbD^*}(\bbD^* {+}\bbD)\,\ee^{y\bbD} \dd y =
\int_{-\infty} ^0 \!\frac\rmd{\rmd y} \big( \ee^{y\bbD^*}\ee^{y\bbD} \big) \dd y
= \bbI_\bfW. 
\end{align}
To establish $\Phi^*\Phi= \bbI_\bfW$ we simply calculate, for $w_1,w_2\in
\bfW$,  
\begin{align*}
( \Phi w_1| \Phi w_2)_{\bfH}&= \int_{-\infty}^0 \!\! 
\big( \Sigma_\bbD \ee^{y\bbD}w_1\big| \Sigma_\bbD \ee^{y\bbD}w_2\big)_\bfW \dd y 
= \int_{-\infty}^0 \!\!\big( \ee^{y \bbD^*}\Sigma_\bbD^2 \ee^{y \bbD} w_1
\big| w_2\big)_\bfW \dd y  \overset{\text{\eqref{eq:bbD.relation}}} 
= ( w_1 | w_2)_\bfW .
\end{align*}

With this and $\bbP=\Phi\Phi^*$ the identities $\bbP^2= \bbP$ and $\bbP= \bbP^*$ follow immediately, i.e.,
$\bbP$ is an orthogonal projector. Moreover, $\bbP f = \Phi(\Phi^*f)$ implies
$\mafo{range}\,\bbP \subset \mafo{range}\,\Phi = \bfG$, whereas $\bbP \Phi v= \Phi\Phi^*\Phi w= \Phi w$ shows
$\mafo{range}\,\bbP \supset \mafo{range}\,\Phi = \bfG$.\medskip

\emph{Part~\ref{pr:ExplicitDilation:i:dilation}}. Now we consider $t\geq 0$ and $ v\in \bfW$ and obtain
\begin{align*}
 \Phi^*\big(\bbS_t \Phi w\big)  &
=  \int_{-\infty}^0 \!\!  \ee^{s \bbD^*} \, \Sigma_\bbD  
      \big(\Phi w \big)(s{-}t) \dd s 
 =  \int_{-\infty}^0 \!\!  \ee^{s \bbD^*} \Sigma^2_\bbD\, 
\ee^{(s-t)\bbD} w \,\bm1_{]{-}\infty,0]}(s{-}t) \,\dd s
\\
& 
 =  \int_{-\infty}^0 \!\! \ee^{s \bbD^*} \Sigma_\bbD^2 \,
 \ee^{s\bbD}  \dd s \ \ee^{-t\bbD}w
 \overset{\text{\eqref{eq:bbD.relation}}} = \ee^{-t\bbD}w.
\end{align*}
Hence, $\big(\ol\bfH,\Phi,(\bbS_t)_{t\in \R} \big)$ is a generalized dilation in the sense of
Definition~\ref{de:GenerDilat}.

For $t \leq 0$ we have $\bm1_{]{-}\infty,0]}(s{-}t)=0$ for $s>t$ and obtain
\begin{align*}
\Phi^*\big(\bbS_t  \Phi w\big)&
= \int_{-\infty}^t \!\!  \ee^{s \bbD^*} \Sigma^2_\bbD 
\ee^{(s-t)\bbD} w \,\dd s  =  \int_{-\infty}^0 \!\! 
\ee^{t\bbD^*}\ee^{r\bbD^*} \Sigma_\bbD^2\, \ee^{r\bbD} w \,\dd s  
 =\ee^{t\bbD^*} w. 
\end{align*}
Applying the operator $\Phi$ to the two relations for $\Phi^*\big(\bbS_t \Phi w\big) $ (for $t\geq 0$ and $t\leq 0$, respectively)
 and replacing $\Phi w$ by $g=\Phi w$ such that $w= \Phi^*g$, we find the final
relation~\eqref{eq:Dilat.5.3}.
\end{proof}

The major advantage of this explicit generalized dilation is that it
introduces, in a very natural way, the time shift $\bbS_t$
on the Hilbert space $\ol\bfH = \rmL^2(\R;\bfW)$ which will take over the place of the linear evolution
$\ee^{t \JHB}$. The specification to the time shift is advantageous as the time shift is also the
natural variable in the random processes $(Y_t)_{t\in \R}$.

\begin{RmRemark}[Minimality]
  The above construction of a generalized dilation also allows us to show that the dilation is minimal. For
  this we have to show that
  \[
    \mathfrak G:= \mafo{span}\bigset{ \bbS_t g}{ t\in \R,\ g\in \bfG} 
    =  \mafo{span}\bigset{ \bbS_t (\Phi w)}{ t \in \R , \ w\in \bfW }  
  \] 
  is dense in $\ol\bfH = \rmL^2(\R;\bfW)$. For $h\in {]0,1[}$ and $w\in \bfW$, we define the functions
  \[
    \Psi_{h,w} := \bbS_h\big(\Phi w) - \Phi w: \ y \mapsto  \begin{cases} 
      0             & \text{for }y> h,\\ 
      \Sigma_\bbD\ee^{(y-h)\bbD}w& \text{for }y\in {]0,h]} , \\
      \Sigma_\bbD \ee^{y\bbD} \big(\ee^{-h\bbD}{-}\bbI\big) w& \text{for } y\leq 0.
    \end{cases}
  \]
  Clearly, $\Psi_{h,w} \in \mathfrak G$.  Using $\mafo{dim}(\bfW)<\infty $ we have
  $\|\ee^{-h\bbD}{-}\bbI\|\leq C h$ for $h\in [0,1]$ and obtain
  \[
    \big\| \Psi_{h,w} - w\bm1_{{[0,h[}} \big\|_{\rmL^2(\R;\bfW)} \leq C_* h \| w\|_\bfW,
  \]
  with a constant $C_*$ depending only on $\bbD$.

  To show that $f\in \rmL^2(\R;\bfW)$ can be approximated by linear combinations
  of $\bbS_t \Psi_{h,w}$, we fix $\eps>0$ and find a piecewise constant $\ol
  f_N\in \rmL^2(\R;\bfW)$ such that
  \[
    \big\| f - \ol f_N\big\|_{\rmL^2(\R;\bfW)} < \eps/2 \quad \text{and} \quad 
    \ol f_N= \sum_{j=-N^2}^{N^2} w^N_j \bm1_{{[j/N,(j+1)/N[}} \ \text{ with }N\in \N
    \text{ and } w^N_j\in \bfW.
  \]
  Here $N$ can be made as large as desired, by refining the partition. We now approximate $\ol f$ by
  $\wh f_N=\sum_{-N^2}^{N^2} \bbS_{j/N}\Psi_{1/N,w^N_j}$, which clearly lies in $ \mathfrak G$, and obtain
  \begin{align*}
    & \textstyle \big\| \ol f_N - \wh f_N\|_{\rmL^2(\R;\bfW)}^2 = \sum_{j=-N^2}^{N^2} \big\|
      w_j^N \bm1_{{[j/N,(j+1)/N[}} - \ol\bbS_{j/n}\Psi_{1/N,w_j^N}
      \|_{\rmL^2(\R;\bfW)}^2 \\
    & \textstyle \leq \ \sum_{j=-N^2}^{N^2} C_*^2 (1/N)^2\|w^N_j\|_\bfW^2 \  = \  \frac{C_*}N \,
      \big\| \ol f_N\big\|^2_{\rmL^2(\R;\bfW)} \ \leq \ \frac{C_*}N\,\big( \, \|
      f\|^2_{\rmL^2(\R;\bfW)} {+}\frac\eps2\, \big)^2 .   
  \end{align*}
  Choosing $N\in \N$ sufficiently big, the last term is less than $\eps^2/4$, and thus we conclude
  $\| f - \wh f_N\|_{\rmL^2(\R;\bfW)} < \eps$ as desired.  Hence, $\mathfrak G$ is dense in
  $\ol\bfH=\rmL^2(\R;\bfW)$.
\end{RmRemark}

\subsection{Continuous extension of the group $(\ee^{t\JHB})_{t\in \R}
  $ from $\bfH$ to $\bfX$}
\label{ss:extens-chosen-contin}

In Assumption \ref{ass:HB-evol-op-ct-on-X} we assumed that the unitary group
$\big(\bfH,(\ee^{t\JHB})_{t\in \R}\big)$ describing the heat bath can be
extended to a strongly continuous group on the larger space $\bfX$, on which we
are able to realize the Gaussian measure $\gamma_\beta$ for the
finite-temperature heat bath.  We will show in Proposition~\ref{pr:Ass3.2holds}
that this assumption is already a consequence of the dilation property from
Assumption~\ref{ass:dilation}.

As a first step we construct the Gaussian measure $\ol\gamma_\beta$ on an
extension $\ol\bfX$ of the special space $\ol\bfH=\rmL^2(\R;\bfW)$. This leads
to a classical white noise process, but now taking values in the
finite-dimensional Hilbert space $\bfW$. While translations $\bbS_t$ can easily be
defined for the white noise process, we have to be more careful and construct a
Hilbert space $\ol\bfX$ on which we can extend the unitary group
$\big(\ol\bfH,(\bbS_t)_{t\in \R}\big)$ to a strongly continuous group
$\big(\ol\bfX,(\bbS^\bfX_t)_{t\in \R}\big)$, which will be no longer
unitary. If $\bbN\colon\ol\bfH\to \ol\bfX$ is the embedding operator, then the
extension property means 
\begin{equation}
  \label{eq:Dil.ExtensProp}
\forall\: t\in \R\ \ \forall\: h \in \ol\bfH:\quad   \bbS^\bfX_t \bbN h  = \bbN
\bbS_t h.  
\end{equation}
As explained in Section \ref{ss:Microscopic-positive-temperature} we further
need that $\bbC=\bbN \bbN^*\colon\ol\bfX \to \ol\bfX$ is trace class. 

We give an explicit construction following the ideas in \cite{Reyb06OCS}. We
define the self-adjoint operator
$\bbA\colon \Dom(\bbA) \subset \ol\bfH \to \ol\bfH$ via
\begin{align*}
&\Dom(\bbA)= \bigset{ u \in \rmH^2(\R;\bfW) }{t \mapsto (1{+}t^2)u(t) \in
    \rmL^2(\R;\bfW) },
\\
& (\bbA u)(t) = -\frac12 u''(t) + \frac{t^2}2\, u(t).
\end{align*}
Here $\rmH^2(\R;\bfW)$ is the Sobolev space of measurable $\rmL^2$-functions
$u\colon\R\to\bfW$ with two weak derivatives in $\rmL^2$.  This is a $\bfW$-valued
version of the quantum harmonic oscillator, and hence is self-adjoint with the
discrete spectrum $\bigset{n{-}1/2 }{n\in \N} \subset {[1/2,\infty)}$, where
all eigenvalues have multiplicity $\dim(\bfW) < \infty$. Consequently $\bbA$ is
an invertible operator, and its inverse $\bbA^{-1}$ is a Hilbert-Schmidt
operator. This implies that $\bbA$ satisfies the assumptions for the
construction in Section~\ref{app:ss:Gelfand-construction}, i.e.\ we can define
\[
\ol\bfV = \Dom(\bbA)= \rmH^2(\R;\bfW)\cap \rmL^2_{(2)}(\R;\bfW)  \quad
\text{and} \quad \ol\bfX=\ol\bfV^* = \rmH^{-2}(\R;\bfW)+\rmL^2_{(-2)}(\R;\bfW),
\]
where the subscripts ``$_{(m)}$'' denote the $\rmL^2$ space with measure
$(1{+}t^2)^m\rmd t$. In $\ol\bfV$ we choose the graph norm $\|u\|_{\ol\bfV}= \|
\bbA u \|_{\ol\bfH}$.  

Concerning the shift group $(\ol\bfH,\bbS_t)_{t\in \R}$ we observe that it can be
restricted to $\ol\bfV$, since $u \in \ol\bfV$ implies $\bbS_t u \in \ol\bfV$
for all $t$. Setting $\bbS^\bfV_t = \bbS_t|_{\ol\bfV}:\ol\bfV\to \ol\bfV$  we
have 
\begin{align*}
 \| \bbS^\bfV_t u\|_{\ol\bfV} &\leq C\big( \|({\bbS}_t u)''\|_{\rmL^2} +
 \|(1{+}s^2)({\bbS}_tu)(s)\|_{\rmL^2} \big) 
\\
&\leq 2C\big( \|u''\|_{\rmL^2} +
 (1{+}t^2)  \|(1{+}r^2)u(r)\|_{\rmL^2} \big) \leq 
(1{+}t^2) C\| u\|_{\ol\bfV}. 
\end{align*}
Hence, $\big(\ol\bfV,(\bbS^\bfV_t)_{t\in \R}\big)$ forms a strongly continuous group of shift
operators. By dualization (with respect to the duality in $\ol\bfH$) we can define the extension 
\[
\bbS^\bfX_t\colon \ol\bfX \to \ol\bfX;\ \ \bbS^\bfX_t = \big( \bbS^\bfV_{-t}\big)^*
\]
which provides the strongly continuous group $\big(\ol\bfX,(\bbS^\bfX_t)_{t\in
  \R}\big)$ satisfying $\|\bbS^\bfX_t\|_{\ol\bfX}\leq C^2(1{+}t^2)$.\medskip

\begin{proposition}[Assumption \ref{ass:HB-evol-op-ct-on-X} holds] 
\label{pr:Ass3.2holds}
If the contraction semigroup $\big(\bfG,(\ee^{-t\bbD})_{t\geq 0}\big) $
satisfies \eqref{eq:Dilat.Assum} and
$\big(\bfH,(\ee^{t\JHB})_{t\in \R}\big) $ is an arbitrary minimal
dilation, then there exists a larger space $\bfX$ with embedding
$\bbN:\bfH\to \bfX$ and a strongly continuous group $(\extG_t)_{t\in \R}$
such
\begin{align*}
&\bbN\bbN^*\colon\bfX\to \bfX \quad \text{is trace class}
\\
& \forall \, t\in \R\ \forall \, h \in \bfH:\quad 
\extG_t \bbN h = \bbN \ee^{t\JHB} h.
\end{align*}
\end{proposition}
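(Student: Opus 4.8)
The plan is to reduce the claim to the explicit generalized dilation of Proposition~\ref{pr:ExplicitDilation} together with its extension to a larger space built in Section~\ref{ss:extens-chosen-contin}, and then to transport everything to the given minimal dilation $\big(\bfH,(\ee^{t\JHB})_{t\in\R}\big)$ via the uniqueness of minimal dilations (Theorem~\ref{th:Dilation}). Write $\bfG$ for the finite-dimensional space carrying $\ee^{-t\bbD}$ (as in the statement), and recall from Section~\ref{su:ExplicGenDilat} the explicit objects: the Hilbert space $\ol\bfH:=\rmL^2(\R;\bfG)$, the isometry $\Phi\colon\bfG\to\ol\bfH$ with $\Phi^{*}\Phi=\bbI_\bfG$, and the shift group $(\bbS_t)_{t\in\R}$ on $\ol\bfH$, which together form a \emph{minimal} generalized dilation of $\big(\bfG,(\ee^{-t\bbD})_{t\geq0}\big)$ (minimality being the content of the Remark following Proposition~\ref{pr:ExplicitDilation}), with $\Phi^{*}\bbS_t\Phi=\ee^{-t\bbD}$ for $t\ge0$. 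Since $\Phi$ is unitary onto its range, this explicit dilation and the given one dilate the same contraction semigroup $\big(\bfG,(\ee^{-t\bbD})_{t\geq0}\big)$ (up to the fixed isometry $\Phi$), so Theorem~\ref{th:Dilation} provides a Hilbert-space isomorphism $\Psi\colon\bfH\to\ol\bfH$ intertwining the unitary groups, $\Psi\,\ee^{t\JHB}=\bbS_t\,\Psi$ for all $t\in\R$ (and one may arrange $\Psi|_\bfG=\Phi$, though this is not needed here).

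Next I would invoke the construction of Section~\ref{ss:extens-chosen-contin} and Section~\ref{app:ss:Gelfand-construction} applied to $\ol\bfH$: the $\bfG$-valued harmonic-oscillator operator $\bbA$ on $\ol\bfH$, whose inverse is Hilbert--Schmidt; the resulting Gelfand triple $\ol\bfV\hookrightarrow\ol\bfH\hookrightarrow\ol\bfX=\ol\bfV^{*}$ with embedding $\bbN_0\colon\ol\bfH\hookrightarrow\ol\bfX$ satisfying $\bbN_0\bbN_0^{*}=\bbA^{-2}$, which is trace class; and the strongly continuous extension group $(\bbS^\bfX_t)_{t\in\R}$ on $\ol\bfX$, which satisfies the extension property $\bbS^\bfX_t\bbN_0=\bbN_0\bbS_t$ for all $t\in\R$ (this is~\eqref{eq:Dil.ExtensProp}, with $\|\bbS^\bfX_t\|_{\ol\bfX}\le C^{2}(1+t^{2})$). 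With these in hand I would simply set
\[
\bfX:=\ol\bfX,\qquad \bbN:=\bbN_0\,\Psi\colon\bfH\to\bfX,\qquad \extG_t:=\bbS^\bfX_t .
\]

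The remaining verification is immediate. Unitarity of $\Psi$ gives $\bbN\bbN^{*}=\bbN_0\Psi\Psi^{*}\bbN_0^{*}=\bbN_0\bbN_0^{*}=\bbA^{-2}$, which is trace class on $\bfX$. The required intertwining is the chain
\[
\extG_t\,\bbN h=\bbS^\bfX_t\bbN_0\Psi h=\bbN_0\bbS_t\Psi h=\bbN_0\Psi\,\ee^{t\JHB}h=\bbN\,\ee^{t\JHB}h\qquad\text{for all }h\in\bfH,\ t\in\R,
\]
and the strong continuity of $(\extG_t)_{t\in\R}$ is precisely that of $(\bbS^\bfX_t)_{t\in\R}$ recorded in Section~\ref{ss:extens-chosen-contin}.

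The main obstacle --- essentially the only non-routine point --- is the first step: making precise that Theorem~\ref{th:Dilation} applies. One has to check that the explicit construction is minimal (granted by the Remark after Proposition~\ref{pr:ExplicitDilation}), that after the identification by $\Phi$ the explicit and the given dilations dilate one and the same contraction semigroup, and that the isomorphism $\Psi$ furnished by the theorem need only intertwine the one-parameter groups; no compatibility with the projections $\bbP$, $\wh\bbP$ or the coupling operator $\sfC$ enters the present statement. Once this is settled, the rest is pure bookkeeping with the Gelfand triple of Appendix~\ref{app:GaussianMeasures}.
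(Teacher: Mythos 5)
Your proposal is correct and follows essentially the same route as the paper's own proof: both transport the explicit shift dilation $(\ol\bfH,(\bbS_t))$ and its Gelfand-triple extension $(\ol\bfX,(\bbS^\bfX_t))$ back to the given minimal dilation via the unitary equivalence of minimal dilations from Theorem~\ref{th:Dilation}, and then define $\bbN$ as the composition of that unitary with the embedding $\ol\bfH\hookrightarrow\ol\bfX$. The verification of the trace-class and intertwining properties is the same bookkeeping in both arguments.
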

\begin{proof} We use the explicit construction of the generalized dilation
  $\big( \ol\bfH,(\bbS_t)_{t\in \R}\big)$ and of the explicit extension to
  $\big(\ol\bfX , (\bbS^\bfX_t)_{t\in \R}\big)$. 

By Theorem \ref{th:Dilation} the two dilations are unitarily equivalent by a
unitary map $U\colon\bfH \to \ol\bfH$. If $\ol\bbN$ is the embedding of $\ol\bfH$
into $\ol\bfX$, then we can define 
\[
\bfX := \ol \bfX, \quad \bbN:=\ol\bbN \,U, \quad \extG_t = \bbS^\bfX_t.
\]
Clearly, $\bbN\bbN^*= \ol\bbN U\,U^*\ol\bbN^* =\ol\bbN\,\ol\bbN^* $ is
trace class by the explicit construction. Moreover, 
\[
\extG_t\bbN h= \bbS^\bfX_t \ol\bbN Uh \overset{\text{(1)}}=  \ol\bbN\bbS_tUh 
= \ol\bbN\, U U^{-1}\bbS_tUh \overset{\text{(2)}}= \bbN\, \ee^{t\JHB} h
\]
holds for all $h\in \bfH$. In $\overset{\text{(1)}}=$ we used the explicit
extension of $\bbS^\bfX_t$ of $\bbS_t$ with embedding $\ol\bbN$, and in
$\overset{\text{(2)}}= $ we used the definition of $\bbN$ and that the two
dilations are equivalent via the unitary operator~$U$.  
\end{proof}
\medskip

The advantage of using the explicit generalized dilation involving
$\ol\bfH=\rmL^2(\R;\bfW)$ becomes even more apparent if we use the structure of
the mapping $\Phi\colon\bfW\to \bfG\subset \ol\bfH$ and
$\Phi^*\colon\ol\bfH\to \bfW$.  Combining $\Phi^*$ and the shift $\bbS_t$ we
obtain the mapping
\[
\calU: \ol\bfH \to \ol\bfH; \quad (\calU f)(t):= \Phi^* \big(\bbS_t f\big).
\]
This corresponds to the mapping $t \mapsto Y_t(\eta_0) := \wh\bbP \extG_t
\eta_0$ from Lemma \ref{l:props-Y}. We simply ``dropped'' the Hilbert-space
isomorphism $\Phi:\bfW \overset{1\text{-}1}\longleftrightarrow \bfG$. 

We now show that $\calU$ has a bounded linear extension $\wh\calU\colon\ol\bfX\to 
\ol\bfX$  that is consistent with $Y_t$ in Lemma \ref{l:props-Y}
  and is defined everywhere, i.e., we do not need the measurable extension
  $\wh\bbP$ defined $\gamma_\beta$ a.e.  
For this we observe that 
\[
\big(\calU f\big)(t)= \int_{-\infty}^0 
\Sigma_\bbD\,\ee^{s \bbD^*} f(s{-}t)\dd s. 
\]
Thus, $\calU$ is realized by convolution with the kernel $s \mapsto
\Sigma_\bbD\,\ee^{s \bbD^*} \bm1_{]-\infty,0]}$. In Fourier space the
convolution is realized by the multiplication with the Fourier multiplier 
\[
\R\ni \sigma \mapsto M(\sigma)^*:=\Sigma_\bbD \big( \bbD^*- \ii \sigma
\bbI\big)^{-1}
\]
which is smooth and bounded. As the operator $\bbA$ stays invariant (up to
changing constants) under Fourier transform $\mathfrak F$, we also see that
$\mathfrak F$ maps ${\bfV}$ into itself, and hence by duality also ${\bfX}$ into
${\bfX}$. Clearly, multiplying by $M$ defines a bounded linear operator in
${\bfV}$, because $M$ lies in $\rmC^2_\rmb \big(\R;\mafo{Lin}(\bfW)\big)$. By
duality the multiplication multiplying by $M^*$ defines a bounded linear
operator in ${\bfX}={\bfV}^*$. This shows that there is a bounded linear extension
$\wh\calU\colon{\bfX}\to {\bfX}$ of $\calU\colon{\bfH}\to {\bfH}$. 

\paragraph*{Declarations}
The authors have no relevant financial or non-financial interests to disclose.

\bibliographystyle{alpha_AMs}
\bibliography{alex_pub,bib_alex,bib-jz,bib-Mark-\jobname}

\end{document}